\numberwithin{equation}{section}
\theoremstyle{plain}
\newtheorem{theorem}{Theorem}
\newtheorem{proposition}[theorem]{Proposition}
\newtheorem{lemma}[theorem]{Lemma}
\newtheorem{definition}[theorem]{Definition}
\newtheorem{remark}[theorem]{Remark}
\author{%
  Johannes Forkel\footnote{\texttt{johannes.forkel@maths.ox.ac.uk}, Mathematical Institute, University of Oxford, Oxford, OX2 6GG, United Kingdom} and Jonathan P. Keating\footnote{\texttt{jon.keating@maths.ox.ac.uk}, Mathematical Institute, University of Oxford, Oxford, OX2 6GG, United Kingdom}  \\
}
\begin{document}

\title{The Classical Compact Groups and Gaussian Multiplicative Chaos}

\maketitle 

\begin{abstract} 
\noindent We consider powers of the absolute value of the characteristic polynomial of Haar distributed random orthogonal or symplectic matrices, as well as powers of the exponential of its argument, as a random measure on the unit circle. We also consider the case where these measures are restricted to the unit circle minus small neighborhoods around $\pm 1$. We show that for small enough powers and under suitable normalization, as the matrix size goes to infinity, these random measures converge in distribution to a Gaussian multiplicative chaos measure. Our result is analogous to one relating to unitary matrices previously established by Christian Webb in \cite{Webb}. We thus complete the connection between the classical compact groups and Gaussian multiplicative chaos. To prove this convergence when excluding small neighborhoods around $\pm 1$ we establish appropriate asymptotic formulae for Toeplitz and Toeplitz+Hankel determinants with merging singularities. Using a recent formula due to Claeys {\it et al.} \cite{Claeys et al}, we are able to prove convergence on the whole of the unit circle.
\end{abstract}

\tableofcontents

\section{Introduction}
In \cite{HughesKeatingO'Connell}, Hughes, Keating and O'Connell proved that the real and the imaginary part of the logarithm of the characteristic polynomial of a random unitary matrix convergence jointly to a pair of Gaussian fields on the unit circle. Using this result Webb established in \cite{Webb} a connection between random matrix theory and Gaussian multiplicative chaos (GMC), a theory developed first by Kahane in the context of turbulence in \cite{Kahane} (see \cite{RhodesVargas} for a review). Webb proved that powers of the exponential of the real and imaginary part of the logarithm of the characteristic polynomial of a random unitary matrix converge, when suitably normalized, to Gaussian multiplicative chaos measures on the unit circle. This was achieved using results on Toeplitz determinants with merging Fisher-Hartwig singularities due to Claeys and Krasovsky in \cite{ClaeysKrasovsky}. In \cite{Webb} Webb proved the result only in the so-called $L^2$-phase, that is those powers for which the second moment of the total mass of the limiting GMC measure exists. In \cite{NikulaSaksmanWebb} the result was extended to the whole $L^1$- or subcritical phase, i.e. was proven to also hold for the (larger) set of powers for which the limiting GMC measure is non-trivial.\\ 
Since then the connection between the two fields has been extended to other random matrix ensembles. In \cite{ChhaibiNajnudel}, Chhaibi and Najnudel proved convergence (in a different sense) of the characteristic polynomials of matrices drawn from the Circular Beta Ensemble to a GMC measure on the unit circle. In \cite{BerestyckiWebbWong} Berestycki, Webb and Wong proved that, after suitable normalization, powers of the absolute value of the characteristic polynomial of a matrix from the Gaussian Unitary Ensemble converge to Gaussian multiplicative chaos measures on the real line. They proved this result in the $L^2$-phase, however it is likely to also hold in the whole $L^1$-phase. The analogous result for powers of the exponential of the imaginary part of the logarithm of the characteristic polynomial of the Gaussian Unitary Ensemble was proven in \cite{ClaeysFahsLambertWebb}, in the whole $L^1$-phase.\\
The connection with GMC is closely related to recent developments concerning the extreme value statistics of the characteristic polynomials of random matrices and the associated theory of moments of moments \cite{ArguinBeliusBourgade, Arguin et al, AssiotisBaileyKeating, AssiotisKeating, BaileyKeating, ChhaibiMadauleNajnudel, FyodorovGnutzmannKeating, FyodorovHiaryKeating, FyodorovKeating, PaquetteZeitouni}. It also has interesting applications to spectral statistics; for example, it implies strong rigidity estimates for the eigenvalues \cite{ClaeysFahsLambertWebb}.\\

Our purpose here it to extend Webb's result to the other classical compact groups, i.e. to the orthogonal and symplectic groups.  Our starting point is a theorem due  to Assiotis and Keating concerning the convergence of the real and imaginary parts of the logarithm of the characteristic polynomials of random orthogonal or symplectic matrices to a pair of Gaussian fields on the unit circle \footnote{This result has not previously been published.  With the kind agreement of Dr. Assiotis, we set out the theorem and its proof in Appendix \ref{appendix:Gaussian fields}}. This is the analogous result to the one for random matrices in \cite{HughesKeatingO'Connell}. We then complete the connection between the classical compact groups and Gaussian multiplicative chaos, by showing that for the orthogonal and symplectic groups we get statements similar to the one Webb proved for the unitary group. Using the same approach as in \cite{Webb} we prove our results only in the $L^2$-phase, i.e. when the limiting GMC measure's total mass has a finite second moment. We believe that using the techniques in \cite{NikulaSaksmanWebb} one can extend our results to hold more generally in the whole $L^1$-phase, however extra care is needed since the covariance function of the underlying Gaussian field has singularities not just on the diagonal but also on the antidiagonal; in particular at $\pm 1$ the field has special behaviour. \\
In order to prove convergence to the GMC measure after restricting all involved measures to $(\epsilon, \pi - \epsilon) \cup (\pi + \epsilon, 2\pi - \epsilon)$, i.e. when excluding small neighborhoods around $\pm 1$, we computed the uniform asymptotics of Toeplitz and Toeplitz+Hankel determinants with two pairs of merging singularities which are all bounded away from $\pm 1$. Our results on these asymptotics are similar to those in \cite{ClaeysKrasovsky} and \cite{DeiftItsKrasovsky}, and the proof techniques we employ are strongly influenced by these two papers.

To prove convergence on the full unit circle we need also to know the uniform asymptotics of Toeplitz and Toeplitz+Hankel determinants with 3 or 5 singularities merging at $\pm 1$. Claeys, Glesner, Minakov and Yang have recently proved a formula for the uniform asymptotics of Toeplitz+Hankel determinants with arbitrarily many merging singularities, up to a multiplicative constant \cite{Claeys et al}.  Using their formula allows us to extend our analysis to around $\pm 1$, and so to cover the full unit circle, however for a slightly smaller set of powers than when neighborhoods around $\pm 1$ are excluded.

\section{Statement of Main Result and Strategy of Proof}
Denote by $O(n)$ the group of orthogonal $n\times n$ matrices, and by $Sp(2n)$ the group of $2n\times 2n$ symplectic matrices, i.e. unitary $2n\times 2n$ matrices that additionally satisfy
\begin{equation}
UJU^T= U^TJU = J,
\end{equation}
where 
\begin{equation}
J:= \left( \begin{array}{cc} 0 & I_n \\ -I_n & 0 \end{array} \right).
\end{equation}
The characteristic polynomial 
\begin{equation}
p_n(\theta) = \text{det}\left(I_n-e^{-i\theta}U_n\right) = \prod_{k=1}^n (1-e^{i(\theta_k-\theta)})
\end{equation}
of $U_n$ in $O(n)$ or $Sp(2n)$ (then we have instead $I_{2n}$ and the product is up to $2n$) is taken as a function on the unit circle, where all its zeroes lie.

\begin{definition} For $n \in \mathbb{N}$, $\alpha \in \mathbb{R}$, $\beta \in i\mathbb{R}$ and $\theta \in [0,2\pi)$ let 
\begin{equation}
f_{n,\alpha,\beta}(\theta) = |p_n(\theta)|^{2\alpha} e^{2i\beta \Im \ln p_n(\theta)},
\end{equation}
where (with the sum being up to $2n$ for $U_n \in Sp(2n)$)
\begin{equation}
\Im \ln p_n(\theta):= \sum_{k = 1}^n \Im \ln (1-e^{i(\theta_k -\theta)}),
\end{equation} 
with the branches on the RHS being the principal branches, such that 
\begin{equation}
\Im \ln (1-e^{i(\theta_l - \theta)}) \in \left( - \frac{\pi}{2}, \frac{\pi}{2} \right],
\end{equation}
where $\Im \ln 0 := \pi/2$. Further we define the random Radon measures $\mu_{n,\alpha,\beta}$ on $S^1 \sim [0,2\pi)$ by 
\begin{equation}
\mu_{n,\alpha,\beta}(\text{d}\theta) = \frac{f_{n,\alpha,\beta}(\theta) }{\mathbb{E}\left(f_{n,\alpha,\beta}(\theta)\right)}\text{d}\theta.
\end{equation}
\end{definition}

Let $\left(\mathcal{N}_j\right)_{j \in \mathbb{N}}$ be a sequence of independent standard (real) normal random variables and denote 
\begin{equation}
\eta_j := 1_{j \text{ is even}}.
\end{equation}
We recall the following result from \cite{DiaconisEvans, DiaconisShahshahani}:

\begin{theorem} \label{thm:traces} (Diaconis and Shahshahani, Diaconis and Evans)
If $U_n$ is Haar distributed  on $O(n)$ we have for any fixed $k$:
\begin{equation}
\left( \text{Tr}(U_n), \frac{1}{\sqrt{2}} \text{Tr}(U_n^2), ... , \frac{1}{\sqrt{k}} \text{Tr}(U_n^k) \right) \xrightarrow[n\rightarrow \infty]{d} \left( \mathcal{N}_1 + \eta_1, \mathcal{N}_2 + \frac{\eta_2}{\sqrt{2}}, ... , \mathcal{N}_k + \frac{\eta_k}{\sqrt{k}} \right).
\end{equation}
Similarly, if $U_n$ is Haar distributed on $Sp(2n)$, we have for any fixed $k \in \mathbb{N}$:
\begin{equation}
\left( \text{Tr}(U_n), \frac{1}{\sqrt{2}} \text{Tr}(U_n^2), ... , \frac{1}{\sqrt{k}} \text{Tr}(U_n^k) \right) \xrightarrow[n\rightarrow \infty]{d} \left( \mathcal{N}_1 - \eta_1, \mathcal{N}_2 - \frac{\eta_2}{\sqrt{2}}, ... , \mathcal{N}_k - \frac{\eta_k}{\sqrt{k}} \right).
\end{equation}
Finally we have the bound
\begin{equation} \label{eqn:trace bound}
\mathbb{E}_{U_n}\left( \left( \text{Tr}(U_n^k)\right)^2 \right) \leq \text{const} \min\{k,n\},
\end{equation}
where const is independent of $k$ and $n$. 
\end{theorem}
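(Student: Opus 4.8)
The plan is to argue by the method of moments. Since the $\mathcal{N}_j$ are independent, the limiting vectors in the statement are (shifted) multivariate Gaussians with non-degenerate covariance, so such laws are determined by their moments; moreover $\mathrm{Tr}(U_n^j)$ is bounded in modulus by the matrix size, so all of its moments are finite. It therefore suffices to prove that for every $k$ and every tuple $(a_1,\dots,a_k)\in\mathbb{Z}_{\ge0}^k$ one has
\begin{equation*}
\mathbb{E}_{U_n}\!\Bigl(\,\prod_{j=1}^{k}\bigl(\mathrm{Tr}(U_n^{j})\bigr)^{a_j}\Bigr)\;\xrightarrow[n\to\infty]{}\;\prod_{j=1}^{k}\mathbb{E}\!\Bigl(\bigl(\sqrt{j}\,\mathcal{N}_j\pm\eta_j\bigr)^{a_j}\Bigr),
\end{equation*}
with the $+$ sign in the orthogonal and the $-$ sign in the symplectic case. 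In particular one must show that, in the limit, traces of distinct powers decouple and each $\mathrm{Tr}(U_n^j)$ has mean $\pm\eta_j$ and variance $j$.

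The key step is that the left-hand side is in fact \emph{exactly} equal to the right-hand side once $n$ is large. Writing $p_j:=\mathrm{Tr}(U^j)$ for the $j$-th power sum of the eigenvalues and expanding $\prod_j p_j^{a_j}$ in the Schur basis, all partitions $\lambda$ that occur have $|\lambda|=\sum_j j a_j$, hence at most $\sum_j j a_j$ rows. One then invokes the classical integration formulae
\begin{equation*}
\int_{O(N)}s_\lambda(U)\,\mathrm{d}U\in\{0,1\},\qquad \int_{Sp(2N)}s_\lambda(U)\,\mathrm{d}U\in\{0,1\},
\end{equation*}
each given by an explicit parity condition on $\lambda$ and valid whenever $\lambda$ is small enough relative to $N$. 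Consequently there is a threshold $n_0=n_0(a_1,\dots,a_k)$, of order $\sum_j j a_j$, such that for $n\ge n_0$,
\begin{equation*}
\mathbb{E}_{O(n)}\!\Bigl(\prod_{j}p_j^{a_j}\Bigr)=\prod_{j}\mathbb{E}\bigl((\sqrt{j}\,\mathcal{N}_j+\eta_j)^{a_j}\bigr),\qquad
\mathbb{E}_{Sp(2n)}\!\Bigl(\prod_{j}p_j^{a_j}\Bigr)=\prod_{j}\mathbb{E}\bigl((\sqrt{j}\,\mathcal{N}_j-\eta_j)^{a_j}\bigr).
\end{equation*}
Equivalently, one may expand directly via the Weyl integration formula and evaluate the resulting combinatorial sum using Schur--Weyl/Brauer duality; this is the route on which the conjugate-pairing of eigenvalues is most transparent. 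Since for fixed $(a_j)$ the threshold is eventually exceeded, the convergence of moments follows; taking $(a_j)$ equal to a unit coordinate vector, and to twice one, extracts the limiting means $\pm\eta_\ell$ and variances $\ell$ used above.

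For the second-moment bound \eqref{eqn:trace bound}: when $k\le n$ we are within the stable range, so the previous step gives $\mathbb{E}_{U_n}\bigl((\mathrm{Tr}\,U_n^k)^2\bigr)=k\pm\eta_k\le k+1\le 2\min\{k,n\}$. When $k>n$ one instead appeals to the known closed-form evaluations of $\mathbb{E}_G\bigl((\mathrm{Tr}\,U^k)^2\bigr)$ valid for all $k$ (Diaconis--Shahshahani, Rains, Diaconis--Evans): for $G=O(n)$ and $G=Sp(2n)$ these equal $\min\{k,n\}$ up to a correction bounded uniformly in $k$ and $n$, hence are $\le\mathrm{const}\cdot n=\mathrm{const}\cdot\min\{k,n\}$ in this range. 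Combining the two ranges gives \eqref{eqn:trace bound} with a constant independent of $k$ and $n$.

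The main obstacle is the exact stable-range moment identity for $O(n)$. Because $O(n)=SO(n)\sqcup O^-(n)$ is disconnected, the computation must be carried out on each component and averaged, and because the eigenvalues occur in conjugate pairs together with the deterministic eigenvalues $\pm1$ that distinguish the two cosets (and the odd/even matrix sizes), the combinatorics controlling $\int_{O(n)}\prod_j p_j^{a_j}$ is governed by the Brauer algebra rather than by the symmetric group relevant for $U(n)$; the shift $\eta_j$ is precisely the net contribution of those fixed $\pm1$ eigenvalues. Checking that all of this reassembles into the clean Gaussian means $\pm\eta_j$, uniformly over the stable range, is the delicate part; by comparison the large-$k$ second-moment estimate is a routine consequence of known exact formulae.
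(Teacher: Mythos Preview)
The paper does not prove this theorem: it is quoted as a known result from \cite{DiaconisEvans, DiaconisShahshahani}, so there is no ``paper's own proof'' to compare against. Your proposal is essentially a sketch of the original Diaconis--Shahshahani/Diaconis--Evans argument (moment method via Schur/Brauer integration identities in the stable range, plus the exact second-moment evaluations outside it), and as such it is on the right track.

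Two small corrections. First, $\mathbb{E}\bigl((\sqrt{k}\,\mathcal{N}_k\pm\eta_k)^2\bigr)=k+\eta_k^2=k+\eta_k$, with a $+$ in both the orthogonal and symplectic cases; the sign only affects the mean, not the second moment. Second, the stable-range threshold for the second moment of $p_k$ is of order $2k$, not $k$: the exact identity $\mathbb{E}(p_k^2)=k+\eta_k$ requires $n$ at least comparable to $2k$, so ``$k\le n$'' is slightly too generous as stated. Neither point affects the conclusion, since a bounded overlap region can be absorbed into the constant, and for $k$ beyond the stable range you are in any case appealing to the explicit formulae of Rains/Diaconis--Evans.
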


Using this result, Assiotis and Keating have proved the following theorem\footnote{This result has not previously been published, and so with the kind agreement of Dr. Assiotis we set out the proof in Appendix \ref{appendix:Gaussian fields}.}: 

\begin{theorem} \label{thm:Gaussian field} (Assiotis, Keating)
Let $p_n$ be the characteristic polynomial of a random $U_n \in O(n)$, w.r.t. Haar measure. Then for any $\epsilon > 0$ the pair of fields $\left( \Re \ln p_n, \Im \ln p_n \right)$ converges in distribution in $H^{-\epsilon}_0 \times H^{-\epsilon}_0$ to the pair of Gaussian fields $\left( X - x, \hat{X} - \hat{x} \right)$, where
\begin{align} 
\begin{split}
X(\theta) &= \frac{1}{2} \sum_{j = 1}^\infty \frac{1}{\sqrt{j}} \mathcal{N}_{j} \left(e^{-ij\theta} + e^{ij\theta}\right) = \sum_{j=1}^\infty \frac{1}{\sqrt{j}} \mathcal{N}_{j} \cos(j\theta),\\
\hat{X}(\theta) &= \frac{1}{2i} \sum_{j = 1}^\infty \frac{1}{\sqrt{j}} \mathcal{N}_{j} \left(e^{-ij\theta} - e^{ij\theta}\right) = -\sum_{j=1}^\infty \frac{1}{\sqrt{j}} \mathcal{N}_{j} \sin(j\theta),\\
x(\theta) &= \frac{1}{2} \sum_{j = 1}^\infty \frac{\eta_j}{j} \left(e^{-ij\theta} + e^{ij\theta}\right) = \sum_{j=1}^\infty \frac{\eta_j}{j} \cos(j\theta),\\
\hat{x}(\theta) &= \frac{1}{2i} \sum_{j = 1}^\infty \frac{\eta_j}{j} \left(e^{-ij\theta} - e^{ij\theta}\right) = - \sum_{j=1}^\infty \frac{\eta_j}{j} \sin(j\theta).
\end{split}
\end{align}
Similarly, for $U_n \in Sp(2n)$ and any $\epsilon > 0$, the pair of fields $\left( \Re \ln p_n, \Im \ln p_n \right)$ converges in distribution in $H^{-\epsilon}_0 \times H^{-\epsilon}_0$ to the pair of Gaussian fields $\left( X + x, \hat{X} + \hat{x} \right)$.
\end{theorem}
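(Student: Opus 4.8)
The plan is to deduce Theorem~\ref{thm:Gaussian field} from Theorem~\ref{thm:traces} by expressing $\ln p_n(\theta)$ as an explicit linear functional of the power traces $\mathrm{Tr}(U_n^k)$ and then passing to the limit term by term. The starting point is the identity
\begin{equation}
\ln p_n(\theta) = \sum_{k=1}^n \ln\bigl(1-e^{i(\theta_k-\theta)}\bigr) = -\sum_{j=1}^\infty \frac{1}{j}\,e^{-ij\theta}\,\mathrm{Tr}(U_n^j),
\end{equation}
valid (after the usual care with the branch of the logarithm) in the sense of distributions on $S^1$; indeed pairing against a test function $\phi$ with Fourier coefficients $\hat\phi(j)$ gives $\langle \ln p_n,\phi\rangle = -\sum_{j\geq 1}\tfrac{1}{j}\hat\phi(-j)\mathrm{Tr}(U_n^j)$, and similarly for $\overline{\ln p_n}$ using $\mathrm{Tr}(U_n^{-j})=\mathrm{Tr}(U_n^j)$ for real orthogonal/symplectic matrices. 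Taking real and imaginary parts produces $\Re\ln p_n(\theta) = -\sum_{j\geq1}\tfrac{1}{j}\mathrm{Tr}(U_n^j)\cos(j\theta)$ and $\Im\ln p_n(\theta)=\sum_{j\geq1}\tfrac{1}{j}\mathrm{Tr}(U_n^j)\sin(j\theta)$. Now substitute the joint convergence $\mathrm{Tr}(U_n^j)\xrightarrow{d}\sqrt{j}\,\mathcal{N}_j + \eta_j$ (for $O(n)$; with a minus sign on $\eta_j$ for $Sp(2n)$) from Theorem~\ref{thm:traces}: the "$\sqrt{j}\mathcal{N}_j$" part contributes $-\sum_j \tfrac{1}{\sqrt j}\mathcal{N}_j\cos(j\theta)= -X(\theta)$ wait — matching signs, one gets precisely $X(\theta)$ and $\hat X(\theta)$ after accounting for the overall sign in the definitions, while the deterministic "$\eta_j$" part contributes exactly $-x(\theta)$ and $-\hat x(\theta)$ (resp. $+x,+\hat x$ for $Sp$). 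So the candidate limit is forced; the content is in justifying the limit rigorously in $H^{-\epsilon}_0\times H^{-\epsilon}_0$.

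The rigorous argument I would run is a standard finite-dimensional-approximation-plus-uniform-tail-estimate scheme. First fix $K$ and truncate: let $L_{n}^{(K)}(\theta) = -\sum_{j=1}^K \tfrac{1}{j}e^{-ij\theta}\mathrm{Tr}(U_n^j)$ and let $L^{(K)}$ be the corresponding truncation of the limit field. By the continuous mapping theorem applied to the (continuous, linear) map from $(\mathrm{Tr}(U_n),\ldots,\mathrm{Tr}(U_n^K))\in\mathbb{C}^K$ into $H^{-\epsilon}_0\times H^{-\epsilon}_0$ sending a vector of traces to the trigonometric polynomial it defines, Theorem~\ref{thm:traces} gives $L_n^{(K)}\xrightarrow{d} L^{(K)}$ in $H^{-\epsilon}_0\times H^{-\epsilon}_0$ as $n\to\infty$, for each fixed $K$. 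Second, I would show that the truncations approximate the full fields uniformly in $n$: using the trace bound \eqref{eqn:trace bound}, $\mathbb{E}_{U_n}|\mathrm{Tr}(U_n^j)|^2 \leq C\min\{j,n\}\leq Cj$, hence
\begin{equation}
\mathbb{E}_{U_n}\bigl\| \ln p_n - L_n^{(K)}\bigr\|_{H^{-\epsilon}_0}^2 \;\leq\; C\sum_{j>K} \frac{1}{j^2}\cdot j^{-2\epsilon}\cdot \mathbb{E}_{U_n}|\mathrm{Tr}(U_n^j)|^2 \;\leq\; C\sum_{j>K} j^{-1-2\epsilon} \xrightarrow[K\to\infty]{} 0,
\end{equation}
where I used that the $H^{-\epsilon}$ norm of a single Fourier mode $e^{ij\theta}$ is comparable to $j^{-\epsilon}$ (and $\epsilon>0$ makes the sum converge). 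The same bound with the Gaussian variances $\mathbb{E}|\sqrt{j}\mathcal{N}_j+\eta_j|^2 = O(j)$ controls $\|(\text{limit field}) - L^{(K)}\|_{H^{-\epsilon}_0}$. Third, invoke the standard "approximation" theorem for convergence in distribution (e.g. Billingsley, Theorem 3.2): if $L_n^{(K)}\xrightarrow{d}L^{(K)}$ for each $K$, $L^{(K)}\xrightarrow{d}L$ (the full limit field) as $K\to\infty$, and $\lim_{K}\limsup_n \mathbb{P}(\|L_n^{(K)}-\ln p_n\|_{H^{-\epsilon}_0}>\delta)=0$ for every $\delta$ (which follows from the displayed $L^2$ bound and Markov), then $\ln p_n\xrightarrow{d}L$. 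Carrying this out jointly for $(\Re\ln p_n,\Im\ln p_n)$ — which is automatic since both parts are linear functionals of the same trace vector — gives the theorem.

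A few points need attention. One must check that $\Im\ln p_n$ as defined via principal branches in the Definition genuinely agrees, as an element of $H^{-\epsilon}_0$, with the series $\sum_j \tfrac1j\mathrm{Tr}(U_n^j)\sin(j\theta)$: the branch choice affects $\Im\ln p_n$ only at the finitely many eigenangles $\theta_k$ (a measure-zero set) and by a locally constant integer multiple of $\pi$ away from them, but since the function is in $L^1$ and the series is its Fourier expansion in the distributional sense, the two coincide in $H^{-\epsilon}_0$; I would spell this out, noting also that projecting onto $H^{-\epsilon}_0$ (mean-zero subspace) kills any constant ambiguity. One should also confirm that the limit fields $X,\hat X, x,\hat x$ genuinely live in $H^{-\epsilon}_0$ for every $\epsilon>0$: for $X,\hat X$ this is the classical fact that $\sum_j j^{-1}j^{-2\epsilon}<\infty$ makes the random Fourier series converge a.s.\ in $H^{-\epsilon}_0$, and $x,\hat x$ are deterministic $L^2$ functions (their Fourier coefficients $\eta_j/j$ are square-summable) hence a fortiori in $H^{-\epsilon}_0$. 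The main obstacle, such as it is, is purely bookkeeping: getting all the signs, the factors of $1/j$ versus $1/\sqrt{j}$, and the branch conventions to line up so that the deterministic shift comes out as exactly $\mp x(\theta)$ and $\mp\hat x(\theta)$ with the correct sign for $O(n)$ versus $Sp(2n)$ — this is dictated by the $\pm\eta_j$ in Theorem~\ref{thm:traces} and by $\mathrm{Tr}(U_n^{-j})=\mathrm{Tr}(U_n^j)$, but it is the one place where an error would be easy to make. Everything else is the routine truncation argument above, and the uniform control is handed to us by \eqref{eqn:trace bound}.
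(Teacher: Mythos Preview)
Your argument is correct and rests on the same ingredients as the paper's proof --- the trace expansion $\ln p_n(\theta)=-\sum_{j\ge1}j^{-1}e^{-ij\theta}\mathrm{Tr}(U_n^j)$, the joint convergence of traces (Theorem~\ref{thm:traces}), and the second-moment bound \eqref{eqn:trace bound} --- but you package them differently. The paper proceeds via \emph{finite-dimensional convergence plus tightness}: convergence on cylinder sets follows from Theorem~\ref{thm:traces} (together with $(\mathcal{N}_j)\overset{d}{=}(-\mathcal{N}_j)$ to fix the sign of the random part), and tightness in $H^{-\epsilon}_0$ is obtained by choosing $s\in(-\epsilon,0)$, using the Rellich--Kondrachov compact embedding $H^{s}_0\hookrightarrow H^{-\epsilon}_0$, and bounding $\sup_n\mathbb{E}\|\Re\ln p_n\|_{H^s_0}^2$ via \eqref{eqn:trace bound}. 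Your route is instead \emph{truncation plus the approximation lemma} (Billingsley Thm~3.2 / Kallenberg Thm~4.28), which is precisely the scheme the paper employs later for its main theorems; it avoids the compact-embedding step at the cost of invoking the three-step approximation theorem. Both are standard and roughly of equal length; your version has the minor advantage of staying within the same methodological framework as the rest of the paper.

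One small point: your handling of the sign match (``wait --- matching signs\ldots'') is where the identity $(\mathcal{N}_j)_j\overset{d}{=}(-\mathcal{N}_j)_j$ is genuinely needed, since the raw computation gives $(-X-x,-\hat X-\hat x)$ in the orthogonal case and one then uses joint symmetry of the Gaussians to rewrite this as $(X-x,\hat X-\hat x)$; the paper makes this explicit and you should too.
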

\noindent The spaces $H^{-\epsilon}_0$ are defined in Appendix \ref{appendix:Gaussian fields} as certain closed subspaces of the negative Sobolev spaces $H^{-\epsilon}$.
 
\begin{remark} Formally one has 
\begin{align} \label{eqn:cov 1}
\begin{split}
\mathbb{E}\left( X(\theta)X(\theta') \right) &= \sum_{j=1}^\infty \frac{\cos(j\theta)\cos(j\theta')}{j}\\ &= 
\frac{1}{2} \left( \sum_{j=1}^\infty \frac{\cos(j(\theta+\theta'))}{j} + \sum_{j=1}^\infty \frac{\cos(j(\theta-\theta'))}{j} \right) \\
&= \frac{1}{4} \sum_{j=1}^\infty \frac{1}{j} \left( e^{ij(\theta+\theta')} + e^{-ij(\theta+\theta')} + e^{ij(\theta-\theta')} + e^{-ij(\theta - \theta')} \right) \\
&= - \frac{1}{2} \left( \ln |e^{i\theta} - e^{i\theta'}| + \ln |e^{i\theta} - e^{-i\theta'}| \right).
\end{split}
\end{align}
Similarly one has formally 
\begin{align} \label{eqn:cov 2}
\begin{split}
\mathbb{E}\left( \hat{X}(\theta)\hat{X}(\theta') \right) =& - \frac{1}{2} \left( \ln |e^{i\theta} - e^{i\theta')}| - \ln |e^{i\theta} - e^{-i\theta'}| \right),\\
\mathbb{E}\left( X(\theta)\hat{X}(\theta') \right) =& \frac{1}{2} \left( \Im \ln (1-e^{i(\theta+\theta')}) - \Im \ln (1-e^{i(\theta-\theta')})\right).
\end{split}
\end{align}
\end{remark} 

For $\alpha \in \mathbb{R}$ and $\beta \in i\mathbb{R}$ we define the field
\begin{equation} \label{eqn:Y}
Y_{\alpha,\beta}(\theta) = 2\alpha X(\theta) + 2i\beta \hat{X}(\theta).
\end{equation}
By (\ref{eqn:cov 1}) and \ref{eqn:cov 2}) we see that its covariance function is formally given by 
\begin{align} \label{eqn:Cov Y}
\begin{split}
\text{Cov}(Y_{\alpha,\beta} (\theta),Y_{\alpha,\beta} (\theta')) =& -2(\alpha^2-\beta^2) \ln |e^{i\theta}-e^{i\theta'}| - 2(\alpha^2+\beta^2) \ln |e^{i\theta}-e^{-i\theta'}| \\
& +4i\alpha \beta \Im \ln (1-e^{i(\theta+\theta')}).
\end{split}
\end{align}
Motivated by Theorem \ref{thm:Gaussian field} one expects that $\mu_{n,\alpha,\beta}$ behaves like $e^{Y_{\alpha,\beta}}$ for large $n$. Even though the covariance function of $Y_{\alpha,\beta}$ has logarithmic singularities, not only on the diagonal $\theta = \theta'$ but also on the anti-diagonal $\theta = - \theta'$, one can still construct a corresponding non-trivial Gaussian multiplicative chaos measure $\mu_{\alpha,\beta}$, which can formally be written as
\begin{align}
\mu_{\alpha,\beta}(\text{d}\theta) = \frac{ e^{Y_{\alpha,\beta}(\theta)} } {\mathbb{E}(e^{Y_{\alpha,\beta}(\theta)})} \text{d}\theta = e^{Y_{\alpha,\beta}(\theta) - \frac{1}{2}\mathbb{E}(Y_{\alpha,\beta}(\theta)^2)}\text{d}\theta.
\end{align} 
$\mu_{\alpha,\beta}$ is properly defined in Appendix \ref{appendix:GMC} as the almost sure limit in distribution of certain random Radon measures $\left(\mu^{(k)}_{\alpha,\beta} \right)_{k \in \mathbb{N}}$.\\

For $\epsilon \in (0,\pi/2)$ define $I_{\epsilon}:= (\epsilon, \pi - \epsilon) \cup (\pi + \epsilon, 2\pi - \epsilon)$. Then our first main result is the following:
\begin{theorem} \label{thm:main}
Let $\alpha^2 - \beta^2 < 1/2$ and $\alpha > - 1/4$. When restricting the random measures $\mu_{n,\alpha,\beta}$ and $\mu_{\alpha,\beta}$ to $I_\epsilon$, then as $n \rightarrow \infty$, for any fixed $\epsilon>0$, the sequence $\left(\mu_{n,\alpha,\beta}\right)_{n \in \mathbb{N}}$ converges weakly to $\mu_{\alpha,\beta}$ in the space of Radon measures on $I_\epsilon$ equipped with the topology of weak convergence, i.e. for any $F:\left\{\text{Radon measures on } I_\epsilon \right\} \rightarrow \mathbb{R}$ for which $F(\mu_n) \rightarrow F(\mu)$ whenever $\mu_n \xrightarrow{d} \mu$, it holds that 
\begin{equation}
\mathbb{E}\left( F(\mu_{n,\alpha,\beta}) \right) \xrightarrow{n \rightarrow \infty} \mathbb{E} \left( F(\mu_{\alpha,\beta}) \right).
\end{equation}
\end{theorem}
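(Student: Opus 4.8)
The plan is to adapt the strategy of Webb \cite{Webb}: approximate $\mu_{n,\alpha,\beta}$ by a measure built from only the first $K$ trigonometric harmonics of $\ln p_n$ --- equivalently, from the first $K$ traces $\text{Tr}(U_n),\dots,\text{Tr}(U_n^K)$ --- let $n\to\infty$ with $K$ fixed, and then remove the cutoff by letting $K\to\infty$. Since $\mathbb{E}\,\mu_{n,\alpha,\beta}(\text{d}\theta)=\text{d}\theta$ by construction, the total mass $\mu_{n,\alpha,\beta}(\overline{I_\epsilon})$ has mean $|I_\epsilon|<\infty$ and no mass lies on the finite set $\partial I_\epsilon$, so the laws of $\mu_{n,\alpha,\beta}$ on $\overline{I_\epsilon}$ are automatically tight; by the standard characterisation of convergence in distribution of random measures it therefore suffices to prove convergence of the Laplace functionals $\mathbb{E}\,e^{-\mu_{n,\alpha,\beta}(\phi)}\to\mathbb{E}\,e^{-\mu_{\alpha,\beta}(\phi)}$ for every nonnegative $\phi\in C_c(I_\epsilon)$. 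The statement as phrased then follows from the continuous mapping theorem, convergence in distribution of the random measures being upgraded to convergence of $\mathbb{E}(F(\cdot))$ using the uniform $L^2$-bound on the total masses obtained in Step 2 below. Fix such a $\phi$. For $K\in\mathbb{N}$, let $h^{(K)}_n(\theta)$ be the degree-$K$ trigonometric polynomial obtained by truncating the Fourier expansion of $2\alpha\,\Re\ln p_n(\theta)+2i\beta\,\Im\ln p_n(\theta)$ (its coefficients are explicit linear combinations of $\text{Tr}(U_n),\dots,\text{Tr}(U_n^K)$), set $f^{(K)}_{n,\alpha,\beta}=e^{h^{(K)}_n}$ and $\mu^{(K)}_{n,\alpha,\beta}(\text{d}\theta)=f^{(K)}_{n,\alpha,\beta}(\theta)\,\text{d}\theta/\mathbb{E}\big(f^{(K)}_{n,\alpha,\beta}(\theta)\big)$; write $Y^{(K)}_{\alpha,\beta}$ for the truncation of the field \eqref{eqn:Y} to its first $K$ harmonics and $\mu^{(K)}_{\alpha,\beta}$ for the corresponding cutoff GMC measure from Appendix \ref{appendix:GMC}.

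\emph{Step 1: fixed $K$.} By Theorem \ref{thm:traces} the vector $\big(\text{Tr}(U_n),\dots,\text{Tr}(U_n^K)\big)$ converges jointly in distribution, and in moments, to the relevant shifted Gaussian vector (the $\pm$ shifts that distinguish $O(n)$ from $Sp(2n)$ being precisely the $\eta_j$). Since $\theta\mapsto h^{(K)}_n(\theta)$ is a fixed continuous function of those finitely many traces, $f^{(K)}_{n,\alpha,\beta}\to e^{Y^{(K)}_{\alpha,\beta}}$ in distribution in $C(\overline{I_\epsilon})$, and the deterministic normalisers $\mathbb{E}\big(f^{(K)}_{n,\alpha,\beta}(\theta)\big)$ converge uniformly on $\overline{I_\epsilon}$ to $\mathbb{E}\big(e^{Y^{(K)}_{\alpha,\beta}(\theta)}\big)$. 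As $\mu\mapsto\mu(\phi)$ is weakly continuous, this gives $\mu^{(K)}_{n,\alpha,\beta}\xrightarrow{d}\mu^{(K)}_{\alpha,\beta}$ as $n\to\infty$, for each fixed $K$. Moreover, by the construction of $\mu_{\alpha,\beta}$ in Appendix \ref{appendix:GMC}, $\mu^{(K)}_{\alpha,\beta}\to\mu_{\alpha,\beta}$ as $K\to\infty$, in $L^2$ because $\alpha^2-\beta^2<1/2$; in particular $\mathbb{E}\,e^{-\mu^{(K)}_{\alpha,\beta}(\phi)}\to\mathbb{E}\,e^{-\mu_{\alpha,\beta}(\phi)}$.

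\emph{Step 2: uniform removal of the cutoff.} The core estimate is
\begin{equation}
\lim_{K\to\infty}\ \limsup_{n\to\infty}\ \mathbb{E}\Big[\big(\mu_{n,\alpha,\beta}(\phi)-\mu^{(K)}_{n,\alpha,\beta}(\phi)\big)^2\Big]=0.
\end{equation}
Expanding the square reduces this to uniform control, over $\theta,\theta'\in\operatorname{supp}\phi\subset I_\epsilon$ and over $n$, of the normalised two-point functions $\mathbb{E}\big(g_1(\theta)g_2(\theta')\big)\big/\big(\mathbb{E}g_1(\theta)\,\mathbb{E}g_2(\theta')\big)$, where each $g_i$ is $f_{n,\alpha,\beta}$ or $f^{(K)}_{n,\alpha,\beta}$. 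Averaging products of powers of characteristic polynomials (and of exponentials of their arguments) over $O(n)$ or $Sp(2n)$ produces Toeplitz and Toeplitz+Hankel determinants whose symbols carry Fisher--Hartwig singularities at $e^{\pm i\theta}$ and $e^{\pm i\theta'}$; as $\theta'\to\theta$ these organise into two pairs of merging singularities, all bounded away from $\pm1$ because $\theta,\theta'\in I_\epsilon$. The uniform asymptotics of such determinants --- uniform in $n$ and in the merging parameter --- are established in the later sections by Riemann--Hilbert steepest descent in the spirit of \cite{DeiftItsKrasovsky} and \cite{ClaeysKrasovsky}; they show that these two-point functions are, uniformly, comparable up to bounded factors to $\exp\!\big(\text{Cov}(Y_{\alpha,\beta}(\theta),Y_{\alpha,\beta}(\theta'))\big)$, and by \eqref{eqn:Cov Y} the integral of the latter over $\operatorname{supp}\phi\times\operatorname{supp}\phi$ is finite precisely when $\alpha^2-\beta^2<1/2$ (this bounds both the diagonal singularity of size $|e^{i\theta}-e^{i\theta'}|^{-2(\alpha^2-\beta^2)}$ and the weaker anti-diagonal one at $\theta+\theta'\in 2\pi\mathbb{Z}$). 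A dominated-convergence argument in $(\theta,\theta')$, using that the harmonics of index $>K$ contribute a vanishing amount to the limiting two-point function as $K\to\infty$, then yields the displayed limit. The hypothesis $\alpha>-1/4$ is precisely the Fisher--Hartwig condition ensuring the validity of the single-singularity Toeplitz+Hankel asymptotics used for the normalisers $\mathbb{E}\big(f_{n,\alpha,\beta}(\theta)\big)$ and for the mixed terms.

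\emph{Conclusion and main obstacle.} Given $\delta>0$: by Step 2, together with Cauchy--Schwarz and $|e^{-a}-e^{-b}|\le|a-b|$ for $a,b\ge0$, choose $K$ with $\limsup_n\big|\mathbb{E}\,e^{-\mu_{n,\alpha,\beta}(\phi)}-\mathbb{E}\,e^{-\mu^{(K)}_{n,\alpha,\beta}(\phi)}\big|<\delta$ and, via Step 1, $\big|\mathbb{E}\,e^{-\mu^{(K)}_{\alpha,\beta}(\phi)}-\mathbb{E}\,e^{-\mu_{\alpha,\beta}(\phi)}\big|<\delta$; for that $K$, Step 1 gives $\big|\mathbb{E}\,e^{-\mu^{(K)}_{n,\alpha,\beta}(\phi)}-\mathbb{E}\,e^{-\mu^{(K)}_{\alpha,\beta}(\phi)}\big|<\delta$ for $n$ large. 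A $3\delta$ argument then yields $\mathbb{E}\,e^{-\mu_{n,\alpha,\beta}(\phi)}\to\mathbb{E}\,e^{-\mu_{\alpha,\beta}(\phi)}$, completing the proof. The main obstacle is Step 2 --- specifically, obtaining the uniform asymptotics of Toeplitz+Hankel determinants with two pairs of merging Fisher--Hartwig singularities lying away from $\pm1$, with error terms uniform simultaneously in the matrix size and in the separation of the merging points, and then extracting from the resulting Barnes-$G$ and Painlev\'e-type expressions exactly the bound by the Gaussian covariance that closes the $L^2$ estimate. (The analogous but harder analysis of $3$ or $5$ singularities merging at $\pm1$, required to cover the full circle, is handled separately via the formula communicated by Claeys \emph{et al.})
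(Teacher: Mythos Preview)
Your proposal is correct and follows essentially the same route as the paper --- truncate to the first $K$ traces, pass to the limit for fixed $K$ via Theorem~\ref{thm:traces}, then remove the cutoff in $L^2$ using the uniform Toeplitz+Hankel asymptotics of Theorems~\ref{thm:T, T+H uniform} and~\ref{thm:T, T+H extended} (the paper packages your $3\delta$/Laplace-functional argument as Kallenberg's approximation theorem applied to $\int g\,\text{d}\mu$, but this is cosmetic). One small correction: the constraint $\alpha>-1/4$ does not arise from the single-singularity normalisers $\mathbb{E}\big(f_{n,\alpha,\beta}(\theta)\big)$ (those only need $\alpha>-1/2$) but from the merging-pair analysis of the full two-point function $\mathbb{E}\big(f_{n,\alpha,\beta}(\theta)f_{n,\alpha,\beta}(\theta')\big)$, where Theorem~\ref{thm:T, T+H uniform} requires $\alpha_1+\alpha_2=2\alpha>-1/2$.
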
  

The specialisation of the formulas in \cite{Claeys et al} to our situation is stated in Theorem \ref{thm:T+H Claeys}. Using these, we can prove our second main result, which extends Theorem \ref{thm:main} to the full circle, but for a slightly smaller set of parameters $\alpha,\beta$. The reason for the different sets of parameters $\alpha, \beta$ is explained in Remark \ref{remark:parameters}. 

\begin{theorem} \label{thm:main2}
Let $\alpha^2 - \beta^2 < 1/2$ and $0 \leq \alpha < 1/2$. Then the sequence of random measures $\left( \mu_{n,\alpha,\beta}\right)_{n \in \mathbb{N}}$ converges weakly to $\mu_{\alpha,\beta}$ in the space of Radon measures on $[0,2\pi)$ equipped with the topology of weak convergence. 
\end{theorem}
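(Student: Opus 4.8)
The plan is to follow the scheme used to prove Theorem~\ref{thm:main} (which in turn follows Webb~\cite{Webb}), but now carried out on all of $S^1$; the hypotheses assumed here include those of Theorem~\ref{thm:main}, so the latter is at our disposal. Since the space of Radon measures on $[0,2\pi)$ with the weak topology is Polish, it suffices to prove $\mu_{n,\alpha,\beta}\xrightarrow{d}\mu_{\alpha,\beta}$. The family $(\mu_{n,\alpha,\beta})_n$ is tight because $\mathbb{E}\big(\mu_{n,\alpha,\beta}([0,2\pi))\big)=2\pi$ for all $n$ by the normalisation in the Definition, so the total masses are bounded in probability uniformly in $n$; by tightness it then suffices to identify every subsequential limit in distribution with $\mu_{\alpha,\beta}$. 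Writing $A_\epsilon:=[0,2\pi)\setminus I_\epsilon$ for the union of the two arcs of length $2\epsilon$ centred at $\pm1$, the plan has three parts: (i) on each $I_\epsilon$ invoke Theorem~\ref{thm:main}; (ii) show that the mass placed on $A_\epsilon$ by $\mu_{n,\alpha,\beta}$ and by $\mu_{\alpha,\beta}$ is negligible as $\epsilon\to0$, uniformly in $n$; (iii) glue.

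Part (ii) for the limiting object is a property of the GMC: Appendix~\ref{appendix:GMC} constructs $\mu_{\alpha,\beta}$ as an a.s.\ finite Radon measure on all of $S^1$ with no atom at $\pm1$, so that $\mu_{\alpha,\beta}(A_\epsilon)\to0$ a.s.; this is where $4\alpha^2<1$ is used, the point being that near $\pm1$ the diagonal and anti-diagonal logarithmic singularities of the covariance~\eqref{eqn:Cov Y} of $Y_{\alpha,\beta}$ coincide, so the effective local log-variance coefficient of the field there is $4\alpha^2$ rather than the bulk value $2(\alpha^2-\beta^2)$. Part (ii) for $\mu_{n,\alpha,\beta}$ is the heart of the matter. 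Writing $Z_n(\theta):=\mathbb{E}(f_{n,\alpha,\beta}(\theta))$ for the normalising constants, the bound $\mathbb{E}(\mu_{n,\alpha,\beta}(A_\epsilon))=|A_\epsilon|=4\epsilon$ is immediate but too weak for the gluing, so we estimate the second moment
\begin{equation}
\mathbb{E}\big(\mu_{n,\alpha,\beta}(A_\epsilon)^2\big)=\int_{A_\epsilon}\int_{A_\epsilon}\frac{\mathbb{E}\big(f_{n,\alpha,\beta}(\theta)f_{n,\alpha,\beta}(\theta')\big)}{Z_n(\theta)\,Z_n(\theta')}\,\mathrm{d}\theta\,\mathrm{d}\theta',
\end{equation}
aiming to show it is $O(\epsilon^{c})$ for some $c>0$, uniformly in $n$; Markov's inequality then gives $\mu_{n,\alpha,\beta}(A_\epsilon)\to0$ in probability uniformly in $n$.

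To bound the integrand, expand the relevant powers of the characteristic polynomial of a Haar distributed orthogonal or symplectic matrix: both $Z_n(\theta)$ and $\mathbb{E}(f_{n,\alpha,\beta}(\theta)f_{n,\alpha,\beta}(\theta'))$ become Toeplitz or Toeplitz+Hankel determinants whose symbols carry Fisher--Hartwig singularities at $e^{\pm i\theta}$ (and, in the joint case, also at $e^{\pm i\theta'}$) with parameters built from $\alpha$ and $\beta$, together with the fixed group-dependent singularities at $\pm1$. When $\theta,\theta'$ range over $A_\epsilon$ these singularities collide — as many as three of them in $Z_n$ and as many as five in the joint expectation, all merging at $+1$ or at $-1$ — and the needed uniform asymptotics in those merging regimes are exactly the content of Theorem~\ref{thm:T+H Claeys}. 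Feeding them into the ratio, the contributions of the fixed singularities at $\pm1$ cancel between numerator and denominator up to a bounded $n$-independent factor, leaving
\begin{equation}
\frac{\mathbb{E}\big(f_{n,\alpha,\beta}(\theta)f_{n,\alpha,\beta}(\theta')\big)}{Z_n(\theta)\,Z_n(\theta')}\ \le\ C\,\big|e^{i\theta}-e^{i\theta'}\big|^{-2(\alpha^2-\beta^2)}\,\big|e^{i\theta}-e^{-i\theta'}\big|^{-2(\alpha^2+\beta^2)}
\end{equation}
uniformly in $n$ and in $\theta,\theta'\in A_\epsilon$, i.e.\ (up to a constant) the kernel $\exp\big(\text{Cov}(Y_{\alpha,\beta}(\theta),Y_{\alpha,\beta}(\theta'))\big)$ predicted by~\eqref{eqn:Cov Y}; the hypotheses $\alpha\ge0$ and $4\alpha^2<1$ are what make Theorem~\ref{thm:T+H Claeys} applicable here and keep the exponents in the locally integrable range all the way up to $\pm1$ (for $\alpha<0$ the deterministic zeros of $p_n$ forced at $\pm1$ on orthogonal matrices already spoil the behaviour of $Z_n$ near $\pm1$). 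Since $\beta\in i\mathbb{R}$ gives $2(\alpha^2+\beta^2)\le2(\alpha^2-\beta^2)<1$, this kernel is integrable over $A_\epsilon\times A_\epsilon$, and rescaling the two arcs to unit length shows that $\int_{A_\epsilon}\int_{A_\epsilon}$ of it is $O(\epsilon^{c})$ with $c>0$. This gives the uniform second-moment bound; combined with Theorem~\ref{thm:main} and the vanishing of $\mu_{\alpha,\beta}(A_\epsilon)$, a routine gluing argument (tightness together with the a.s.\ continuity of the restriction-to-$I_\epsilon$ maps at $\mu_{\alpha,\beta}$) upgrades convergence on each $I_\epsilon$ to convergence on $S^1$.

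The main obstacle is the step just described: distilling from Theorem~\ref{thm:T+H Claeys} an explicit, $n$-uniform upper bound on the ratio of Toeplitz (resp.\ Toeplitz+Hankel) determinants valid right up to $\theta,\theta'=\pm1$, while correctly tracking the dependence on all the merging parameters. The singularities at $e^{\pm i\theta}$ and $e^{\pm i\theta'}$ may coalesce with one another and with the fixed singularities at $\pm1$ in several different regimes, and in each of them one must check that the fixed-singularity contributions really do cancel between numerator and denominator, leaving only the claimed GMC kernel times a harmless bounded factor. This is the Hankel analogue, with merging at $\pm1$, of the key Toeplitz estimate of~\cite{Webb}, and it is where the precise restrictions $\alpha\ge0$ and $4\alpha^2<1$ enter.
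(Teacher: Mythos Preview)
Your approach is different from the paper's and correct in outline, but you overestimate the difficulty. The paper does not glue from Theorem~\ref{thm:main}; it reruns the approximation scheme of Theorem~\ref{thm:approximation} directly on $[0,2\pi)$ with the Fourier truncations $\mu^{(k)}_{n,\alpha,\beta}$ of Definition~\ref{def:mu (k)} as intermediaries, and the substance lies in establishing the $L^2$-limit \eqref{eqn:L2 limit} on the full circle: Theorem~\ref{thm:T+H Claeys} is invoked to bound \emph{all} the ratios appearing there---not only the one you write down but also the cross terms involving $f^{(k)}_{n,\alpha,\beta}$---uniformly over $[0,2\pi)^2$ (equations~\eqref{eqn:quotients Sp(2n)}--\eqref{eqn:quotients}), after which a dominated-convergence/Fatou argument closes the gap. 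Your gluing route is more modular, but your step~(ii) is unnecessarily heavy: the first-moment identity $\mathbb E\big(\mu_{n,\alpha,\beta}(A_\epsilon)\big)=|A_\epsilon|$ is \emph{not} too weak. Applying Theorem~\ref{thm:approximation} with $\eta^k_n=\int_{I_{1/k}}g\,d\mu_{n,\alpha,\beta}$ and $\xi_n=\int_{[0,2\pi)}g\,d\mu_{n,\alpha,\beta}$, one has $\mathbb E(|\xi_n-\eta^k_n|\wedge 1)\le\|g\|_\infty\,\mathbb E\big(\mu_{n,\alpha,\beta}(A_{1/k})\big)=4\|g\|_\infty/k$ uniformly in $n$, while $\eta^k\to\xi$ a.s.\ because $\mu_{\alpha,\beta}(\{0,\pi\})=0$ a.s.\ (Fatou gives $\mathbb E(\mu_{\alpha,\beta}(U))\le|U|$ for every open $U$). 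This bypasses Theorem~\ref{thm:T+H Claeys} altogether and, taken at face value, even dispenses with the extra hypotheses $\alpha\ge0$, $4\alpha^2<1$; what those hypotheses and the paper's second-moment computation buy is the $L^2$-boundedness used in Appendix~\ref{appendix:GMC} to certify that $\mu_{\alpha,\beta}$ is non-trivial on the full circle, not the convergence itself. Note also that your ``tightness $+$ restriction'' formulation would require continuity of the restriction map at each subsequential limit, not at $\mu_{\alpha,\beta}$; routing the gluing through Theorem~\ref{thm:approximation} as above avoids this issue cleanly.
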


\noindent \textbf{Proof strategy:} Let $I$ denote either $I_\epsilon$ or $[0,2\pi)$. We first remark that by Theorem 4.2. in \cite{Kallenberg2}, weak convergence of $\mu_{n,\alpha,\beta}$ to $\mu_{\alpha,\beta}$ in the space of Radon measures on $I$ equipped with the topology of weak convergence is equivalent to
\begin{equation}\label{eqn:conv equiv}
\int_{I} g(\theta) \mu_{n,\alpha,\beta}(\text{d}\theta) \xrightarrow{d} \int_{I} g(\theta) \mu_{\alpha,\beta}(\text{d}\theta),
\end{equation}
as $n \rightarrow \infty$, for any bounded continuous non-negative function $g$ on $I$. \\

\noindent To prove (\ref{eqn:conv equiv}) we use Theorem 4.28 in \cite{Kallenberg1}:
\begin{theorem} \label{thm:approximation} For $k,n \in \mathbb{N}$ let $\xi$, $\xi_n$, $\eta^k$ and $\eta^k_n$ be random variables with values in a metric space $(S,\rho)$ such that $\eta_n^k \xrightarrow{d} \eta^k$ as $n\rightarrow \infty$ for any fixed $k$, and also $\eta^k \xrightarrow{d} \xi$ as $k\rightarrow \infty$. Then $\xi_n \xrightarrow{d} \xi$ holds under the further condition
\begin{equation}
\lim_{k\rightarrow \infty} \limsup_{n\rightarrow \infty} \mathbb{E}\left( \rho(\eta_n^k, \xi_n) \wedge 1 \right) = 0.
\end{equation}
\end{theorem}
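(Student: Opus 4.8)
First a small point: for the stated conclusion $\xi_n \xrightarrow{d} \xi$ to be meaningful, the hypothesis $\eta^k \xrightarrow{d} \eta$ must be read with $\eta \overset{d}{=} \xi$, which is how the statement appears in \cite{Kallenberg1}; I shall assume this. My plan is to verify weak convergence directly against test functions, exploiting the standard fact that bounded Lipschitz functions are convergence-determining for weak convergence on a metric space. So I would fix $f\colon S \to \mathbb{R}$ with $\sup_S|f|\le 1$ and Lipschitz constant at most $1$, aim to show $\mathbb{E} f(\xi_n) \to \mathbb{E} f(\xi)$, and then note that rescaling handles all bounded Lipschitz $f$ and the portmanteau theorem gives $\xi_n \xrightarrow{d}\xi$.

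The single elementary input is that $|f(x)-f(y)| \le 2\,(\rho(x,y)\wedge 1)$ for all $x,y\in S$, since the left-hand side is bounded both by $2\sup_S|f|\le 2$ and by $\rho(x,y)$. I would then, for fixed $k$, split
\begin{align*}
|\mathbb{E} f(\xi_n) - \mathbb{E} f(\xi)|
&\le \mathbb{E}\,|f(\xi_n)-f(\eta_n^k)| + |\mathbb{E} f(\eta_n^k) - \mathbb{E} f(\eta^k)| + |\mathbb{E} f(\eta^k) - \mathbb{E} f(\xi)| \\
&\le 2\,\mathbb{E}\bigl(\rho(\eta_n^k,\xi_n)\wedge 1\bigr) + |\mathbb{E} f(\eta_n^k) - \mathbb{E} f(\eta^k)| + |\mathbb{E} f(\eta^k) - \mathbb{E} f(\xi)|,
\end{align*}
send $n\to\infty$ with $k$ fixed (the middle term vanishing because $\eta_n^k\xrightarrow{d}\eta^k$), and obtain
\[
\limsup_{n\to\infty}|\mathbb{E} f(\xi_n) - \mathbb{E} f(\xi)| \le 2\limsup_{n\to\infty}\mathbb{E}\bigl(\rho(\eta_n^k,\xi_n)\wedge 1\bigr) + |\mathbb{E} f(\eta^k) - \mathbb{E} f(\xi)| .
\]
Finally I would let $k\to\infty$: the first term on the right vanishes by the standing hypothesis on $\rho$, and the second vanishes because $\eta^k\xrightarrow{d}\xi$, while the left-hand side does not depend on $k$, so it must equal $0$.

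I do not expect a genuine obstacle here: the entire content is in ordering the iterated limits correctly — one must send $n\to\infty$ first, which is legitimate precisely because the split holds for each fixed $k$, and only afterwards send $k\to\infty$ — together with the observation that the truncation $\rho\wedge 1$ appearing in the hypothesis is exactly what can be paired with a bounded test function (an untruncated, Wasserstein-type control would be neither available nor needed). The only routine facts worth double-checking are the two-sided bound on $|f(x)-f(y)|$ and the classical statement that the $1$-bounded $1$-Lipschitz functions are convergence-determining on a metric space.
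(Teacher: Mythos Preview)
Your proof is correct. Note, however, that the paper does not give its own proof of this statement: it simply quotes it as Theorem~4.28 of \cite{Kallenberg1} and then applies it in the specific setting $S=\mathbb{R}$, $\rho=|\cdot|$. The argument you wrote --- testing against $1$-bounded $1$-Lipschitz functions, using $|f(x)-f(y)|\le 2(\rho(x,y)\wedge 1)$, and taking the iterated limits $n\to\infty$ then $k\to\infty$ --- is exactly the standard proof one finds in Kallenberg, so there is nothing to compare.
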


Our setting corresponds to $S = \mathbb{R}$, $\rho = | \cdot |$, and 
\begin{align}
\begin{split}
\xi &= \int_{I} g(\theta) \mu_{\alpha,\beta}(\text{d}\theta), \quad \xi_n = \int_{I} g(\theta) \mu_{n,\alpha,\beta}(\text{d}\theta), \\
\eta^k &= \int_{I} g(\theta) \mu^{(k)}_{\alpha,\beta}(\text{d}\theta), \quad 
\eta^k_n = \int_{I} g(\theta) \mu^{(k)}_{n,\alpha,\beta}(\text{d}\theta),
\end{split}
\end{align}
where $\mu^{(k)}_{n,\alpha,\beta}$ will now be defined by truncating the Fourier series of $\ln f_{n,\alpha,\beta}$. We have 
\begin{align}
\begin{split}
\ln f_{n,\alpha,\beta}(\theta) &= -\sum_{j=1}^\infty \frac{1}{j} \left( (\alpha + \beta)\text{Tr}(U_n^j)e^{-ij\theta} + (\alpha - \beta)\text{Tr}(U_n^{-j})e^{ij\theta} \right) \\
&= - \sum_{j=1}^\infty \frac{\text{Tr}(U^j_n)}{j} \left( 2\alpha \cos(j\theta) - 2i\beta \sin (j\theta) \right),
\end{split}
\end{align}
where we used that for $U \in O(n)$ or $Sp(2n)$ we have $\text{Tr}(U^{-j}_n) = \text{Tr}(U^j_n)$.

\begin{definition}\label{def:mu_n^(k)} For $k,n \in \mathbb{N}$, $\alpha \in \mathbb{R}$, $\beta \in i\mathbb{R}$ and $\theta \in I$, let
\begin{equation}
f_{n,\alpha,\beta}^{(k)}(\theta) = e^{-\sum_{j=1}^k \frac{\text{Tr}(U^j_n)}{j} \left(2\alpha \cos(j\theta) - 2i\beta \sin(j\theta)\right)}, 
\end{equation}
and 
\begin{equation}
\mu_{n,\alpha,\beta}^{(k)}(\text{d}\theta) = \frac{f_{n,\alpha,\beta}^{(k)}(\theta)}{\mathbb{E}(f_{n,\alpha,\beta}^{(k)}(\theta))} \text{d}\theta.
\end{equation}
\end{definition}

In order to apply Theorem \ref{thm:approximation} to verify (\ref{eqn:conv equiv}), which then implies our main results, we thus need to examine the following three limits: for any bounded continuous non-negative function on $I$ 
\begin{align} \label{eqn:first limit} 
\lim_{k \rightarrow \infty} \int_{I} g(\theta) \mu^{(k)}_{\alpha,\beta} (\text{d}\theta) \overset{d}{=}& \int_{I} g(\theta) \mu_{\alpha,\beta} (\text{d}\theta), 
\end{align}
\begin{align} \label{eqn:second limit}
\lim_{n \rightarrow \infty} \int_{I} g(\theta) \mu_{n,\alpha,\beta}^{(k)} (\text{d}\theta) \overset{d}{=}& \int_{I} g(\theta) \mu^{(k)}_{\alpha,\beta} (\text{d}\theta),
\end{align}
and
\begin{align} \label{eqn:third limit}
\lim_{k\rightarrow \infty} \limsup_{n\rightarrow \infty} \mathbb{E}\left( \left| \int_{I} g(\theta) \mu_{n,\alpha,\beta}^{(k)} (\text{d}\theta) - \int_{I} g(\theta) \mu_{n,\alpha,\beta} (\text{d}\theta) \right| \wedge 1 \right) = 0.
\end{align}
The first limit (\ref{eqn:first limit}) follows immediately from the definitions of $\mu_{\alpha,\beta}^{(k)}$ and $\mu_{\alpha,\beta}$ in Appendix \ref{appendix:GMC}: almost surely 
\begin{equation}
\lim_{k\rightarrow \infty} \mu^{(k)}_{\alpha,\beta} \overset{d}{=} \mu_{\alpha,\beta},
\end{equation}
so in particular almost surely (and thus also in distribution)
\begin{equation}
\lim_{k \rightarrow \infty} \int_{I} g(\theta) \mu^{(k)}_{\alpha,\beta} (\text{d}\theta) = \int_{I} g(\theta) \mu_{\alpha,\beta} (\text{d}\theta).
\end{equation}
The second limit (\ref{eqn:second limit}) will be proved in Section \ref{section:second limit}, using previously established results on the asymptotics of Toeplitz+Hankel determinants.\\

\noindent To show that the third limit (\ref{eqn:third limit}) holds, we will prove the following lemma in Section \ref{section:L2}:
\begin{lemma}[The $L^2$-limit]\label{lem:L2 limit} 
Let $\alpha^2 - \beta^2 <1/2$. Further let $0 \leq \alpha < 1/2$ in the case $I = [0,2\pi)$, and $\alpha > -1/4$ in the case $I = I_\epsilon$. Then for any bounded continuous non-negative function $g$ on $I$ the following expectation goes to zero, as first $n \rightarrow \infty$ and then $k \rightarrow \infty$:
\begin{align} \label{eqn:L2 limit}
\begin{split}
&\mathbb{E} \left( \left( \int_{I} g(\theta) \mu_{n,\alpha,\beta}(\text{d}\theta) - \int_{I} g(\theta) \mu_{n,\alpha,\beta}^{(k)}(\text{d}\theta) \right)^2 \right) \\
=&\int_{I} \int_{I} g(\theta)g(\theta') \frac{\mathbb{E}\left(f_{n,\alpha,\beta}^{(k)}(\theta) f_{n,\alpha,\beta}^{(k)}(\theta') \right)}{\mathbb{E}\left(f_{n,\alpha,\beta}^{(k)}(\theta)\right)\mathbb{E}\left(f_{n,\alpha,\beta}^{(k)}(\theta')\right)} \text{d}\theta \text{d}\theta'\\
&-2 \int_{I} \int_{I} g(\theta)g(\theta') \frac{\mathbb{E}\left(f_{n,\alpha,\beta}^{(k)}(\theta) f_{n,\alpha,\beta}(\theta') \right)}{\mathbb{E}\left(f_{n,\alpha,\beta}^{(k)}(\theta)\right)\mathbb{E}\left(f_{n,\alpha,\beta}(\theta')\right)} \text{d}\theta \text{d}\theta' \\
&+ \int_{I} \int_{I} g(\theta)g(\theta') \frac{\mathbb{E}\left(f_{n,\alpha,\beta}(\theta) f_{n,\alpha,\beta}(\theta') \right)}{\mathbb{E}\left(f_{n,\alpha,\beta}(\theta)\right)\mathbb{E}\left(f_{n,\alpha,\beta}(\theta')\right)} \text{d}\theta \text{d}\theta'.
\end{split}
\end{align}
\end{lemma}
All the expectations inside the integrals can be expressed as (sums of) Toeplitz+Hankel determinants (see (\ref{eqn:T+H def}) and Theorem \ref{thm:average}). We will prove Lemma \ref{lem:L2 limit} explicitly only in the case $I = [0,2\pi)$, $\alpha^2 - \beta^2 <1/2$, $0 \leq \alpha < 1/2$, using Theorem \ref{thm:T+H Claeys} below. \\

The proof in the case $I = I_\epsilon$, $\alpha^2 - \beta^2 < 1/2$, $\alpha > -1/4$, works in almost exactly the same way. Instead of Theorem \ref{thm:T+H Claeys}, which only holds for $\alpha \geq 0$, we use new results on the uniform asymptotics of Toeplitz+Hankel determinants of symbols with two pairs of merging singularities bounded away from $\pm 1$, which are stated in Theorems \ref{thm:T+H uniform} and \ref{thm:T, T+H extended} in the next section. To the best of our knowledge these results have not previously been set out and we believe them to be of independent interest.

\section{Statement of Results on Toeplitz and Toeplitz+Hankel Determinants with Merging Singularities}

\begin{definition} \label{def:FH}(\cite{DeiftItsKrasovsky})
A function $f:S^1 \rightarrow \mathbb{C}$ is called a symbol with a fixed number of Fisher-Hartwig singularities if it has the following form:
\begin{equation}
f(z) = e^{V(z)} z^{\sum_{j=0}^{m} \beta_j} \prod_{j=0}^m |z-z_j|^{2\alpha_j} g_{z_j,\beta_j}(z) z_j^{-\beta_j}, \quad z = e^{i\theta}, \quad \theta \in [0,2\pi),
\end{equation}
for some $m = 0,1,...$, where 
\begin{align}
z_j = e^{i\theta_j}, \quad j = 0,...,m, \quad 0 = \theta_0 < \theta_1 < ... < \theta_m < 2\pi;\\
g_{z_j,\beta_j}(z) = \begin{cases} e^{i\pi\beta_j} & 0 \leq \text{arg} \, z < \theta_j \\ e^{-i\pi\beta_j} & \theta_j \leq \text{arg} \, z < 2\pi \end{cases},\\
\Re \alpha_j > -1/2, \quad \beta_j \in \mathbb{C}, \quad j = 0,...,m,
\end{align}
and $V(z)$ is analytic in a neighborhood of the unit circle. A point $z_j$, $j=1,...,m$, is included if and only if either $\alpha_j \neq 0$ or $\beta_j \neq 0$, while always $z_0 = 1$, even if $\alpha_0 = \beta_0 = 0$.
\end{definition} 

Under these assumptions $V$ has a Laurent series, convergent in a neighborhood of the unit circle,
\begin{equation}
V(z) = \sum_{k = -\infty}^\infty V_kz^k, \quad V_k = \frac{1}{2\pi} \int_0^{2\pi} V(e^{i\theta})e^{-ik\theta}\text{d}\theta,
\end{equation} 
and the function $e^{V(z)}$ allows the standard Wiener-Hopf decomposition:
\begin{equation}
e^{V(z)} = b_+(z) b_0 b_-(z), \quad b_+(z) = e^{\sum_{k=1}^\infty V_kz^k}, \quad b_0 = e^{V_0}, \quad b_-(z) = e^{\sum_{k= -\infty}^{-1} V_k z^k}.
\end{equation}
Note that a symbol $f$ as in Definition \ref{def:FH} that is real on the unit circle and fulfills $f(e^{i\theta}) = f(e^{-i\theta})$ is of the following form:  
\begin{align} \label{eqn:FH even}
f(z) = e^{V(z)} \prod_{j=0}^{r+1} |z - e^{i\theta_j}|^{2\alpha_j} |z - e^{-i\theta_j}|^{2\alpha_j} g_{e^{i\theta_j},\beta_j}(z) g_{e^{i(2\pi - \theta_j)}, - \beta_j}(z) e^{-i\theta_j\beta_j} e^{i(2\pi - \theta_j)\beta_j}, 
\end{align}
where $r \in \mathbb{N} \cup \{0\}$, $0 = \theta_0 < \theta_1 < ... < \theta_r < \theta_{r+1} = \pi$, $\alpha_j > -1/2$, $\beta_j \in i\mathbb{R}$ for $j = 0,...,r+1$, and $\beta_0 = \beta_{r+1} = 0$, and where $V(e^{i\theta}) = V(e^{-i\theta})$ such that $V_k = V_{-k}$.\\
 
The Toeplitz and Toeplitz+Hankel determinants we consider are defined as follows:
\begin{align} \label{eqn:T+H def}
\begin{split}
D_n(f) &:= \det \left( f_{j-k} \right)_{j,k=0}^{n-1},\\
D_n^{T+H,1}(f) &:= \det \left( f_{j-k} + f_{j+k} \right)_{j,k = 0}^{n-1}, \\
D_n^{T+H,2}(f) &:= \det \left( f_{j-k} - f_{j+k+2} \right)_{j,k = 0}^{n-1}, \\
D_n^{T+H,3}(f) &:= \det \left( f_{j-k} - f_{j+k+1} \right)_{j,k = 0}^{n-1}, \\
D_n^{T+H,4}(f) &:= \det \left( f_{j-k} + f_{j+k+1} \right)_{j,k = 0}^{n-1},
\end{split}
\end{align}
where 
\begin{equation}
f_j = \frac{1}{2\pi} \int_0^{2\pi} f(e^{i\theta})e^{-ij\theta}\text{d}\theta.
\end{equation}
The study of the asymptotics of Toeplitz determinants was initiated by Szeg\"{o}. The simplest case is the strong Szeg\"{o} limit theorem (see for example \cite{Simons} for the most general version), which states that for $f(z) = e^{V(z)}$
\begin{equation}
D_n(e^V) = e^{nV_0}e^{\sum_{k = 1}^\infty kV_kV_{-k}}(1 + o(1)), \quad n \rightarrow \infty.
\end{equation}
Subsequently the asymptotics of $D_n(f)$ and $D_n^{T+H,\kappa}(f)$ have been computed under various assumptions on the symbol $f$ \cite{Basor, Basor2, BasorEhrhardt, BasorEhrhardt2, BasorEhrhardt3, BasorEhrhardt4, BoettcherSilbermann, BoettcherSilbermann2, BoettcherSilbermann3, EhrhardtSilbermann, Ehrhardt, Widom}; see \cite{DeiftItsKrasovsky2} for a recent historical account and \cite{DeiftItsKrasovsky} for the most general results. The most general results for the case that $f$ is real-valued are stated below in Theorems \ref{thm:T non-uniform} and \ref{thm:T+H non-uniform}. However,  all these results are only valid if the singularities of $f$ are bounded away from each other as $n \rightarrow \infty$.\\
In recent years advances were made on the asymptotics of $D_n(f)$ when the singularities merge as $n \rightarrow \infty$. In \cite{ClaeysKrasovsky} the asymptotics were computed for two merging singularities, and were related to a solution of the Painlev\'{e} V equation. Using the techniques in \cite{ClaeysKrasovsky}, we establish new results on Toeplitz determinants in Theorems \ref{thm:T uniform} and \ref{thm:T, T+H extended} that give the asymptotics when there are two conjugate pairs of merging singularities which are bounded away from $\pm 1$. Again the asymptotics are related to a Painlev\'{e} V equation, in fact the same one as in \cite{ClaeysKrasovsky}. \\
In \cite{Fahs} the asymptotics of $D_n(f)$ for arbitrarily many singularities merging were computed up to a factor $e^{O(1)}$ which is uniformly bounded and bounded away from $0$. The precise factor is believed to be related to higher-dimensional analogues of Painlev\'{e} equations. Using the Riemann-Hilbert analysis of \cite{Fahs}, the asymptotics of $D_n^{T+H,\kappa}(f)$ for $f$ having arbitrarily many singularities merging were computed up to an $e^{O(1)}$ factor in \cite{Claeys et al}. This result is stated in Theorem \ref{thm:T+H Claeys}. \\
For two conjugate pairs of merging singularities which are bounded away from $\pm 1$ our Theorem \ref{thm:T+H uniform} expresses that factor in terms of a Painlev\'{e} transcendent, which again is the same one as in \cite{ClaeysKrasovsky}. \\

Closely related to Toeplitz and Toeplitz+Hankel determinants are the Hankel determinants
\begin{align}
\begin{split}
\det \left( \int_I x^{j+k} w(x) \text{d}x \right)_{j,k = 0}^{n-1},
\end{split}
\end{align}
where $I \subset \mathbb{R}$ is an interval on which the weight $w(x)$ is supported. Hankel determinants with Fisher-Hartwig singularities have a weight of the form 
\begin{equation}
w(x) = e^{-nV(x)} e^{W(x)} \omega(x),
\end{equation}
where $W(x)$ is continuous, $\omega(x)$ has Fisher-Hartwig singularities and $V(x)$ is a potential (in case $I$ is unbounded $W(x)$ and $V(x)$ need to fulfill certain integrability conditions). There are three canonical cases:
\begin{itemize}
\item $I = \mathbb{R}$, $V(x) = 2x^2$,
\item $I = [-1,\infty)$, $V(x) = 2(x+1)$,
\item $I = [-1,1]$, $V(x) = 0$.
\end{itemize}
Using the Heine identity those cases of Hankel determinants appear as averages over the (scaled and shifted) Gaussian Unitary Ensemble, Wishart Ensemble and Jacobi Ensemble respectively. 

\cite{ItsKrasovsky}, \cite{Garoni} and \cite{Krasovsky} are important early works on the large-$n$ asymptotics of Hankel determinants with Fisher-Hartwig singularities that have greatly contributed to the development of the theory. The most general results have been proven in \cite{Charlier} and \cite{CharlierGharakhloo}. \cite{ClaeysFahs} concerns the case when singularities are merging as $n \rightarrow \infty$. There are various formulas which relate Hankel determinants with $I=[-1,1]$ and $V(x) = 0$ to Toeplitz determinants and Toeplitz+Hankel determinants, for example Theorem 2.6 and Lemma 2.7 in \cite{DeiftItsKrasovsky}. A combination of those two results gives a relation between Toeplitz and Toeplitz+Hankel determinants and is stated in Lemma \ref{lem:connection between T and T+H} below. We use this relation to prove our results on uniform asymptotics of Toeplitz+Hankel determinants.\\

The following two theorems state the asymptotics of Toeplitz and Toeplitz+Hankel determinants with real-valued symbols, when the singularities are bounded away from each other: 

\begin{theorem}(Ehrhardt \cite{Ehrhardt}) \label{thm:T non-uniform}
Let $f$ be as in Definition \ref{def:FH}, with $\alpha_j \in \mathbb{R}$ and $\beta_j \in i\mathbb{R}$ for $j = 0,...,m$, and $V(S^1) \subset \mathbb{R}$. Let $\epsilon > 0$, then as $n \rightarrow \infty$, uniformly for $\min_{j,k = 0,...,m} |z_j - z_k| \geq \epsilon$: 
\begin{align}
\begin{split}
\ln D_n(f) =& n V_0 + \sum_{k = 0}^\infty k|V_k|^2 + \left( \ln n \right) \sum_{j = 0}^m (\alpha_j^2 - \beta_j^2)  \\
&- \sum_{j=0}^m (\alpha_j - \beta_j) \left(\sum_{k = 1}^\infty V_k z_j^k\right) + (\alpha_j + \beta_j) \left( \sum_{k = 1}^\infty V_{-k} \overline{z_j}^k \right) \\
&+ \sum_{0 \leq j < k \leq m} 2(\beta_j\beta_k - \alpha_j\alpha_k) \ln |z_j - z_k| + (\alpha_j\beta_k - \alpha_k \beta_j) \ln \frac{z_k}{z_j e^{i\pi}} \\
&+ \sum_{j = 0}^m \ln \frac{G(1+\alpha_j + \beta_j) G(1+\alpha_j - \beta_j)}{G(1+2\alpha_j)} \\
&+ o(1).
\end{split}
\end{align}
\end{theorem}
\begin{theorem} (Deift, Its, Krasovsky \cite{DeiftItsKrasovsky}) \label{thm:T+H non-uniform}
Let $f(z)$ be as in (\ref{eqn:FH even}) and let $\epsilon > 0$. Then as $n \rightarrow \infty$, uniformly for $\min_{j,k = 0,...,r+1} |z_j - z_k| \geq \epsilon$:
\begin{align}
\begin{split} 
D_n^{T+H,\kappa}(f) =& e^{nV_0 + \frac{1}{2}\left( (\alpha_0 + \alpha_{r+1} + s'+ t' )V_0 - (\alpha_0+s')V(1) - (\alpha_{r+1}+t')V(-1) + \sum_{k=1}^\infty k V_k^2 \right)} \\
&\times \prod_{j=1}^r b_+(z_j)^{-\alpha_j+\beta_j} b_- (z_j)^{-\alpha_j - \beta_j} \\
&\times e^{-i \pi \left( \left( \alpha_0 + s' + \sum_{j=1}^r \alpha_j \right) \sum_{j=1}^r \beta_j + \sum_{1 \leq j < k \leq r}(\alpha_j \beta_k - \alpha_k \beta_j) \right)} \\
&\times 2^{(1-s'-t')n+q+\sum_{j=1}^r(\alpha_j^2-\beta_j^2)- \frac{1}{2}(\alpha_0+\alpha_{r+1}+s'+t')^2 + \frac{1}{2}(\alpha_0+\alpha_{r+1}+s'+t')} \\
&\times n^{\frac{1}{2}(\alpha_0^2+\alpha_{r+1}^2) + \alpha_0 s' + \alpha_{r+1} t' + \sum_{j=1}^r (\alpha_j^2 - \beta_j^2)}\\
&\times \prod_{1 \leq j < k \leq r} |z_j - z_k|^{-2(\alpha_j \alpha_k - \beta_j \beta_k)} |z_j - z_k^{-1}|^{-2(\alpha_j \alpha_k + \beta_j \beta_k)} \\ 
&\times \prod_{j=1}^r z_j^{2\tilde{A}\beta_j} |1-z_j^2|^{-(\alpha_j^2 + \beta_j^2)} |1-z_j|^{-2\alpha_j(\alpha_0+s')} |1+z_j|^{-2\alpha_j(\alpha_{r+1}+t')} \\
&\times \frac{\pi^{\frac{1}{2}(\alpha_0+\alpha_{r+1}+s'+t'+1)} G(1/2)^2}{G(1+\alpha_0+s') G(1 + \alpha_{r+1} + t')} 
\prod_{j = 1}^r \frac{G(1+\alpha_j + \beta_j)G(1+\alpha_j - \beta_j)}{G(1+2\alpha_j)} (1+o(1)),
\end{split}
\end{align} 
where $\ln z_j = i\theta_j$, $b_+(z_j)^{-\alpha_j+\beta_j} = (-\alpha_j + \beta_j) \sum_{k = 1}^\infty V_k z_j^k$ and similarly for $b_-(z_j)$, $\tilde{A} = \frac{1}{2}(\alpha_0 + \alpha_{3} + s' + t') + \sum_{j=1}^2 \alpha_j$ and 
\begin{align} 
D_n^{T+H,1}(f) =& \det \left( f_{j-k} + f_{j+k} \right)_{j,k = 0}^{n-1}, \quad \text{with  } q = -2n + 2, \quad s' = t' = - \frac{1}{2}, \\
D_n^{T+H,2}(f) =& \det \left( f_{j-k} - f_{j+k+2} \right)_{j,k = 0}^{n-1}, \quad \text{with  } q = 0, \quad s' = t' = \frac{1}{2}, \\
D_n^{T+H,3}(f) =& \det \left( f_{j-k} - f_{j+k+1} \right)_{j,k = 0}^{n-1}, \quad \text{with  } q = -n, \quad s' = \frac{1}{2}, \quad t' = -\frac{1}{2}, \\
D_n^{T+H,4}(f) =& \det \left( f_{j-k} + f_{j+k+1} \right)_{j,k = 0}^{n-1}, \quad \text{with  } q = -n, \quad s' = -\frac{1}{2}, \quad t' = \frac{1}{2}.
\end{align}
\end{theorem}
\begin{remark}
The asymptotics in Theorem \ref{thm:T uniform} and Theorem \ref{thm:T+H uniform} were not stated in \cite{DeiftItsKrasovsky} to hold uniformly when the singularities are bounded away from each other, but looking at their proofs one can quickly see that this is the case.
\end{remark}
Theorem 2.6 and Lemma 2.7 in \cite{DeiftItsKrasovsky} relate Toeplitz+Hankel determinants to Toeplitz determinants and monic orthogonal polynomials, which is how Theorem \ref{thm:T+H non-uniform} was proven. This relation can be stated as follows:
\begin{lemma}[Deift, Its, Krasovsky] \label{lem:connection between T and T+H}
Let $f$ be as in (\ref{eqn:FH even}). Then for all $n \in \mathbb{N}$:
\begin{align}
\begin{split}
D_n^{T+H,1}(f(z))^2 =& 4 \frac{\left( 1 + \Phi_{2n}(0) \right)^2}{\Phi_{2n}(1) \Phi_{2n}(-1)} D_{2n}(f(z)),\\
D_n^{T+H,2}(f(z))^2 =& \frac{\left( 1 + \Phi_{2n}(0) \right)^2}{\Phi_{2n}(1) \Phi_{2n}(-1)} D_{2n}\left( f(z) |1 - z|^2 |1 + z|^2 \right),\\
D_n^{T+H,3}(f(z))^2 =& \frac{1}{4} \frac{\left( 1 + \Phi_{2n}(0) \right)^2}{\Phi_{2n}(1) \Phi_{2n}(-1)} D_{2n}\left( f(z) |1 - z|^2 \right),\\
D_n^{T+H,4}(f(z))^2 =& \frac{1}{4}  \frac{\left( 1 + \Phi_{2n}(0) \right)^2}{\Phi_{2n}(1) \Phi_{2n}(-1)} D_{2n}\left( f(z) |1 + z|^2 \right),
\end{split}
\end{align}
where $\Phi_{n}(z) = z^n + ...$ are the monic orthogonal polynomials w.r.t. the symbols on the RHS, i.e. $f(z)$, $f(z)|1-z|^2 |1+z|^2$, $f(z)|1-z|^2$ and $f(z) |1+z|^2$.
\end{lemma}
We will use Lemma \ref{lem:connection between T and T+H} in Section \ref{section:T+H} to prove our results on the asymptotics of Toeplitz+Hankel determinants with merging singularities, stated in Theorems \ref{thm:T+H uniform} and \ref{thm:T, T+H extended} below, using our results on the asymptotics of Toeplitz determinants with merging singularities, stated in Theorems \ref{thm:T uniform} and \ref{thm:T, T+H extended} below. In those theorems we consider the following class of symbols: let $\epsilon \in (0,\pi/2)$ and define
\begin{align} \label{eqn:f}
\begin{split}
f_{p,t}(z) :=& e^{V(z)} z^{\sum_{j=0}^{5} \beta_j} \prod_{j=0}^5 |z - z_j|^{2\alpha_j} g_{z_j,\beta_j}(z) z_j^{-\beta_j}, \quad z = e^{i\phi}, \quad \phi \in [0,2\pi), \\
=& e^{V(z)} |z - 1|^{2\alpha_0} |z+1|^{2\alpha_3} \prod_{j=1}^2 |z - z_j|^{2\alpha_j} |z - \overline{z_j}|^{2\alpha_j} g_{z_j,\beta_j}(z) g_{\overline{z_j}, - \beta_j}(z) z_j^{-\beta_j} \overline{z_j}^{\beta_j},
\end{split}
\end{align}
where
\begin{itemize}
\item
$z_0 = 1$, $z_1 = e^{i(p-t)}$, $z_2 = e^{i(p+t)}$, $z_3 = -1$, $z_4 = \overline{z_2} = e^{i(2\pi-p-t)}$, $z_5 = \overline{z_1} = e^{i(2\pi - p +t)}$, with $p \in (\epsilon, \pi - \epsilon)$, $0 < t < \epsilon$, 
\item $\alpha_j \in (-1/2,\infty)$ for $j = 0,...,5$, and $\alpha_1 = \alpha_5$, $\alpha_2 = \alpha_4$, 
\item $\beta_0 = \beta_3 = 0$, $\beta_1 = - \beta_5 \in i\mathbb{R}$, $\beta_2 = - \beta_4 \in i\mathbb{R}$,
\item $V(z)$ is real-valued on the unit circle, and satisfies $V(e^{i\theta}) = V(e^{-i\theta})$. 
\end{itemize}
To state our results on the uniform asymptotics of $D_n(f_{p,t})$ and $D_n^{T+H,\kappa}(f_{p,t})$ we further need the following theorem (not in its most general form) from \cite{ClaeysKrasovsky}, which describes the relevant Painlev\'e transcendents: 
\begin{theorem} (Claeys, Krasovsky) \label{thm:painleve} 
Let $\alpha_1, \alpha_2, \alpha_1 + \alpha_2 > -\frac{1}{2}$, $\beta_1,\beta_2 \in i\mathbb{R}$ and consider the $\sigma$-form of the Painleve V equation
\begin{equation} \label{eqn:painleve}
s^2 \sigma^2_{ss} = (\sigma - s \sigma_s + 2 \sigma_s^2)^2 - 4(\sigma_s - \theta_1)(\sigma_s - \theta_2)(\sigma_s - \theta_3)(\sigma_s - \theta_4),
\end{equation}
where the parameters $\theta_1,\theta_2,\theta_3,\theta_4$ are given by 
\begin{align}
\begin{split}
\theta_1 =& - \alpha_2 + \frac{\beta_1 + \beta_2}{2}, \quad \theta_2 = \alpha_2 + \frac{\beta_1 + \beta_2}{2}, \\
\theta_3 =& \alpha_1 - \frac{\beta_1 - \beta_2}{2}, \quad \theta_4 = - \alpha_1 - \frac{\beta_1 + \beta_2}{2}.
\end{split}
\end{align} 
Then there exists a solution $\sigma(s)$ to (\ref{eqn:painleve}) which is real and free of poles for $s \in -i\mathbb{R}_+$, and which has the following asymptotic behavior along the negative imaginary axis:
\begin{align}
\begin{split}
\sigma(s) =& 2\alpha_1\alpha_2 - \frac{(\beta_1 + \beta_2)^2}{2} + O(|s|^\delta), \quad s \rightarrow -i0_+,\\
\sigma(s) =& \frac{\beta_1 - \beta_2}{2} s - \frac{(\beta_1 - \beta_2)^2}{2} + O(|s|^{-\delta}), \quad s \rightarrow -i\infty,
\end{split}
\end{align}
for some $\delta > 0$.
\end{theorem}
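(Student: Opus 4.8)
The plan is to realise $\sigma(s)$ through a Riemann--Hilbert problem (RHP) governing the isomonodromic deformation attached to a pair of Fisher--Hartwig singularities, and to read off both the $\sigma$-PV equation and the two asymptotic expansions from that RHP. Fix the exponents $\alpha_1,\alpha_2,\beta_1,\beta_2$ and let $\Psi(\lambda)=\Psi(\lambda;s)$ be the $2\times 2$ matrix solving the model RHP whose jump contour is a fixed union of rays emanating from the two points $\lambda=\pm 1$, with piecewise-constant jump matrices built from $e^{\pm\pi i\beta_j}$, $e^{2\pi i\alpha_j}$ and the local exponents $\alpha_j\pm\beta_j$, normalised so that $\Psi(\lambda)=\bigl(I+\Psi_1/\lambda+\cdots\bigr)\,\lambda^{-(\beta_1+\beta_2)\sigma_3}\,e^{-\tfrac{is}{2}\lambda\sigma_3}$ as $\lambda\to\infty$ (the precise normalisation is convention-dependent). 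The deformation parameter $s$ enters only through the exponential factor; differentiating in $\lambda$ and in $s$ yields a Lax pair whose compatibility condition is a parametrisation of Painlev\'e~V. Defining $\sigma(s)$ to be the appropriate Hamiltonian combination of the entries of $\Psi_1$ then gives, after a direct computation with the Lax pair, the $\sigma$-form (\ref{eqn:painleve}) with the stated $\theta_1,\dots,\theta_4$.

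The first substantive step is \textbf{existence and pole-freeness on the entire ray} $s\in -i\mathbb{R}_+$. I would establish this by a vanishing lemma. The hypotheses $\alpha_1,\alpha_2>-\tfrac12$ guarantee $L^2$-admissibility of $\Psi$ at the two singular points; $\beta_1,\beta_2\in i\mathbb{R}$ keep the associated power factors bounded on the contour; and for $s\in -i\mathbb{R}_+$ the exponent $-\tfrac{is}{2}\lambda$ is real on the relevant rays, so after a fixed conjugation the jump matrices acquire a Schwarz-reflection symmetry of the form $J(\bar\lambda)=J(\lambda)^{*}$ with the correct sign. Consequently the singular integral operator of the RHP is Fredholm of index zero and the homogeneous problem (zero boundary value at infinity) has only the trivial solution; so $\Psi$ exists and is unique for each such $s$, depends real-analytically on $s$, and hence $\sigma(s)$ is real and free of poles on $-i\mathbb{R}_+$. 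The extra condition $\alpha_1+\alpha_2>-\tfrac12$ is exactly what keeps the $s\to 0$ limit of this RHP non-degenerate.

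The second step is the two asymptotic expansions, obtained by a Deift--Zhou steepest-descent analysis of the \emph{same} $\Psi$. As $s\to -i0_+$ the exponential factor tends to the identity, and $\Psi$ limits to the RHP for a \emph{single} Fisher--Hartwig singularity of combined strength $\alpha_1+\alpha_2$ and $\beta_1+\beta_2$; this model problem is solved explicitly in terms of confluent hypergeometric (Kummer) functions, and matching its subleading coefficient to $\Psi_1$ produces $\sigma(s)=2\alpha_1\alpha_2-\tfrac{(\beta_1+\beta_2)^2}{2}+O(|s|^\delta)$. As $s\to -i\infty$ the large oscillatory factor forces a nontrivial saddle structure; after the appropriate rescaling the two singular points separate and the RHP decouples into two independent confluent hypergeometric parametrices joined by a slowly varying diagonal phase, and extracting $\Psi_1$ from this decoupling yields $\sigma(s)=\tfrac{\beta_1-\beta_2}{2}s-\tfrac{(\beta_1-\beta_2)^2}{2}+O(|s|^{-\delta})$.

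The \textbf{main obstacle} is the global character of the statement: the two expansions must describe one and the same solution, which is precisely the (classically delicate) connection problem for $\sigma$-PV. The RHP formulation is what makes this manageable --- the vanishing lemma provides a single $\Psi(\cdot;s)$ valid for all $s\in -i\mathbb{R}_+$, so the $s\to -i0_+$ and $s\to -i\infty$ analyses are automatically asymptotics of the same function --- but one must still control the transition regime in $s$ and check that the steepest-descent deformations performed at the two ends are consistent with the contour and jump data fixed at the outset. A secondary point is verifying that the Hamiltonian quantity chosen for $\sigma$ satisfies (\ref{eqn:painleve}) with the indicated $\theta_j$ rather than a B\"acklund-equivalent equation; this is a finite but sign-sensitive computation tied to the normalisation above.
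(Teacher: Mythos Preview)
Your outline is sound and matches the approach of the original source; note, however, that the present paper does not itself prove this theorem. It is quoted verbatim (in a specialised form) from Claeys--Krasovsky, \emph{Toeplitz determinants with merging singularities}, and the paper only \emph{uses} the conclusion together with the RHP for $\Psi$ recorded in Appendix~\ref{appendix:Psi}. So there is no ``paper's own proof'' to compare against.

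That said, your sketch is essentially the Claeys--Krasovsky argument. A few normalisation points worth aligning: in their $\Psi$-RHP the two singular points sit at $\zeta=\pm i$ (not $\pm 1$), the large-$\zeta$ behaviour carries the factor $P^{(\infty)}(\zeta)=(is/2)^{-(\beta_1+\beta_2)\sigma_3}(\zeta-i)^{-\beta_2\sigma_3}(\zeta+i)^{-\beta_1\sigma_3}$ and $e^{-\tfrac{is}{4}\zeta\sigma_3}$, and the function $\sigma$ is extracted not just from $\Psi_1$ but via the analytic prefactors $F_1,F_2$ at $\pm i$ (cf.\ Proposition~\ref{prop:F} and equations (\ref{eqn:F_1 neq 0})--(\ref{eqn:F_2 0})). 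The vanishing-lemma step is exactly as you describe: purely imaginary $\beta_j$ and $s\in -i\mathbb{R}_+$ give the Schwarz symmetry, and $\alpha_j>-\tfrac12$ gives local $L^2$ control; the extra hypothesis $\alpha_1+\alpha_2>-\tfrac12$ is what makes the $s\to 0$ confluent-hypergeometric model (with merged exponent $\alpha_1+\alpha_2$) solvable. Your identification of the global connection problem as the main obstacle, resolved by having one $\Psi$ for all $s$ on the ray, is precisely the point of the construction.
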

Our result on the uniform asymptotics of $D_n(f_{p,t})$ is then the following, which we prove using the Riemann-Hilbert techniques in \cite{ClaeysKrasovsky}:
\begin{theorem} \label{thm:T uniform}
Let $f_{p,t}$ be as in (\ref{eqn:f}) with $\alpha_1 + \alpha_2 > -1/2$, and let $\sigma$ satisfy the conditions of Theorem \ref{thm:painleve}. Then we have the following large $n$ asymptotics, uniformly for $p \in (\epsilon, \pi - \epsilon)$ and $0 < t < t_0$, for a sufficiently small $t_0 \in (0,\epsilon)$:
\begin{small}
\begin{align}
\begin{split}
\ln D_n(f_{p,t}) =& 2int(\beta_1 - \beta_2) + n V_0 + \sum_{k = 1}^\infty kV_k^2 + \ln (n) \sum_{j = 0}^5 (\alpha_j^2 - \beta_j^2)   \\
&- \sum_{j=0}^5 (\alpha_j - \beta_j) \left(\sum_{k = 1}^\infty V_k z_j^k\right) + (\alpha_j + \beta_j) \left( \sum_{k = 1}^\infty V_k \overline{z_j}^k \right) \\
&+ \sum_{\substack{0 \leq j < k \leq 5 \\ (j,k) \neq (1,2),(4,5)}} 2(\beta_j\beta_k - \alpha_j\alpha_k) \ln |z_j - z_k| + (\alpha_j\beta_k - \alpha_k \beta_j) \ln \frac{z_k}{z_j e^{i\pi}} \\
&+4it(\alpha_1\beta_2 - \alpha_2\beta_1) \\
&+ 2\int_0^{-2int} \frac{1}{s} \left( \sigma(s) - 2\alpha_1\alpha_2 + \frac{1}{2}(\beta_1+\beta_2)^2 \right) \text{d}s + 4\left( \beta_1 \beta_2 - \alpha_1\alpha_2 \right) \ln \frac{\sin t}{nt}\\ 
&+ \ln \frac{G(1+\alpha_0)^2}{G(1+2\alpha_0)} + \ln \frac{G(1+\alpha_3)^2}{G(1+2\alpha_3)} \\
&+ \ln \frac{G(1+\alpha_1+\alpha_2+\beta_1+\beta_2)^2 G(1+\alpha_1+\alpha_2-\beta_1-\beta_2)^2}{G(1+2\alpha_1+2\alpha_2)^2}\\
&+ o(1),
\end{split}
\end{align}
\end{small}
where $\ln \frac{z_k}{z_j e^{i\pi}} = i(\theta_k - \theta_j - \pi)$ and $G$ denotes the Barnes $G$-function. 
\end{theorem}
Our result on the uniform asymptotics of $D_n^{T+H}(f_{p,t})$ is as follows, using the same notation as in Theorem \ref{thm:T+H non-uniform}:
\begin{theorem}  \label{thm:T+H uniform}
Let $f_{p,t}$ be as in (\ref{eqn:f}) with $\alpha_1 + \alpha_2 > -1/2$, and let $\sigma$ satisfy the conditions of Theorem \ref{thm:painleve}. For $D_n^{T+H,\kappa}(f_{p,t})$ we get, as $n \rightarrow \infty$, uniformly in $p \in (\epsilon, \pi - \epsilon)$ and $0 < t < t_0$, for a sufficiently small $t_0 \in (0,\epsilon)$:
\begin{align} \label{eqn:T+H uniform}
\begin{split} 
D_n^{T+H,\kappa}(f_{p,t}) =& e^{2int(\beta_1 - \beta_2) + nV_0 + \frac{1}{2}\left( (\alpha_0 + \alpha_{3} + s'+ t' )V_0 - (\alpha_0+s')V(1) - (\alpha_{3}+t')V(-1) + \sum_{k=1}^\infty k V_k^2 \right)} \\
&\times \prod_{j=1}^2 b_+(z_j)^{-\alpha_j+\beta_j} b_- (z_j)^{-\alpha_j - \beta_j} \\
&\times e^{-i \pi \left( \alpha_0 + s' + \sum_{j=1}^2 \alpha_j \right) \sum_{j=1}^2 \beta_j } \\
&\times 2^{(1-s'-t')n+q+\sum_{j=1}^2 (\alpha_j^2-\beta_j^2)- \frac{1}{2}(\alpha_0+\alpha_{3}+s'+t')^2 + \frac{1}{2}(\alpha_0+\alpha_{3}+s'+t')} \\
&\times n^{\frac{1}{2}(\alpha_0^2+\alpha_{3}^2) + \alpha_0 s' + \alpha_{3} t'+ \sum_{j=1}^2 (\alpha_j^2 - \beta_j^2)}\\
&\times \left| \frac{\sin t}{2nt} \right|^{-2(\alpha_1 \alpha_2 - \beta_1 \beta_2)} |2\sin p|^{-2(\alpha_1 \alpha_2 + \beta_1 \beta_2)} \\ 
&\times e^{ \int_0^{-4int} \frac{1}{s} \left( \sigma(s) - 2\alpha_1\alpha_2 + \frac{1}{2}(\beta_1+\beta_2)^2 \right) \text{d}s } \\
&\times \prod_{j=1}^2 z_j^{2\tilde{A}\beta_j} |1-z_j^2|^{-(\alpha_j^2 + \beta_j^2)} |1-z_j|^{-2\alpha_j(\alpha_0+s')} |1+z_j|^{-2\alpha_j(\alpha_{3}+t')} \\
&\times \frac{\pi^{\frac{1}{2}(\alpha_0+\alpha_{3}+s'+t'+1)} G(1/2)^2}{G(1+\alpha_0+s') G(1 + \alpha_{3} + t')} \\
&\times \frac{G(1 + \alpha_1 + \alpha_2 + \beta_1 + \beta_2) G(1 + \alpha_1 + \alpha_2 - \beta_1 - \beta_2)}{G(1 + 2 \alpha_1 + 2\alpha_2 )} (1+ o(1)).
\end{split}
\end{align} 
\end{theorem}
\begin{remark} One can probably get similar results if more generally one chooses complex $\alpha_j, \beta_j$ with $\Re(\alpha_j) > -1/2$, but to prove Theorem \ref{thm:main} this is not necessary. 
\end{remark} 
\begin{remark} 
The requirements $p \in (\epsilon, \pi - \epsilon)$, $t_0 \in (0,\epsilon)$ are necessary for us to be able to apply the proof techniques in \cite{ClaeysKrasovsky}. The results there only hold for two merging singularities, while if $p \rightarrow 0,\pi$ we have 5 singularities merging at $\pm 1$, and if $t \rightarrow \epsilon$ we can have $p \pm t \rightarrow 0,\pi$ which means 3 singularities are merging at $\pm 1$.
\end{remark}  
\begin{remark}
Comparing the uniform asymptotics of $D_n(f_{p,t})$ in Theorem \ref{thm:T uniform} with the non-uniform asymptotics one gets from Theorem \ref{thm:T non-uniform}, one can see that the different expansions are related in the following way:
\begin{align} \label{eqn:relation}
\begin{split}
&2\sum_{j=1}^2 \ln \frac{G(1+\alpha_j + \beta_j) G(1 + \alpha_j - \beta_j)}{G(1+2\alpha_j)} - 2i\pi (\alpha_1 \beta_2 -\alpha_2\beta_1) + o(1)_{non-uniform} \\
=& 2int(\beta_1 - \beta_2) + 2\int_0^{-2int} \frac{1}{s} \left( \sigma(s) - 2\alpha_1\alpha_2 + \frac{1}{2}(\beta_1+\beta_2)^2 \right) \text{d}s + 4(\beta_1\beta_2 - \alpha_1\alpha_2) \ln \frac{1}{2nt} \\
&+ 2\ln \frac{G(1 + \alpha_1 + \alpha_2 + \beta_1 + \beta_2) G(1 + \alpha_1 + \alpha_2 - \beta_1 - \beta_2)}{G(1 + 2 \alpha_1 + 2\alpha_2 )} + o(1)_{uniform}.
\end{split}
\end{align}
This is exactly the same relationship as the one between the non-uniform and uniform expansions of $D_n(f_t)$ in \cite{ClaeysKrasovsky} (see their (1.8), (1.24) and (1.26)). 
\end{remark}
\begin{remark}
The relationship between the uniform asymptotics of $D_n^{T+H,\kappa}(f_{p,t})$ in Theorem \ref{thm:T+H uniform} and the non-uniform asymptotics one gets from Theorem \ref{thm:T+H non-uniform} is given by (\ref{eqn:relation}), with both sides divided by $2$, and $n$ replaced by $2n$. This is because the uniform asymptotics of $D_n^{T+H,\kappa}(f_{p,t})$ are related to the uniform asymptotics of $D_{2n}(f_{p,t})^{1/2}$ (with added singularities at $\pm 1$) and $\Phi_n(\pm 1)^{1/2}$, $\Phi_n(0)$, by Lemma \ref{lem:connection between T and T+H}. As will be argued in Section \ref{section:T+H} the asymptotics of $\Phi_{2n}(\pm 1)$, computed as in \cite{DeiftItsKrasovsky} are unaffected by the merging of singularities away from $\pm 1$, and $\Phi_{2n}(0) = o(1)$ both when singularities merge or not. Thus only the asymptotics of $D_{2n}(f_{p,t})^{1/2}$ are different in the merging and non-merging regime, and their relationship is given by (\ref{eqn:relation}) with both sides divided by $2$, and $n$ replaced by $2n$.
\end{remark}  
Our last results corresponds to Theorem 1.11 in \cite{ClaeysKrasovsky}. It extends Theorem \ref{thm:T non-uniform} for the symbol $f_{p,t}$, and Theorem \ref{thm:T+H non-uniform} in the case $r = 2$: 
\begin{theorem} \label{thm:T, T+H extended}
Let $\omega(x)$ be a positive, smooth function for $x$ sufficiently large, s.t.
\begin{equation}
\omega(x) \rightarrow \infty, \quad \omega(x) = o(x), \quad \text{as } x \rightarrow \infty.
\end{equation}
Then for any $t_0 \in (0,\epsilon)$ the expansion of $D_n(f_{p,t})$ one gets from Theorem \ref{thm:T non-uniform} holds uniformly in $p \in (\epsilon, \pi - \epsilon)$ and $\omega(n)/n < t < t_0$. For $r = 2$, the expansion of Theorem \ref{thm:T+H non-uniform} holds uniformly in $\theta_1,\theta_2 \in (\epsilon, \pi - \epsilon)$ for which $\omega(n)/n < |\theta_1 - \theta_2|$. 
\end{theorem}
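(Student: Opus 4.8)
\textbf{Proof strategy for Theorem \ref{thm:T, T+H extended}.} The idea is to reduce the claim to the uniform expansions of Theorem \ref{thm:T, T+H uniform} by feeding in the large‑argument asymptotics of the Painlevé transcendent $\sigma$ from Theorem \ref{thm:painleve} (so $\alpha_1+\alpha_2>-1/2$ is assumed throughout, as in Theorem \ref{thm:T, T+H uniform}). Split the admissible range of $t$ (resp.\ of $|\theta_1-\theta_2|$) into $[\delta_0,t_0]$ and $(\omega(n)/n,\delta_0]$. On the first piece the points $z_0,\dots,z_5$ of $f_{p,t}$ stay uniformly separated as $p$ ranges over $(\epsilon,\pi-\epsilon)$, so the required uniformity is immediate from the remark after Theorem \ref{thm:T+H asymptotics}, which states that the expansions of Theorems \ref{thm:T non-uniform} and \ref{thm:T+H asymptotics} hold uniformly on compact subsets of the configuration space of distinct points. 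One may take $\delta_0$ small enough that Theorem \ref{thm:T, T+H uniform} applies on the remaining window $\omega(n)/n<t\le\delta_0$; there $|s|=2nt>2\omega(n)\to\infty$ (resp.\ $|s|=4nt\to\infty$ for the $T+H$ determinants), so the argument of $\sigma$ in the Painlevé integral runs out to $-i\infty$.

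Writing $\sigma(s)=\tfrac{\beta_1-\beta_2}{2}s-\tfrac{(\beta_1-\beta_2)^2}{2}+\rho(s)$ with $\rho(s)=O(|s|^{-\delta})$, I would split $2\int_0^{-2int}\tfrac1s\bigl(\sigma(s)-2\alpha_1\alpha_2+\tfrac12(\beta_1+\beta_2)^2\bigr)\,\mathrm ds$ at a fixed point of $-i\mathbb{R}_+$: the portion near $0$ is a constant (the integrand being $O(|s|^{\delta-1})$ there), the polynomial‑plus‑constant part of $\sigma$ integrates (prefactor included) to $-2int(\beta_1-\beta_2)+4(\beta_1\beta_2-\alpha_1\alpha_2)\ln(-2int)$ up to a constant, and the tail $\int_{-2int}^{-i\infty}\rho(s)/s\,\mathrm ds=O((nt)^{-\delta})=o(1)$ uniformly. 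Substituting back into Theorem \ref{thm:T, T+H uniform}: the prefactor $2int(\beta_1-\beta_2)$ cancels; the $\ln(nt)$ contributions combine with $4(\beta_1\beta_2-\alpha_1\alpha_2)\ln\tfrac{\sin t}{nt}$ into $2(\beta_1\beta_2-\alpha_1\alpha_2)\bigl(\ln|z_1-z_2|+\ln|z_4-z_5|\bigr)$, restoring the pairs $(1,2),(4,5)$ dropped from the double sum; and the explicit $4it(\alpha_1\beta_2-\alpha_2\beta_1)$ together with the integral constant supplies the $(1,2),(4,5)$ terms of the $\ln\tfrac{z_k}{z_je^{i\pi}}$ sum and reassembles the six single‑singularity Barnes‑$G$ factors out of the merged factor $G(1+\alpha_1+\alpha_2\pm(\beta_1+\beta_2))$. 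This is precisely the algebraic identity (\ref{eqn:relation}) (for the $T+H$ case, (\ref{eqn:relation}) with both sides halved and $n\mapsto 2n$). To avoid recomputing the Painlevé connection constant, note that after the substitution the uniform expansion reads $\ln D_n(f_{p,t})=E(n,p,t)+o(1)$, uniformly, where the only undetermined ingredient of the explicit function $E$ is that constant; its $n$‑growing part ($nV_0$, $\ln n\sum_{j=0}^5(\alpha_j^2-\beta_j^2)$, the $V_k$‑sums) already matches term by term the expansion $\tilde E$ of Theorem \ref{thm:T non-uniform} applied to $f_{p,t}$ — in particular the two $\ln n$ pieces coming from $\ln(-2int)$ and from $\ln\tfrac{\sin t}{nt}$ cancel. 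Hence $E-\tilde E$ is independent of $n$ and continuous in $(p,t)$; since for each fixed $(p,t)$ with $t\in(0,\delta_0)$ both Theorems \ref{thm:T, T+H uniform} and \ref{thm:T non-uniform} compute $\ln D_n(f_{p,t})$ up to $o(1)$, we get $E-\tilde E=o(1)$, hence $E\equiv\tilde E$. This yields the uniform expansion on $\omega(n)/n<t\le\delta_0$, and combining with the compact‑range case proves the Toeplitz statement. The $T+H$ statement for $r=2$ follows identically, either from (\ref{eqn:T+H uniform}) and Theorem \ref{thm:T+H asymptotics} directly, or by passing the just‑obtained uniform asymptotics of $D_{2n}(f_{p,t})$ together with the (uniform, for separated singularities) asymptotics of $\Phi_n(0),\Phi_n(\pm1)$ from \cite{DeiftItsKrasovsky} through Theorem 2.6 and Lemma 2.7 of \cite{DeiftItsKrasovsky}.

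The step I expect to be most delicate is the bookkeeping just described: one must check that every $t$‑ and $n$‑dependent term thrown off by the large‑$s$ expansion of the Painlevé integral lands exactly where Theorem \ref{thm:T non-uniform} puts it, including the $i\pi$‑phases hidden inside $\ln(-2int)$ and inside $\ln\tfrac{z_k}{z_je^{i\pi}}$, and that the remainder $\int_{-2int}^{-i\infty}\rho(s)/s\,\mathrm ds$ is $o(1)$ uniformly over the whole window $t>\omega(n)/n$ — which holds because there $nt\ge\omega(n)\to\infty$, even though $nt$ may diverge arbitrarily slowly. Identifying the connection constant is, by contrast, not an obstacle: it is pinned down by consistency with Theorem \ref{thm:T non-uniform} in the overlapping fixed‑configuration regime, exactly as encoded in (\ref{eqn:relation}).
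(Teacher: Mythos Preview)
Your argument is correct, but the route differs from the paper's. The paper does not deduce Theorem~\ref{thm:T, T+H extended} from the already-established uniform expansion of Theorem~\ref{thm:T, T+H uniform}; instead, it obtains it directly from the Riemann--Hilbert analysis in the regime $\omega(n)/n<t<t_0$. Concretely, in that regime the paper builds explicit local parametrices $\tilde P_k$ around each $z_k$ out of confluent hypergeometric functions (no Painlev\'e needed), obtains a closed-form large-$n$ asymptotic for the differential identity $\tfrac1i\,\tfrac{\mathrm d}{\mathrm dt}\ln D_n(f_{p,t})=S(p,t;\dots)+\tilde\epsilon_{n,p,t}$ with $\tilde\epsilon_{n,p,t}=O((nt)^{-1})$ uniformly, and then integrates $S$ in $t$ from $t$ up to the fixed $t_0$, plugging in Theorem~\ref{thm:T non-uniform} at the endpoint $t_0$. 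The integrated expression reproduces exactly the formula of Theorem~\ref{thm:T non-uniform}, with a uniform $O(\omega(n)^{-1})$ error. The Toeplitz$+$Hankel statement is then derived, as you also propose, by passing through $\Phi_n(0),\Phi_n(\pm1)$ and Theorem~2.6/Lemma~2.7 of \cite{DeiftItsKrasovsky}.

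Your approach treats Theorem~\ref{thm:T, T+H uniform} as a black box and recovers Theorem~\ref{thm:T, T+H extended} by inserting the large-$|s|$ asymptotics of $\sigma$ into the Painlev\'e integral, with the connection constant fixed by consistency at any fixed $(p,t)$ via~(\ref{eqn:relation}). This is clean and avoids reopening the RHP machinery, and your splitting at a fixed $\delta_0$ together with the uniformity remark after Theorem~\ref{thm:T+H asymptotics} handles the compact piece correctly. The cost is a slightly weaker error, $O((nt)^{-\delta})=O(\omega(n)^{-\delta})$ coming from the Painlev\'e tail, rather than the paper's $O(\omega(n)^{-1})$; and logically your proof must be placed \emph{after} Theorem~\ref{thm:T, T+H uniform}, whereas the paper's argument emerges alongside it from the same RHP analysis. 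Either way the result follows.
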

Finally, we state the following result which gives asymptotics of Toeplitz+Hankel determinants which hold uniformly for arbitrarily many merging singularities, but only up to a multiplicative constant:
\begin{theorem} (Claeys, Glesner, Minakov, Yang \cite{Claeys et al}) \label{thm:T+H Claeys}
Let $f$ be as in (\ref{eqn:FH even}) with $r \in \mathbb{N}$, $\alpha_0 = \alpha_{r+1} = 0$ and $\alpha_j \geq 0$, $j = 1,...,r$. Then we have uniformly over the entire region $0 < \theta_1 < ... < \theta_r < \pi$, as $n \rightarrow \infty$,
\begin{align}
\begin{split}
D_n^{T+H,1}(f) =& Fe^{nV_0} \prod_{j = 1}^r n^{\alpha_j^2 - \beta_j^2} \left( \sin \theta_j + \frac{1}{n} \right)^{\alpha_j - \alpha_j^2 - \beta_j^2} \times e^{O(1)},\\
D_n^{T+H,2}(f) =& Fe^{nV_0} \prod_{j = 1}^r n^{\alpha_j^2 - \beta_j^2} \left( \sin \theta_j + \frac{1}{n} \right)^{-\alpha_j - \alpha_j^2 - \beta_j^2} \times e^{O(1)},\\
D_n^{T+H,3}(f) =& Fe^{nV_0} \prod_{j = 1}^r n^{\alpha_j^2 - \beta_j^2} \left( \sin \frac{\theta_j}{2} + \frac{1}{n} \right)^{-\alpha_j - \alpha_j^2 - \beta_j^2} \left( \cos \frac{\theta_j}{2} + \frac{1}{n} \right)^{\alpha_j - \alpha_j^2 - \beta_j^2} \times e^{O(1)},\\
D_n^{T+H,4}(f) =& Fe^{nV_0} \prod_{j = 1}^r n^{\alpha_j^2 - \beta_j^2} \left( \sin \frac{\theta_j}{2} + \frac{1}{n} \right)^{\alpha_j - \alpha_j^2 - \beta_j^2} \left( \cos \frac{\theta_j}{2} + \frac{1}{n} \right)^{-\alpha_j - \alpha_j^2 - \beta_j^2} \times e^{O(1)},
\end{split}
\end{align}
where 
\begin{align}
F = \prod_{1 \leq j < k \leq r } \left( \sin \left| \frac{\theta_j - \theta_k}{2} \right| + \frac{1}{n} \right)^{-2( \alpha_j\alpha_k - \beta_j \beta_k)} \left( \sin \left| \frac{\theta_j +\theta_k}{2} \right| + \frac{1}{n} \right)^{-2(\alpha_j \alpha_k + \beta_j \beta_k)}. 
\end{align}
Here $e^{O(1)}$ denotes a function which is uniformly bounded and bounded away from $0$ as $n \rightarrow \infty$. 
\end{theorem}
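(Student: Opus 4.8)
\textbf{Proof sketch of Theorem~\ref{thm:T+H Claeys}.} The plan is to extract $D_n^{T+H,\kappa}(f)$ from a Riemann--Hilbert analysis of the polynomials orthogonal on the unit circle with symbol $f$, carried out uniformly in the singularity locations $0<\theta_1<\dots<\theta_r<\pi$ across all coalescence configurations, and then to integrate a differential identity. Since only the precision $e^{O(1)}$ is wanted, the various model (local) parametrices need only be shown to be uniformly bounded with uniformly bounded inverse; they need not be solved explicitly in terms of Painlev\'e transcendents, and this is what makes the fully uniform statement tractable. Concretely, by Theorem~2.6 and Lemma~2.7 of \cite{DeiftItsKrasovsky} — the same identities used to derive Theorem~\ref{thm:T+H asymptotics} — one writes $D_n^{T+H,\kappa}(f)$ in terms of $D_{2n}(\hat f)^{1/2}$ and of values at $0$ and $\pm1$ of the monic orthogonal polynomials, where $\hat f$ is $f$ with extra Fisher--Hartwig singularities of exponents $s',t'\in\{\pm\tfrac12\}$ inserted at $z=1$ and $z=-1$, the pair $(s',t')$ and the accompanying powers of $2$ being the ones recorded for each $\kappa$ in Theorem~\ref{thm:T, T+H uniform}. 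Hence it suffices to obtain, up to a factor $e^{O(1)}$ and uniformly, the large-$n$ behaviour of $D_{2n}(\hat f)$ and of those polynomial values; $\hat f$ has the $2r+2$ Fisher--Hartwig points $e^{\pm i\theta_j}$ ($1\le j\le r$) and $\pm1$, which may coalesce as $\theta_j\to\theta_k$, as $\theta_j+\theta_k\to0$ or $2\pi$, as $\theta_j\to0$ (the three points $e^{\pm i\theta_j},1$ merging) and as $\theta_j\to\pi$ (the three points $e^{\pm i\theta_j},-1$ merging).

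\emph{Riemann--Hilbert analysis with merged parametrices.} Perform the usual transformations $Y\to T\to S\to R$ for the orthogonal-polynomial problem: normalise at infinity, open lenses along the unit circle, and install a global parametrix built from the Szeg\H{o} function of $\hat f$. Around each maximal cluster of mutually close Fisher--Hartwig points install a local parametrix: the confluent hypergeometric model for an isolated point, the $\sigma$-Painlev\'e V model of \cite{ClaeysKrasovsky} (cf.\ Theorem~\ref{thm:painleve}) for a coalescing pair away from $\pm1$, and a correspondingly enlarged model for a triple cluster at $\pm1$. At the required precision the only input needed about these models is that, under the hypotheses $\alpha_j\ge0$ and $\beta_j\in i\mathbb{R}$, they are uniformly bounded with uniformly bounded inverse on the boundaries of their discs — no $\Gamma$-function poles are hit and $|z^{\beta_j}|=1$ — so that on the jump contour of $R$ one has $R=I+o(1)$ uniformly in all the cluster radii. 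This determines $R$, hence $Y$ and in particular the polynomial values at $0,\pm1$, up to a relative factor $e^{O(1)}$ (indeed $1+o(1)$), the explicit $n$- and $\theta$-dependence being carried by the global parametrix and by the large-argument prefactors of the local ones.

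\emph{The determinant and its bookkeeping.} Turn the solved problem into $\ln D_{2n}(\hat f)$ by integrating a differential identity, e.g.\ moving the $\theta_j$ from a reference configuration (equally spaced, bounded away from each other and from $0,\pi$, where Theorem~\ref{thm:T+H asymptotics} holds exactly) to the target, writing $\partial\ln D_{2n}(\hat f)$ as a contour integral built from $R$ and the parametrices. Away from coalescence the integrand is bounded and reproduces Theorem~\ref{thm:T+H asymptotics}; near a coalescence it develops a $\theta_j'/(\mathrm{dist}+c/n)$-type behaviour whose integral produces exactly the factors $(\sin\theta_j+1/n)^{\,\cdot}$, $(\sin\tfrac{\theta_j}{2}+1/n)^{\,\cdot}$, $(\cos\tfrac{\theta_j}{2}+1/n)^{\,\cdot}$ and $F$ of the statement, once one uses $|e^{i\theta_j}-1|=2\sin\tfrac{\theta_j}{2}$, $|e^{i\theta_j}+1|=2\cos\tfrac{\theta_j}{2}$, $|1-e^{2i\theta_j}|=2\sin\theta_j$ and $|e^{i\theta_j}-e^{\pm i\theta_k}|=2\sin|\tfrac{\theta_j\mp\theta_k}{2}|$. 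The sign of the $\alpha_j$-linear term in each exponent is the one forced by $s'$ (for the point at $z=1$) or by $t'$ (for $z=-1$) for the given $\kappa$: this is why $\kappa=1,2$, with $s'=t'$, see only $\sin\theta_j$, while $\kappa=3,4$, with $s'=-t'$, see $\sin\tfrac{\theta_j}{2}$ and $\cos\tfrac{\theta_j}{2}$ with opposite exponents. Everything else occurring in Theorem~\ref{thm:T+H asymptotics} that remains bounded — the Szeg\H{o}-function values at the singularities (bounded since $V$ is fixed and $\alpha_j\ge0$, $\beta_j\in i\mathbb{R}$), the Barnes-$G$ ratios, the powers of $\pi$ and $2$, the Wiener--Hopf factors $b_\pm(z_j)$, the unimodular factors $z_j^{2\tilde A\beta_j}$, and the bounded model-parametrix contributions — is absorbed into $e^{O(1)}$, which by construction is bounded and bounded away from $0$; re-assembling via the reduction of the first paragraph and the $\kappa$-dependent powers of $2$ gives the four stated formulae.

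\emph{Main obstacle.} The difficulty is entirely the uniformity of the Riemann--Hilbert step over the full stratified region $\{0<\theta_1<\dots<\theta_r<\pi\}$: one must group the Fisher--Hartwig points into clusters that may themselves approach one another (including the triple clusters at $\pm1$, and two points both tending to $0$ or to $\pi$), build the nested and merged local parametrices, and match them to the global one with errors small uniformly in all inter- and intra-cluster distances simultaneously, while verifying the boundedness-with-bounded-inverse of every model problem — precisely the place where $\alpha_j\ge0$ and $\beta_j\in i\mathbb{R}$ enter and cannot be relaxed. Because only $e^{O(1)}$ is needed we are, however, spared the customary labour of expressing the answer through the Painlev\'e transcendents and Hankel-type model solutions at $\pm1$: we merely need these to stay bounded, which is a qualitative input rather than a computation.
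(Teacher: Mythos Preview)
The paper does not contain a proof of Theorem~\ref{thm:T+H Claeys}: it is quoted as a result communicated by Claeys, Glesner, Minakov and Yang (reference~\cite{Claeys et al}, listed as a preprint), and is used as a black box in Section~\ref{section:L2} to establish Theorem~\ref{thm:main2}. There is therefore no ``paper's own proof'' to compare against.

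Your sketch is a plausible outline of the strategy such a proof would follow, and is in the same spirit as the Riemann--Hilbert machinery the paper deploys elsewhere (and as Fahs~\cite{Fahs} does for pure Toeplitz determinants). You correctly identify the key simplification that makes a fully uniform statement feasible at the $e^{O(1)}$ level: one needs only boundedness and invertibility of the local model problems, not their explicit Painlev\'e description. You also correctly locate the genuine difficulty in the construction and uniform control of local parametrices for clusters of three or more singularities merging at $\pm1$.

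That said, what you have written is a strategy rather than a proof. The sentence ``a correspondingly enlarged model for a triple cluster at $\pm1$'' conceals essentially all of the new analytic work: constructing such a model RH problem, showing it is solvable for all relevant parameter values, and proving the required uniform bounds is precisely the content of \cite{Claeys et al}, and none of it follows from what is in the present paper. Your integration-of-the-differential-identity paragraph is also schematic --- the claim that the integrand near a coalescence ``develops a $\theta_j'/(\mathrm{dist}+c/n)$-type behaviour whose integral produces exactly the factors'' in the statement is the conclusion one wants, not an argument. So your proposal should be read as a correct high-level roadmap, with the understanding that the substantive technical input lies outside both your sketch and this paper.
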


\section{The $L^2$-Limit} \label{section:L2}
In this section we prove Lemma \ref{lem:L2 limit}, for which we need to compute asymptotics of all the expectation terms in the integrals, which hold uniformly in $\theta$ and $\theta'$ even as $\theta \rightarrow \theta'$. We use the following theorem to express all those expectations as Toeplitz+Hankel determinants, where we let $SO(n) := \left\{O \in O(n): \det O = 1 \right\}$, $O^-(n) := \left\{O \in O(n): \det O = -1 \right\}$ be the components of the orthogonal groups:

\begin{theorem}\label{thm:average}(Theorem 2.2 in \cite{BaikRains})
Let $h(z)$ be any function on the unit circle such that for $\iota(e^{i\theta}) := h(e^{i\theta})h(e^{-i\theta})$ the integrals 
\begin{equation}
\iota_j = \frac{1}{2\pi} \int_0^{2\pi} \iota(e^{i\theta}) e^{-ij\theta} \, \text{d}\theta
\end{equation} 
are well-defined. Then with $D_n^{T+H,\kappa}$ defined as in (\ref{eqn:T+H def}) we have
\begin{align}
\begin{split}
\mathbb{E}_{SO(2n)}\left( \text{det}(h(U)) \right) &= \frac{1}{2} D_n^{T+H,1}(\iota), \\
\mathbb{E}_{O^-(2n)}\left( \text{det}(h(U)) \right) &=  h(1)h(-1) D_{n-1}^{T+H,2}(\iota), \\
\mathbb{E}_{SO(2n+1)}\left( \text{det}(h(U)) \right) &=  h(1) D_n^{T+H,3}(\iota), \\
\mathbb{E}_{O^-(2n+1)}\left( \text{det}(h(U)) \right) &=  h(-1) D_n^{T+H,4}(\iota), \\
\mathbb{E}_{Sp(2n)}\left( \text{det}(h(U)) \right) &= D_n^{T+H,2}(\iota),
\end{split}
\end{align}
except that $\mathbb{E}_{SO(0)} \left( \det(h(U)) \right) = h(1)$.
\end{theorem}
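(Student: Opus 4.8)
The plan is to reduce each identity to the Weyl integration formula for the relevant group, convert the resulting eigenvalue integral into a determinant via Andréief's identity, and recognise the outcome as the claimed Toeplitz+Hankel determinant. Observe first that $\iota(e^{i\theta})=h(e^{i\theta})h(e^{-i\theta})$ is automatically even in $\theta$, so $\iota_j=\iota_{-j}$, and the hypothesis that the $\iota_j$ exist (equivalently $\iota\in L^1(S^1)$) is precisely what licenses the use of Fubini and of Andréief's identity below. For $U$ in $SO(2n)$, $Sp(2n)$, $O^-(2n)$, $SO(2n+1)$ or $O^-(2n+1)$ the spectrum consists of a conjugate-symmetric set $\{e^{\pm i\theta_1},\dots,e^{\pm i\theta_N}\}$, $\theta_j\in[0,\pi]$, together with a prescribed subset of $\{+1,-1\}$: none for $SO(2n)$ and $Sp(2n)$ ($N=n$), both for $O^-(2n)$ ($N=n-1$), $\{+1\}$ for $SO(2n+1)$ ($N=n$), and $\{-1\}$ for $O^-(2n+1)$ ($N=n$). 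Hence $\det h(U)=h(1)^{\varepsilon_+}h(-1)^{\varepsilon_-}\prod_{j=1}^N\iota(e^{i\theta_j})$, which already accounts for the prefactors $1$, $h(1)h(-1)$, $h(1)$, $h(-1)$ appearing in the statement, and by Weyl's integration formula the joint law of $(\theta_1,\dots,\theta_N)$ has density proportional to $\prod_{j=1}^N w(\theta_j)\prod_{1\le j<k\le N}(\cos\theta_j-\cos\theta_k)^2$, where $w$ equals $1$ for $SO(2n)$, $\sin^2\theta$ for $Sp(2n)$ and $O^-(2n)$, $\sin^2(\theta/2)$ for $SO(2n+1)$, and $\cos^2(\theta/2)$ for $O^-(2n+1)$; under $x=\cos\theta$ these are the four classical Jacobi weights $(1-x)^{\pm1/2}(1+x)^{\pm1/2}$.

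Next, for each weight $w$ I would exhibit trigonometric polynomials $\phi_1,\dots,\phi_N$ with $\prod_j w(\theta_j)\prod_{j<k}(\cos\theta_j-\cos\theta_k)^2=c_N\big(\det(\phi_k(\theta_j))_{j,k=1}^N\big)^2$ for an explicit $\iota$-independent constant $c_N$, namely $\phi_k(\theta)=2\cos((k-1)\theta)$ (with $\phi_1\equiv1$) for $w\equiv1$; $\phi_k(\theta)=\sin(k\theta)$ for $w=\sin^2\theta$; $\phi_k(\theta)=\sin\big((2k-1)\theta/2\big)$ for $w=\sin^2(\theta/2)$; and $\phi_k(\theta)=\cos\big((2k-1)\theta/2\big)$ for $w=\cos^2(\theta/2)$. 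In each case $\phi_k(\theta)$ equals $\sqrt{w(\theta)}$ times a degree-$(k-1)$ polynomial in $\cos\theta$ (this is the Weyl denominator written as a determinant), so the determinant reduces to a Vandermonde in the $\cos\theta_j$ times a product of leading coefficients. Andréief's identity then gives
\[
\int_{[0,\pi]^N}\big(\det\phi_k(\theta_j)\big)^2\prod_{j=1}^N\iota(e^{i\theta_j})\,d\theta_j=N!\,\det\!\left(\int_0^\pi\phi_i(\theta)\phi_j(\theta)\,\iota(e^{i\theta})\,d\theta\right)_{i,j=1}^N,
\]
and the product-to-sum formulas together with $\int_0^\pi\cos(m\theta)\,\iota(e^{i\theta})\,d\theta=\pi\,\iota_m$ (valid because $\iota$ is even and $2\pi$-periodic) reduce the $(i,j)$ entry to $\tfrac{\pi}{2}(\iota_{i-j}\pm\iota_{i+j+\delta})$, the sign $\pm$ and the shift $\delta\in\{0,1,2\}$ being dictated by whether the $\phi$'s are of cosine or sine type and by the presence of a half-angle. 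After reindexing $i\mapsto i-1$ this matrix is, up to a scalar, exactly the one defining $D_N^{T+H,1},\dots,D_N^{T+H,4}$, with $D^{T+H,1}$ matched to $SO(2n)$, $D^{T+H,2}$ to $Sp(2n)$ and $O^-(2n)$, $D^{T+H,3}$ to $SO(2n+1)$, and $D^{T+H,4}$ to $O^-(2n+1)$, in agreement with the four Jacobi weights.

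Collecting these two steps, each expectation equals an $\iota$-independent constant times $D_N^{T+H,\kappa}(\iota)$ (times the boundary factors $h(\pm1)$ where applicable). To pin that constant I would specialise to $h\equiv1$: then $\iota\equiv1$, so $\iota_0=1$ and $\iota_m=0$ for $m\neq0$, the left-hand side is $\mathbb{E}_G(\det I)=1$, while $D_N^{T+H,1}(1)=\det(I+e_0e_0^{\top})=2$ and $D_N^{T+H,\kappa}(1)=\det I=1$ for $\kappa=2,3,4$ (the Hankel parts vanish identically because their index shifts $\delta\ge1$ never reach $0$). This forces the overall constants to be $\tfrac12,1,1,1$ as claimed; the only genuine $\tfrac12$ is the one for $SO(2n)$, and it can be traced to the anomalous $(0,0)$ entry $\pi\iota_0$ (rather than $2\pi\iota_0$) produced by $\phi_1^2\equiv1$. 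The degenerate cases with $n=0$, including the convention $\mathbb{E}_{SO(0)}(\det h(U))=h(1)$, are handled by direct inspection, the relevant determinants being empty and hence equal to $1$.

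I expect the main obstacle to be organisational rather than conceptual: correctly recording the five Weyl integration formulas with their deterministic-eigenvalue contributions and normalisations, and keeping the sign/shift bookkeeping straight so that each of the four matrix shapes $(\iota_{i-j}\pm\iota_{i+j+\delta})$ is paired with the right symmetry class. The one genuinely subtle point is the factor $\tfrac12$ attached to $SO(2n)$; this is exactly the feature that rules out a naive ``guess the constant'' argument and is why the normalisation step with $h\equiv1$ (or, equivalently, an honest evaluation of the $w\equiv1$ Andréief matrix) is indispensable.
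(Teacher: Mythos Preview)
The paper does not give its own proof of this statement: it is quoted verbatim as Theorem 2.2 of Baik--Rains and used as a black box. There is therefore nothing in the paper to compare your argument against line by line.

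That said, your proposal is the standard and correct route to these identities, and it is essentially the argument in Baik--Rains: Weyl integration for each symmetry class, the Weyl denominator written as $\det(\phi_k(\theta_j))$ with the appropriate cosine/sine system, Andr\'eief to convert $\int(\det\phi)^2\prod\iota$ into a single determinant, product-to-sum formulas to obtain entries of the form $\iota_{i-j}\pm\iota_{i+j+\delta}$, and finally the $h\equiv 1$ specialisation to pin down the overall constants (in particular the $\tfrac12$ for $SO(2n)$, which arises because the Weyl group of type $D_n$ has order $2^{n-1}n!$ rather than $2^n n!$). Your identification of the four Jacobi weights with the four Toeplitz+Hankel shapes is correct, and your computation $D_N^{T+H,1}(1)=2$, $D_N^{T+H,\kappa}(1)=1$ for $\kappa=2,3,4$ is right. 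The only cosmetic point is that the exceptional $SO(0)$ clause in the statement is a convention inherited from Baik--Rains rather than something the main argument needs to reproduce.
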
 

We define, for $\phi \in [0,2\pi)$ and $\theta, \theta' \in [0,2\pi)$:
\begin{align} \label{eqn:sigma hat}
\begin{split}
\hat{\sigma}_{1,\theta,\theta'}(e^{i\phi}) &= e^{-\sum_{j=1}^k \frac{2}{j} \Re \left( (\alpha - \beta)(e^{ij\theta} + e^{ij\theta'}) \right) e^{ij\phi} },\\
\hat{\sigma}_{2,\theta,\theta'}(e^{i\phi}) &= e^{-\sum_{j=1}^k \frac{2}{j} \Re \left( (\alpha - \beta)e^{ij\theta} \right) e^{ij\phi}} |e^{i\phi}-e^{i\theta'}|^{2\alpha} e^{2i\beta \Im \ln (1-e^{i(\phi-\theta')})}, \\
\hat{\sigma}_{3,\theta,\theta'}(e^{i\phi}) &= |e^{i\phi}-e^{i\theta}|^{2\alpha} e^{2i\beta \Im \ln (1-e^{i(\phi-\theta)})} |e^{i\phi}-e^{i\theta'}|^{2\alpha} e^{2i\beta \Im \ln (1-e^{i(\phi-\theta')})}, \\
\hat{\sigma}_{4,\theta}(e^{i\phi}) &= e^{-\sum_{j=1}^k \frac{2}{j} \Re \left( (\alpha - \beta)e^{ij\theta} \right) e^{ij\phi}}, \\
\hat{\sigma}_{5,\theta}(e^{i\phi}) &= |e^{i\phi}-e^{i\theta}|^{2\alpha} e^{2i\beta \Im \ln (1-e^{i(\phi-\theta)})}, 
\end{split}
\end{align}
where the branch of the logarithm is the principal one (so in particular $\Im \ln (1-e^{i(\phi-\theta)}) \in (-\pi/2,\pi/2]$). Then we have
\begin{align}
\begin{split}
f_{n,\alpha,\beta}^{(k)}(\theta)f_{n,\alpha,\beta}^{(k)}(\theta') &= \prod_{l = 1}^n \hat{\sigma}_{1,\theta,\theta'}(e^{i\theta_l}), \\
f_{n,\alpha,\beta}^{(k)}(\theta)f_{n,\alpha,\beta}(\theta') &= \prod_{l = 1}^n \hat{\sigma}_{2,\theta,\theta'}(e^{i\theta_l}), \\
f_{n,\alpha,\beta}(\theta)f_{n,\alpha,\beta}(\theta') &= \prod_{l = 1}^n \hat{\sigma}_{3,\theta,\theta'}(e^{i\theta_l}), \\
f_{n,\alpha,\beta}^{(k)}(\theta) &= \prod_{l = 1}^n \hat{\sigma}_{4,\theta}(e^{i\theta_l}), \\
f_{n,\alpha,\beta}(\theta) &= \prod_{l = 1}^n \hat{\sigma}_{5,\theta}(e^{i\theta_l}), 
\end{split}
\end{align}
where the product is up to $2n$ for $U_n \in Sp(2n)$. Further we define 
\begin{align}
\begin{split}
\sigma_{1,\theta,\theta'}(e^{i\phi}) =& \hat{\sigma}_{1,\theta,\theta'}(e^{i\phi})\hat{\sigma}_{1,\theta,\theta'}(e^{-i\phi})\\ 
=& e^{-\sum_{j=1}^k \frac{2}{j} \Re \left((\alpha - \beta)(e^{ij\theta}+e^{ij\theta'}) \right) (e^{ij\phi} + e^{-ij\phi})}, \\
\sigma_{2,\theta,\theta'}(e^{i\phi}) =& \hat{\sigma}_{2,\theta,\theta'}(e^{i\phi})\hat{\sigma}_{2,\theta,\theta'}(e^{-i\phi}) \\
=& e^{-\sum_{j=1}^k \frac{2}{j} \Re \left((\alpha - \beta)e^{ij\theta} \right) (e^{ij\phi} + e^{-ij\phi})} \\
&\times |e^{i\theta'}-e^{i\phi}|^{2\alpha} e^{2i\beta \Im \ln (1-e^{i(\phi-\theta')})} |e^{i\theta'}-e^{-i\phi}|^{2\alpha} e^{2i\beta \Im \ln (1-e^{i(-\phi-\theta')})},\\
\sigma_{3,\theta,\theta'}(e^{i\phi}) =& \hat{\sigma}_{3,\theta,\theta'}(e^{i\phi})\hat{\sigma}_{3,\theta,\theta'}(e^{-i\phi})\\
=& |e^{i\theta}-e^{i\phi}|^{2\alpha} e^{2i\beta \Im \ln (1-e^{i(\phi-\theta)})} |e^{i\theta'}-e^{i\phi}|^{2\alpha} e^{2i\beta \Im \ln (1-e^{i(\phi-\theta')})} \\
&\times |e^{i\theta}-e^{-i\phi}|^{2\alpha} e^{2i\beta \Im \ln (1-e^{i(-\phi-\theta)})} |e^{i\theta'}-e^{-i\phi}|^{2\alpha} e^{2i\beta \Im \ln (1-e^{i(-\phi-\theta')})},\\
\sigma_{4,\theta}(e^{i\phi}) =& \hat{\sigma}_{4,\theta}(e^{i\phi}) \hat{\sigma}_{4,\theta}(e^{-i\phi})\\ 
=&  e^{-\sum_{j=1}^k \frac{2}{j} \Re \left((\alpha - \beta)e^{ij\theta} \right) (e^{ij\phi} + e^{-ij\phi})}, \\
\sigma_{5,\theta}(e^{i\phi}) =& \hat{\sigma}_{5,\theta}(e^{i\phi})\hat{\sigma}_{5,\theta}(e^{-i\phi})\\ 
=& |e^{i\theta}-e^{i\phi}|^{2\alpha} e^{2i\beta \Im \ln (1-e^{i(\phi-\theta)})} |e^{i\theta}-e^{-i\phi}|^{2\alpha} e^{2i\beta \Im \ln (1-e^{i(-\phi-\theta)})}. 
\end{split}
\end{align}
Applying Theorem \ref{thm:average}, we obtain 
\begin{align} \label{eqn:averages sigma}
\begin{split}
&\mathbb{E}_{O(2n)} \left( f_{2n,\alpha, \beta}^{(k)}(\theta) f_{2n,\alpha,\beta}^{(k)}(\theta') \right) \\
=& \frac{1}{2} \mathbb{E}_{SO(2n)} \left( f_{2n,\alpha, \beta}^{(k)}(\theta) f_{2n,\alpha,\beta}^{(k)}(\theta') \right) + \frac{1}{2} \mathbb{E}_{O^-(2n)} \left( f_{2n,\alpha, \beta}^{(k)}(\theta) f_{2n,\alpha,\beta}^{(k)}(\theta') \right) \\
=& \frac{1}{4} D_n^{T+H,1}(\sigma_{1,\theta,\theta'}) + \frac{1}{2} \hat{\sigma}_{1,\theta,\theta'}(1)\hat{\sigma}_{1,\theta,\theta'}(-1) D_{n-1}^{T+H,2}(\sigma_{1,\theta,\theta'}), \\
&\mathbb{E}_{O(2n+1)} \left( f_{2n+1,\alpha, \beta}^{(k)}(\theta) f_{2n+1,\alpha,\beta}^{(k)}(\theta') \right) \\
=& \frac{1}{2} \mathbb{E}_{SO(2n+1)} \left( f_{2n+1,\alpha, \beta}^{(k)}(\theta) f_{2n+1,\alpha,\beta}^{(k)}(\theta') \right) + \frac{1}{2} \mathbb{E}_{O^-(2n+1)} \left( f_{2n+1,\alpha, \beta}^{(k)}(\theta) f_{2n+1,\alpha,\beta}^{(k)}(\theta') \right) \\
=& \frac{1}{2} \hat{\sigma}_{1,\theta,\theta'}(1) D_n^{T+H,3}(\sigma_{1,\theta,\theta'}) + \frac{1}{2} \hat{\sigma}_{1,\theta,\theta'}(-1) D_n^{T+H,4}(\sigma_{1,\theta,\theta'}), \\
&\mathbb{E}_{Sp(2n)} \left(  f_{n,\alpha, \beta}^{(k)}(\theta) f_{n,\alpha,\beta}^{(k)}(\theta') \right) = D_n^{T+H,2}(\sigma_{1,\theta,\theta'}),
\end{split}
\end{align}
and similarly for $\sigma_{2,\theta,\theta'}$, ..., $\sigma_{5,\theta}$.\\

The symbols $\sigma_{1,\theta,\theta'}$, ..., $\sigma_{5,\theta}$ can be written as symbols with Fisher-Hartwig singularities, i.e. as in Definition \ref{def:FH}. Due to our choice of logarithm, we have for $\phi, \theta \in [0,2\pi)$:
\begin{align}
\begin{split}
\Im \ln (1-e^{i(\phi-\theta)}) =& \begin{cases} -\frac{\pi}{2} + \frac{\phi-\theta}{2} & 0 \leq \theta < \phi < 2\pi \\ \frac{\pi}{2} + \frac{\phi-\theta}{2} & 0 \leq \phi \leq \theta < 2\pi \end{cases}, \\
\Im \ln (1-e^{i(-\phi-\theta)}) =& \Im \ln (1-e^{i(2\pi-\theta - \phi)})\\
=&\begin{cases} -\frac{\pi}{2} + \frac{2\pi - \theta - \phi}{2} & 0 \leq \phi < 2\pi - \theta \leq 2\pi \\ \frac{\pi}{2} + \frac{2\pi - \theta - \phi}{2} & 0 < 2\pi - \theta \leq \phi < 2\pi \end{cases},
\end{split}
\end{align}
which implies that we can write
\begin{align}
\begin{split}
\sigma_{2,\theta,\theta'}(e^{i\phi}) =&  e^{-\sum_{j=1}^k \frac{2}{j} \Re \left((\alpha-\beta)e^{ij\theta} \right) (e^{ij\phi} + e^{-ij\phi})} \\
&\times |e^{i\phi}-e^{i\theta'}|^{2\alpha} e^{i\beta (\phi - \theta')} g_{e^{i\theta'},\beta}(e^{i\phi}) |e^{i\phi}-e^{-i\theta'}|^{2\alpha} e^{-i\beta (\phi - (2\pi - \theta'))} g_{e^{i(2\pi - \theta')},-\beta}(e^{i\phi}),\\
\sigma_{3,\theta,\theta'}(e^{i\phi}) =& |e^{i\phi}-e^{i\theta}|^{2\alpha} e^{i\beta (\phi - \theta)} g_{e^{i\theta},\beta}(e^{i\phi}) |e^{i\phi}-e^{-i\theta}|^{2\alpha} e^{-i\beta (\phi - (2\pi - \theta))} g_{e^{i(2\pi - \theta)},-\beta}(e^{i\phi}) \\
& \times |e^{i\phi}-e^{i\theta'}|^{2\alpha} e^{i\beta (\phi - \theta')} g_{e^{i\theta'},\beta}(e^{i\phi}) |e^{i\phi}-e^{-i\theta'}|^{2\alpha} e^{-i\beta (\phi - (2\pi - \theta'))}, g_{e^{i(2\pi - \theta')},-\beta}(e^{i\phi}) \\
\sigma_{5,\theta}(e^{i\phi}) =& |e^{i\phi}-e^{i\theta}|^{2\alpha} e^{i\beta (\phi - \theta)} g_{e^{i\theta},\beta}(e^{i\phi}) |e^{i\phi}-e^{-i\theta}|^{2\alpha} e^{-i\beta (\phi - (2\pi - \theta'))} g_{e^{i(2\pi - \theta')},-\beta}(e^{i\phi}). \\
\end{split}
\end{align}

Theorem \ref{thm:T+H non-uniform} gives asymptotics for the Toeplitz+Hankel determinants of $\sigma_{1,\theta,\theta'}, \sigma_{2,\theta,\theta'}, \sigma_{3,\theta,\theta'}, \sigma_{4,\theta}, \sigma_{5,\theta}$, which are uniform when all of the singularities that appear are bounded away from each other. Since $\sigma_{1,\theta,\theta'}$ and $\sigma_{4,\theta}$ do not have any singularities the asymptotics of their Toeplitz+Hankel determinants are uniform in $\theta, \theta' \in [0,2\pi)$. To obtain asymptotics for the Toeplitz+Hankel determinants of $\sigma_{2,\theta,\theta'}$, $\sigma_{3,\theta,\theta'}$ and $\sigma_{5,\theta}$ that are also uniform when singularities merge we use Theorem \ref{thm:T+H Claeys}. \\

\noindent When applying Theorem \ref{thm:T+H uniform} we always have $\alpha_0 = \alpha_{r+1} = 0$, thus we get
\begin{equation}
\frac{\pi^{\frac{1}{2}(\alpha_0+\alpha_{r+1}+s'+t'+1)} G(1/2)^2}{G(1+\alpha_0+s') G(1 + \alpha_{r+1} + t')} = 1,
\end{equation}  
for any choices of $s',t' \in \{+ 1/2, - 1/2\}$. Further one has to be careful that $z_1$, $z_2$ always correspond to the singularities in the upper half circle, i.e. $\arg z_1, \arg z_2 \in (0, \pi)$. The asymptotics of the Toeplitz+Hankel determinants of $\sigma_{1,\theta,\theta'},\sigma_{2,\theta,\theta'},\sigma_{4,\theta},\sigma_{5,\theta'}$ obtained with Theorem \ref{thm:T+H uniform} are then as follows:

\begin{itemize}
\item For $f(e^{i\phi}) = \sigma_{1,\theta,\theta'}(e^{i\phi})$ we have $r = 0$, $\alpha_0 = \alpha_1 =0$ and
\begin{equation}
V(z) = -\sum_{j=1}^k \frac{2}{j} \Re \left((\alpha-\beta)(e^{ij\theta}+e^{ij\theta'}) \right) (e^{ij\phi} + e^{-ij\phi}).
\end{equation} 
Thus we obtain
\begin{align} \label{eqn:sigma1}
\begin{split}
D_n^{T+H,\kappa}(\sigma_{1,\theta,\theta'}) =& e^{\sum_{j=1}^k \frac{2}{j}   \Re \left((\alpha-\beta)(e^{ij\theta}+e^{ij\theta'}) \right)^2} \\
&\times e^{2\sum_{j=1}^k \frac{s'+(-1)^j t'}{j} \Re \left( (\alpha-\beta)(e^{ij\theta}+e^{ij\theta'}) \right)}\\
&\times 2^{(1-s'-t')n+q - \frac{1}{2}(s'+t')^2 + \frac{1}{2}(s'+t')} (1+o(1)),
\end{split}
\end{align}
uniformly for $\theta,\theta' \in [0,2\pi)$.

\item For $f(e^{i\phi}) = \sigma_{2,\theta,\theta'}(e^{i\phi})$ we have $r = 1$, $\alpha_0 = \alpha_2 = 0$, 
\begin{equation}
z_1 = \begin{cases} e^{i\theta'} & 0 < \theta' < \pi \\ e^{i(2\pi - \theta')} & \pi < \theta' < 2\pi \end{cases}, \quad \beta_1 = \begin{cases} \beta & 0 < \theta' < \pi \\ -\beta & \pi < \theta' < 2\pi \end{cases},
\end{equation}
$\alpha_1 = \alpha$, and 
\begin{equation}
V(z) = -\sum_{j=1}^k \frac{2}{j} \Re \left((\alpha-\beta)e^{ij\theta} \right) (e^{ij\phi} + e^{-ij\phi}).
\end{equation} 
Thus we obtain:
\begin{align} \label{eqn:sigma2}
\begin{split}
D_n^{T+H,\kappa}(\sigma_{2,\theta,\theta'}) =& (2n)^{(\alpha^2-\beta^2)} e^{\sum_{j=1}^k \frac{2}{j}   \Re \left((\alpha-\beta) e^{ij\theta} \right)^2} e^{\sum_{j=1}^k \frac{4}{j} \Re \left( (\alpha-\beta)e^{ij\theta} \right)\Re \left( (\alpha-\beta)e^{ij\theta'} \right)} \\
&\times e^{-i\pi \alpha \beta_1} z_1^{2\alpha \beta_1} |1- e^{2i\theta'}|^{-(\alpha^2+\beta^2)} \frac{G(1+\alpha+\beta) G(1+\alpha-\beta)}{G(1+2\alpha)} \\
&\times e^{2\sum_{j=1}^k \frac{s'+(-1)^j t'}{j} \Re \left( (\alpha-\beta)e^{ij\theta} \right)} 2^{(1-s'-t')n+q - \frac{1}{2}(s'+t')^2 + \frac{1}{2}(s'+t')} \\
&\times e^{-i\pi s' \beta_1} z_1^{\beta_1(s'+t')}  |1-e^{i\theta'}|^{-2\alpha s'} |1+e^{i\theta'}|^{-2\alpha t'} \\
&\times (1+o(1)),
\end{split}
\end{align}
uniformly for $\theta,\theta' \in [0,2\pi)$ s.t. $e^{i\theta'}$ stays bounded away from $\pm 1$.  

\item For $f(e^{i\phi}) = \sigma_{4,\theta}(e^{i\phi})$ we have $r = 0$, $\alpha_0 = \alpha_1 = 0$ and 
\begin{equation}
V(z) = -\sum_{j=1}^k \frac{2}{j} \Re \left((\alpha-\beta)e^{ij\theta} \right) (e^{ij\phi} + e^{-ij\phi}).
\end{equation} 
Thus we obtain:
\begin{align} \label{eqn:sigma4}
\begin{split}
D_n^{T+H,\kappa}(\sigma_{4,\theta}) =& e^{\sum_{j=1}^k \frac{2}{j} \Re \left((\alpha-\beta) e^{ij\theta} \right)^2} e^{2\sum_{j=1}^k \frac{s'+(-1)^j t'}{j} \Re \left( (\alpha-\beta)e^{ij\theta} \right)} \\
&\times 2^{(1-s'-t')n+q - \frac{1}{2}(s'+t')^2 + \frac{1}{2}(s'+t')} (1+o(1)),	
\end{split}
\end{align}
uniformly for $\theta \in [0,2\pi)$.  

\item For $f(e^{i\phi}) = \sigma_{5,\theta'}(e^{i\phi})$ with $e^{i\theta'} \neq \pm 1$, we have $r = 1$, $\alpha_0 = \alpha_2 = 0$, 
\begin{equation} \label{eqn:sigma5 decomposition}
z_1 = \begin{cases} e^{i\theta'} & 0 < \theta' < \pi \\ e^{i(2\pi - \theta')} & \pi < \theta' < 2\pi \end{cases}, \quad \beta_1 = \begin{cases} \beta & 0 < \theta' < \pi \\ -\beta & \pi < \theta' < 2\pi \end{cases}
\end{equation}
$\alpha_1 = \alpha$, and $V = 0$. Thus we obtain:
\begin{align} \label{eqn:sigma5}
\begin{split}
D_n^{T+H,\kappa}(\sigma_{5,\theta'}) =& (2n)^{(\alpha^2-\beta^2)} e^{-i\pi \alpha \beta_1} z_1^{2\alpha \beta_1}  |1- e^{2i\theta'}|^{-(\alpha^2+\beta^2)} \frac{G(1+\alpha+\beta) G(1+\alpha-\beta)}{G(1+2\alpha)} \\
&\times 2^{(1-s'-t')n+q - \frac{1}{2}(s'+t')^2 + \frac{1}{2}(s'+t')} e^{-i\pi s' \beta_1} z_1^{\beta_1(s'+t')}  |1-e^{i\theta'}|^{-2\alpha s'} |1+e^{i\theta'}|^{-2\alpha t'} \\
&\times (1+o(1)),
\end{split}
\end{align}
uniformly for $\theta' \in [0,2\pi)$ s.t. $e^{i\theta'}$ stays bounded away from $\pm 1$.    
\end{itemize}

For $f(e^{i\phi}) = \sigma_{3,\theta,\theta'}(e^{i\phi})$ we have $r = 2$, $\alpha_0 = \alpha_3 = V(z) = 0$ and $z_1,z_2,\alpha_1,\alpha_2,\beta_1,\beta_2$ chosen according to the following decomposition (always $\alpha_1 = \alpha_2 = \alpha$):
\begin{align} \label{eqn:decomposition}
\begin{split}
[0,2\pi)^2 =& \cup_{j=1}^8 J_j \cup \left\{ (\theta, \theta') \in [0,2\pi)^2:\theta = \theta', \text{ or } \theta = 2\pi - \theta', \text{ or } \theta,\theta' \in \{0,\pi\} \right\}, \\
J_1 =& \left\{ (\theta, \theta') \in (0, \pi) \times (0, \pi): \theta < \theta' \right\},  \\
&\text{for } (\theta, \theta') \in J_1: \quad z_1 = e^{i\theta}, z_2 = e^{i\theta'}, \beta_1 = \beta_2 = \beta. \\ 
J_2 =& \left\{ (\theta, \theta') \in (0, \pi) \times (0, \pi): \theta' < \theta \right\},  \\
&\text{for } (\theta, \theta') \in J_2: \quad z_1 = e^{i\theta'}, z_2 = e^{i\theta}, \beta_1 = \beta_2 = \beta. \\ 
J_3 =& \left\{ (\theta, \theta') \in (0, \pi) \times (\pi, 2\pi): \theta < 2\pi - \theta' \right\},  \\
&\text{for } (\theta, \theta') \in J_3: \quad z_1 = e^{i\theta}, z_2 = e^{i(2\pi - \theta')}, \beta_1 = -\beta_2 = \beta. \\ 
J_4 =& \left\{ (\theta, \theta') \in (\pi, 2\pi) \times (0, \pi): \theta' < 2\pi - \theta \right\},  \\
&\text{for } (\theta, \theta') \in J_4: \quad z_1 = e^{i\theta'}, z_2 = e^{i(2\pi - \theta)}, \beta_1 = - \beta_2 = \beta. \\ 
J_5 =& \left\{ (\theta, \theta') \in (\pi, 2\pi) \times (\pi, 2\pi): 2\pi - \theta < 2\pi - \theta' \right\},  \\
&\text{for } (\theta, \theta') \in J_5: \quad z_1 = e^{i(2\pi - \theta)}, z_2 = e^{i(2\pi - \theta')}, \beta_1 = \beta_2 = -\beta. \\ 
J_6 =& \left\{ (\theta, \theta') \in (\pi, 2\pi) \times (\pi, 2\pi): 2\pi - \theta' < 2\pi - \theta \right\},  \\
&\text{for } (\theta, \theta') \in J_6: \quad z_1 = e^{i(2\pi - \theta')}, z_2 = e^{i(2\pi - \theta)}, \beta_1 = \beta_2 = -\beta. \\ 
J_7 =& \left\{ (\theta, \theta') \in (\pi, 2\pi) \times (0, \pi): 2\pi - \theta < \theta' \right\},  \\
&\text{for } (\theta, \theta') \in J_7: \quad z_1 = e^{i(2\pi - \theta)}, z_2 = e^{i\theta'}, \beta_1 = -\beta_2 = -\beta. \\ 
J_8 =& \left\{ (\theta, \theta') \in (0, \pi) \times (\pi, 2\pi): 2\pi - \theta' < \theta \right\}, \\
&\text{for } (\theta, \theta') \in J_8: \quad z_1 = e^{i(2\pi - \theta')}, z_2 = e^{i\theta}, \beta_1 = -\beta_2 = -\beta.
\end{split}
\end{align}

\noindent In this notation we obtain by Theorem \ref{thm:T+H uniform} 
\begin{align} \label{eqn:sigma3 extended} 
\begin{split}
D_n^{T+H,\kappa}(\sigma_{3,\theta,\theta'}) =& (2n)^{2(\alpha^2-\beta^2)} e^{-i\pi \alpha \left( \beta_1 + \beta_2 + 2 \beta_2 \right)} |e^{i\theta} - e^{i\theta'}|^{-2(\alpha^2-\beta^2)} |e^{i\theta} - e^{-i\theta'}|^{-2(\alpha^2+\beta^2)} \\
&\times z_1^{4\beta_1 \alpha} z_2^{4\beta_2\alpha} |1- e^{2i\theta}|^{-(\alpha^2+\beta^2)} |1- e^{2i\theta'}|^{-(\alpha^2+\beta^2)} \frac{G(1+\alpha+\beta)^2 G(1+\alpha-\beta)^2}{G(1+2\alpha)^2} \\
&\times 2^{(1-s'-t')n+q - \frac{1}{2}(s'+t')^2 + \frac{1}{2}(s'+t')} e^{-i\pi s'(\beta_1 + \beta_2)} z_1^{\beta_1(s'+t')} z_2^{\beta_2(s'+t')} \\
&\times |1-e^{i\theta}|^{-2\alpha s'} |1+e^{i\theta}|^{-2\alpha t'} |1-e^{i\theta'}|^{-2\alpha s'} |1+e^{i\theta'}|^{-2\alpha t'} (1+o(1)),
\end{split}
\end{align}
uniformly for $\theta,\theta' \in [0,2\pi)$, s.t. $e^{i\theta}, e^{i\theta'}, e^{-i\theta}, e^{-i\theta'}$
stay bounded away from each other and $\pm 1$.\\

In the following sections we use the asymptotics obtained in this section to compute the asymptotics of the quotients of expectations that appear in Lemma \ref{lem:L2 limit}. 

\subsection{The Symplectic Case} 

By (\ref{eqn:averages sigma}), (\ref{eqn:sigma1}), (\ref{eqn:sigma4}) with $\kappa = 2$, $q=0$, $s' = t' = \frac{1}{2}$, we get 
\begin{align} \label{eqn:symplectic1}
\begin{split}
&\frac{\mathbb{E}_{Sp(2n)} \left( f_{n,\alpha,\beta}^{(k)}(\theta)f_{n,\alpha,\beta}^{(k)}(\theta') \right)}{\mathbb{E}_{Sp(2n)} \left( f_{n,\alpha,\beta}^{(k)}(\theta) \right)\mathbb{E}_{Sp(2n)} \left(f_{n,\alpha,\beta}^{(k)}(\theta') \right)} = \frac{D_n^{T+H,2}(\sigma_{1,\theta,\theta'})}{D_n^{T+H,2}(\sigma_{4,\theta})D_n^{T+H,2}(\sigma_{4,\theta'})} \\
=& e^{\sum_{j=1}^k \frac{4}{j} \Re \left((\alpha - \beta)e^{ij\theta}\right) \Re \left((\alpha - \beta)e^{ij\theta'}\right)} (1+o(1)),
\end{split}
\end{align}
uniformly for $\theta,\theta' \in [0,2\pi)$.\\

By (\ref{eqn:averages sigma}), (\ref{eqn:sigma2}), (\ref{eqn:sigma4}) and  (\ref{eqn:sigma5}) with $\kappa = 2$, $q = 0$, $s' = t' = \frac{1}{2}$, we get
\begin{align} \label{eqn:symplectic2}
\begin{split}
&\frac{\mathbb{E}_{Sp(2n)} \left( f_{n,\alpha,\beta}^{(k)}(\theta)f_{n,\alpha,\beta}(\theta') \right)}{\mathbb{E}_{Sp(2n)} \left( f_{n,\alpha,\beta}^{(k)}(\theta) \right)\mathbb{E}_{Sp(2n)} \left(f_{n,\alpha,\beta}(\theta') \right)} = \frac{D_n^{T+H,2}(\sigma_{2,\theta,\theta'})}{D_n^{T+H,2}(\sigma_{4,\theta})D_n^{T+H,2}(\sigma_{5,\theta'})} \\
=& e^{\sum_{j=1}^k \frac{4}{j} \Re \left((\alpha - \beta)e^{ij\theta}\right) \Re \left((\alpha -\beta)e^{ij\theta'}\right)} (1+o(1)),
\end{split}
\end{align}
uniformly for $\theta,\theta' \in [0,2\pi)$ s.t. $e^{i\theta'}$ stays bounded away from $\pm 1$. \\

By (\ref{eqn:averages sigma}), (\ref{eqn:sigma5}), (\ref{eqn:sigma3 extended}), with $\kappa = 2$, $q = 0$ and $s' = t' = \frac{1}{2}$, and $z_1,z_2,\beta_1,\beta_2$ chosen as in (\ref{eqn:decomposition}), a quick calculation results in
\begin{align} \label{eqn:symplectic3}
\begin{split}
&\frac{\mathbb{E}_{Sp(2n)} \left( f_{n,\alpha,\beta}(\theta)f_{n,\alpha,\beta}(\theta') \right)}{\mathbb{E}_{Sp(2n)} \left( f_{n,\alpha,\beta}(\theta) \right)\mathbb{E}_{Sp(2n)} \left(f_{n,\alpha,\beta}(\theta') \right)} = \frac{D_n^{T+H,2}(\sigma_{3,\theta,\theta'})}{D_n^{T+H,2}(\sigma_{5,\theta})D_n^{T+H,2}(\sigma_{5,\theta'})} \\
=& |e^{i\theta} - e^{i\theta'}|^{-2(\alpha^2-\beta^2)} |e^{i\theta} - e^{-i\theta'}|^{-2(\alpha^2+\beta^2)} z_1^{2\alpha\beta_1} z_2^{2\alpha \beta_2} e^{-2\pi i \alpha \beta_2} (1+o(1)) \\
=& |e^{i\theta} - e^{i\theta'}|^{-2(\alpha^2-\beta^2)} |e^{i\theta} - e^{-i\theta'}|^{-2(\alpha^2 + \beta^2)} e^{4i\alpha\beta \Im \ln (1-e^{i(\theta+\theta')})} (1+o(1)),
\end{split}
\end{align}
uniformly for $\theta,\theta' \in [0,2\pi)$, s.t. $e^{i\theta}, e^{i\theta'}, e^{-i\theta}, e^{-i\theta'}$
stay bounded away from each other and $\pm 1$.

\subsection{The Odd Orthogonal Case}
In the odd orthogonal case we always have $q = -n$, and $s' + t' = 0$, which implies that
\begin{equation}
2^{(1-s'-t')n+q - \frac{1}{2}(s'+t')^2 + \frac{1}{2}(s'+t')} = 1.
\end{equation}
We also note that by (\ref{eqn:sigma hat}) we have 
\begin{align} \label{eqn:sigma hat pm}
\begin{split}
\hat{\sigma}_{1,\theta,\theta'}(\pm 1) =& e^{-2\sum_{j=1}^k \frac{(\pm 1)^j}{j} \Re \left( (\alpha-\beta)\left( e^{ij\theta} + e^{ij\theta'} \right) \right)}, \\
\hat{\sigma}_{2,\theta,\theta'}(\pm 1) =& e^{-2\sum_{j=1}^k \frac{(\pm 1)^j}{j} \Re \left( (\alpha-\beta) e^{ij\theta} \right)} |1 \mp e^{i\theta'}|^{2\alpha} e^{i\beta(\pi - \theta')} g_{e^{i\theta'},\beta}(\pm 1), \\
\hat{\sigma}_{3,\theta,\theta'}(1) =& |1 - e^{i\theta}|^{2\alpha} |1 - e^{i\theta'}|^{2\alpha} e^{i\beta(\pi - \theta)} e^{i\beta(\pi - \theta')}, \\
\hat{\sigma}_{3,\theta,\theta'}(-1) =& |1 + e^{i\theta}|^{2\alpha} |1 + e^{i\theta'}|^{2\alpha} e^{i\beta(\pi - \theta)} e^{i\beta(\pi - \theta')} e^{-i\pi(\beta_1 + \beta_2)}, \\
\hat{\sigma}_{4,\theta}(\pm 1) =& e^{-2\sum_{j=1}^k \frac{(\pm 1)^j}{j} \Re \left( (\alpha-\beta) e^{ij\theta} \right)}, \\
\hat{\sigma}_{5,\theta'}(1) =& |1 - e^{i\theta'}|^{2\alpha} e^{i\beta(\pi - \theta')},\\
\hat{\sigma}_{5,\theta'}(-1) =& |1 + e^{i\theta'}|^{2\alpha} e^{i\beta(\pi - \theta')} e^{-i\pi \beta_1},
\end{split}
\end{align}
where $\beta_1,\beta_2$ are chosen as in (\ref{eqn:decomposition}) for $\hat{\sigma}_{3,\theta,\theta'}$, and as in (\ref{eqn:sigma5 decomposition}) for $\hat{\sigma}_{5,\theta'}$. Thus by (\ref{eqn:averages sigma}), (\ref{eqn:sigma1}) and (\ref{eqn:sigma hat pm}) we get 
\begin{align} \label{eqn:odd sigma1}
\begin{split}
&\mathbb{E}_{O(2n+1)}\left( f_{2n+1,\alpha,\beta}^{(k)}(\theta)f_{2n+1,\alpha,\beta}^{(k)}(\theta') \right) \\
=& \frac{1}{2} \hat{\sigma}_{1,\theta,\theta'}(1) D_n^{T+H,3}(\sigma_{1,\theta,\theta'}) + \frac{1}{2} \hat{\sigma}_{1,\theta,\theta'}(-1) D_n^{T+H,4}(\sigma_{1,\theta,\theta'}) \\
=& \frac{1}{2} e^{\sum_{j=1}^k \frac{2}{j}   \Re \left((\alpha-\beta)(e^{ij\theta}+e^{ij\theta'}) \right)^2} \\
&\times \Bigg( e^{-2\sum_{j=1}^k \frac{1}{j} \Re \left( (\alpha-\beta)\left( e^{ij\theta} + e^{ij\theta'} \right) \right)} e^{\sum_{j=1}^k \frac{1-(-1)^j}{j} \Re \left( (\alpha-\beta)(e^{ij\theta}+e^{ij\theta'}) \right)} \\
&+ e^{-2\sum_{j=1}^k \frac{(-1)^j}{j} \Re \left( (\alpha-\beta)\left( e^{ij\theta} + e^{ij\theta'} \right) \right)} e^{-\sum_{j=1}^k \frac{1-(-1)^j}{j} \Re \left( (\alpha-\beta)(e^{ij\theta}+e^{ij\theta'}) \right)} \Bigg) (1+o(1)) \\
=& e^{\sum_{j=1}^k \frac{2}{j}   \Re \left((\alpha-\beta)(e^{ij\theta}+e^{ij\theta'}) \right)^2} e^{-\sum_{j=1}^k \frac{1+(-1)^j}{j} \Re \left( (\alpha-\beta)(e^{ij\theta}+e^{ij\theta'}) \right)}(1+o(1)),
\end{split}
\end{align}
uniformly for $\theta,\theta' \in [0,2\pi)$. \\

Similarly we obtain from (\ref{eqn:averages sigma}), (\ref{eqn:sigma2}, (\ref{eqn:sigma4}), (\ref{eqn:sigma5}) and (\ref{eqn:sigma hat pm}), that
\begin{align} \label{eqn:odd sigma4}
\mathbb{E}_{O(2n+1)}\left( f_{2n+1,\alpha,\beta}^{(k)}(\theta) \right) = e^{\sum_{j=1}^k \frac{2}{j}   \Re \left((\alpha-\beta)e^{ij\theta} \right)^2} e^{-\sum_{j=1}^k \frac{1+(-1)^j}{j} \Re \left( (\alpha-\beta) e^{ij\theta}\right)} (1+o(1)),
\end{align}
uniformly for $\theta,\theta' \in [0,2\pi)$, and
\begin{align} \label{eqn:odd sigma25}
\begin{split}
\mathbb{E}_{O(2n+1)}\left( f_{2n+1,\alpha,\beta}^{(k)}(\theta)f_{2n+1,\alpha,\beta}(\theta') \right) =& (2n)^{(\alpha^2-\beta^2)} e^{\sum_{j=1}^k \frac{2}{j}   \Re \left((\alpha-\beta) e^{ij\theta} \right)^2} e^{\sum_{j=1}^k \frac{4}{j} \Re \left( (\alpha-\beta)e^{ij\theta} \right)\Re \left( (\alpha-\beta)e^{ij\theta'} \right)} \\
&\times e^{-i\pi \alpha \beta_1} z_1^{2\alpha \beta_1} |1- e^{2i\theta'}|^{-(\alpha^2+\beta^2)} \\
&\times \frac{G(1+\alpha+\beta) G(1+\alpha-\beta)}{G(1+2\alpha)} \\
&\times e^{-\sum_{j=1}^k \frac{1+(-1)^j}{j} \Re \left( (\alpha-\beta)e^{ij\theta} \right)} \\
&\times e^{i\beta(\pi - \theta')} e^{-\frac{i \pi}{2} \beta_1} |1-e^{i\theta'}|^{\alpha} |1+e^{i\theta'}|^{\alpha} (1+o(1)), \\
\mathbb{E}_{O(2n+1)}\left( f_{2n+1,\alpha,\beta}(\theta') \right) =& (2n)^{(\alpha^2-\beta^2)} e^{-i\pi \alpha \beta_1} z_1^{2\alpha \beta_1} |1- e^{2i\theta'}|^{-(\alpha^2+\beta^2)} \\
&\times \frac{G(1+\alpha+\beta) G(1+\alpha-\beta)}{G(1+2\alpha)} \\
&\times e^{i\beta(\pi - \theta')} e^{-\frac{i \pi}{2} \beta_1}
|1-e^{i\theta'}|^{\alpha} |1+e^{i\theta'}|^{\alpha} (1+o(1)),
\end{split}
\end{align}
uniformly for $\theta,\theta' \in [0,2\pi)$, s.t. $e^{i\theta'}$ stays bounded away from $\pm 1$, where
\begin{equation}
z_1 = \begin{cases} e^{i\theta'} & 0 < \theta' < \pi \\ e^{i(2\pi - \theta')} & \pi < \theta' < 2\pi \end{cases}, \quad \beta_1 = \begin{cases} \beta & 0 < \theta' < \pi \\ -\beta & \pi < \theta' < 2\pi \end{cases}.
\end{equation}

From (\ref{eqn:averages sigma}), (\ref{eqn:sigma3 extended}) and (\ref{eqn:sigma hat pm}) we obtain
\begin{align} \label{eqn:odd sigma3 extended}
\begin{split}
& \mathbb{E}_{O(2n+1)}\left( f_{2n+1,\alpha,\beta}(\theta)f_{2n+1,\alpha,\beta}(\theta') \right) \\
=& (2n)^{2(\alpha^2-\beta^2)} e^{-i\pi \alpha \left( \beta_1 + \beta_2 + 2 \beta_2 \right)} \\
&\times |e^{i\theta} - e^{i\theta'}|^{-2(\alpha^2-\beta^2)} |e^{i\theta} - e^{-i\theta'}|^{-2(\alpha^2+\beta^2)} z_1^{4\beta_1\alpha} z_2^{4\beta_2\alpha} |1- e^{2i\theta}|^{-(\alpha^2+\beta^2)} |1- e^{2i\theta}|^{-(\alpha^2+\beta^2)}\\
&\times \frac{G(1+\alpha+\beta)^2 G(1+\alpha-\beta)^2}{G(1+2\alpha)^2} e^{i\beta(\pi - \theta)} e^{i\beta(\pi - \theta')} e^{-\frac{i\pi}{2}(\beta_1 + \beta_2)} \\
&\times |1-e^{i\theta}|^{\alpha} |1+e^{i\theta}|^{\alpha} |1-e^{i\theta'}|^{\alpha} |1+e^{i\theta'}|^{\alpha} (1+o(1)),
\end{split}
\end{align}
uniformly for $\theta,\theta' \in [0,2\pi)$, s.t. $e^{i\theta}, e^{i\theta'}, e^{-i\theta}, e^{-i\theta'}$
stay bounded away from each other and $\pm 1$, and where $z_1,z_2,\beta_1,\beta_2$ are chosen as in (\ref{eqn:decomposition}).\\

Combining (\ref{eqn:odd sigma1}), (\ref{eqn:odd sigma4}) and (\ref{eqn:odd sigma25}), we obtain
\begin{align} \label{eqn:odd orthogonal1}
\begin{split}
&\frac{\mathbb{E}_{O(2n+1)} \left( f_{2n+1,\alpha,\beta}^{(k)}(\theta)f_{2n+1,\alpha,\beta}^{(k)}(\theta') \right)}{\mathbb{E}_{O(2n+1)} \left( f_{2n+1,\alpha,\beta}^{(k)}(\theta) \right)\mathbb{E}_{O(2n+1)} \left(f_{2n+1,\alpha,\beta}^{(k)}(\theta') \right)} \\
=& e^{\sum_{j=1}^k \frac{4}{j} \Re \left((\alpha - \beta)e^{ij\theta}\right) \Re \left((\alpha - \beta)e^{ij\theta'}\right)} (1+o(1)),
\end{split}
\end{align}
uniformly for $\theta,\theta' \in [0,2\pi)$, and 
\begin{align} \label{eqn:odd orthogonal2}
\begin{split}
&\frac{\mathbb{E}_{O(2n+1)} \left( f_{2n+1,\alpha,\beta}^{(k)}(\theta) f_{2n+1,\alpha,\beta}(\theta') \right)}{\mathbb{E}_{O(2n+1)} \left( f_{2n+1,\alpha,\beta}^{(k)}(\theta) \right)\mathbb{E}_{O(2n+1)} \left(f_{2n+1,\alpha,\beta}(\theta') \right)} \\
=& e^{\sum_{j=1}^k \frac{4}{j} \Re \left((\alpha - \beta)e^{ij\theta}\right) \Re \left((\alpha - \beta)e^{ij\theta'}\right)} (1+o(1)), 
\end{split}
\end{align}
uniformly for $\theta,\theta' \in [0,2\pi)$, s.t. $e^{i\theta'}$ stays bounded away from $\pm 1$. \\

By (\ref{eqn:odd sigma25}) and (\ref{eqn:odd sigma3 extended}) we obtain
\begin{align} \label{eqn:odd orthogonal3}
\begin{split}
&\frac{\mathbb{E}_{O(2n+1)} \left( f_{2n+1,\alpha,\beta}(\theta)f_{2n+1,\alpha,\beta}(\theta') \right)}{\mathbb{E}_{O(2n+1)} \left( f_{2n+1,\alpha,\beta}(\theta) \right)\mathbb{E}_{O(2n+1)} \left(f_{2n+1,\alpha,\beta}(\theta') \right)} \\
=& |e^{i\theta} - e^{i\theta'}|^{-2(\alpha^2-\beta^2)} |e^{i\theta} - e^{-i\theta'}|^{-2(\alpha^2+\beta^2)} z_1^{2\alpha\beta_1} z_2^{2\alpha \beta_2} e^{-2\pi i \alpha \beta_2} (1+o(1)), \\
=& e^{4i\alpha\beta \Im \ln (1-e^{i(\theta+\theta')})} |e^{i\theta} - e^{i\theta'}|^{-2(\alpha^2-\beta^2)} |e^{i\theta} - e^{-i\theta'}|^{-2(\alpha^2 + \beta^2)} (1+o(1)),
\end{split}
\end{align}
uniformly for $\theta,\theta' \in [0,2\pi)$, s.t. $e^{i\theta}, e^{i\theta'}, e^{-i\theta}, e^{-i\theta'}$
stay bounded away from each other and $\pm 1$.

\subsection{The Even Orthogonal Case}
In the same way as in the odd orthogonal case one can use (\ref{eqn:averages sigma}), (\ref{eqn:sigma1}) - (\ref{eqn:sigma3 extended}) and (\ref{eqn:sigma hat pm}) to obtain 
\begin{align} \label{eqn:even orthogonal1}
\begin{split}
&\frac{\mathbb{E}_{O(2n)} \left( f_{2n,\alpha,\beta}^{(k)}(\theta)f_{2n,\alpha,\beta}^{(k)}(\theta') \right)}{\mathbb{E}_{O(2n)} \left( f_{2n,\alpha,\beta}^{(k)}(\theta) \right)\mathbb{E}_{O(2n)} \left(f_{2n,\alpha,\beta}^{(k)}(\theta') \right)} = e^{\sum_{j=1}^k \frac{4}{j} \Re \left((\alpha - \beta)e^{ij\theta}\right) \Re \left((\alpha - \beta)e^{ij\theta'}\right)} (1+o(1)),
\end{split}
\end{align}
uniformly for $\theta,\theta \in [0,2\pi)$, and 
\begin{align}
\begin{split} \label{eqn:even orthogonal2}
&\frac{\mathbb{E}_{O(2n)} \left( f_{2n,\alpha,\beta}^{(k)}(\theta) f_{2n,\alpha,\beta}(\theta') \right)}{\mathbb{E}_{O(2n)} \left( f_{2n,\alpha,\beta}^{(k)}(\theta) \right)\mathbb{E}_{O(2n)} \left(f_{2n,\alpha,\beta}(\theta') \right)} = e^{\sum_{j=1}^k \frac{4}{j} \Re \left((\alpha - \beta)e^{ij\theta}\right) \Re \left((\alpha - \beta)e^{ij\theta'}\right)} (1+o(1)),
\end{split}
\end{align}
uniformly for $\theta, \theta' \in [0,2\pi)$, s.t. $e^{i\theta'}$ stays bounded away from $\pm 1$, and 
\begin{align} \label{eqn:even orthogonal3}
\begin{split}
&\frac{\mathbb{E}_{O(2n)} \left( f_{2n,\alpha,\beta}(\theta)f_{2n,\alpha,\beta}(\theta') \right)}{\mathbb{E}_{O(2n)} \left( f_{2n,\alpha,\beta}(\theta) \right)\mathbb{E}_{O(2n)} \left(f_{2n,\alpha,\beta}(\theta') \right)} \\
=& e^{4i\alpha\beta \Im \ln (1-e^{i(\theta+\theta')})} |e^{i\theta} - e^{i\theta'}|^{-2(\alpha^2-\beta^2)} |e^{i\theta} - e^{-i\theta'}|^{-2(\alpha^2 + \beta^2)} (1+o(1)),
\end{split}
\end{align}
uniformly for $\theta,\theta' \in [0,2\pi)$, s.t. $e^{i\theta}, e^{i\theta'}, e^{-i\theta}, e^{-i\theta'}$
stay bounded away from each other and $\pm 1$.\\

In Section \ref{section:second limit} we will also need that
\begin{align} \label{eqn:even sigma4}
\begin{split}
&\mathbb{E}_{O(2n)}\left( f_{2n,\alpha,\beta}^{(k)}(\theta) \right) = e^{- 2\sum_{j=1}^k \frac{\eta_j}{j} \Re\left( (\alpha-\beta)e^{ij\theta} \right) + \sum_{j=1}^k \frac{2}{j} \Re\left((\alpha-\beta)e^{ij\theta}\right)^2}(1+o(1)),
\end{split}
\end{align}
uniformly in $\theta \in [0,2\pi)$. 

\subsection{Pulling the large-$n$ limit inside the integral}
Using the asymptotics computed in the previous sections and Theorem \ref{thm:T+H Claeys}, we follow the proof of Corollary 2.1 in \cite{Fahs} to show that we can pull $\lim_{n \rightarrow \infty}$ inside the integral, i.e. that we can use the non-uniform asymptotics of the integrand:
\begin{lemma} \label{lem:L2 limit Claeys}
Let the expectations be over $O(n)$ or $Sp(2n)$, and $\alpha^2 - \beta^2 < 1/2$ and $0 \leq \alpha < 1/2$. Then 
\begin{align} \label{eqn:L2 limit Claeys1}
\begin{split}
\lim_{n \rightarrow \infty} &\int_0^{2\pi} \int_0^{2\pi} g(\theta) g(\theta') \frac{\mathbb{E} \left( f_{n,\alpha,\beta}(\theta) f_{n,\alpha,\beta}(\theta') \right)} {\mathbb{E} \left( f_{n,\alpha,\beta}(\theta) \right)\mathbb{E} \left( f_{n,\alpha,\beta}(\theta') \right)} \text{d}\theta \text{d}\theta' \\
=&\int_0^{2\pi} \int_0^{2\pi} g(\theta)g(\theta') e^{4i\alpha\beta \Im \ln (1-e^{i(\theta+\theta')})} |e^{i\theta} - e^{i\theta'}|^{-2(\alpha^2-\beta^2)} |e^{i\theta} - e^{-i\theta'}|^{-2(\alpha^2 + \beta^2)} \text{d}\theta \text{d}\theta',
\end{split}
\end{align}
and 
\begin{align} \label{eqn:L2 limit Claeys2}
\begin{split}
\lim_{n \rightarrow \infty} &\int_0^{2\pi} \int_0^{2\pi} g(\theta) g(\theta') \frac{\mathbb{E} \left( f^{(k)}_{n,\alpha,\beta}(\theta) f_{n,\alpha,\beta}(\theta') \right)} {\mathbb{E} \left( f^{(k)}_{n,\alpha,\beta}(\theta) \right)\mathbb{E} \left( f_{n,\alpha,\beta}(\theta') \right)} \text{d}\theta \text{d}\theta' \\
=&\int_0^{2\pi} \int_0^{2\pi} g(\theta)g(\theta') e^{\sum_{j=1}^k \frac{4}{j} \Re \left((\alpha - \beta)e^{ij\theta}\right) \Re \left((\alpha -\beta)e^{ij\theta'}\right)} \text{d}\theta \text{d}\theta'.
\end{split}
\end{align}
\end{lemma}
\begin{remark} \label{remark:parameters}
Note that the limit in (\ref{eqn:L2 limit Claeys1}) is finite since $2(\alpha^2 + \beta^2) \leq 2(\alpha^2 - \beta^2) < 1$ and $2(\alpha^2 + \beta^2) + 2(\alpha^2 - \beta^2) = 4\alpha^2 < 1$. If we restrict the integrals to $I_\epsilon = (\epsilon, \pi - \epsilon) \cup (\pi + \epsilon, 2\pi - \epsilon)$ then we only need that $2(\alpha^2 + \beta^2) \leq 2(\alpha^2 - \beta^2) < 1$ for the integral to be finite, since $|e^{i\theta} - e^{i\theta'}|$ and $|e^{i\theta} - e^{-i\theta'}|$ can only go to zero simultaneously when both $\theta, \theta'$ approach $\{0,\pi,2\pi\}$. Further, if we restrict to $I_\epsilon$ we can rely on Theorems \ref{thm:T+H uniform} and \ref{thm:T, T+H extended} for the proof, which only require $\alpha > -1/4$, and we will not need Theorem \ref{thm:T+H Claeys} anymore, which requires $\alpha \geq 0$. Thus when restricting to $I_\epsilon$, Lemma \ref{lem:L2 limit Claeys} holds for the larger set of parameters $\alpha^2 - \beta^2 < 1/2$ and $\alpha > - 1/4$. 
\end{remark} 
\noindent \textbf{Proof of Lemma \ref{lem:L2 limit Claeys}:} By Theorems \ref{thm:T+H Claeys} and \ref{thm:average} we have 
\begin{align} \label{eqn:quotients Sp(2n)}
\begin{split}
\frac{\mathbb{E}_{Sp(2n)} \left( f_{n,\alpha,\beta}(\theta) f_{n,\alpha,\beta}(\theta') \right)} {\mathbb{E}_{Sp(2n)} \left( f_{n,\alpha,\beta}(\theta) \right)\mathbb{E}_{Sp(2n)} \left( f_{n,\alpha,\beta}(\theta') \right)} =& \frac{D_n^{T+H,2}(\sigma_{3,\theta,\theta'}) }{ D_n^{T+H,2}(\sigma_{5,\theta}) D_n^{T+H,2}(\sigma_{5,\theta'})} \\
=& e^{O(1)} \left( \sin \left| \frac{\theta - \theta'}{2} \right| + \frac{1}{n} \right)^{-2( \alpha^2 - \beta^2)} \left( \sin \left| \frac{\theta +\theta'}{2} \right| + \frac{1}{n} \right)^{-2(\alpha^2 + \beta^2)}, 
\end{split}
\end{align}
as $n\rightarrow \infty$, uniformly in (Lebesgue almost all) $(\theta,\theta') \in [0,2\pi)^2$. By the same theorems we get 
\begin{align}
\begin{split}
&\mathbb{E}_{O(2n+1)} \left( f_{2n+1,\alpha,\beta}(\theta) f_{2n+1,\alpha,\beta}(\theta') \right) = \frac{1}{2} \left( \hat{\sigma}_{3,\theta,\theta'}(1) D_n^{T+H,3}(\sigma_{3,\theta,\theta'}) + \hat{\sigma}_{3,\theta,\theta'}(-1)D_n^{T+H,4}(\sigma_{3,\theta,\theta'}) \right) \\
=& \frac{1}{2} F_{\sigma_{3,\theta,\theta'}} n^{2(\alpha^2 - \beta^2)} \left( e^{O(1)} \prod_{j = 1}^2 \left( \frac{ \sin \frac{|\theta_j|}{2} + \frac{1}{n} }{ \sin^2 \frac{|\theta_j|}{2} \left( \cos \frac{|\theta_j|}{2} + \frac{1}{n} \right)} \right)^{-\alpha} + e^{O(1)} \prod_{j = 1}^2 \left( \frac{ \cos \frac{|\theta_j|}{2} + \frac{1}{n} }{ \cos^2 \frac{|\theta_j|}{2} \left( \sin \frac{|\theta_j|}{2} + \frac{1}{n} \right)} \right)^{-\alpha} \right),
\end{split}
\end{align}
and
\begin{align}
\begin{split}
&\mathbb{E}_{O(2n+1)} \left( f_{2n+1,\alpha,\beta}(\theta) \right) \\
=& \frac{1}{2} F_{\sigma_{5,\theta}} n^{\alpha^2 - \beta^2} \left( e^{O(1)} \left( \frac{ \sin \frac{|\theta|}{2} + \frac{1}{n} }{ \sin^2 \frac{|\theta|}{2} \left( \cos \frac{|\theta|}{2} + \frac{1}{n} \right)} \right)^{-\alpha} + e^{O(1)} \left( \frac{ \cos \frac{|\theta|}{2} + \frac{1}{n} }{ \cos^2 \frac{|\theta|}{2} \left( \sin \frac{|\theta|}{2} + \frac{1}{n} \right)} \right)^{-\alpha} \right),
\end{split}
\end{align}
as $n \rightarrow \infty$, uniformly for (Lebesgue almost all) $(\theta,\theta') \in [0,2\pi)^2$, where $\hat{\sigma}_{1,\theta,\theta'}$,...,$\hat{\sigma}_{5,\theta}$ are defined in (\ref{eqn:sigma hat}). Thus we can see that also for the expectations over $O(2n+1)$ we get
\begin{align} \label{eqn:quotients O(2n+1)}
\begin{split}
&\frac{ \mathbb{E}_{O(2n+1)} \left( f_{2n+1,\alpha,\beta}(\theta) f_{2n+1,\alpha,\beta}(\theta') \right) }{ \mathbb{E}_{O(2n+1)} \left( f_{2n+1,\alpha,\beta}(\theta) \right) \mathbb{E}_{O(2n+1)} \left( f_{2n+1,\alpha,\beta}(\theta') \right)} \\
=& e^{O(1)} \left( \sin \left| \frac{\theta - \theta'}{2} \right| + \frac{1}{n} \right)^{-2( \alpha^2 - \beta^2)} \left( \sin \left| \frac{\theta +\theta'}{2} \right| + \frac{1}{n} \right)^{-2(\alpha^2 + \beta^2)}, 
\end{split}
\end{align}
as $n\rightarrow \infty$, uniformly in (Lebesgue almost all) $(\theta,\theta') \in [0,2\pi)^2$. Similarly, Theorem \ref{thm:T+H Claeys} gives
\begin{align} \label{eqn:quotients O(2n)}
\begin{split}
&\frac{ \mathbb{E}_{O(2n)} \left( f_{2n,\alpha,\beta}(\theta) f_{2n,\alpha,\beta}(\theta') \right) }{ \mathbb{E}_{O(2n)} \left( f_{2n,\alpha,\beta}(\theta) \right) \mathbb{E}_{O(2n)} \left( f_{2n,\alpha,\beta}(\theta') \right)} \\
=& e^{O(1)} \left( \sin \left| \frac{\theta - \theta'}{2} \right| + \frac{1}{n} \right)^{-2( \alpha^2 - \beta^2)} \left( \sin \left| \frac{\theta +\theta'}{2} \right| + \frac{1}{n} \right)^{-2(\alpha^2 + \beta^2)}, 
\end{split}
\end{align}
as well as 
\begin{align} \label{eqn:quotients}
\begin{split}
&\frac{ \mathbb{E}_{Sp(2n)} \left( f^{(k)}_{n,\alpha,\beta}(\theta) f_{n,\alpha,\beta}(\theta') \right) }{ \mathbb{E}_{Sp(2n)} \left( f^{(k)}_{n,\alpha,\beta}(\theta) \right) \mathbb{E}_{Sp(2n)} \left( f_{n,\alpha,\beta}(\theta') \right)} = e^{O(1)},\\
&\frac{ \mathbb{E}_{O(2n+1)} \left( f^{(k)}_{2n+1,\alpha,\beta}(\theta) f_{2n+1,\alpha,\beta}(\theta') \right) }{ \mathbb{E}_{O(2n+1)} \left( f^{(k)}_{2n+1,\alpha,\beta}(\theta) \right) \mathbb{E}_{O(2n+1)} \left( f_{2n+1,\alpha,\beta}(\theta') \right)} = e^{O(1)},\\
&\frac{ \mathbb{E}_{O(2n)} \left( f^{(k)}_{2n,\alpha,\beta}(\theta) f_{2n,\alpha,\beta}(\theta') \right) }{ \mathbb{E}_{O(2n)} \left( f^{(k)}_{2n,\alpha,\beta}(\theta) \right) \mathbb{E}_{O(2n)} \left( f_{2n,\alpha,\beta}(\theta') \right)} = e^{O(1)},
\end{split}
\end{align}
as $n\rightarrow \infty$, uniformly in (Lebesgue almost all) $(\theta,\theta') \in [0,2\pi)^2$.\\

Now, for a given measureable subset $R \subset [0,2\pi)^2$, we denote
\begin{align}
\begin{split}
L_\epsilon(R) =& \int_R \left( \sin \left| \frac{\theta - \theta'}{2} \right| + \epsilon \right)^{-2( \alpha^2 - \beta^2)} \left( \sin \left| \frac{\theta +\theta'}{2} \right| + \epsilon \right)^{-2(\alpha^2 + \beta^2)} \text{d}\theta \text{d}\theta',\\
K_\epsilon(R) =& \int_R \left( \sin \left| \frac{\theta - \theta'}{2} \right| + \epsilon \right)^{-2( \alpha^2 - \beta^2)} \text{d}\theta \text{d}\theta'.
\end{split}
\end{align}
In the case $\alpha^2 + \beta^2 > 0$ we have $L_\epsilon(R) < L_0(R) < \infty$ for any $\epsilon > 0$ (since $2(\alpha^2 - \beta^2), 2(\alpha^2 + \beta^2), 4\alpha^2 < 1$), while in the case $\alpha^2 + \beta^2 \leq 0$ we have $K_\epsilon(R) < K_0(R) < \infty$ for any $\epsilon > 0$. For $\eta > 0$ we define 
\begin{align}
\begin{split}
R_1(\eta) =& \left\{ (\theta,\theta') \in [0,2\pi)^2: \sin \frac{|\theta - \theta'|}{2}, \sin \frac{|\theta + \theta'|}{2} > \mu \right\} \\
R_2(\eta) =& R_1(\eta)^c.
\end{split}
\end{align}
It follows by (\ref{eqn:quotients Sp(2n)}), (\ref{eqn:quotients O(2n+1)}) and (\ref{eqn:quotients O(2n)}) that for any $\eta > 0$ there exists a $C > 0$ and $N_0 \in \mathbb{N}$ such that 
\begin{align}
\begin{split}
\int_{R_2(\eta)} g(\theta) g(\theta') \frac{\mathbb{E} \left( f_{n,\alpha,\beta}(\theta) f_{n,\alpha,\beta}(\theta') \right)} {\mathbb{E} \left( f_{n,\alpha,\beta}(\theta) \right)\mathbb{E} \left( f_{n,\alpha,\beta}(\theta') \right)} \leq \begin{cases} C L_0(R_2(\eta)), & \alpha^2 + \beta^2 > 0, \\
C K_0(R_2(\eta)), & \alpha^2 + \beta^2 \leq 0, \end{cases}
\end{split}
\end{align}
for $n > N_0$. Fix $\delta > 0$. Since $L_0(R_2(\eta)), K_0(R_2(\eta) \rightarrow 0$ as $\eta \rightarrow 0$, it follows that there exists an $\eta_0 > 0$ and an $N_0 \in \mathbb{N}$ such that
\begin{align} \label{eqn:bound R_2}
\begin{split}
\int_{R_2(\eta)} g(\theta) g(\theta') \frac{\mathbb{E} \left( f_{n,\alpha,\beta}(\theta) f_{n,\alpha,\beta}(\theta') \right)} {\mathbb{E} \left( f_{n,\alpha,\beta}(\theta) \right)\mathbb{E} \left( f_{n,\alpha,\beta}(\theta') \right)} < \delta/2
\end{split}
\end{align}
for $n > N_0$ and $\eta < \eta_0$.  \\

Using (\ref{eqn:symplectic3}), (\ref{eqn:odd orthogonal3}) and (\ref{eqn:even orthogonal3}), we get that for any fixed $\eta > 0$ it holds that
\begin{align}
\begin{split}
&\int_{R_1(\eta)} g(\theta) g(\theta') \frac{\mathbb{E} \left( f_{n,\alpha,\beta}(\theta) f_{n,\alpha,\beta}(\theta') \right)} {\mathbb{E} \left( f_{n,\alpha,\beta}(\theta) \right)\mathbb{E} \left( f_{n,\alpha,\beta}(\theta') \right)} = (1+ o(1)) L_0(R_1(\eta)) \\
=& (1+o(1)) L_0([0,2\pi)^2) - (1+o(1)) L_0(R_2(\eta)).
\end{split}
\end{align}
We pick $\eta < \eta_0$ such that $I_0(R_2(\eta)) < \delta/4$, then we have
\begin{align}
\begin{split}
\left| \int_{R_1(\eta)} g(\theta) g(\theta') \frac{\mathbb{E} \left( f_{n,\alpha,\beta}(\theta) f_{n,\alpha,\beta}(\theta') \right)} {\mathbb{E} \left( f_{n,\alpha,\beta}(\theta) \right)\mathbb{E} \left( f_{n,\alpha,\beta}(\theta') \right)} - L_0([0,2\pi)^2) \right| = \left| L_0(R_2(\eta)) +o(1) \right| < \delta/4 + o(1).
\end{split}
\end{align}
Together with (\ref{eqn:bound R_2}) we obtain that there exists an $N \in \mathbb{N}$ such that 
\begin{align}
\begin{split}
\left| \int_0^{2\pi} \int_0^{2\pi} g(\theta) g(\theta') \frac{\mathbb{E} \left( f_{n,\alpha,\beta}(\theta) f_{n,\alpha,\beta}(\theta') \right)} {\mathbb{E} \left( f_{n,\alpha,\beta}(\theta) \right)\mathbb{E} \left( f_{n,\alpha,\beta}(\theta') \right)} - L_0([0,2\pi)^2) \right| < \delta
\end{split}
\end{align}
for all $n > N$. Since $\delta >0 $ is arbitrary, this shows the first part of (\ref{eqn:L2 limit Claeys1}). (\ref{eqn:L2 limit Claeys2}) follows from (\ref{eqn:quotients}) in a similar way. \qed 

\subsection{Proof of Lemma \ref{lem:L2 limit}}
Now we have all the ingredients necessary to prove Lemma \ref{lem:L2 limit}. We will only prove it for $I = [0,2\pi)$, the proof for $I = I_\epsilon = (\epsilon, \pi - \epsilon) \cup (\pi + \epsilon, 2\pi + \epsilon)$ is completely analogous and relies on the fact that Lemma \ref{lem:L2 limit Claeys} also holds for $I = I_\epsilon$, $\alpha^2 - \beta^2 < 1/2$ and $\alpha > -1/4$, as explained in Remark \ref{remark:parameters}.\\ 

\noindent \textbf{Proof of Lemma \ref{lem:L2 limit}:} From Lemma \ref{lem:L2 limit Claeys} and (\ref{eqn:L2 limit}), (\ref{eqn:symplectic1}), (\ref{eqn:odd orthogonal1}) and (\ref{eqn:even orthogonal1}) it follows that 
\begin{align} 
\begin{split}
& \lim_{n\rightarrow \infty} \mathbb{E} \left( \left( \int_0^{2\pi} g(\theta) \mu_{n,\alpha,\beta}(\text{d}\theta) - \int_0^{2\pi} g(\theta) \mu_{n,\alpha,\beta}^{(k)}(\text{d}\theta) \right)^2 \right) \\
=& \int_0^{2\pi} \int_0^{2\pi} g(\theta)g(\theta') e^{4i\alpha\beta \Im \ln (1-e^{i(\theta+\theta')})} |e^{i\theta} - e^{i\theta'}|^{-2(\alpha^2-\beta^2)} |e^{i\theta} - e^{-i\theta'}|^{-2(\alpha^2 + \beta^2)} \text{d}\theta \text{d}\theta'\\
& - \int_0^{2\pi} \int_0^{2\pi} g(\theta)g(\theta') e^{\sum_{j=1}^k \frac{4}{j} \Re \left((\alpha - \beta)e^{ij\theta}\right) \Re \left((\alpha - \beta)e^{ij\theta'}\right)}  \text{d}\theta \text{d}\theta'.
\end{split}
\end{align}
We have
\begin{align}\label{eqn:bounded in L2 - 1}
\begin{split}
&4\Re \left((\alpha - \beta)e^{ij\theta}\right) \Re \left((\alpha - \beta)e^{ij\theta'}\right) \\
=& (\alpha - \beta)^2 e^{ij(\theta+\theta')} + (\alpha + \beta)^2 e^{-ij(\theta+\theta')} + (\alpha^2- \beta^2)(e^{ij(\theta-\theta')} + e^{-ij(\theta-\theta')})\\
=& 2(\alpha^2-\beta^2) \cos(j(\theta-\theta')) + 2(\alpha^2+\beta^2) \cos(j(\theta+\theta')) - 2 \alpha \beta \left(e^{ij(\theta+\theta')} - e^{-ij(\theta + \theta')}\right).  
\end{split}
\end{align}
Since
\begin{align}
\begin{split}
\ln |e^{i\theta} - e^{i\theta'}| &= - \sum_{j=1}^\infty \frac{1}{j} \cos(j(\theta-\theta')), \quad \quad \ln |e^{i\theta} - e^{-i\theta'}| = - \sum_{j=1}^\infty \frac{1}{j} \cos(j(\theta+\theta')),
\end{split}
\end{align}
and 
\begin{align}
\begin{split}
-\sum_{j=1}^\infty \frac{1}{j}(e^{ij(\theta+\theta')} - e^{-ij(\theta+\theta')}) =& \ln  (1-e^{i(\theta+\theta')}) - \ln (1-e^{-i(\theta+\theta')}) \\
=& 2i\Im \ln (1-e^{i(\theta+\theta')}),
\end{split}
\end{align}
we see that
\begin{align} \label{eqn:series for covariance function}
\begin{split}
&e^{\sum_{j=1}^\infty \frac{1}{j} \Re \left((\alpha + i\beta)e^{ij\theta}\right) \Re \left((\alpha + i\beta)e^{ij\theta'}\right)} \\
=& |e^{i\theta} - e^{i\theta'}|^{-2(\alpha^2-\beta^2)} |e^{i\theta} - e^{-i\theta'}|^{-2(\alpha^2+\beta^2)} e^{4i\alpha \beta \Im \ln (1-e^{i(\theta+\theta')})}.
\end{split}
\end{align}
Because $\mathbb{E}\left( (...)^2\right) \geq 0$ we have
\begin{align} \label{eqn:bounded in L2 - 2}
\begin{split}
&\int_{0}^{2\pi} \int_{0}^{2\pi} g(\theta)g(\theta') |e^{i\theta} - e^{i\theta'}|^{-2(\alpha^2-\beta^2)} |e^{i\theta} - e^{-i\theta'}|^{-2(\alpha^2+\beta^2)} e^{4i\alpha \beta \Im \ln (1-e^{i(\theta+\theta')})} \text{d}\theta \text{d}\theta'\\
\geq &\limsup_{k\rightarrow \infty} \int_{0}^{2\pi} \int_{0}^{2\pi} g(\theta)g(\theta') e^{\sum_{j=1}^k \frac{4}{j} \Re \left((\alpha - \beta)e^{ij\theta}\right) \Re \left((\alpha - \beta)e^{ij\theta'}\right)}  \text{d}\theta \text{d}\theta'.
\end{split}
\end{align}
Now we use that $g$ is non-negative to apply Fatou's lemma to get the other inequality, which finishes the proof. \qed 

\section{Proof of the Second Limit} \label{section:second limit}

In this section we prove (\ref{eqn:second limit}) for $I = [0,2\pi)$, i.e. that for any fixed $k \in \mathbb{N}$ and bounded continuous function $g:[0,2\pi) \rightarrow \mathbb{R}$ it holds that
\begin{equation}
\int_0^{2\pi} g(\theta) \mu_{n,\alpha,\beta}^{(k)}(\text{d}\theta) \xrightarrow{d} \int_0^{2\pi} g(\theta) \mu_{\alpha,\beta}^{(k)}(\text{d}\theta),
\end{equation}
as $n\rightarrow \infty$, where $\mu_{n,\alpha,\beta}^{(k)}$ is defined in Definition \ref{def:mu_n^(k)} and $\mu^{(k)}_{\alpha,\beta}$ is defined in Appendix \ref{appendix:GMC}. For $I = I_\epsilon = (\epsilon, \pi - \epsilon) \cup (\pi + \epsilon, 2\pi - \epsilon)$ the proof is exactly the same.   \\

We consider the function $F:\mathbb{R}^k \rightarrow \mathbb{R}$,
\begin{equation}
F(z_1,...,z_k) = \int_0^{2\pi} \frac{g(\theta) e^{-2\sum_{j=1}^k \frac{z_j}{\sqrt{j}} \left( \alpha \cos(j\theta) - i\beta \sin(j\theta) \right)}}{e^{\pm 2\sum_{j=1}^k \frac{\eta_j}{j} \Re\left( (\alpha-\beta)e^{ij\theta} \right) + \sum_{j=1}^k \frac{2}{j} \Re\left((\alpha-\beta)e^{ij\theta}\right)^2}} \text{d}\theta, 
\end{equation}
which is continuous since the integrand is continuous in $z_1,...,z_n$ and $\theta$, and bounded in $\theta$ for any fixed $z_1,...,z_n$. Then we have, with $\pm$ corresponding to symplectic/orthogonal:
\begin{align}
\begin{split}
\int_0^{2\pi} g(\theta) \mu_{n,\alpha,\beta}^{(k)}(\text{d}\theta) =& \int_0^{2\pi} \frac{g(\theta) e^{-2\sum_{j=1}^k \frac{\text{Tr}(U_n^j)}{\sqrt{j}} \left( \alpha \cos(j\theta) - i\beta \sin(j\theta) \right)}}{\mathbb{E}(f_{n,\alpha,\beta}^{(k)}(\theta))} \text{d}\theta \\
=& \frac{1}{1+o(1)} \int_0^{2\pi} \frac{g(\theta) e^{-2\sum_{j=1}^k \frac{\text{Tr}(U_n^j)}{\sqrt{j}} \left( \alpha \cos(j\theta) - i\beta \sin(j\theta) \right)}}{e^{\pm 2\sum_{j=1}^k \frac{\eta_j}{j} \Re\left( (\alpha-\beta)e^{ij\theta} \right) + \sum_{j=1}^k \frac{2}{j} \Re\left((\alpha-\beta)e^{ij\theta}\right)^2}} \text{d}\theta \\
&\xrightarrow{d} \int_0^{2\pi} \frac{g(\theta) e^{- 2\sum_{j=1}^k \frac{\mathcal{N}_j \mp \frac{\eta_j}{\sqrt{j}}}{\sqrt{j}} \left( \alpha \cos(j\theta) - i\beta \sin(j\theta) \right)}}{e^{\pm 2\sum_{j=1}^k \frac{\eta_j}{j} \Re\left( (\alpha-\beta)e^{ij\theta} \right) + \sum_{j=1}^k \frac{2}{j} \Re\left((\alpha-\beta)e^{ij\theta}\right)^2}} \text{d}\theta \\
\overset{d}{=}& \int_0^{2\pi} g(\theta) e^{ 2\sum_{j=1}^k \frac{\mathcal{N}_j}{\sqrt{j}} \left( \alpha \cos(j\theta) - i\beta \sin(j\theta) \right) -\sum_{j=1}^k \frac{2}{j} \Re\left((\alpha-\beta)e^{ij\theta}\right)^2} \text{d}\theta \\
=& \int_0^{2\pi} g(\theta) \mu_{\alpha,\beta}^{(k)}(\text{d}\theta), \\
\end{split}
\end{align} 
where in the second equality we used (\ref{eqn:sigma4}), (\ref{eqn:odd sigma4}), and (\ref{eqn:even sigma4}), where the convergence in distribution follows from Theorem \ref{thm:traces} and the continuous mapping theorem, and where the penultimate equality follows from the fact that $- \mathcal{N}_j \overset{d}{=} \mathcal{N}_j$.

\section{Riemann-Hilbert Problem for a System of Orthogonal Polynomials and a Differential Identity}
In this section and and the following Sections \ref{section:asymptotics of polynomials}, \ref{section:Toeplitz} and \ref{section:T+H} we prove Theorems \ref{thm:T uniform}, \ref{thm:T+H uniform} and \ref{thm:T, T+H extended}, by following the Riemann-Hilbert analysis of \cite{ClaeysKrasovsky}. 

\subsection{RHP for Orthogonal Polynomials}
By the integral representation for a Toeplitz-determinant and since $f_{p,t} > 0$ except at $z_0,...,z_5$, it holds that $D_{n}(f_{p,t}) \in (0,\infty)$ for all $n\in \mathbb{N}$. Thus we can define the polynomials
\begin{align}
\begin{split}
\phi_n(z) &= \frac{1}{\sqrt{D_n(f_{p,t}) D_{n+1}(f_{p,t})}} 
\left| 
\begin{array} {cccc} f_{p,t,0} & f_{p,t,-1} & \dots & f_{p,t,-n} \\
f_{p,t,1} & f_{p,t,0} & \dots & f_{p,t,-n+1} \\
\dots & \dots & & \dots \\
f_{p,t,n-1} & f_{p,t,n-2} & \dots & f_{p,t,-1} \\
1 & z & \dots & z^n 
\end{array}
\right| = \chi_n z^n + ..., \\
\hat{\phi}_n(z) &= \frac{1}{\sqrt{D_n(f_{p,t}) D_{n+1}(f_{p,t})}} 
\left| 
\begin{array} {ccccc} f_{p,t,0} & f_{p,t,-1} & \dots & f_{p,t,-n+1} & 1 \\
f_{p,t,1} & f_{p,t,0} & \dots & f_{p,t,-n+2} & z \\
\dots & \dots & & \dots & \dots \\
f_{p,t,n} & f_{p,t,n-1} & \dots & f_{p,t,1} & z^n 
\end{array}
\right| = \chi_n z^n + ..., 
\end{split}
\end{align}
where the leading coefficient $\chi_n$ is given by 
\begin{equation}
\chi_n = \sqrt{\frac{D_n(f_{p,t})}{D_{n+1}(f_{p,t,})}}.
\end{equation}
The above polynomials satisfy the orthogonality relations
\begin{align}
\begin{split}
\frac{1}{2\pi} \int_0^{2\pi} \phi_n(e^{i\theta}) e^{-ik\theta} f_{p,t}(e^{i\theta}) \text{d}\theta =& \chi_n^{-1} \delta_{nk}, \\
\frac{1}{2\pi} \int_0^{2\pi} \hat{\phi}_n(e^{-i\theta}) e^{ik\theta} f_{p,t}(e^{i\theta}) \text{d}\theta =& \chi_n^{-1} \delta_{nk},
\end{split}
\end{align}
for $k = 0,1,...,n$, which implies that they are orthonormal w.r.t. the weight $f_{p,t}$. \\

Let $C$ denote the unit circle, oriented counterclockwise. It can easily be verified that the matrix-valued function $Y(z) = Y(z;n,p,t)$ given by 
\begin{equation}
Y(z) = \left( \begin{array}{cc} \chi_n^{-1} \phi_n(z) & \chi_n^{-1} \int_C \frac{\phi_n(\xi)}{\xi - z} \frac{f_{p,t}(\xi) \text{d}{\xi}}{2\pi i \xi^n} \\
- \chi_{n-1}z^{n-1} \hat{\phi}_{n-1}(z^{-1}) &  -\chi_{n-1} \int_C \frac{\hat{\phi}_{n-1}(\xi^{-1})}{\xi - z} \frac{f_{p,t}(\xi) \text{d}{\xi}}{2\pi i \xi} \end{array} \right)
\end{equation}
is the unique solution of the following Riemann-Hilbert problem:\\

\noindent \textbf{RH problem for} $Y$

\begin{enumerate}[label=(\alph*)]
\item $Y:\mathbb{C}\setminus C \rightarrow \mathbb{C}^{2\times 2}$ is analytic.

\item The continuous boundary values of $Y$ from inside the unit circle, denoted $Y_+$, and from outside, denoted $Y_-$, exist on $C\setminus \{ z_0,...,z_5 \}$, and are related by the jump condition
\begin{equation}
Y_+(z) = Y_-(z) \left( \begin{array} {cc} 1 & z^{-n} f_{p,t}(z) \\ 0 & 1 \end{array} \right), \quad z \in C\setminus \{z_0,...,z_5\}.
\end{equation}

\item $Y(z) = (I + O(1/z)) \left( \begin{array}{cc} z^n & 0 \\ 0 & z^{-n} \end{array} \right)$, as $ z \rightarrow \infty$.

\item As $z \rightarrow z_k$, $z \in \mathbb{C}\setminus C$, $k= 0,...,5$, we have 
\begin{equation}
Y(z) = \left( \begin{array}{cc} O(1) & O(1) + O(|z-z_k|^{2\alpha_k}) \\ O(1) & O(1) + O(|z-z_k|^{2\alpha_k}) \end{array}\right), \quad \text{if } \alpha_k \neq 0,
\end{equation}
and 
\begin{equation}
Y(z) = \left( \begin{array}{cc} O(1) & O(\ln |z-z_k|) \\ O(1) & O(\ln |z-z_k| ) \end{array}\right), \quad \text{if } \alpha_k = 0.
\end{equation}
\end{enumerate}

\noindent From the RHP and Liouville's theorem it follows that $\det Y(z) = 1$ for all $z \in \mathbb{C}\setminus C$. Using this, one can see quickly that the solution is unique. 

We have $Y(z;n,p,t)_{21}(0) = \chi_{n-1}^2$ and $Y(z;n,p,t)_{11}(z) = \chi_n^{-1} \phi_n(z) = \Phi_n(z)$, thus if we know the asymptotics of $Y$, we know the asymptotics of $\Phi_n$, $\phi_n$ and $\chi_n$.

\subsection{Differential Identity}

The Fourier coefficients are differentiable in $t$, thus $\ln D_n(f_{p,t})$ is differentiable in $t$ for all $p \in (\epsilon, \pi - \epsilon)$ and $n\in \mathbb{N}$. We calculate:
\begin{equation}
\frac{\partial}{\partial t} \ln \left|z - e^{i(p-t)}\right|^{2\alpha_1} = \frac{\partial}{\partial t} \ln \left| 2\sin \frac{\theta - (p-t)}{2} \right|^{2\alpha_1} = \alpha_1 \cot \frac{\theta - (p-t)}{2} = i\alpha_1 \frac{z + e^{i(p-t)}}{z - e^{i(p-t)}}.
\end{equation}
Similarly we obtain 
\begin{align}
\begin{split}
\frac{\partial}{\partial t} \ln \left|z - e^{i(p+t)} \right|^{2\alpha_2} &= - i\alpha_2 \frac{z + e^{i(p+t)}}{z - e^{i(p+t)}},\\
\frac{\partial}{\partial t} \ln \left|z - e^{i(2\pi - (p+t))} \right|^{2\alpha_4} &= i\alpha_4 \frac{z + e^{i(2\pi - (p+t))}}{z - e^{i(2\pi - (p+t))}},\\
\frac{\partial}{\partial t} \ln \left|z - e^{i(2\pi - (p-t)} \right|^{2\alpha_2} &= - i\alpha_5 \frac{z + e^{i(2\pi - (p-t))}}{z - e^{i(2\pi - (p-t))}}.
\end{split}
\end{align}
Therefore we get
\begin{align} \label{eqn:f diff}
\begin{split}
\frac{\partial f(z)}{\partial t} =& if(z) \sum_{k = 1,2,4,5} q_k \left( \alpha_k \frac{z+z_k}{z-z_k} + \beta_k \right)\\ 
=& if(z) \sum_{k = 1,2,4,5} q_k \left( \alpha_k + \beta_k + \frac{2\alpha_k z_k}{z - z_k} \right) \\
=& if(z) \sum_{k = 1,2,4,5} q_k \left( \beta_k + \frac{2\alpha_k z_k}{z - z_k} \right) \\
\end{split}
\end{align}
where $q_k = 1$ for $k = 1,4$ and $q_k = -1$ for $k = 2,5$. In the last line we used that $\sum_{k = 1,2,4,5} q_k \alpha_k = 0$.\\

Set $\tilde{Y}(z) = Y(z)$ in a neighborhood of $z_k$ if $\alpha_k > 0$. If $\alpha_k < 0$ the second column of $Y$ has a term of order $(z - z_k)^{2\alpha_k}$, which explodes as $z \rightarrow z_k$. We set $\tilde{Y}_{j1} = Y_{j1}$, $j = 1,2$, $\tilde{Y}_{j2} = Y_{j2} - c_j(z-z_k)^{2\alpha_k}$ in a neighborhood of $z_k$, with $c_j$ such that $\tilde{Y}$ is bounded in that neighborhood. Then we have

\begin{proposition}
Let $n \in \mathbb{N}$ and $\alpha_k \neq 0$ for $k = 1,2,4,5$. Then the following differential identity holds:
\begin{align} \label{eqn:diff id}
\begin{split}
\frac{1}{i} \frac{\text{d}}{\text{d}t} \ln D_n(f_{p,t}) =& \sum_{k = 1,2,4,5} q_k \left( n\beta_k - 2\alpha_k z_k \left( \frac{\text{d}Y^{-1}}{\text{d}z} \tilde{Y} \right)_{22} (z_k) \right), 
\end{split}
\end{align}
with $q_k$ as above and $\left( \frac{\text{d}Y^{-1}}{\text{d}z} \tilde{Y} \right)_{22} (z_k) = \lim_{z \rightarrow z_k} \left(\frac{\text{d}Y^{-1}}{\text{d}z} \tilde{Y} \right)_{22} (z)$ with $z \rightarrow z_k$ non-tangentially to the unit circle.  
\end{proposition}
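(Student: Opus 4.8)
The approach is the standard one for moving Fisher--Hartwig singularities, following \cite{DeiftItsKrasovsky, ClaeysKrasovsky}: differentiate the Toeplitz determinant by Jacobi's formula, rewrite the result through the Christoffel--Darboux kernel of the orthonormal polynomials $\phi_0,\dots,\phi_{n-1}$ associated with $f_{p,t}$, and then express that kernel in terms of the Riemann--Hilbert solution $Y$. By (\ref{eqn:f diff}) the logarithmic derivative $h:=\partial_t\ln f_{p,t}$ is a rational function of $z$ with constant part $i\sum_{k=1,2,4,5}q_k\beta_k$ and simple poles only at $z_1,z_2,z_4,z_5$; the constant part will produce the terms $\sum_k q_k n\beta_k$ in (\ref{eqn:diff id}) and the polar part the terms involving $Y$ at $z_k$.

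First I would use $\frac{d}{dt}\ln D_n(f_{p,t})=\operatorname{tr}\!\big(T_n^{-1}\,\partial_t T_n\big)$, where $T_n=(f_{p,t,j-k})_{j,k=0}^{n-1}$ and $\partial_t T_n$ has entries $\frac{1}{2\pi}\int_0^{2\pi} h(e^{i\theta})f_{p,t}(e^{i\theta})e^{-i(j-k)\theta}\,d\theta$ (the Fourier coefficients being differentiable in $t$), and rewrite $T_n^{-1}$ through the orthonormal polynomials (recall $f_{p,t}>0$), which yields the reproducing-kernel identity
\begin{equation}
\frac{d}{dt}\ln D_n(f_{p,t})=\frac{1}{2\pi}\int_0^{2\pi} h(e^{i\theta})\Big(\sum_{l=0}^{n-1}|\phi_l(e^{i\theta})|^2\Big)f_{p,t}(e^{i\theta})\,d\theta .
\end{equation}
For the constant part of $h$ one uses $\frac{1}{2\pi}\int_0^{2\pi}\big(\sum_{l=0}^{n-1}|\phi_l(e^{i\theta})|^2\big)f_{p,t}(e^{i\theta})\,d\theta=n$ (the trace of the projection onto polynomials of degree $<n$), which gives $i\big(\sum_k q_k\beta_k\big)n$, i.e. the $\sum_k q_k n\beta_k$ of (\ref{eqn:diff id}).

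For the polar part $i\sum_k q_k\frac{2\alpha_k z_k}{z-z_k}$, I would pass to $e^{i\theta}\mapsto z$, $d\theta=dz/(iz)$ and use the standard representation of the Christoffel--Darboux kernel $\sum_{l=0}^{n-1}\phi_l(z)\overline{\phi_l(w)}$ as a bilinear expression in $Y(z),Y(w)$ (via $Y_{11}=\chi_n^{-1}\phi_n$, $Y_{21}(z)=-\chi_{n-1}z^{n-1}\hat\phi_{n-1}(z^{-1})$ and the Cauchy transforms forming the second column), in which the factor $1/(z-w)$ becomes, in the confluent limit $w\to z$, a $z$-derivative. The boundary values $Y_\pm$ on $C$ then turn the contour integral against $1/(z-z_k)$ into a Plemelj/residue contribution localised at $z_k$, which works out to $-\,i\,q_k\,2\alpha_k z_k\big(\tfrac{dY^{-1}}{dz}\tilde Y\big)_{22}(z_k)$; the regularised $\tilde Y$ is exactly what makes this non-tangential limit finite, since near $z_k$ the second column of $Y$ behaves like $(z-z_k)^{2\alpha_k}$ when $\alpha_k<0$. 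Summing the two contributions gives (\ref{eqn:diff id}).

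The main obstacle is precisely this last step: the poles of $h$ lie \emph{on} the integration circle and coincide with the Fisher--Hartwig points $z_k$ of $f_{p,t}$, so for $\alpha_k<0$ the weighted kernel $\big(\sum_l|\phi_l|^2\big)f_{p,t}$ is not integrable against $1/(e^{i\theta}-z_k)$ and the manipulation must be carried out through the $Y$-representation and the subtraction defining $\tilde Y$, using part (d) of the RHP for the local behaviour near $z_k$. A convenient way to organise this is to establish (\ref{eqn:diff id}) first for all $\alpha_k\ge 0$, where every integral converges absolutely and $\tilde Y=Y$, and then to pass to general $\alpha_k>-1/2$; the arguments of \cite{DeiftItsKrasovsky, ClaeysKrasovsky} transfer essentially verbatim, the only $t$-dependence being in $z_1,z_2,z_4,z_5$, which remain bounded away from $\pm1$.
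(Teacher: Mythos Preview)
Your proposal is correct and follows essentially the same route as the paper, which simply refers to Proposition~2.1 of \cite{ClaeysKrasovsky}: Jacobi's formula, the Christoffel--Darboux kernel, and its expression through $Y$, with the logarithmic $t$-derivative (\ref{eqn:f diff}) replacing their (2.16). The only organisational difference is in the handling of special parameter values: the paper proves the identity for $\alpha_k\neq 0$ and then obtains the case $\alpha_k=0$ by continuity in $\alpha_k$ (as in Remark~2.2 of \cite{ClaeysKrasovsky}), whereas you propose to first treat $\alpha_k\ge 0$ (where $\tilde Y=Y$ and all integrals converge absolutely) and then extend to $\alpha_k\in(-1/2,0)$ via the $\tilde Y$-subtraction; both are standard and yield the same result.
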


\noindent \textbf{Proof:} The proof for $\alpha_k \neq 0$, $k = 1,2,4,5$ works exactly like the proof of Proposition 2.1 in \cite{ClaeysKrasovsky}. We have to modify their (2.16), which we replace with our (\ref{eqn:f diff}). The singularities at $\pm 1$ are independent of $p$ and $t$ and thus always stay within $f_{p,t}$. \qed

\begin{remark}
As in Remark 2.2 in \cite{ClaeysKrasovsky} one can also get a differential identity for $\ln D_n(f_{p,t})$ in the case where $\alpha_k = 0$ for some $k \in \{1,2,4,5\}$, by letting those $\alpha_k$'s go to zero in (\ref{eqn:f diff}), which is continuous in $\alpha_k$ on both sides. 
\end{remark}

\section{Aymptotics of the Orthogonal Polynomials} \label{section:asymptotics of polynomials}

\subsection{Normalization of the RHP}

Set 
\begin{equation}
T(z) = \begin{cases} Y(z) \left( \begin{array}{cc} z^{-n} & 0 \\ 0 & z^n \end{array} \right), & |z| >1, \\ Y(z), & |z| < 1. \end{cases}
\end{equation}
Then by the RH conditions for $Y$, we obtain the following RH condition for $T$:\\

\noindent \textbf{RH problem for $T$}

\begin{enumerate}[label=(\alph*)]
\item $T:\mathbb{C}\setminus C \rightarrow \mathbb{C}^{2\times 2}$ is analytic.

\item The continuous boundary values of $T$ from the inside, $T_+$, and from outside, $T_-$, of the unit circle exist on $C\setminus \{ z_0,...,z_5 \}$, and are related by the jump condition
\begin{equation}
T_+(z) = T_-(z) \left( \begin{array} {cc} z^n & f_{p,t}(z) \\ 0 & z^{-n} \end{array} \right), \quad z \in C\setminus \{z_0,...,z_5\}.
\end{equation}

\item $T(z) = I + O(1/z), \quad \text{as } z \rightarrow \infty$.

\item As $z \rightarrow z_k$, $z \in \mathbb{C}\setminus C$, $k= 0,...,5$, we have 
\begin{equation}
T(z) = \left( \begin{array}{cc} O(1) & O(1) + O(|z-z_k|^{2\alpha_k}) \\ O(1) & O(1) + O(|z-z_k|^{2\alpha_k}) \end{array}\right), \quad \text{if } \alpha_k \neq 0,
\end{equation}
and 
\begin{equation}
T(z) = \left( \begin{array}{cc} O(1) & O(\ln |z-z_k|) \\ O(1) & O(\ln |z-z_k| ) \end{array}\right), \quad \text{if } \alpha_k = 0.
\end{equation}
\end{enumerate}

\subsection{Opening of the Lens}

Define the Szeg\"{o} function 
\begin{equation}
D(z) = \exp \left( \frac{1}{2\pi i} \int_C \frac{\ln f_{p,t}(\xi)}{\xi - z} \text{d}\xi\right),
\end{equation}
which is analytic inside and outside of $C$ and satisfies
\begin{equation} \label{eqn:D pm f}
D_+(z) = D_-(z) f_{p,t}(z), \quad z \in C\setminus \{z_0,...,z_5\}.
\end{equation}
We have (see (4.9)-(4.10) in \cite{DeiftItsKrasovsky}):
\begin{equation} \label{eqn:D in}
D(z) = e^{\sum_0^\infty V_j z^j} \prod_{k = 0}^5 \left( \frac{z - z_k}{z_ke^{i\pi}} \right)^{\alpha_k + \beta_k} =: D_{\text{in},p,t}(z), \quad |z| < 1,
\end{equation}
and
\begin{equation} \label{eqn:D out}
D(z) = e^{-\sum_{-\infty}^{-1} V_j z^j} \prod_{k = 0}^5 \left( \frac{z - z_k}{z} \right)^{-\alpha_k + \beta_k} =: D_{\text{out},p,t}(z), \quad |z| > 1,
\end{equation}
and thus
\begin{equation}
D_{\text{out},p,t}(z)^{-1} = e^{\sum_{-\infty}^{-1} V_j z^j} \prod_{k = 0}^5 \left( \frac{z - z_k}{z} \right)^{\alpha_k - \beta_k}.
\end{equation}
The branch of $(z-z_k)^{\alpha_k \pm \beta_k}$ is fixed by the condition that $\arg (z-z_k) = 2\pi$ on the line going from $z_k$ to the right parallel to the real axis, and the branch cut is the line $\theta = \theta_k$ going from $z = z_k = e^{i\theta}$ to infinity. For any $k$, the branch cut of the root $z^{\alpha_k - \beta_k}$ is the line $\theta = \theta_k$ from $z = 0$ to infinity, and $\theta_k < \arg z < \theta_k + 2\pi$. By (\ref{eqn:D pm f}) we have that 
\begin{equation}
f_{p,t}(e^{i\theta}) = D_{\text{in},p,t}(e^{i\theta}) D_{\text{out},p,t}(e^{i\theta})^{-1},
\end{equation}
and this function extends analytically to a neighborhood $\mathcal{S}$ of the unit circle with the 6 branch cuts $z_k\mathbb{R}^+ \cap \mathcal{S}$, $k = 0,...5$, which we orient away from zero. Then we obtain for the jumps of $f_{p,t}$:
\begin{align} \label{eqn:f jumps}
\begin{split}
f_{p,t+}(z) =& f_{p,t-}(z) e^{2\pi i (\alpha_j - \beta_j)}, \quad \text{on } z_j(0,1) \cap \mathcal{S},\\
f_{p,t+}(z) =& f_{p,t-}(z) e^{-2\pi i (\alpha_j + \beta_j)}, \quad \text{on } z_j(1,\infty) \cap \mathcal{S}.
\end{split}
\end{align}

\begin{figure}[H] 
\centering
\begin{tikzpicture}[scale = 1.2]

\def\a{35} \def\b{60} \def\r{3}

\def\rsmallA{( (0.85*\r*sin(0.5*\a))^2 + (\r -0.85*\r*cos(0.5*\a))^2)^0.5}

\def\rsmallB{( (0.95*\r*sin(0.5*(\b-\a)))^2 + (\r - 0.95*\r*cos(0.5*(\b-\a)))^2)^0.5}

\def\rsmallC{( (0.4*\r*sin(0.5*(180-\b)))^2 + (\r - 0.4*\r*cos(0.5*(180-\b)))^2)^0.5}

\def\rbigC{( (0.7*\r*sin(0.5*(180-\b)))^2 + (\r + 0.7*\r*cos(0.5*(180-\b)))^2)^0.5}

\draw[name path=ellipse,black,very thick]
(0,0) circle[x radius = \r cm, y radius = \r cm];

\coordinate (1) at ({\r*cos(\a)}, {\r*sin(\a)});
\coordinate (2) at ({\r*cos(\b)}, {\r*sin(\b)});
\coordinate (4) at ({\r*cos(\b)}, {-\r*sin(\b)});
\coordinate (5) at ({\r*cos(\a)}, {-\r*sin(\a)});
	
\fill (1) circle (3pt) node[right,xshift=0.1cm,yshift=0.51cm] {$z_1$};
\fill (2) circle (3pt) node[above,xshift=0.1cm,yshift=0.1cm] {$z_2$};
\fill (4) circle (3pt) node[below,xshift=0.1cm,yshift=-0.1cm] {$z_4$};
\fill (5) circle (3pt) node[right,xshift=0.1cm,yshift=-0.15cm] {$z_5$};
\fill (\r,0) circle (3pt) node[right,xshift=0.1cm] {1};
\fill (-\r,0) circle (3pt) node[left,xshift=-0.1cm] {-1};

\draw [black,thick,domain=-50:85] plot ({0.85*\r*cos(0.5*\a) + \rsmallA*cos(\x)},{0.85*\r*sin(0.5*\a) + \rsmallA*sin(\x)});

\draw [black,thick,domain=131:265] plot ({1.065*\r*cos(0.5*\a) + \rsmallA*cos(\x)},{1.065*\r*sin(0.5*\a) + \rsmallA*sin(\x)});

\fill ({1.35*\r*cos(0.5*\a)},{1.35*\r*sin(0.5*\a)}) node[] {$\Sigma_{0,out}$};
\fill ({0.9*\r*cos(0.5*\a)},{0.9*\r*sin(0.5*\a)}) node[] {$\Sigma_{0}$};
\fill ({0.6*\r*cos(0.5*\a)},{0.6*\r*sin(0.5*\a)}) node[] {$\Sigma_{0,in}$};

\draw [black,thick,domain=-50:85] plot ({0.85*\r*cos(0.5*\a) + \rsmallA*cos(\x)},{0.85*\r*sin(-0.5*\a) + \rsmallA*sin(-\x)});

\draw [black,thick,domain=131:265] plot ({1.065*\r*cos(0.5*\a) + \rsmallA*cos(\x)},{1.065*\r*sin(-0.5*\a) + \rsmallA*sin(-\x)});

\fill ({1.35*\r*cos(0.5*\a)},{-1.35*\r*sin(0.5*\a)}) node[] {$\Sigma_{5,out}$};
\fill ({0.9*\r*cos(0.5*\a)},{-0.9*\r*sin(0.5*\a)}) node[] {$\Sigma_{5}$};
\fill ({0.6*\r*cos(0.5*\a)},{-0.6*\r*sin(0.5*\a)}) node[] {$\Sigma_{5,in}$};

\draw [black,thick,domain=35:205] plot ({0.4*\r*cos(90+0.5*\b) + \rsmallC*cos(\x)},{0.4*\r*sin(90+0.5*\b) + \rsmallC*sin(\x)});

\draw [black,thick,domain=85:155] plot ({-0.7*\r*cos(90+0.5*\b) + \rbigC*cos(\x)},{-0.7*\r*sin(90+0.5*\b) + \rbigC*sin(\x)});

\fill ({1.4*\r*cos(0.5*\b+90)},{1.4*\r*sin(0.5*\b+90)}) node[] {$\Sigma_{2,out}$};
\fill ({1.1*\r*cos(0.5*\b+90)},{1.1*\r*sin(0.5*\b+90)}) node[] {$\Sigma_{2}$};
\fill ({0.6*\r*cos(0.5*\b+90)},{0.6*\r*sin(0.5*\b+90)}) node[] {$\Sigma_{2,in}$};

\fill ({0.74*\r*cos(0.6*\b+0.4*180)},{0.74*\r*sin(0.6*\b+0.4*180)}) node[] {$+$};
\fill ({0.84*\r*cos(0.6*\b+0.4*180)},{0.84*\r*sin(0.6*\b+0.4*180)}) node[] {$-$};

\fill ({0.95*\r*cos(0.6*\b+0.4*180)},{0.95*\r*sin(0.6*\b+0.4*180)}) node[] {$+$};
\fill ({1.05*\r*cos(0.6*\b+0.4*180)},{1.05*\r*sin(0.6*\b+0.4*180)}) node[] {$-$};

\fill ({1.2*\r*cos(0.6*\b+0.4*180)},{1.2*\r*sin(0.6*\b+0.4*180)}) node[] {$+$};
\fill ({1.3*\r*cos(0.6*\b+0.4*180)},{1.3*\r*sin(0.6*\b+0.4*180)}) node[] {$-$};

\draw [black,thick,domain=35:205] plot ({0.4*\r*cos(90+0.5*\b) + \rsmallC*cos(\x)},{0.4*\r*sin(-(90+0.5*\b)) + \rsmallC*sin(-\x)});

\draw [black,thick,domain=85:155] plot ({-0.7*\r*cos(90+0.5*\b) + \rbigC*cos(\x)},{-0.7*\r*sin(-(90+0.5*\b)) + \rbigC*sin(-\x)});

\fill ({1.4*\r*cos(0.5*\b+90)},{-1.4*\r*sin(0.5*\b+90)}) node[] {$\Sigma_{3,out}$};
\fill ({1.1*\r*cos(0.5*\b+90)},{-1.1*\r*sin(0.5*\b+90)}) node[] {$\Sigma_{3}$};
\fill ({0.6*\r*cos(0.5*\b+90)},{-0.6*\r*sin(0.5*\b+90)}) node[] {$\Sigma_{3,in}$};

\fill ({0.9*\r*cos(0.5*(\b+\a))},{0.9*\r*sin(0.5*(\b+\a))}) node[] {$\Sigma_{1}$};

\fill ({0.95*\r*cos(0.7*\b+0.3*\a)},{0.95*\r*sin(0.7*\b+0.3*\a)}) node[] {$+$};
\fill ({1.05*\r*cos(0.7*\b+0.3*\a)},{1.05*\r*sin(0.7*\b+0.3*\a)}) node[] {$-$};

\fill ({0.9*\r*cos(0.5*(\b+\a))},{-0.9*\r*sin(0.5*(\b+\a))}) node[] {$\Sigma_{4}$};
\end{tikzpicture}
\caption{The jump contour $\Sigma_S$ of $S$.} \label{figure:S1}
\end{figure}

We factorize the jump matrix of $T$ as follows:
\begin{equation}
\left( \begin{array}{cc} z^n & f_{p,t}(z) \\ 0 & z^{-n} \end{array} \right) = \left( \begin{array}{cc} 1 & 0 \\ z^{-n} f_{p,t}(z)^{-1} & 1 \end{array} \right) \left( \begin{array}{cc} 0 & f_{p,t}(z) \\ - f_{p,t}(z)^{-1} & 0 \end{array} \right) \left( \begin{array}{cc} 1 & 0 \\ z^n f_{p,t}(z)^{-1} & 1 \end{array} \right).
\end{equation}
We then fix a lens-shaped region as in Figure \ref{figure:S1} and define 
\begin{equation}
S(z) = \begin{cases} T(z), & \text{outside the lens} \\ 
T(z) \left( \begin{array}{cc} 1 & 0 \\ z^{-n} f_{p,t}(z)^{-1} & 1 \end{array} \right), & \text{in the parts of the lenses outside the unit circle},\\
T(z) \left( \begin{array}{cc} 1 & 0 \\ -z^n f_{p,t}(z)^{-1} & 1 \end{array} \right), & \text{in the parts of the lenses inside the unit circle}.
\end{cases}
\end{equation}
The following RH conditions for $S$ can be verified directly:\\

\noindent \textbf{RH problem for $S$}
\begin{enumerate}[label=(\alph*)]
\item $S:\mathbb{C}\setminus \Sigma_S \rightarrow \mathbb{C}^{2\times 2}$ is analytic.\\

\item $S_+(z) = S_-(z) J_S(z)$ for $z \in \Sigma_S\setminus \{z_0,...,z_5\}$, where $J_S$ is given by  
\begin{equation}
J_S(z) = \begin{cases} \left( \begin{array}{cc} 1 & 0 \\ z^{-n} f_{p,t}(z)^{-1} & 1 \end{array} \right), & \text{on } \Sigma_{0,out} \cup \Sigma_{2,out} \cup \Sigma_{3,out} \cup \Sigma_{5,out}, \\
\left( \begin{array}{cc} 0 & f_{p,t}(z) \\ -f_{p,t}(z)^{-1} & 0 \end{array} \right), & \text{on } \Sigma_0 \cup \Sigma_2 \cup \Sigma_3 \cup \Sigma_5,\\
\left( \begin{array}{cc} 1 & 0 \\ z^n f_{p,t}(z)^{-1} & 1 \end{array} \right), & \text{on } \Sigma_{0,in} \cup \Sigma_{2,in} \cup \Sigma_{3,in} \cup \Sigma_{5,in}, \\
\left( \begin{array}{cc} z^n & f_{p,t}(z) \\ 0 & z^{-n} \end{array} \right), & \text{on } \Sigma_1 \cup \Sigma_4.
\end{cases}
\end{equation}

\item $S(z) = I + O(1/z), \quad \text{as } z \rightarrow \infty$.\\

\item As $z \rightarrow z_k$ from outside the lenses, $k= 0,...,5$, we have 
\begin{equation}
S(z) = \left( \begin{array}{cc} O(1) & O(1) + O(|z-z_k|^{2\alpha_k}) \\ O(1) & O(1) + O(|z-z_k|^{2\alpha_k}) \end{array}\right), \quad \text{if } \alpha_k \neq 0,
\end{equation}
and 
\begin{equation}
S(z) = \left( \begin{array}{cc} O(1) & O(\ln |z-z_k|) \\ O(1) & O(\ln |z-z_k| ) \end{array}\right), \quad \text{if } \alpha_k = 0.
\end{equation}
The behaviour of $S(z)$ as $z \rightarrow z_k$ from the other regions is obtained from these expressions by application of the appropriate jump conditions. 
\end{enumerate}

Fix $\delta_1,\delta_2 > 0$ such that the discs 
\begin{equation}
U_{\pm 1}: = \{z: |z-\pm 1| < \delta_1\}, \quad U_{\pm} := \{z: |z - e^{\pm ip}| < \delta_2\}
\end{equation}
are disjoint for any $p \in (\epsilon, \pi - \epsilon)$. Let $t_0 \in (0,\epsilon)$ such that $e^{i(p \pm t)} \in U_{+}$ and $e^{i(2\pi - (p \pm t))} \in U_{-}$ for one and hence for all $p \in (\epsilon, \pi - \epsilon)$. Then one observes that on the inner and out jump contours and outside of $U_1 \cup U_{-1} \cup U_{+} \cup U_{-}$ the jump matrix $J_S(z)$ converges to the identity matrix as $n\rightarrow \infty$, uniformly in $z$, $t < t_0$ and $p \in (\epsilon, \pi - \epsilon)$.

\subsection{Global Parametrix}

Define the function
\begin{align} \label{eqn:N}
\begin{split}
N(z) =& \begin{cases} \left( \begin{array}{cc} D_{\text{in},p,t}(z) & 0 \\ 0 & D_{\text{in},p,t}(z)^{-1} \end{array} \right) \left( \begin{array}{cc} 0 & 1 \\ -1 & 0 \end{array} \right), & |z| < 1, \\
\left( \begin{array}{cc} D_{\text{out},p,t}(z) & 0 \\ 0 & D_{\text{out},p,t}(z)^{-1} \end{array} \right), & |z| > 1. 
\end{cases}
\end{split}
\end{align}
One can easily verify that $N$ satisfies the following RH conditions:\\

\noindent \textbf{RH problem for $N$}
\begin{enumerate}[label=(\alph*)]
\item $N:\mathbb{C}\setminus{C} \rightarrow \mathbb{C}^{2\times 2}$ is analytic.\\

\item $N_+(z) = N_-(z)\left( \begin{array}{cc} 0 & f_{p,t}(z) \\ - f_{p,t}(z)^{-1} & 0 \end{array} \right)$ for $z \in C\setminus \{z_0,...,z_5\}$. \\

\item $N(z) = I + O(1/z)$ as $z \rightarrow \infty$. 
\end{enumerate} 

\subsection{Local Parametrix near $\pm 1$} \label{section:local plus minus 1}

The local parametrix near $\pm 1$ are constructed in exactly the same way as in \cite{DeiftItsKrasovsky}. We are looking for a solution of the following RHP:\\

\noindent \textbf{RH problem for $P_{\pm 1}(z)$}
\begin{enumerate}[label=(\alph*)]
\item $P_{\pm 1}: U_{\pm 1} \setminus \Sigma_S \rightarrow \mathbb{C}^{2\times 2}$ is analytic.

\item $P_{\pm 1}(z)_+ = P_{\pm 1}(z)_-J_S(z)$ for $z \in U_{\pm 1} \cap \Sigma_S$. 

\item As $z \rightarrow \pm 1$, $S(z)P_{\pm 1}(z)^{-1} = O(1)$. 

\item $P_{\pm 1}$ satisfies the matching condition $P_{\pm 1}(z) N^{-1}(z) = I + o(1)$ as $n \rightarrow \infty$, uniformly in $z \in \partial U_{\pm 1}$, $p \in (\epsilon, \pi - \epsilon)$ and $0 < t < t_0$.
\end{enumerate}

$P_{\pm 1}$ is given by (4.15), (4.23), (4.24), (4.47)-(4.50) in \cite{DeiftItsKrasovsky} and one can see from their construction that when all the other singularities are bounded away from $\pm 1$, then the matching condition is uniform in the location of the other singularities, i.e. holds uniformly in $p \in (\epsilon, \pi - \epsilon)$ and $0 < t < t_0$.

\subsection{$0 < t \leq \omega(n)/n$. Local Parametrices near $e^{\pm ip}$} \label{section:local 1}

Let $\omega(x)$ be a positive, smooth function for $x$ sufficiently large, s.t. 
\begin{equation}
\omega(x) \rightarrow \infty, \quad \omega(x) = o(x), \quad \text{as } x \rightarrow \infty.
\end{equation}
For $0 < t \leq 1/n$ and $1/n < t \leq \omega(n)/n$ we will construct local parametrices in $U_{\pm}$ which satisfy the same jump and growth conditions as $S$ inside $U_{\pm}$, and which match with the global parametrix $N$ on the boundaries $\partial U_{\pm}$ for large $n$. To be precise, we will construct $P_{\pm}$ satisfying the following conditions:\\

\noindent \textbf{RH problem for $P_{\pm}(z)$}
\begin{enumerate}[label=(\alph*)]
\item $P_{\pm}: U_{\pm} \setminus \Sigma_S \rightarrow \mathbb{C}^{2\times 2}$ is analytic.

\item $P_{\pm}(z)_+ = P_{\pm}(z)_-J_S(z)$ for $z \in U_{\pm} \cap \Sigma_S$. 

\item As $n \rightarrow \infty$, we have
\begin{equation}
P_{\pm}(z) N^{-1}(z) = (I+o(1)), \quad z \in \partial U_{\pm}, 
\end{equation}
uniformly for $p \in (\epsilon, \pi - \epsilon)$ and $0 < t < t_0$.

\item As $z \rightarrow z_k$, $S(z)P_{\pm}(z)^{-1} = O(1)$, $k = 1,2$ for $+$ and $k = 4,5$ for $-$. 
\end{enumerate}

\subsubsection{RH problem for $\Phi_\pm$}
Define
\begin{align} \label{eqn:Phi Psi}
\begin{split}
\Phi_{+} (\zeta,s) =& \Psi_+(\zeta,s) \begin{cases} 1, & -1 < \Im < 1, \\
e^{\pi i(\alpha_2 - \beta_2)\sigma_3}, & \Im \zeta > 1, \\
e^{-\pi i(\alpha_1 - \beta_1)\sigma_3}, & \Im \zeta < - 1,
\end{cases}\\
\Phi_{-} (\zeta,s) =& \Psi_+(\zeta,s) \begin{cases} 1, & -1 < \Im < 1, \\
e^{\pi i(\alpha_1 + \beta_1)\sigma_3}, & \Im \zeta > 1, \\
e^{-\pi i(\alpha_2 + \beta_2)\sigma_3}, & \Im \zeta < - 1,
\end{cases}
\end{split}
\end{align}
where $\Psi_{+}(\zeta,s)$ equals $\Psi(\zeta,s)$, defined in Appendix \ref{appendix:Psi}, and $\Psi_-(\zeta,s)$ equals $\Psi(\zeta,s)$ with $(\alpha_1,\alpha_2,\beta_1,\beta_2)$ in the appendix changed to $(\alpha_4,\alpha_5,\beta_4,\beta_5) = (\alpha_2, \alpha_1, -\beta_2, -\beta_1)$. The RH conditions for $\Phi_\pm$ follow directly from the RHP for $\Psi$. \\ 

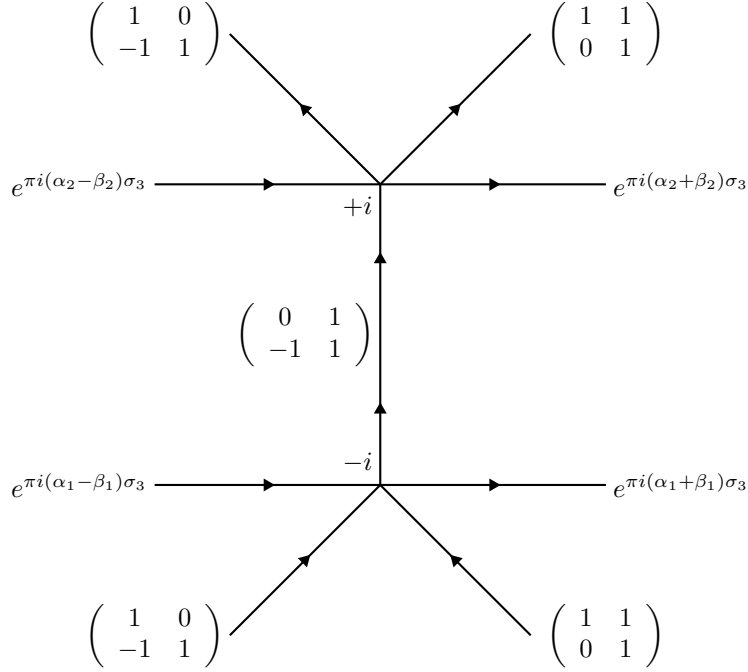
\begin{figure} [H]
\centering
\begin{tikzpicture}
\draw [thick] (0,-2) -- (0,2);
\draw [thick] (-3,-2) -- (3,-2);
\draw [thick] (-3,2) -- (3,2);
\draw [thick] (0,-2) -- (2,-4);
\draw [thick] (0,-2) -- (-2,-4);
\draw [thick] (0,2) -- (2,4);
\draw [thick] (0,2) -- (-2,4);

\fill (-0.3,1.7) node[] {$+i$};
\fill (-0.3,-1.7) node[] {$-i$};
\fill (-1,0) node[] {$\left( \begin{array}{cc} 0 & 1 \\ -1 & 1 \end{array} \right)$};
\fill (3,4) node[] {$\left( \begin{array}{cc} 1 & 1 \\ 0 & 1 \end{array} \right)$};
\fill (-3,4) node[] {$\left( \begin{array}{cc} 1 & 0 \\ -1 & 1 \end{array} \right)$};
\fill (4,2) node[] {$e^{\pi i(\alpha_2 + \beta_2) \sigma_3}$};
\fill (-4,2) node[] {$e^{\pi i(\alpha_2 - \beta_2) \sigma_3}$};
\fill (4,-2) node[] {$e^{\pi i(\alpha_1 + \beta_1) \sigma_3}$};
\fill (-4,-2) node[] {$e^{\pi i(\alpha_1 - \beta_1) \sigma_3}$};
\fill (3,-4) node[] {$\left( \begin{array}{cc} 1 & 1 \\ 0 & 1 \end{array} \right)$};
\fill (-3,-4) node[] {$\left( \begin{array}{cc} 1 & 0 \\ -1 & 1 \end{array} \right)$};

\node[fill=black,regular polygon, regular polygon sides=3,inner sep=1.pt, shape border rotate = -45] at (1,3) {};
\node[fill=black,regular polygon, regular polygon sides=3,inner sep=1.pt, shape border rotate = 45] at (-1,3) {};
\node[fill=black,regular polygon, regular polygon sides=3,inner sep=1.pt, shape border rotate = 0] at (0,1) {};
\node[fill=black,regular polygon, regular polygon sides=3,inner sep=1.pt, shape border rotate = 0] at (0,-1) {};
\node[fill=black,regular polygon, regular polygon sides=3,inner sep=1.pt, shape border rotate = 45] at (1,-3) {};
\node[fill=black,regular polygon, regular polygon sides=3,inner sep=1.pt, shape border rotate = -45] at (-1,-3) {};
\node[fill=black,regular polygon, regular polygon sides=3,inner sep=1.pt, shape border rotate = -90] at (1.5,2) {};
\node[fill=black,regular polygon, regular polygon sides=3,inner sep=1.pt, shape border rotate = -90] at (1.5,-2) {};
\node[fill=black,regular polygon, regular polygon sides=3,inner sep=1.pt, shape border rotate = -90] at (-1.5,2) {};
\node[fill=black,regular polygon, regular polygon sides=3,inner sep=1.pt, shape border rotate = -90] at (-1.5,-2) {};
\end{tikzpicture}
\caption{The jump contour $\Sigma$ and the jump matrices for $\Phi_+$. $\Phi_-$ has the same jump contour, while the jump matrices of $\Phi_-$ are given by replacing $(\alpha_1,\alpha_2, \beta_1,\beta_2)$ with $(\alpha_2,\alpha_1,-\beta_2,-\beta_1)$ in the jump matrices of $\Phi_+$.} \label{figure:Phi}
\end{figure}

\newpage 
\noindent \textbf{RH Problem for $\Phi_\pm$}

\begin{enumerate}[label = (\alph*)]
\item $\Phi_\pm:\mathbb{C} \setminus \Sigma \rightarrow \mathbb{C}^{2\times 2}$ is analytic, where 
\begin{align*}
\Sigma =& \cup_{k = 1}^9 \Sigma_k,  &&\Sigma_1 = i + e^{\frac{i\pi}{4}} \mathbb{R}_+, &&\Sigma_2 = i + e^{\frac{3i\pi}{4}\mathbb{R}_+} \\
\Sigma_3 =& i - \mathbb{R}_+, &&\Sigma_4 = -i - \mathbb{R}_+, &&\Sigma_5 = -i + e^{- \frac{3i\pi}{4}}\mathbb{R}_+, \\
\Sigma_6 =& -i + e^{\frac{i\pi}{4}} \mathbb{R}_+, &&\Sigma_7 = -i + \mathbb{R}_+, && \Sigma_8 = i + \mathbb{R}_+,\\
\Sigma_9 =& [-i,i],
\end{align*}
with the orientation chosen as in Figure \ref{figure:Phi} ("-" is always on the RHS of the contour).
\item $\Phi_+$ satisfies the jump conditions
\begin{equation}
\Phi_+(\zeta)_+ = \Phi_+(\zeta)_- V_k, \quad \zeta \in \Sigma_k,
\end{equation}
where 
\begin{align}
V_1 =& \left( \begin{array}{cc} 1 & 1 \\ 0 & 1 \end{array} \right), && V_2 = \left( \begin{array}{cc} 1 & 0 \\ - 1 & 1 \end{array} \right), \\
V_3 =& e^{\pi i(\alpha_2 - \beta_2)\sigma_3}, && V_4 = e^{\pi i(\alpha_1 + \beta_1)\sigma_3}, \\
V_5 =& \left( \begin{array}{cc} 1 & -1 \\ 0 & 1 \end{array} \right), && V_6 = \left( \begin{array}{cc} 1 & 1 \\ 0 & 1 \end{array} \right), \\
V_7 =& e^{\pi i(\alpha_2 + \beta_2)\sigma_3}, && V_8 = e^{\pi i(\alpha_1 + \beta_1)\sigma_3}, \\
V_9 =& \left( \begin{array}{cc} 0 & 1 \\ -1 & 1 \end{array} \right).
\end{align}
The jump conditions of $\Phi_-$ are given by replacing $(\alpha_1,\alpha_2,\beta_1,\beta_2)$ with $(\alpha_2,\alpha_1,-\beta_2,-\beta_1)$ in the jump matrices of $\Phi_+$.

\item We have in all regions:
\begin{equation} \label{eqn:Phi asymptotics}
\Phi_\pm(\zeta) = \left( I + \frac{\Psi_{\pm,1}}{\zeta} + \frac{\Psi_{\pm,2}}{\zeta^2} + O(\zeta^{-3}) \right) \hat{P}^{(\infty)}_\pm (\zeta) e^{-\frac{is}{4} \zeta \sigma_3}, \quad \text{as } \zeta \rightarrow \infty,
\end{equation}
where 
\begin{align}
\begin{split}
\hat{P}^{(\infty)}_+(\zeta,s) =& P^{(\infty)}_+(\zeta,s) \begin{cases} 1, & -1 < \Im < 1, \\
e^{\pi i(\alpha_2 - \beta_2)\sigma_3}, & \Im \zeta > 1, \\
e^{-\pi i(\alpha_1 - \beta_1)\sigma_3}, & \Im \zeta < - 1,
\end{cases}\\
\hat{P}^{(\infty)}_{-} (\zeta,s) =& P^{(\infty)}_+(\zeta,s) \begin{cases} 1, & -1 < \Im < 1, \\
e^{\pi i(\alpha_1 + \beta_1)\sigma_3}, & \Im \zeta > 1, \\
e^{-\pi i(\alpha_2 + \beta_2)\sigma_3}, & \Im \zeta < - 1,
\end{cases}
\end{split}
\end{align}
with 
\begin{align}
\begin{split}
P_+^{(\infty)}(\zeta,s) =& \left( \frac{is}{2} \right)^{-(\beta_1+\beta_2)\sigma_3} (\zeta - i)^{-\beta_2 \sigma_3} (\zeta + i)^{-\beta_1 \sigma_3}, \\
P_-^{(\infty)}(\zeta,s) =& \left( \frac{is}{2} \right)^{(\beta_1+\beta_2)\sigma_3} (\zeta - i)^{\beta_1 \sigma_3} (\zeta + i)^{\beta_2 \sigma_3},
\end{split}
\end{align}
with the branches corresponding to the arguments between $0$ and $2\pi$, and where $s \in -i\mathbb{R}_+$.

\item $\Phi_\pm$ has singular behaviour near $\pm i$ which is inherited from $\Psi$. The precise conditions follow from (\ref{eqn:Phi Psi}), (\ref{eqn:F_1 neq 0}), (\ref{eqn:F_2 neq 0}), (\ref{eqn:F_1 0}) and (\ref{eqn:F_2 0}). 
\end{enumerate}

\subsubsection{$0 < t \leq \omega(n)/n$. Construction of a Local Parametrix near $e^{\pm ip}$ in terms of $\Phi_\pm$}

We choose $P_{\pm}$ as in (7.21) in \cite{ClaeysKrasovsky}, i.e. 
\begin{equation} \label{eqn:P1}
P_{\pm}(z) = E_\pm (z) \Phi_{\pm} \left( \zeta;s \right) W_\pm(z), \quad \zeta = \frac{1}{t} \ln \frac{z}{e^{\pm ip}}, \quad s = -2int,
\end{equation}
\begin{itemize}
\item where $\Im \ln$ takes values in $(-\sigma, \sigma)$ for some $\sigma > 0$,
\item where $E_\pm$ is an analytic matrix-valued function in $U_{\pm}$, 
\item and where $W$ is given by 
\begin{align} \label{eqn:W}
\begin{split}
W_\pm(z) = \begin{cases} -z^{\frac{n}{2}\sigma_3} f_{p,t}(z)^{-\frac{\sigma_3}{2}} \sigma_3, & \text{for } |z| < 1, \\
z^{\frac{n}{2}\sigma_3} f_{p,t}(z)^{\frac{\sigma_3}{2}} \sigma_1, & \text{for } |z| > 1,
\end{cases}
\quad \sigma_1 = \left( \begin{array}{cc} 0 & 1 \\ 1 & 0 \end{array} \right), \quad \sigma_3 = \left( \begin{array}{cc} 1 & 0 \\ 0 & -1 \end{array} \right).
\end{split}
\end{align} 
\end{itemize}
The singularities $z = e^{i(p \pm t)}$ for $+$ and $z = e^{i(2\pi - (p \mp t))}$ for $-$ correspond to the values $\zeta = \pm i$. The jumps of $W_\pm$ follow from (\ref{eqn:f jumps}):
\begin{align}
\begin{split}
W_\pm(z)_+ =& W_\pm(z)_- \left( \begin{array}{cc} 0 & f_{p,t}(z) \\ - f_{p,t}^{-1}(z) & 0 \end{array} \right), \quad z \in C, \\
W_\pm(z)_+ =& W_\pm(z)_- e^{-\pi i(\alpha_j - \beta_j)\sigma_3}, \quad z \in z_j(0,1),\\
W_\pm(z)_+ =& W_\pm(z)_- e^{\pi i(\alpha_j + \beta_j)\sigma_3}, \quad z \in z_j(1,\infty).
\end{split}
\end{align}

Choose $\Sigma_S$ such that $\frac{1}{t} \ln \left( \frac{\Sigma_S \cap U_{\pm}}{e^{\pm ip}} \right) \subset \Sigma \cup i\mathbb{R}$, where $\Sigma$ is the contour of the RHP for $\Phi_\pm$, as shown in Figure \ref{figure:Phi}. Inside $U_{\pm}$ the combinded jumps of $W(z)$ and $\Phi$ are the same as the jumps of $S$:
\begin{align} \label{eqn:jumps}
\begin{split}
P_\pm(z)_+ =& E_\pm(z) \Psi_\pm(z)_- e^{\pi i(\alpha_j - \beta_j)\sigma_3} W_\pm(z)_-e^{-\pi i(\alpha_j - \beta_j)\sigma_3} = P_\pm(z)_-, \quad z \in z_j(0,1),\\
P_\pm(z)_+ =& E_\pm(z) \Psi_\pm(z)_- e^{\pi i(\alpha_j + \beta_j)\sigma_3} W_\pm(z)_-e^{\pi i(\alpha_j + \beta_j)\sigma_3} = P_\pm(z)_-, \quad z \in z_j(1,\infty),\\
P_\pm(z)_+ =& E_\pm(z) \Psi_\pm(z) W_\pm(z)_- \left( \begin{array}{cc} 0 & f_{p,t}(z) \\ - f_{p,t}^{-1}(z) & 0 \end{array} \right) \\
=& P_\pm(z)_- \left( \begin{array}{cc} 0 & f_{p,t}(z) \\ - f_{p,t}^{-1}(z) & 0 \end{array} \right), \quad z \in \Sigma_k \cap U_\pm, k = 0,2,3,5,\\  
P_\pm(z)_+ =& E_\pm(z) \Psi_\pm(z)_- \left( \begin{array}{cc} 0 & 1 \\ -1 & 1 \end{array} \right) W_\pm(z)_- \left( \begin{array}{cc} 0 & f_{p,t}(z) \\ - f_{p,t}^{-1}(z) & 0 \end{array} \right) \\
=& P_\pm(z)_- \left( \begin{array}{cc} z^n & f_{p,t}(z) \\ 0 & z^{-n} \end{array} \right), \quad z \in \Sigma_k \cap U_\pm, k = 1,4,\\
P_\pm(z)_+ =& E_\pm(z) \Psi_\pm(z)_- \left( \begin{array}{cc} 1 & 0 \\ -1 & 1 \end{array} \right) W_\pm(z)\\
=& P_\pm(z)_- \left( \begin{array}{cc} 1 & 0 \\ z^n f_{p,t}(z)^{-1} & 1 \end{array} \right), \quad z \in \Sigma_{k,in} \cap U_\pm, k = 0,2,3,5,\\
P_\pm(z)_+ =& E_\pm(z) \Psi_\pm(z)_- \left( \begin{array}{cc} 1 & 1 \\ 0 & 1 \end{array} \right) W_\pm(z)\\
=& P_\pm(z)_- \left( \begin{array}{cc} 1 & 0 \\ z^{-n} f_{p,t}(z)^{-1} & 1 \end{array} \right), \quad z \in \Sigma_{k,out} \cap U_\pm, k = 0,2,3,5.\\
\end{split}
\end{align} 
By the condition (d) of the RHP for $S$, the singular behaviour of $W$ near $z_k$, $k = 1,2,4,5$ and condition (d) of the RHP for $\Phi_\pm$, the singularities of $S(z)P_{\pm}(z)^{-1}$ at $z_1,z_2$ for $+$, and at $z_4,z_5$ for $-$, are removable.\\

What remains is to choose $E_\pm$ such that the matching condition (c) of $P_\pm$ holds. Define
\begin{equation} \label{eqn:E}
E_\pm(z) = \sigma_1 (D_{in,p,t}(z) D_{out,p,t}(z))^{-\frac{1}{2}\sigma_3} e^{\mp ip \frac{n}{2} \sigma_3} \hat{P}_\pm^{(\infty)} (\zeta,s)^{-1},
\end{equation}
From (\ref{eqn:D in}) and (\ref{eqn:D out}) one quickly sees that the branch cuts and singularities of $(D_{in,p,t}(z)D_{out,p,t}(z))^{-\frac{1}{2}\sigma_3}$ cancel out with those of $\hat{P}_\pm^{(\infty)}(z)^{-1}$, so that $E_\pm$ is analytic in $U_\pm$. \\

In exactly the same way as in the proof of Proposition 7.1 in \cite{ClaeysKrasovsky} one can see that the matching condition (c) is satisfied, i.e. we get: 
\begin{proposition} \label{prop:matching}
As $n \rightarrow \infty$ we have
\begin{align} \label{eqn:matching1}
\begin{split}
P_\pm(z) N(z)^{-1} =& (I+O(n^{-1})),
\end{split}
\end{align}
uniformly for $z \in \partial U_\pm$, $p \in (\epsilon, \pi - \epsilon)$ and $0 < t < t_0$ with $t_0$ sufficiently small.
\end{proposition}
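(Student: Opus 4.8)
The plan is to follow the proof of Proposition~7.1 in \cite{ClaeysKrasovsky} essentially verbatim; the only new feature is the presence of the two fixed singularities $z_0=1$ and $z_3=-1$, but since the discs $U_{\pm 1}$ are disjoint from $U_{\pm}$ for every $p\in(\epsilon,\pi-\epsilon)$, these enter $P_\pm$, $E_\pm$ and $N$ only through factors of $D_{\text{in},p,t}$ and $D_{\text{out},p,t}$ which are analytic, uniformly bounded and uniformly bounded away from $0$ on $\partial U_\pm$ for $0<t<t_0$, and so do not affect the estimate. First we would record that, since $e^{\pm ip}$ lies in the interior of $U_\pm$, one has $|\ln(z/e^{\pm ip})|\asymp\delta_2$ on $\partial U_\pm$ uniformly in $p$, hence $|\zeta|=t^{-1}|\ln(z/e^{\pm ip})|\asymp\delta_2/t$ is large; thus on $\partial U_\pm$ we may substitute the large-$\zeta$ expansion (\ref{eqn:Phi asymptotics}) of $\Phi_\pm$ into (\ref{eqn:P1}).

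Next we would carry out the algebraic cancellation of the leading term. Using $s\zeta=-2in\ln(z/e^{\pm ip})$ one has $e^{-\frac{is}{4}\zeta\sigma_3}=z^{-\frac n2\sigma_3}e^{\pm ip\frac n2\sigma_3}$, so that in $e^{-\frac{is}{4}\zeta\sigma_3}W_\pm(z)$ (with $W_\pm$ from (\ref{eqn:W})) the powers $z^{\pm n/2}$ cancel, leaving $\pm e^{\pm ip\frac n2\sigma_3}f_{p,t}(z)^{\pm\sigma_3/2}\sigma_1$ for $|z|\gtrless 1$; together with the definition (\ref{eqn:E}) of $E_\pm$, the $\hat P^{(\infty)}_\pm(\zeta,s)^{-1}$ in $E_\pm$ cancels the $\hat P^{(\infty)}_\pm(\zeta,s)$ in the expansion, the $e^{\mp ip\frac n2\sigma_3}$ in $E_\pm$ cancels $e^{\pm ip\frac n2\sigma_3}$, and $f_{p,t}=D_{\text{in},p,t}D_{\text{out},p,t}^{-1}$ reduces the leading term of $P_\pm(z)$ to exactly $N(z)$ as given by (\ref{eqn:N}) (one checks this separately for $|z|<1$ and $|z|>1$). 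Writing the prefactor $M(z):=\sigma_1(D_{\text{in},p,t}D_{\text{out},p,t})^{-\frac12\sigma_3}e^{\mp ip\frac n2\sigma_3}\hat P^{(\infty)}_\pm(\zeta,s)^{-1}$, this gives
\begin{equation*}
P_\pm(z)N(z)^{-1}-I=M(z)\left(\frac{\Psi_{\pm,1}(s)}{\zeta}+\frac{\Psi_{\pm,2}(s)}{\zeta^2}+O(\zeta^{-3})\right)M(z)^{-1}.
\end{equation*}
To bound the right-hand side we would first note that $M(z)$ and $M(z)^{-1}$ are uniformly bounded on $\partial U_\pm$: the Szeg\"o factors are bounded and bounded away from $0$ since $\partial U_\pm$ avoids all six singularities, while in $\hat P^{(\infty)}_\pm(\zeta,s)$ the factors $(is/2)^{\mp(\beta_1+\beta_2)\sigma_3}$ and $(\zeta\mp i)^{\mp\beta_j\sigma_3}$, $(\zeta\pm i)^{\mp\beta_j\sigma_3}$ have entries of modulus $1$, resp. uniformly bounded modulus, because $\beta_1,\beta_2\in i\mathbb{R}$, $is/2=nt>0$, and the relevant arguments stay in $[0,2\pi)$; conjugation by such a matrix preserves $O(\,\cdot\,)$-bounds. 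Then, using $\zeta^{-1}=\frac{is}{2n\ln(z/e^{\pm ip})}$ together with $|\ln(z/e^{\pm ip})|\ge c_0\delta_2$ on $\partial U_\pm$, we get $|\zeta|^{-j}\le (c_0\delta_2)^{-j}(|s|/2n)^{j}$, so the $j$-th term is bounded by a constant times $n^{-j}\,\|s^{j}\Psi_{\pm,j}(s)\|$; invoking the fact that $\Psi$ (equivalently $\sigma$) is pole-free on $-i\mathbb{R}_+$ with the boundary behaviour recorded in Theorem~\ref{thm:painleve} and Appendix~\ref{appendix:Psi}, the quantities $\|s^{j}\Psi_{\pm,j}(s)\|$ stay uniformly bounded along $-i\mathbb{R}_+$, and the $O(\zeta^{-3})$ remainder is controlled likewise in the regime $0<t\le\omega(n)/n$ (where $|s|=2nt=O(\omega(n))$ with $\omega$ chosen to grow slowly enough). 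This yields $P_\pm(z)N(z)^{-1}=I+O(n^{-1})$ uniformly in $z\in\partial U_\pm$, $p\in(\epsilon,\pi-\epsilon)$ and $t$ in the stated range, exactly as in \cite{ClaeysKrasovsky}.

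The step we expect to be the main obstacle is precisely the last one: establishing the uniform-in-$s$ control of the correction coefficients $\Psi_{\pm,j}(s)$ along the whole negative imaginary $s$-axis, covering simultaneously the merging limit $s\to-i0_+$ and the regime $s\to-i\infty$. This is a property of the Painlev\'e~V model Riemann--Hilbert problem for $\Psi$ rather than of our Toeplitz problem, and it is precisely what is extracted from the analysis of that model RHP in \cite{ClaeysKrasovsky} and \cite{DeiftItsKrasovsky}; since our local Riemann--Hilbert data near $e^{i(p\pm t)}$ coincide with theirs (two merging Fisher--Hartwig singularities bounded away from $\pm1$), we may invoke it directly, and the matching, the small-norm estimate for $E_\pm$ analytic in $U_\pm$, and the conclusion then go through as in their Proposition~7.1.
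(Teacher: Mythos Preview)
Your algebraic reduction of the leading term is correct and matches the paper, but the analytic step has a genuine gap. You assert that on $\partial U_\pm$ one has $|\zeta|=t^{-1}|\ln(z/e^{\pm ip})|\asymp\delta_2/t$ ``is large'', and then invoke the large-$\zeta$ expansion (\ref{eqn:Phi asymptotics}) of $\Phi_\pm$. But the proposition is stated for the full range $0<t<t_0$ with $t_0$ a \emph{fixed} small constant. When $t$ is of order $t_0$ (say $t=t_0/2$), the quantity $\delta_2/t$ is just a fixed number, not tending to infinity with $n$; hence the large-$\zeta$ expansion is simply not available there. This is exactly why the paper explicitly writes, for the regime $nt>C_0$, ``In this case we cannot use the expansion (\ref{eqn:Psi asymptotics}) since the argument $s$ of $\Psi_1$ is not bounded,'' and instead appeals to the large-$|s|$ asymptotics of $\Psi$ (Section~5 of \cite{ClaeysKrasovsky}) rather than the large-$\zeta$ expansion.

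Even in the regime where $|\zeta|$ \emph{is} large (i.e.\ $t$ small), your proposed uniform bound on $\|s^{j}\Psi_{\pm,j}(s)\|$ together with the $O(\zeta^{-3})$ remainder is not something Theorem~\ref{thm:painleve} gives you: that theorem concerns $\sigma(s)$, not the full matrix coefficients or the remainder in (\ref{eqn:Phi asymptotics}), whose implied constant depends on $s$. The paper therefore splits into three regimes: (i) $c_0\le nt\le C_0$, where $s$ lies in a compact set so the expansion (\ref{eqn:Psi asymptotics}) is uniform; (ii) $nt>C_0$, handled via the large-$|s|$ analysis of the $\Psi$-RHP; and (iii) $nt<c_0$, handled via the small-$|s|$ analysis. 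Your single-expansion argument collapses these cases into one, but the tool it rests on is not valid across the whole range, so the argument as written does not prove the proposition for all $0<t<t_0$.
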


\noindent \textbf{Proof:} Consider first the case where $c_0 \leq nt \leq C_0$, with some $c_0>0$ small and some $C_0>0$ large, which will be fixed below. Then $|\zeta| = |\frac{1}{t} \ln \frac{z}{e^{\pm ip}}| > \delta n$ for $z  \in \partial U_{\pm}$, and $s = -2int$ remains bounded and bounded away from zero. Thus by (\ref{eqn:Phi Psi}), (\ref{eqn:P1}) and (\ref{eqn:Psi asymptotics}) we have 
\begin{equation}
P(z) N(z)^{-1} = E_\pm(z) (I + O(n^{-1})) \hat{P}_\pm^{(\infty)}(\zeta,s) \left(\frac{z}{e^{\pm ip}} \right)^{-\frac{n}{2}\sigma_3} W_\pm(z) N(z)^{-1}, \quad z \in \partial U_\pm, \quad n \rightarrow \infty.  
\end{equation} 
Since the RHP for $\Psi$ is solvable for $c_0 \leq nt \leq C_0$, general properties of Painlev\'{e} RHPs imply that the error term is valid uniformly for $c_0 \leq nt \leq C_0$. By (\ref{eqn:N}) and (\ref{eqn:W}) we obtain
\begin{equation}
z^{-\frac{n}{2}\sigma_3} W_\pm(z) N(z)^{-1} = \left( D_{in,p,t}(z) D_{out,p,t}(z) \right)^{\frac{1}{2} \sigma_3} \sigma_1, \quad z \in U_\pm.
\end{equation}
Thus we have 
\begin{equation}
P_{\pm}(z) N(z)^{-1} = E_\pm(z)( I + (O(n^{-1})) E_\pm(z)^{-1},
\end{equation}
and since one can quickly see that $E_\pm(z)$ is bounded uniformly for $z \in \partial U_\pm$, $p \in (\epsilon, \pi - \epsilon)$ and $c_0 \leq nt \leq C_0$, we get that (\ref{eqn:matching1}) holds uniformly $z \in \partial U_\pm$, $p \in (\epsilon, \pi - \epsilon)$ and $c_0 \leq nt \leq C_0$.

Now consider the case $C_0 < nt < \omega(n)$. In this case we cannot use the expansion (\ref{eqn:Psi asymptotics}) since the argument $s$ of $\Psi_1$ is not bounded. Instead we need to use the large $|s| = 2nt$ asymptotics for $\Psi$, which were computed in \cite{ClaeysKrasovsky}[Section 5]. As is apparent from their (7.30) - (7.33), we have for $C_0$ sufficiently large
\begin{equation}
P_{\pm}(z) N(z)^{-1} = ( I + (O(n^{-1})), \quad n \rightarrow \infty, \quad z \in \partial U_\pm,
\end{equation}
uniformly for $C_0/n < t < t_0$, $z \in \partial U_\pm$ and $p \in (\epsilon, \pi - \epsilon)$. 

If $nt < c_0$, we can use the small $|s|$ asymptotics for $\Psi(\zeta;s)$ for large values of $\zeta = \frac{1}{t} \ln \frac{z}{e^{\pm ip}}$, as calculated in Section 6 of \cite{ClaeysKrasovsky}. From their (7.34) and (7.35) we see that 
\begin{equation}
P_{\pm}(z) N(z)^{-1} = ( I + (O(n^{-1})), \quad n \rightarrow \infty, \quad z \in \partial U_\pm,
\end{equation}
uniformly for $0 < t < C_0/n$, $z \in \partial U_\pm$ and $p \in (\epsilon, \pi - \epsilon)$. \qed

\subsubsection{$0 < t \leq \omega(n)/n$. Final Transformation}

\begin{figure}
\centering
\begin{tikzpicture}[scale = 1.2]

\def\a{35} \def\b{60} \def\r{3}

\def\rsmallA{( (0.85*\r*sin(0.5*\a))^2 + (\r -0.85*\r*cos(0.5*\a))^2)^0.5}

\def\rsmallB{( (0.95*\r*sin(0.5*(\b-\a)))^2 + (\r - 0.95*\r*cos(0.5*(\b-\a)))^2)^0.5}

\def\rsmallC{( (0.4*\r*sin(0.5*(180-\b)))^2 + (\r - 0.4*\r*cos(0.5*(180-\b)))^2)^0.5}

\def\rbigC{( (0.7*\r*sin(0.5*(180-\b)))^2 + (\r + 0.7*\r*cos(0.5*(180-\b)))^2)^0.5}

\draw[name path=ellipse,black,very thick]
(\r,0) circle[x radius = 0.2*\r cm, y radius = 0.2*\r cm];

\fill ({1.3*\r},0) node[] {$U_1$};

\draw[name path=ellipse,black,very thick]
(-\r,0) circle[x radius = 0.2*\r cm, y radius = 0.2*\r cm];

\fill ({-1.3*\r},0) node[] {$U_{-1}$};

\draw[name path=ellipse,black,very thick]
({\r*cos(0.5*(\a+\b))},{\r*sin(0.5*(\a+\b))}) circle[x radius = 0.3*\r cm, y radius = 0.3*\r cm];

\fill ({1.40*\r*cos(0.5*(\b+\a))},{1.40*\r*sin(0.5*(\b+\a))}) node[] {$U_+$};
\fill ({0.75*\r*cos(0.5*(\b+\a))},{0.75*\r*sin(0.5*(\b+\a))}) node[] {$-$}; 
\fill ({0.66*\r*cos(0.5*(\b+\a))},{0.65*\r*sin(0.5*(\b+\a))}) node[] {$+$}; 

\draw[name path=ellipse,black,very thick]
({\r*cos(0.5*(\a+\b))},{-\r*sin(0.5*(\a+\b))}) circle[x radius = 0.3*\r cm, y radius = 0.3*\r cm];

\fill ({1.40*\r*cos(0.5*(\b+\a))},{-1.40*\r*sin(0.5*(\b+\a))}) node[] {$U_-$}; 

\draw [black,thick,domain=-18:72] plot ({0.85*\r*cos(0.5*\a) + \rsmallA*cos(\x)},{0.85*\r*sin(0.5*\a) + \rsmallA*sin(\x)});

\draw [black,thick,domain=158:232] plot ({1.065*\r*cos(0.5*\a) + \rsmallA*cos(\x)},{1.065*\r*sin(0.5*\a) + \rsmallA*sin(\x)});

\fill ({1.35*\r*cos(0.5*\a)},{1.35*\r*sin(0.5*\a)}) node[] {$\Sigma_{0,out}$};
\fill ({0.6*\r*cos(0.5*\a)},{0.6*\r*sin(0.5*\a)}) node[] {$\Sigma_{0,in}$};

\draw [black,thick,domain=-18:72] plot ({0.85*\r*cos(0.5*\a) + \rsmallA*cos(\x)},{0.85*\r*sin(-0.5*\a) + \rsmallA*sin(-\x)});

\draw [black,thick,domain=158:232] plot ({1.065*\r*cos(0.5*\a) + \rsmallA*cos(\x)},{1.065*\r*sin(-0.5*\a) + \rsmallA*sin(-\x)});

\fill ({1.35*\r*cos(0.5*\a)},{-1.35*\r*sin(0.5*\a)}) node[] {$\Sigma_{5,out}$};
\fill ({0.6*\r*cos(0.5*\a)},{-0.6*\r*sin(0.5*\a)}) node[] {$\Sigma_{5,in}$};

\draw [black,thick,domain=42:190.5] plot ({0.4*\r*cos(90+0.5*\b) + \rsmallC*cos(\x)},{0.4*\r*sin(90+0.5*\b) + \rsmallC*sin(\x)});

\draw [black,thick,domain=88:148] plot ({-0.7*\r*cos(90+0.5*\b) + \rbigC*cos(\x)},{-0.7*\r*sin(90+0.5*\b) + \rbigC*sin(\x)});

\fill ({1.4*\r*cos(0.5*\b+90)},{1.4*\r*sin(0.5*\b+90)}) node[] {$\Sigma_{2,out}$};
\fill ({0.6*\r*cos(0.5*\b+90)},{0.6*\r*sin(0.5*\b+90)}) node[] {$\Sigma_{2,in}$};

\fill ({0.74*\r*cos(0.6*\b+0.4*180)},{0.74*\r*sin(0.6*\b+0.4*180)}) node[] {$+$};
\fill ({0.84*\r*cos(0.6*\b+0.4*180)},{0.84*\r*sin(0.6*\b+0.4*180)}) node[] {$-$};

\fill ({1.2*\r*cos(0.6*\b+0.4*180)},{1.2*\r*sin(0.6*\b+0.4*180)}) node[] {$+$};
\fill ({1.3*\r*cos(0.6*\b+0.4*180)},{1.3*\r*sin(0.6*\b+0.4*180)}) node[] {$-$};

\draw [black,thick,domain=42:190.5] plot ({0.4*\r*cos(90+0.5*\b) + \rsmallC*cos(\x)},{0.4*\r*sin(-(90+0.5*\b)) + \rsmallC*sin(-\x)});

\draw [black,thick,domain=88:148] plot ({-0.7*\r*cos(90+0.5*\b) + \rbigC*cos(\x)},{-0.7*\r*sin(-(90+0.5*\b)) + \rbigC*sin(-\x)});

\fill ({1.4*\r*cos(0.5*\b+90)},{-1.4*\r*sin(0.5*\b+90)}) node[] {$\Sigma_{3,out}$};
\fill ({0.6*\r*cos(0.5*\b+90)},{-0.6*\r*sin(0.5*\b+90)}) node[] {$\Sigma_{3,in}$};
\end{tikzpicture}
\caption{The jump contour $\Sigma_R$ of $R$.} \label{figure:R1}
\end{figure}

Define 
\begin{equation} \label{eqn:R}
R(z) = \begin{cases} S(z)N^{-1}(z) & z \in U_\infty \setminus \Sigma_S, \quad U_\infty := \mathbb{C}\setminus \text{local parametrices}, \\
S(z)P_{\pm }^{-1}(z), & z \in U_{\pm}\setminus \Sigma_S, \\
S(z)P_{\pm 1}^{-1}(z), & z \in U_{\pm 1} \setminus \Sigma_S.
\end{cases}
\end{equation}
Then $R$ solves the following RHP:\\
\newpage
\noindent \textbf{RH problem for $R$}
\begin{enumerate}[label=(\alph*)]
\item $R:\mathbb{C}\setminus \Sigma_R \rightarrow \mathbb{C}^{2\times 2}$ is analytic, where $\Sigma_R$ is shown in Figure \ref{figure:R1}\\

\item $R(z)$ has the following jumps: 
\begin{align}
\begin{split}
R_+(z) &=  R_-(z)N(z)\left( \begin{array}{cc} 1 & 0 \\ f_{p,t}(z)^{-1}z^{-n} & 1 \end{array} \right) N(z)^{-1}, \quad z \in \Sigma_{j,out}, \, j = 0,2,3,5, \\
R_+(z) &=  R_-(z)N(z)\left( \begin{array}{cc} 1 & 0 \\ f_{p,t}(z)^{-1}z^{n} & 1 \end{array} \right) N(z)^{-1}, \quad z \in \Sigma_{j,in}, \, j = 0,2,3,5, \\
R_+(z) &=  R_-(z) P_{\pm}(z) N(z)^{-1}, \quad z \in \partial U_{\pm} \setminus \text{intersection points}, \\
R_+(z) &=  R_-(z) P_{\pm 1}(z) N(z)^{-1}, \quad z \in \partial U_{\pm 1} \setminus \text{intersection points}. \\
\end{split}
\end{align}

\item $R(z) = I + O(1/z)$ as $z \rightarrow \infty$. 
\end{enumerate}

One quickly sees that uniformly in $z \in \Sigma_{j,out} \cup \Sigma_{j,in} \setminus \overline{U_\infty}$ we have
\begin{equation}
R_+(z) = R_-(z)(I+O(e^{-\delta n}))
\end{equation}
for some $\delta > 0$ and uniformly in $p \in (\epsilon, \pi - \epsilon)$, $0 < t < t_0$.
By Proposition \ref{prop:matching} we have that 
\begin{equation}
R_+(z) = R_-(z) P_{\pm}(z) N(z)^{-1} = R_-(z)(I+O(n^{-1})),
\end{equation}
uniformly for $z \in \partial U_{\pm}$, $p \in (\epsilon, \pi - \epsilon)$ and $0 < t < t_0$. Because of the matching condition (d) of $P_{\pm 1}$ we have 
\begin{equation}
R_+(z) = R_-(z) P_{\pm 1}(z) N(z)^{-1} = R_-(z)(I+O(n^{-1})),
\end{equation}
uniformly for $z \in \partial U_{\pm 1}$, $p \in (\epsilon, \pi - \epsilon)$ and $0 < t < t_0$.\\

We see that we have a normalized RHP with small jumps, which by the standard theory on RHP implies that 
\begin{equation}
R(z) = I + O(n^{-1}), \quad \frac{\text{d}R(z)}{\text{d}z} = O(n^{-1}),
\end{equation}
as $n \rightarrow \infty$, uniformly for $z$ off the jump contour and uniformly in $p \in (\epsilon, \pi - \epsilon)$, $0 < t < t_0$.

\subsection{$\omega(n)/n < t < t_0$. Local Parametrices near $e^{\pm ip}$} \label{section:local 2}
We now transfer the construction from Section 7.5 in \cite{ClaeysKrasovsky} to our setting in a completely straightforward manner. Although the parametrices $P_{\pm}$ from the previous section are valid for the whole region $0 < t < t_0$ we need to construct more explicit parametrices for the case $\omega(n)/n < t < t_0$ to get a simpler large $n$ expansion for $Y$, which is needed for the analysis in the next section. \\

In the case $\omega(n)/n < t < t_0$ $\zeta = \frac{1}{t} \ln \frac{z}{e^{\pm ip}}$ is not necessarily large on $\partial U_{\pm}$. But we can construct a large $s = -int$ expansion for $Y$, as $|s| = nt$ is large. \\

We modify the $S$-RHP by now also opening up lenses around the arcs $(p-t,p+t)$ and $(2\pi - p - t, 2\pi - p + t)$, i.e. we choose the contour $\Sigma_S$ as in Figure \ref{figure:S2}.

\begin{figure}[H] 
\centering
\begin{tikzpicture}[scale = 1.2]

\def\a{35} \def\b{60} \def\r{3}

\def\rsmallA{( (0.85*\r*sin(0.5*\a))^2 + (\r -0.85*\r*cos(0.5*\a))^2)^0.5}

\def\rsmallB{( (0.95*\r*sin(0.5*(\b-\a)))^2 + (\r - 0.95*\r*cos(0.5*(\b-\a)))^2)^0.5}

\def\rsmallC{( (0.4*\r*sin(0.5*(180-\b)))^2 + (\r - 0.4*\r*cos(0.5*(180-\b)))^2)^0.5}

\def\rbigC{( (0.7*\r*sin(0.5*(180-\b)))^2 + (\r + 0.7*\r*cos(0.5*(180-\b)))^2)^0.5}

\draw[name path=ellipse,black,very thick]
(0,0) circle[x radius = \r cm, y radius = \r cm];

\coordinate (1) at ({\r*cos(\a)}, {\r*sin(\a)});
\coordinate (2) at ({\r*cos(\b)}, {\r*sin(\b)});
\coordinate (4) at ({\r*cos(\b)}, {-\r*sin(\b)});
\coordinate (5) at ({\r*cos(\a)}, {-\r*sin(\a)});
	
\fill (1) circle (3pt) node[right,xshift=0.1cm,yshift=0.51cm] {$z_1$};
\fill (2) circle (3pt) node[above,xshift=0.1cm,yshift=0.1cm] {$z_2$};
\fill (4) circle (3pt) node[below,xshift=0.1cm,yshift=-0.1cm] {$z_4$};
\fill (5) circle (3pt) node[right,xshift=0.1cm,yshift=-0.15cm] {$z_5$};
\fill (\r,0) circle (3pt) node[right,xshift=0.1cm] {1};
\fill (-\r,0) circle (3pt) node[left,xshift=-0.1cm] {-1};

\draw [black,thick,domain=-50:85] plot ({0.85*\r*cos(0.5*\a) + \rsmallA*cos(\x)},{0.85*\r*sin(0.5*\a) + \rsmallA*sin(\x)});

\draw [black,thick,domain=131:265] plot ({1.065*\r*cos(0.5*\a) + \rsmallA*cos(\x)},{1.065*\r*sin(0.5*\a) + \rsmallA*sin(\x)});

\fill ({1.35*\r*cos(0.5*\a)},{1.35*\r*sin(0.5*\a)}) node[] {$\Sigma_{0,out}$};
\fill ({0.9*\r*cos(0.5*\a)},{0.9*\r*sin(0.5*\a)}) node[] {$\Sigma_{0}$};
\fill ({0.6*\r*cos(0.5*\a)},{0.6*\r*sin(0.5*\a)}) node[] {$\Sigma_{0,in}$};

\draw [black,thick,domain=-50:85] plot ({0.85*\r*cos(0.5*\a) + \rsmallA*cos(\x)},{0.85*\r*sin(-0.5*\a) + \rsmallA*sin(-\x)});

\draw [black,thick,domain=131:265] plot ({1.065*\r*cos(0.5*\a) + \rsmallA*cos(\x)},{1.065*\r*sin(-0.5*\a) + \rsmallA*sin(-\x)});

\fill ({1.35*\r*cos(0.5*\a)},{-1.35*\r*sin(0.5*\a)}) node[] {$\Sigma_{5,out}$};
\fill ({0.9*\r*cos(0.5*\a)},{-0.9*\r*sin(0.5*\a)}) node[] {$\Sigma_{5}$};
\fill ({0.6*\r*cos(0.5*\a)},{-0.6*\r*sin(0.5*\a)}) node[] {$\Sigma_{5,in}$};

\draw [black,thick,domain=35:205] plot ({0.4*\r*cos(90+0.5*\b) + \rsmallC*cos(\x)},{0.4*\r*sin(90+0.5*\b) + \rsmallC*sin(\x)});

\draw [black,thick,domain=85:155] plot ({-0.7*\r*cos(90+0.5*\b) + \rbigC*cos(\x)},{-0.7*\r*sin(90+0.5*\b) + \rbigC*sin(\x)});

\fill ({1.4*\r*cos(0.5*\b+90)},{1.4*\r*sin(0.5*\b+90)}) node[] {$\Sigma_{2,out}$};
\fill ({1.1*\r*cos(0.5*\b+90)},{1.1*\r*sin(0.5*\b+90)}) node[] {$\Sigma_{2}$};
\fill ({0.6*\r*cos(0.5*\b+90)},{0.6*\r*sin(0.5*\b+90)}) node[] {$\Sigma_{2,in}$};

\fill ({0.74*\r*cos(0.6*\b+0.4*180)},{0.74*\r*sin(0.6*\b+0.4*180)}) node[] {$+$};
\fill ({0.84*\r*cos(0.6*\b+0.4*180)},{0.84*\r*sin(0.6*\b+0.4*180)}) node[] {$-$};

\fill ({0.95*\r*cos(0.6*\b+0.4*180)},{0.95*\r*sin(0.6*\b+0.4*180)}) node[] {$+$};
\fill ({1.05*\r*cos(0.6*\b+0.4*180)},{1.05*\r*sin(0.6*\b+0.4*180)}) node[] {$-$};

\fill ({1.2*\r*cos(0.6*\b+0.4*180)},{1.2*\r*sin(0.6*\b+0.4*180)}) node[] {$+$};
\fill ({1.3*\r*cos(0.6*\b+0.4*180)},{1.3*\r*sin(0.6*\b+0.4*180)}) node[] {$-$};

\draw [black,thick,domain=35:205] plot ({0.4*\r*cos(90+0.5*\b) + \rsmallC*cos(\x)},{0.4*\r*sin(-(90+0.5*\b)) + \rsmallC*sin(-\x)});

\draw [black,thick,domain=85:155] plot ({-0.7*\r*cos(90+0.5*\b) + \rbigC*cos(\x)},{-0.7*\r*sin(-(90+0.5*\b)) + \rbigC*sin(-\x)});

\fill ({1.4*\r*cos(0.5*\b+90)},{-1.4*\r*sin(0.5*\b+90)}) node[] {$\Sigma_{3,out}$};
\fill ({1.1*\r*cos(0.5*\b+90)},{-1.1*\r*sin(0.5*\b+90)}) node[] {$\Sigma_{3}$};
\fill ({0.6*\r*cos(0.5*\b+90)},{-0.6*\r*sin(0.5*\b+90)}) node[] {$\Sigma_{3,in}$};

\draw [black,thick,domain=-30:128] plot ({0.95*\r*cos(0.5*(\a+\b)) + \rsmallB*cos(\x)},{0.95*\r*sin(0.5*(\a+\b)) + \rsmallB*sin(\x)});

\draw [black,thick,domain=142:300] plot ({1.02*\r*cos(0.5*(\a+\b))+\rsmallB*cos(\x)},{1.02*\r*sin(0.5*(\a+\b)) + \rsmallB*sin(\x)});

\fill ({1.3*\r*cos(0.5*(\b+\a))},{1.3*\r*sin(0.5*(\b+\a))}) node[] {$\Sigma_{1,out}$};
\fill ({0.9*\r*cos(0.5*(\b+\a))},{0.9*\r*sin(0.5*(\b+\a))}) node[] {$\Sigma_{1}$};
\fill ({0.7*\r*cos(0.5*(\b+\a))},{0.7*\r*sin(0.5*(\b+\a))}) node[] {$\Sigma_{1,in}$};

\draw [black,thick,domain=-30:128] plot ({0.95*\r*cos(0.5*(\a+\b)) + \rsmallB*cos(\x)},{-0.95*\r*sin(0.5*(\a+\b)) + \rsmallB*sin(-\x)});

\draw [black,thick,domain=142:300] plot ({1.02*\r*cos(0.5*(\a+\b))+\rsmallB*cos(\x)},{-1.02*\r*sin(0.5*(\a+\b)) + \rsmallB*sin(-\x)});

\fill ({1.3*\r*cos(0.5*(\b+\a))},{-1.3*\r*sin(0.5*(\b+\a))}) node[] {$\Sigma_{4,out}$};
\fill ({0.9*\r*cos(0.5*(\b+\a))},{-0.9*\r*sin(0.5*(\b+\a))}) node[] {$\Sigma_{4}$};
\fill ({0.7*\r*cos(0.5*(\b+\a))},{-0.7*\r*sin(0.5*(\b+\a))}) node[] {$\Sigma_{4,in}$};
\end{tikzpicture}
\caption{The modified jump contour $\Sigma_S$ of $S$ in the case $\omega(n)/n < t < t_0$. The difference compared to Figure \ref{figure:S1} is that here there are also lenses around the arcs $(p-t,p+t)$ and $(2\pi - p - t, 2\pi - p + t)$.} \label{figure:S2}
\end{figure}

The points $z_0 = 1$, $z_3 = -1$ we surround with disks $U_1$, $U_{-1}$, small enough such for all $p \in (\epsilon, \pi - \epsilon)$ and $\omega(n)/n < t < t_0$ they are disjoint with the neighborhoods $\tilde{\mathcal{U}}_1, \tilde{\mathcal{U}}_2, \tilde{\mathcal{U}}_4, \tilde{\mathcal{U}}_5$ defined in the next paragraph, and we take the same local parametrices $P_{\pm 1}$ in $U_1$, $U_{-1}$ as in Section \ref{section:local plus minus 1}.\\ 

Let $\mathcal{U}_1, \mathcal{U}_2$ be small non-intersecting disks around $\pm i$, those are the same neighborhoods as in Section 5 of \cite{ClaeysKrasovsky}. We surround the points $z_1 = e^{i(p-t)}, z_2 = e^{i(p+t)}$ by small neighborhoods $\tilde{\mathcal{U}}_1, \tilde{\mathcal{U}}_2$, with $\tilde{\mathcal{U}}_1$ being the image of $\mathcal{U}_2$ under the inverse of the map $\zeta = \frac{1}{t} \ln \frac{z}{e^{ip}}$, and $\tilde{\mathcal{U}}_2$ being the image of $\mathcal{U}_1$ under the same map. Similarly we surround $z_4 = e^{i(2\pi - (p+t))}$ by a small neighborhood $\tilde{\mathcal{U}}_4$, which is the image of $\mathcal{U}_2$ under the inverse of the map $\zeta = \frac{1}{t} \ln \frac{z}{e^{-ip}}$, and $z_5 = e^{i(2\pi - (p-t))}$ we surround by $\tilde{\mathcal{U}}_5$ which is the image of $\mathcal{U}_1$ under the same map. Since the disks $\mathcal{U}_1$, $\mathcal{U}_2$ are fixed in the $\zeta$-plane, the neighborhoods $\tilde{\mathcal{U}}_1, \tilde{\mathcal{U}}_2, \tilde{\mathcal{U}}_4,\tilde{\mathcal{U}}_5$ contract in the $z$-plane if $t$ decreases with $n$.  \\

As global parametrix outside these neighborhoods we choose $N(z)$ as in the previous section. For $k = 1,2,4,5$ we choose the local parametrices in $\tilde{\mathcal{U}}_k$ as follows:
\begin{align} \label{eqn:P tilde}
\begin{split}
\tilde{P}_1(z) =& \tilde{E}_1(z) M^{(\alpha_1,\beta_1)}(nt(\zeta(z) + i)) \Omega_1(z) W_+(z), \quad \zeta = \frac{1}{t} \ln \frac{z}{e^{ip}}, \\
\tilde{P}_2(z) =& \tilde{E}_2(z) M^{(\alpha_2,\beta_2)}(nt(\zeta(z) - i)) \Omega_2(z) W_+(z), \quad \zeta = \frac{1}{t} \ln \frac{z}{e^{ip}}, \\
\tilde{P}_4(z) =& \tilde{E}_4(z) M^{(\alpha_4,\beta_4)}(nt(\zeta(z) + i)) \Omega_4(z) W_-(z), \quad \zeta = \frac{1}{t} \ln \frac{z}{e^{-ip}}, \\
\tilde{P}_5(z) =& \tilde{E}_5(z) M^{(\alpha_5,\beta_5)}(nt(\zeta(z) - i)) \Omega_5(z) W_-(z), \quad \zeta = \frac{1}{t} \ln \frac{z}{e^{-ip}},
\end{split}
\end{align}
where
\begin{align}
\tilde{E}_k(z) = \sigma_1 \left( D_{in,p,t}(z) D_{out,p,t}(z) \right)^{-\sigma_3/2} \Omega_k(z) (nt(\zeta \pm i))^{\beta_k \sigma_3} z_k^{-\frac{n}{2}\sigma_3},
\end{align}
with $+$ for $k = 1,4$ and $-$ for $k = 2,5$, where $M^{(\alpha_k,\beta_k)}(\lambda)$ is given in Appendix \ref{appendix:M} with $\alpha = \alpha_k, \beta = \beta_k$, where
\begin{align}
\Omega_k(z) = \begin{cases} e^{i\frac{\pi}{2}(\alpha_k - \beta_k)\sigma_3} , & \Im \zeta > 1 \\
e^{-i\frac{\pi}{2}(\alpha_k - \beta_k)\sigma_3} , & \Im \zeta < 1 \end{cases},
\end{align}
and where $W_\pm(z)$ is given in (\ref{eqn:W}). \\

By (\ref{eqn:N}) and (\ref{eqn:W}) we obtain
\begin{equation}
z^{-\frac{n}{2}\sigma_3} W_\pm(z) N(z)^{-1} = \left( D_{in,p,t}(z) D_{out,p,t}(z) \right)^{\frac{1}{2} \sigma_3} \sigma_1, \quad z \in U_\pm.
\end{equation}
Using the large argument expansion (\ref{eqn:M asymptotics}) for $M^{(\alpha_1,\beta_1)}(nt(\zeta + i))$ for $z \in \partial \tilde{\mathcal{U}}_1$, we see that
\begin{align} \label{eqn:matching tilde}
\begin{split}
\tilde{P}_1(z) N(z)^{-1} =& \tilde{E}_1(z) \left( I + \frac{M^{(\alpha_1,\beta_1)}}{nt(\zeta + i)} + O((nt)^{-2} \right) \\
&\times (nt(\zeta + i))^{-\beta_1 \sigma_3} \left( \frac{z}{e^{i(p+t)}} \right)^{-\frac{n}{2}\sigma_3} \Omega_1(z) W_+(z) N(z)^{-1}\\
=& \tilde{E}_1(z) \left( I + \frac{M^{(\alpha_1,\beta_1)}}{nt(\zeta + i)} + O((nt)^{-2} \right) \\
&\times (nt(\zeta + i))^{-\beta_1 \sigma_3}  z_1^{\frac{n}{2}\sigma_3} \Omega_1(z) \left( D_{in,p,t}(z) D_{out,p,t}(z) \right)^{\frac{1}{2} \sigma_3} \sigma_1 \\
=& \tilde{E}_1(z) \left( I + O((nt)^{-1}) \right) \tilde{E}_1(z)^{-1} \\
=& \left( I + O((nt)^{-1}) \right),
\end{split}
\end{align}
uniformly in $z \in \partial \tilde{\mathcal{U}}_1$, $p \in (\epsilon, \pi - \epsilon)$ and $\omega(n)/n < t < t_0$, since $\tilde{E}_1(z)$ is uniformly bounded for $\omega(n)/n < t < t_0$, $p \in (\epsilon, \pi - \epsilon)$ and $z \in \tilde{\mathcal{U}}_1$. Similarly one obtains that for $k = 2,4,5$
\begin{align}
\tilde{P}_k(z) N(z)^{-1} = \left( I + O((nt)^{-1}) \right),
\end{align}
uniformly in $z \in \partial \tilde{\mathcal{U}}_k$, $p \in (\epsilon, \pi - \epsilon)$ and $\omega(n)/n < t < t_0$.\\

Choose $\Sigma_S$ such that $\frac{1}{t} \ln \left( \frac{\Sigma_S}{z_k} \right) \subset \left( e^{\pm \frac{\pi i}{4}} \mathbb{R} \cup i\mathbb{R} \cup \mathbb{R} \right)$ in $\tilde{\mathcal{U}}_k$. Then one can easily verify, as in (\ref{eqn:jumps}),  that $\tilde{P}_k$ has the same jumps as $S$ in $\tilde{\mathcal{U}}_k$, so that $S(z) \tilde{P}_k(z)^{-1}$ is meromorphic in $\tilde{\mathcal{U}}_k$, with at most an isolated singulary at $z_k$. The singular behaviour of $S$ and $W_\pm$ near $z_k$, and of $M^{(\alpha_k,\beta_k)}$ near $0$ (given in (\ref{eqn:M at 0, neq 0}) and (\ref{eqn:M at 0, 0})), imply that $S(z) \tilde{P}_k^{-1}(z)$ is bounded at $z_k$, which shows that that $\tilde{P}_k$ is a parametrix for $S$ in $\tilde{\mathcal{U}}_k$ with the matching condition (\ref{eqn:matching tilde}) with $N(z)$ at $\partial \tilde{\mathcal{U}}_k$.  

\subsubsection{$\omega(n)/n < t < t_0$. Final Transformation}
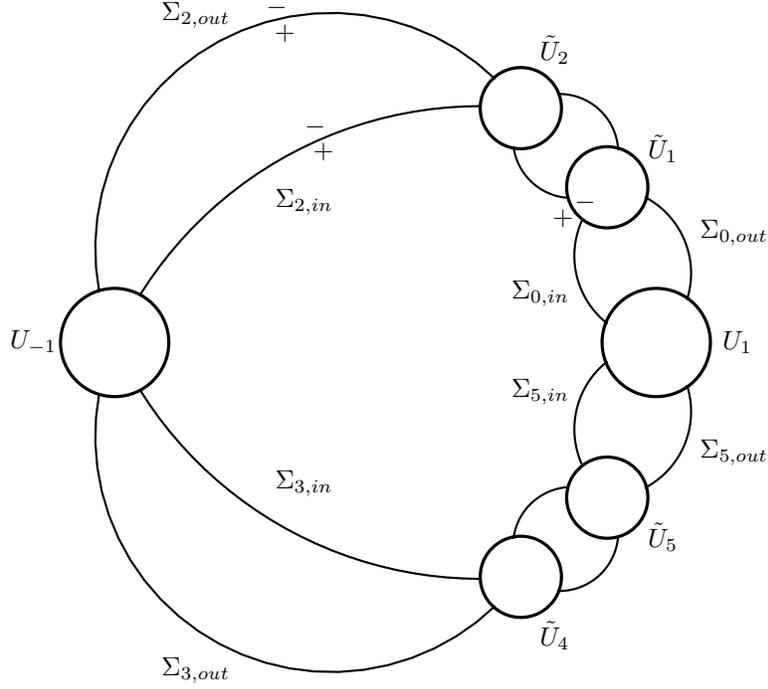
\begin{figure}[H] 
\centering
\begin{tikzpicture}[scale = 1.2]

\def\a{35} \def\b{60} \def\r{3}

\def\rsmallA{( (0.85*\r*sin(0.5*\a))^2 + (\r -0.85*\r*cos(0.5*\a))^2)^0.5}

\def\rsmallB{( (0.95*\r*sin(0.5*(\b-\a)))^2 + (\r - 0.95*\r*cos(0.5*(\b-\a)))^2)^0.5}

\def\rsmallC{( (0.4*\r*sin(0.5*(180-\b)))^2 + (\r - 0.4*\r*cos(0.5*(180-\b)))^2)^0.5}

\def\rbigC{( (0.7*\r*sin(0.5*(180-\b)))^2 + (\r + 0.7*\r*cos(0.5*(180-\b)))^2)^0.5}

\draw[name path=ellipse,black,very thick]
(\r,0) circle[x radius = 0.2*\r cm, y radius = 0.2*\r cm];

\fill ({1.3*\r},0) node[] {$U_1$};

\draw[name path=ellipse,black,very thick]
(-\r,0) circle[x radius = 0.2*\r cm, y radius = 0.2*\r cm];

\fill ({-1.3*\r},0) node[] {$U_{-1}$};

\draw [black,thick,domain=3:92] plot ({0.95*\r*cos(0.5*(\a+\b)) + \rsmallB*cos(\x)},{0.95*\r*sin(0.5*(\a+\b)) + \rsmallB*sin(\x)});

\draw [black,thick,domain=187:265] plot ({1.02*\r*cos(0.5*(\a+\b))+\rsmallB*cos(\x)},{1.02*\r*sin(0.5*(\a+\b)) + \rsmallB*sin(\x)});

\draw[name path=ellipse,black,very thick]
({\r*cos(\a)},{\r*sin(\a)}) circle[x radius = 0.15*\r cm, y radius = 0.15*\r cm];

\fill ({1.25*\r*cos(\a)},{1.25*\r*sin(\a)}) node[] {$\tilde{U}_1$};
\fill ({0.9*\r*cos(\a)},{0.9*\r*sin(\a)}) node[] {$-$}; 
\fill ({0.8*\r*cos(\a)},{0.8*\r*sin(\a)}) node[] {$+$};

\draw[name path=ellipse,black,very thick]
({\r*cos(\b)},{\r*sin(\b)}) circle[x radius = 0.15*\r cm, y radius = 0.15*\r cm];
\fill ({1.25*\r*cos(\b)},{1.25*\r*sin(\b)}) node[] {$\tilde{U}_2$};

\draw [black,thick,domain=3:92] plot ({0.95*\r*cos(0.5*(\a+\b)) + \rsmallB*cos(\x)},{-0.95*\r*sin(0.5*(\a+\b)) + \rsmallB*sin(-\x)});

\draw [black,thick,domain=187:265] plot ({1.02*\r*cos(0.5*(\a+\b))+\rsmallB*cos(\x)},{-1.02*\r*sin(0.5*(\a+\b)) + \rsmallB*sin(-\x)});

\draw[name path=ellipse,black,very thick]
({\r*cos(\a)},{-\r*sin(\a)}) circle[x radius = 0.15*\r cm, y radius = 0.15*\r cm];
\fill ({1.25*\r*cos(\a)},{-1.25*\r*sin(\a)}) node[] {$\tilde{U}_5$};

\draw[name path=ellipse,black,very thick]
({\r*cos(\b)},{-\r*sin(\b)}) circle[x radius = 0.15*\r cm, y radius = 0.15*\r cm];
\fill ({1.25*\r*cos(\b)},{-1.25*\r*sin(\b)}) node[] {$\tilde{U}_4$};

\draw [black,thick,domain=-18:62.5] plot ({0.85*\r*cos(0.5*\a) + \rsmallA*cos(\x)},{0.85*\r*sin(0.5*\a) + \rsmallA*sin(\x)});

\draw [black,thick,domain=156:232] plot ({1.065*\r*cos(0.5*\a) + \rsmallA*cos(\x)},{1.065*\r*sin(0.5*\a) + \rsmallA*sin(\x)});

\fill ({1.35*\r*cos(0.5*\a)},{1.35*\r*sin(0.5*\a)}) node[] {$\Sigma_{0,out}$};
\fill ({0.6*\r*cos(0.5*\a)},{0.6*\r*sin(0.5*\a)}) node[] {$\Sigma_{0,in}$};

\draw [black,thick,domain=-18:62.5] plot ({0.85*\r*cos(0.5*\a) + \rsmallA*cos(\x)},{0.85*\r*sin(-0.5*\a) + \rsmallA*sin(-\x)});

\draw [black,thick,domain=156:232] plot ({1.065*\r*cos(0.5*\a) + \rsmallA*cos(\x)},{1.065*\r*sin(-0.5*\a) + \rsmallA*sin(-\x)});

\fill ({1.35*\r*cos(0.5*\a)},{-1.35*\r*sin(0.5*\a)}) node[] {$\Sigma_{5,out}$};
\fill ({0.6*\r*cos(0.5*\a)},{-0.6*\r*sin(0.5*\a)}) node[] {$\Sigma_{5,in}$};

\draw [black,thick,domain=46:190.5] plot ({0.4*\r*cos(90+0.5*\b) + \rsmallC*cos(\x)},{0.4*\r*sin(90+0.5*\b) + \rsmallC*sin(\x)});

\draw [black,thick,domain=90:148] plot ({-0.7*\r*cos(90+0.5*\b) + \rbigC*cos(\x)},{-0.7*\r*sin(90+0.5*\b) + \rbigC*sin(\x)});

\fill ({1.4*\r*cos(0.5*\b+90)},{1.4*\r*sin(0.5*\b+90)}) node[] {$\Sigma_{2,out}$};
\fill ({0.6*\r*cos(0.5*\b+90)},{0.6*\r*sin(0.5*\b+90)}) node[] {$\Sigma_{2,in}$};

\fill ({0.74*\r*cos(0.6*\b+0.4*180)},{0.74*\r*sin(0.6*\b+0.4*180)}) node[] {$+$};
\fill ({0.84*\r*cos(0.6*\b+0.4*180)},{0.84*\r*sin(0.6*\b+0.4*180)}) node[] {$-$};

\fill ({1.2*\r*cos(0.6*\b+0.4*180)},{1.2*\r*sin(0.6*\b+0.4*180)}) node[] {$+$};
\fill ({1.3*\r*cos(0.6*\b+0.4*180)},{1.3*\r*sin(0.6*\b+0.4*180)}) node[] {$-$};

\draw [black,thick,domain=46:190.5] plot ({0.4*\r*cos(90+0.5*\b) + \rsmallC*cos(\x)},{0.4*\r*sin(-(90+0.5*\b)) + \rsmallC*sin(-\x)});

\draw [black,thick,domain=90:148] plot ({-0.7*\r*cos(90+0.5*\b) + \rbigC*cos(\x)},{-0.7*\r*sin(-(90+0.5*\b)) + \rbigC*sin(-\x)});

\fill ({1.4*\r*cos(0.5*\b+90)},{-1.4*\r*sin(0.5*\b+90)}) node[] {$\Sigma_{3,out}$};
\fill ({0.6*\r*cos(0.5*\b+90)},{-0.6*\r*sin(0.5*\b+90)}) node[] {$\Sigma_{3,in}$};
\end{tikzpicture}
\caption{The jump contour $\Sigma_{\tilde{R}}$ of $\tilde{R}$ in the case $\omega(n)/n < t < t_0$.} \label{figure:R2}
\end{figure}

We transfer Section 7.5.1 in \cite{ClaeysKrasovsky} to our case. Figure \ref{figure:R2} shows the contour chosen for the RHP of $\tilde{R}$, which we define as follows: 
\begin{align} \label{eqn:R2}
\tilde{R}(z) = \begin{cases} S(z) \tilde{P}_k(z)^{-1}, & z \in \tilde{\mathcal{U}}_k, \\
S(z)P_{\pm 1}(z)^{-1}, & z \in U_{\pm 1}, \\
S(z)N(z)^{-1}, & z \in \mathbb{C} \setminus \left( \overline{\tilde{\mathcal{U}}_1} \cup \overline{\tilde{\mathcal{U}}_2} \cup  \overline{\tilde{\mathcal{U}}_4} \cup 
\overline{\tilde{\mathcal{U}}_5} \cup \overline{U_1} \cup \overline{U_{-1}} \right). 
\end{cases}
\end{align}
Then $\tilde{R}$ is analytic, in particular has no jumps inside any of the local parametrices $\tilde{\mathcal{U}}_k$, $k  = 1,2,4,5$, $U_{\pm}$, or on the unit circle. On the rest of the lenses we can see that the jump matrix is $I + O(e^{-\delta nt})$ for some $\delta>0$, uniformly in $p \in (\epsilon, \pi - \epsilon)$ and $\omega(n)/n < t < t_0$. Because of the matching condition (d) of $P_{\pm 1}$ we have as in the case $0 < t < \omega(n)/n$ that
\begin{equation}
\tilde{R}_+(z) = \tilde{R}_-(z) P_{\pm 1}(z) N(z)^{-1} = \tilde{R}_-(z)(I+O(n^{-1})),
\end{equation}
uniformly for $z \in \partial U_{\pm 1}$, $p \in (\epsilon, \pi - \epsilon)$ and $0 < t < t_0$. Using (\ref{eqn:matching tilde}), we get that 
\begin{align} 
\tilde{R}_+(z) = \tilde{R}_-(z) \tilde{P}_k(z) N(z)^{-1} = \tilde{R}_-(z)(I+O((nt)^{-1}),
\end{align}
uniformly for $z \in \partial \tilde{\mathcal{U}}_k$, $p \in (\epsilon, \pi - \epsilon)$ and $\omega(n)/n < t < t_0$. Finally we have that $\lim_{z \rightarrow \infty} \tilde{R}(z) = I$, which by standard theory for RHPs with small jumps and RHPs on contracting contours implies that 
\begin{align} \label{eqn:R2 asymptotics}
\tilde{R}(z) = I + O((nt)^{-1}), \quad \frac{\text{d}\tilde{R}(z)}{\text{d}z} = O((nt)^{-1}),
\end{align}
uniformly for $z$ off the jump contour of $\tilde{R}$, and uniformly in $p \in (\epsilon, \pi - \epsilon)$ and $\omega(n)/n < t < t_0$.

\section{Asymptotics of $D_n(f_{p,t})$} \label{section:Toeplitz}
This section is a transfer of Section 8 in \cite{ClaeysKrasovsky} to our case.

\subsection{Asymptotics of the Differential Identity and Proof of Theorem \ref{thm:T, T+H extended}}

\begin{proposition}
Let $\alpha_1,\alpha_2,\alpha_1 + \alpha_2 > - \frac{1}{2}$, let $\sigma(s)$ be the solution to (\ref{eqn:painleve}) and let $\omega(x)$ be a positive, smooth function for $x$ sufficiently large, s.t.
\begin{equation}
\omega(x) \rightarrow \infty, \quad \omega(x) = o(x), \quad \text{as } x \rightarrow \infty.
\end{equation}
Then the following asymptotic expansion holds:
\begin{align} \label{eqn:diff_id_asymptotics}
\begin{split}
\frac{1}{i} \frac{\text{d}}{\text{d}t} \ln D_n(f_{p,t}) =& 2n(\beta_1 - \beta_2) + d_1(p,t;\alpha_0,\alpha_1,\beta_1,\alpha_2,\beta_2,\alpha_3) \\
&+ d_2(p,t;\alpha_0,\alpha_1,\beta_1,\alpha_2,\beta_2,\alpha_3) + d_3(p,t;\alpha_0,\alpha_1,\beta_1,\alpha_2,\beta_2,\alpha_3) + \epsilon_{n,p,t},
\end{split}
\end{align}
where for the error term $\epsilon_{n,p,t}$
\begin{equation}
\left| \int_0^t \epsilon_{n,p,\tau} \text{d}\tau \right|= O(\omega(n)^{-\delta}) = o(1),
\end{equation}
for some $\delta > 0$, uniformly in $p \in (\epsilon, \pi - \epsilon)$ and $0 < t < t_0$, and where 
\begin{align}
\begin{split}
d_1(p,t;\alpha_0,\alpha_1,\beta_1,\alpha_2,\beta_2,\alpha_3) =& 2\alpha_1 z_1 V'(z_1) - 2\alpha_2 z_2 V'(z_2) + 2(\beta_1 + \beta_2) \sum_{j = 1}^\infty jV_j \left( \cos j(p+t) - \cos j(p-t) \right) \\
& -i(2\alpha_1^2 + \beta_1^2 + \beta_1 \beta_2) \frac{\cos p-t}{\sin p-t} + i(2\alpha_2^2 + \beta_2^2 + \beta_1\beta_2) \frac{ \cos p+t }{ \sin p+t } \\
& + i(\beta_1^2-\beta_2^2) \frac{\cos p}{\sin p}  \\
& - 2i\alpha_1 \alpha_0 \frac{\cos \frac{p-t}{2}}{\sin \frac{p-t}{2}} + 2i\alpha_1 \alpha_3 \frac{\sin \frac{p-t}{2}}{\cos \frac{p-t}{2}} + 2i\alpha_2 \alpha_0 \frac{\cos \frac{p+t}{2}}{\sin \frac{p+t}{2}} - 2i\alpha_2 \alpha_3 \frac{\sin \frac{p+t}{2}}{\cos \frac{p+t}{2}} \\
d_2(p,t;\alpha_0,\alpha_1,\beta_1,\alpha_2,\beta_2,\alpha_3) =& \frac{2}{it} \sigma(s) + i (4\alpha_1 \alpha_2 - (\beta_1+\beta_2)^2) \frac{\cos t}{\sin t} \\
d_3(p,t;\alpha_0,\alpha_1,\beta_1,\alpha_2,\beta_2,\alpha_3) =& 2\sigma_s \Bigg( - 2\sum_{j = 1}^\infty jV_j \left( \cos j(p-t) + \cos j(p+t) \right) - 2\sum_{j=0}^5 \alpha_j \\
&- i\beta_1 \frac{\cos p-t}{\sin p-t} - i\beta_2 \frac{\cos p+t}{\sin p+t} + i(\beta_1 - \beta_2) \left( \frac{\cos t}{\sin t} - \frac{1}{t} \right) - i(\beta_1 + \beta_2) \frac{\cos p}{\sin p} \Bigg).
\end{split}
\end{align}
\end{proposition}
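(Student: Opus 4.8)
The plan is to transfer Section~8 of \cite{ClaeysKrasovsky} to the present situation, the only new features being the two fixed (i.e.\ $p$- and $t$-independent) Fisher--Hartwig singularities at $z_0 = 1$ and $z_3 = -1$, and the fact that the merging singularities occur in the two conjugate pairs $\{z_1,z_5\}$ and $\{z_2,z_4\}$ rather than in a single pair. First I would start from the exact differential identity (\ref{eqn:diff id}) and dispose of its ``trivial'' part: using $V(e^{i\theta}) = V(e^{-i\theta})$ together with $\alpha_1 = \alpha_5$, $\alpha_2 = \alpha_4$, $\beta_1 = -\beta_5$, $\beta_2 = -\beta_4$ and $z_4 = \overline{z_2}$, $z_5 = \overline{z_1}$, one checks that
\begin{equation}
\sum_{k = 1,2,4,5} q_k n \beta_k = n\beta_1 - n\beta_2 + n\beta_4 - n\beta_5 = 2n(\beta_1 - \beta_2),
\end{equation}
which gives the leading term; what then remains is to determine the large-$n$ behaviour of $-2\sum_{k} q_k \alpha_k z_k \bigl(\frac{\text{d}Y^{-1}}{\text{d}z}\tilde Y\bigr)_{22}(z_k)$, uniformly in $p$ and $t$.

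Next I would unwind the chain of transformations $Y \mapsto T \mapsto S \mapsto R$ (or $\tilde R$ in the regime $\omega(n)/n < t < t_0$) near each merging singularity. Near $z_1$, say, one has $Y = (\text{explicit analytic/constant factors})\cdot R\cdot E_\pm \cdot \Phi_\pm(\zeta;s)\cdot W_\pm$ with $\zeta = \frac1t\ln\frac{z}{e^{ip}}$, $s = -2int$, where $\Phi_\pm$ is the Painlev\'e parametrix built from $\Psi$ (Appendix~\ref{appendix:Psi}) when $0 < t \le \omega(n)/n$, and is replaced by the confluent-hypergeometric parametrix $M^{(\alpha_1,\beta_1)}(nt(\zeta+i))$ of (\ref{eqn:P tilde}) when $\omega(n)/n < t < t_0$. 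Logarithmic differentiation then expresses $\frac{\text{d}Y^{-1}}{\text{d}z}\tilde Y$ at $z_1$ as a sum of contributions, one per factor. I expect these to sort out as follows: the $R$- (resp.\ $\tilde R$-) contribution is $O(n^{-1})$ (resp.\ $O((nt)^{-1})$) by the estimates of Section~\ref{section:local 1} and by (\ref{eqn:R2 asymptotics}) and gets absorbed into $\epsilon_{n,p,t}$; differentiating $E_\pm$ from (\ref{eqn:E}) and using (\ref{eqn:D in})--(\ref{eqn:D out}) produces precisely the terms collected in $d_1$ (the $V'(z_{1,2})$ and $\sum jV_j(\cos j(p+t) - \cos j(p-t))$ terms together with the cotangent terms coming from the fixed singularities at $\pm1$ and from the companion singularity); the $W_\pm$-factors carrying $z^{n/2}$ and $f_{p,t}^{\pm\sigma_3/2}$ cancel against matching pieces of the $\Phi_\pm$-contribution, so no spurious $O(n)$ term survives; and the $\Phi_\pm$- (resp.\ $M^{(\alpha_1,\beta_1)}$-) contribution is the genuinely non-trivial one, producing $d_2$ and $d_3$. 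The relabelling $(\alpha_4,\alpha_5,\beta_4,\beta_5) = (\alpha_2,\alpha_1,-\beta_2,-\beta_1)$ makes the $z_4,z_5$-contributions combine with the $z_1,z_2$-ones, after the signs $q_k$, into the stated $d_1,d_2,d_3$.

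The hard part will be extracting the Painlev\'e contribution: one must identify the local behaviour near $\zeta = \pm i$ of the logarithmic $\zeta$-derivative of the Painlev\'e parametrix (suitably regularised near $\zeta=\pm i$ when $\alpha_k < 0$, exactly as $\tilde Y$ regularises $Y$) in terms of the $\sigma$-form transcendent $\sigma(s)$ of Theorem~\ref{thm:painleve} and its derivative $\sigma_s$. Here I would follow Sections~4 and~8 of \cite{ClaeysKrasovsky} essentially verbatim: the isomonodromy (Lax-pair / Hamiltonian) structure underlying $\Psi$, together with the normalising expansion (\ref{eqn:Phi asymptotics}), shows that the ``diagonal'' part of this local expansion gives the $\frac{2}{it}\sigma(s)$ and $i\bigl(4\alpha_1\alpha_2 - (\beta_1+\beta_2)^2\bigr)\frac{\cos t}{\sin t}$ terms of $d_2$, while the cross term between $\frac{\text{d}E_\pm^{-1}}{\text{d}z}E_\pm$ and the subleading coefficient $\Psi_{\pm,1}$ of (\ref{eqn:Phi asymptotics}) gives the $2\sigma_s(\cdots)$ term of $d_3$, whose bracket is the logarithmic derivative of the analytic prefactor at the merging pair. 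In the regime $\omega(n)/n < t < t_0$ I would instead use the large-argument expansion (\ref{eqn:M asymptotics}) of $M^{(\alpha_k,\beta_k)}$ and the large-$|s|$ asymptotics of $\sigma(s)$ from Theorem~\ref{thm:painleve}, checking that the two computations agree up to an error that again goes into $\epsilon_{n,p,t}$.

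Finally I would prove the claimed error bound by splitting $(0,t)$ at $\omega(n)/n$: on $(0,\omega(n)/n)$ the parametrix of Section~\ref{section:local 1} gives $\epsilon_{n,p,\tau} = O(n^{-1})$ uniformly, so this piece contributes $O(\omega(n)/n^2)$; on $(\omega(n)/n,t)$ the parametrix of Section~\ref{section:local 2} gives $\epsilon_{n,p,\tau} = O((n\tau)^{-\delta'})$ for some $\delta' > 0$ (with a possible logarithmic factor in the borderline case), which integrates to $O(n^{-\delta'})$; in either case this is $O(\omega(n)^{-\delta})$ for a suitable $\delta > 0$. Uniformity in $p \in (\epsilon,\pi-\epsilon)$ is inherited from the uniformity of all the parametrix matchings established in Section~\ref{section:asymptotics of polynomials}, and the contributions of the local parametrices $P_{\pm1}$ near $\pm1$ and of the parts of $\Sigma_S$ lying outside all the discs are exponentially small in $n$ and so harmless. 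The remaining work is then the (routine but lengthy) bookkeeping needed to verify that the explicit terms collected above are exactly $d_1$, $d_2$ and $d_3$.
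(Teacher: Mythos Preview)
Your strategy is essentially the paper's: rewrite the differential identity via $Y\to T\to S$, split at $t=\omega(n)/n$, use the Painlev\'e parametrix $\Phi_\pm$ in the first regime and the confluent-hypergeometric parametrix $M^{(\alpha_k,\beta_k)}$ in the second, extract the $\sigma$- and $\sigma_s$-contributions, and check in the second regime (via the large-$|s|$ asymptotics of $\sigma$) that the two computations agree up to an admissible error. One small correction on bookkeeping: the $\sigma_s$-term in $d_3$ does \emph{not} come from the subleading coefficient $\Psi_{\pm,1}$ in the large-$\zeta$ expansion (\ref{eqn:Phi asymptotics}). It arises from the \emph{local} structure of $\Phi_\pm$ at $\zeta=\pm i$, namely the analytic prefactors $F_{\pm,k}$ of (\ref{eqn:F_1 neq 0})--(\ref{eqn:F_2 0}); the identities $\alpha_k(F^{-1}\sigma_3 F)_{22}(\pm i)=\pm\sigma_s+\tfrac{\beta_1+\beta_2}{2}$ (the paper's Proposition~\ref{prop:F}) are what produce $\sigma_s$, and the bracket multiplying $2\sigma_s$ is exactly $z_k h_\pm(z_k)$ with $h_\pm$ coming from $E_\pm^{-1}E_\pm'$. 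Similarly the ``diagonal'' piece $\tfrac{2}{it}\sigma(s)$ comes from $(F^{-1}F')_{22}(\pm i)$, not from $\Psi_{\pm,1}$.

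There is one genuine oversimplification in your error analysis. On $(0,\omega(n)/n)$ you assert $\epsilon_{n,p,\tau}=O(n^{-1})$ pointwise, appealing only to $R^{-1}R'=O(n^{-1})$. But $A_{n,p,t}(z_k)=\bigl(P_\pm^{-1}R^{-1}R'P_\pm\bigr)_{22}(z_k)$ involves conjugation by $P_\pm$, whose limiting value at $z_k$ is governed by $E_\pm(z_k)F_{\pm,k'}(\mp i;s)$; to get a bound uniform in $s=-2int$ (in particular as $s\to -i0_+$) you must control the \emph{full} matrix norms of $F_{\pm,k'}(\mp i;s)$ and their inverses, not just the special combinations appearing in Proposition~\ref{prop:F}. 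The paper does not claim a pointwise $O(n^{-1})$ bound here; it follows pp.~60--61 of \cite{ClaeysKrasovsky} and uses the small- and large-$|s|$ asymptotics of $\Psi$ to obtain directly the integrated estimate $\int_0^t|A_{n,p,\tau}(z_k)|\,d\tau=o(\omega(n)^{-1})$. Your argument should either establish the uniform boundedness of $F_{\pm,k'}(\mp i;s)^{\pm 1}$ from those same asymptotics, or bypass the pointwise claim and go straight to the integrated bound.
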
 

\noindent \textbf{Proof:} The proof is analogous to the proof of Proposition 8.1 in \cite{ClaeysKrasovsky}. As is done there, we assume below that $\alpha_k>0$, $k = 1,2,4,5$, for simplicity of notation. Once (\ref{eqn:diff_id_asymptotics}) is proven under this assumption, the case where $\alpha_k = 0$ for some $k$ then follows from the uniformity of the error terms in $\alpha_k$, $k = 1,2,4,5$. Extending to the case where $\alpha_k < 0$ for some $k$ is straightforward. We prove the proposition first in the regime $0 < t \leq \omega(n)/n$ and then in the regime $\omega(n)/n < t < t_0$.  \\

Using the transformation $Y \rightarrow T \rightarrow S$ inside the unit circle, outside the lenses, we can rewrite the differential identity (\ref{eqn:diff id}) in the form 
\begin{equation} \label{eqn:transform of diff id}
\frac{1}{i} \frac{\text{d}}{\text{d}t} \ln D_n(f_{p,t}) = \sum_{k = 1,2,4,5} q_k \left( n\beta_k + 2\alpha_k z_k \left( S^{-1} \frac{\text{d}S}{\text{d}z} \right)_{+,22} (z_k) \right), 
\end{equation}
with $q_k = 1$ for $k = 1,4$ and $q_k = -1$ for $k = 2,5$, and where the limit $z \rightarrow z_k$ is taken from the inside of the unit circle and outside the lenses.

\subsubsection{$0 < t \leq \omega(n)/n$} 
By (\ref{eqn:R}) we get 
\begin{equation}
S(z) = R(z) P_\pm(z), \quad z \in U_\pm,
\end{equation}
and thus 
\begin{align} \label{eqn:SS}
\begin{split}
\left( S^{-1} \frac{\text{d}S}{\text{d}z} \right)_{22} (z) =& \left( P_\pm^{-1} \frac{\text{d}P_\pm}{\text{d}z} \right)_{22} (z) + A_{n,p,t}(z), \quad z \in U_{e^{\pm ip}}, \\
A_{n,p,t}(z) =& \left( P_\pm^{-1}(z) R^{-1}(z) \frac{dR}{dz}(z) P_\pm(z) \right)_{22}.
\end{split}
\end{align}

Following exactly the same approach as on pages 60, 61 in \cite{ClaeysKrasovsky}, we can use (\ref{eqn:Psi asymptotics}) and the small and large $|s|$ asymptotics from Sections 5, 6 of \cite{ClaeysKrasovsky} to obtain that for $k = 1,2,4,5$ 
\begin{equation}
\int_0^t |A_{n,p,t}(z_k)| \text{d}t = o(\omega(n)^{-1}), \quad n \rightarrow \infty,
\end{equation}
uniformly in $0 < t \leq \omega(n)/n$ and $p \in (\epsilon, \pi - \epsilon)$, and thus also 
\begin{equation} \label{eqn:epsilon tilde}
\tilde{\epsilon}_{n,p,t} := 2 \sum_{k = 1,2,4,5} q_k \alpha_k z_k A_{n,p,t}(z_k) = O(\omega(n)^{-1}),
\end{equation}
uniformly in $0 < t \leq \omega(n)/n$ and $p \in (\epsilon, \pi - \epsilon)$.\\ 

By (\ref{eqn:P1}) we have
\begin{align} \label{eqn:PP1}
\begin{split}
\left( P_\pm^{-1}\frac{\text{d}P_\pm}{\text{d}z} \right)_{22}(z) =& -\frac{n}{2z} + \frac{1}{2}\frac{f'_{p,t}}{f_{p,t}}(z) + \left( \Phi_\pm^{-1} \frac{\text{d}\Phi_\pm}{\text{d}z} \right)_{22}(z) + \left( \Phi_\pm^{-1} E_\pm^{-1} \frac{\text{d}E_\pm}{\text{d}z} \Phi_\pm \right)_{22}(z),\\
\end{split}
\end{align}
with $z$ inside the unit circle and outside of the lenses of $\Sigma_S$. By (\ref{eqn:E}) we have for $z$ near $e^{ip}$:
\begin{align} \label{eqn:EE}
\begin{split} 
E_\pm(z)^{-1} \frac{\text{d} E_\pm(z)}{\text{d} z} =& h_\pm(z) \sigma_3, 
\end{split}
\end{align}
where 
\begin{align} \label{eqn:h_pm}
\begin{split}
h_\pm(z) =& \pm \frac{\beta_1}{z\ln \frac{z}{e^{\pm ip}} \pm itz} \pm \frac{\beta_2}{z\ln \frac{z}{e^{\pm ip}} \mp itz}  \\
&- \frac{1}{2} \sum_{j = 1}^\infty j V_j z^{j-1} + \frac{1}{2} \sum_{j=-1}^{-\infty} j V_j z^{j-1} - \sum_{j=0}^5 \frac{\beta_j}{z-z_j} - \frac{1}{2z} \sum_{j=0}^5 \alpha_j.
\end{split}
\end{align}

In the following equation we need the fact that
\begin{align} \label{eqn:1/log}
\begin{split}
\frac{1}{\ln (1+z)} =& \frac{1}{z} + \frac{1}{2} + O(z), \quad z \rightarrow 0, \\ 
\frac{1}{\ln z - \ln z_k} =& \frac{1}{\ln \frac{z}{z_k}} = \frac{z_k}{z - z_k} + \frac{1}{2} + O(|z-z_k|), \quad z \rightarrow z_k.
\end{split}
\end{align}
Let $k = 1,2$ and denote $k' = 1$ for $k = 2$, $k' = 2$ for $k = 1$. Putting together (\ref{eqn:PP1}) and (\ref{eqn:EE}), we obtain for $k = 1,2$ (as in (8.33) in \cite{ClaeysKrasovsky})
\begin{align} \label{eqn:PPlimit}
\begin{split}
\left( P_+^{-1} \frac{\text{d}P_+}{\text{d}z} \right)_{22,+}(z_k) =& - \frac{n}{2z_k} + \lim_{z \rightarrow z_k} \left( \frac{1}{2} \frac{f'_{p,t}}{f_{p,t}}(z) - \frac{\alpha_k}{z \ln z - z \ln z_1} \right) \\
&+ \frac{1}{tz_k} \left( F_{+,k'}^{-1} \frac{\text{d}F_{+,k'}}{\text{d}z} \right)_{22} \left( \frac{1}{t} \ln \frac{z_k}{e^{ip}} \right) + h_+(z_k) (F_{+,k'}^{-1}\sigma_3 F_{+,k'})_{22} \left( \frac{1}{t} \ln \frac{z_k}{e^{ip}} \right) \\
=& - \frac{n}{2z_k} + \frac{1}{2} V'(z_k) + \sum_{j = 0}^5 \frac{\beta_j}{2z_k} + \sum_{j,j \neq k} \frac{\alpha_j}{z_k - z_j} - \frac{\alpha_j}{2z_k} \\ 
& + \frac{1}{tz_k} \left( F_{+,k'}^{-1} \frac{\text{d}F_{+,k'}}{\text{d}z} \right)_{22} \left( \frac{1}{t} \ln \frac{z_k}{e^{ip}} \right) + h_+(z_k) (F_{+,k'}^{-1}\sigma_3 F_{+,k'})_{22} \left( \frac{1}{t} \ln \frac{z_k}{e^{ip}} \right), \\
\left( P_-^{-1} \frac{\text{d}P_-}{\text{d}z} \right)_{22,+}(\overline{z_k}) =& - \frac{n}{2\overline{z_k}} + \lim_{z \rightarrow \overline{z_k}} \left( \frac{1}{2} \frac{f'_{p,t}}{f_{p,t}}(z) - \frac{\alpha_k}{z \ln z - z \ln \overline{z_k}} \right) \\
&+ \frac{1}{t\overline{z_k}} \left( F_{-,k}^{-1} \frac{\text{d}F_{-,k}}{\text{d}z} \right)_{22} \left( \frac{1}{t} \ln \frac{\overline{z_k}}{e^{-ip}} \right) + h_-(\overline{z_2}) (F_{-,k}^{-1}\sigma_3 F_{-,k})_{22} \left( \frac{1}{t} \ln \frac{\overline{z_k}}{e^{-ip}} \right) \\
=& - \frac{n}{2\overline{z_k}} + \frac{1}{2} V'(\overline{z_k}) + \sum_{j = 0}^5 \frac{\beta_j}{2\overline{z_k}} + \sum_{j, z_j \neq \overline{z_k}} \frac{\alpha_j}{\overline{z_k} - z_j} - \frac{\alpha_j}{2\overline{z_k}} \\ 
& + \frac{1}{t\overline{z_k}} \left( F_{-,k}^{-1} \frac{\text{d}F_{-,k}}{\text{d}z} \right)_{22} \left( \frac{1}{t} \ln \frac{\overline{z_k}}{e^{-ip}} \right) + h_-(\overline{z_k}) (F_{-,k}^{-1}\sigma_3 F_{-,k})_{22} \left( \frac{1}{t} \ln \frac{\overline{z_k}}{e^{-ip}} \right),
\end{split}
\end{align}
where in the second equalities we used (\ref{eqn:1/log}), and where $F_{\pm,k}$ equal the functions $F_k$ defined in (\ref{eqn:F_1 neq 0}), (\ref{eqn:F_2 neq 0}), (\ref{eqn:F_1 0}) and (\ref{eqn:F_2 0}), with $(\alpha_1, \alpha_2, \beta_1, \beta_2)$ in the appendix replaced by $(\alpha_2,\alpha_1,-\beta_2,-\beta_1)$ in the $-$ case. When replacing $(\alpha_1,\alpha_2,\beta_1,\beta_2)$ in the Painlev\'{e} equation (3.52) in \cite{ClaeysKrasovsky} with $(\alpha_2,\alpha_1,\beta_2,\beta_1)$ or $(\alpha_1,\alpha_2,-\beta_1,-\beta_2)$, we get the same Painlev\'{e} equation as in our Theorem \ref{thm:painleve}. Thus we can see that Propositions 3.1 and 3.2 in \cite{ClaeysKrasovsky} become in our case:
\begin{proposition} \label{prop:F}
We have the identities
\begin{align}
\begin{split}
\alpha_2(F_{+,1}(i;s)^{-1}\sigma_3 F_{+,1}(i;s))_{22} =& -\sigma_s(s) + \frac{\beta_1+\beta_2}{2}, \\
\alpha_1(F_{+,2}(-i;s)^{-1}\sigma_3 F_{+,2}(-i;s))_{22} =& \sigma_s(s) + \frac{\beta_1+\beta_2}{2}, \\
\alpha_1(F_{-,1}(i;s)^{-1}\sigma_3 F_{-,1}(i;s))_{22} =& -\sigma_s(s) - \frac{\beta_1+\beta_2}{2}, \\ 
\alpha_2(F_{-,2}(-i;s)^{-1}\sigma_3 F_{-,2}(-i;s))_{22} =& \sigma_s(s) - \frac{\beta_1+\beta_2}{2}, 
\end{split}
\end{align}
and
\begin{align}
\begin{split}
\alpha_2 \left( F_{+,1}(i;s)^{-1} F_{+,1,\zeta}(i;s) \right)_{22} =& \frac{i}{4} \sigma(s) - \frac{i}{8}(\beta_1 + \beta_2)s + \frac{i}{8}(\beta_1+\beta_2)^2, \\
\alpha_1 \left(F_{+,2}(-i;s)^{-1} F_{+,2,\zeta}(-i;s) \right)_{22} =& -\frac{i}{4} \sigma(s) - \frac{i}{8}(\beta_1 + \beta_2)s - \frac{i}{8}(\beta_1+\beta_2)^2, \\
\alpha_1 \left(F_{-,1}(i;s)^{-1} F_{-,1,\zeta}(i;s) \right)_{22} =& \frac{i}{4} \sigma(s) + \frac{i}{8}(\beta_1 + \beta_2)s + \frac{i}{8}(\beta_1+\beta_2)^2, \\
\alpha_2 \left(F_{-,2}(-i;s)^{-1} F_{-,2,\zeta}(-i;s) \right)_{22} =& -\frac{i}{4} \sigma(s) + \frac{i}{8}(\beta_1 + \beta_2)s - \frac{i}{8}(\beta_1+\beta_2)^2.
\end{split}
\end{align}
\end{proposition}

Putting together (\ref{eqn:transform of diff id}), (\ref{eqn:SS}) and (\ref{eqn:PPlimit}), we obtain:
\begin{align} \label{eqn:diff_id_F}
\begin{split}
&\frac{1}{i} \frac{\text{d}}{\text{d}t} \ln D_n(f_{p,t}) \\
=& \sum_{k = 1,2,4,5} q_k \left( n\beta_k + \alpha_k z_k V'(z_k) + \alpha_k \sum_{j=0}^5 \beta_j + 2 \alpha_k \sum_{j \neq k} \alpha_j \frac{z_k}{z_k - z_j} - \alpha_k \sum_{j\neq k} \alpha_j \right) \\
&+ \sum_{k = 1}^2 (-1)^{k+1} \left( 2 \alpha_k z_k h_+(z_k) \left( F^{-1}_{+,k'} \sigma_3 F_{+,k'} \right)_{22} ((-1)^ki) + \frac{2}{t} \alpha_k \left( F_{+,k'}^{-1} F_{+,k'}' \right)_{22} ((-1)^ki) \right) \\
&+ \sum_{k = 1}^2 (-1)^k \left( 2 \alpha_k \overline{z_k} h_-(\overline{z_k}) \left( F^{-1}_{-,k} \sigma_3 F_{-,k} \right)_{22} ((-1)^{k+1}i) + \frac{2}{t} \alpha_k \left( F_{-,k}^{-1} F_{-,k}' \right)_{22} ((-1)^{k+1}i) \right) \\
&+ \tilde{\epsilon}_{n,p,t},\\
=& 2n(\beta_1 - \beta_2) + 2\alpha_1 z_1 V'(z_1) - 2\alpha_2 z_2 V'(z_2) \\
&+ 2 \sum_{k = 1,2,4,5} q_k \alpha_k \sum_{j \neq k} \alpha_j \frac{z_k}{z_k - z_j} \\
&+ \sum_{k = 1}^2 (-1)^{k+1} \left( 2 \alpha_k z_k h_+(z_k) \left( F^{-1}_{+,k'} \sigma_3 F_{+,k'} \right)_{22} ((-1)^ki) + \frac{2}{t} \alpha_k \left( F_{+,k'}^{-1} F_{+,k'}' \right)_{22} ((-1)^ki) \right) \\
&+ \sum_{k = 1}^2 (-1)^k \left( 2 \alpha_k \overline{z_k} h_-(\overline{z_k}) \left( F^{-1}_{-,k} \sigma_3 F_{-,k} \right)_{22} ((-1)^{k+1}i) + \frac{2}{t} \alpha_k \left( F_{-,k}^{-1} F_{-,k}' \right)_{22} ((-1)^{k+1}i) \right) \\
&+ \tilde{\epsilon}_{n,p,t},
\end{split}
\end{align}
where we used that since $\beta_1 = - \beta_5$, $\beta_2 = - \beta_4$, $\alpha_1 = \alpha_5$, $\alpha_2 = \alpha_4$ and $V(z) = V(\overline{z})$, it holds that: 
\begin{align}
\begin{split}
\sum_{k = 1,2,4,5} q_k \alpha_k \sum_{j \neq k} \alpha_j =& 0, \quad \sum_{j = 0}^5 \beta_j = 0, \quad zV'(z) = - \overline{z}V'(\overline{z}).
\end{split}
\end{align}
By Proposition \ref{prop:F} we get 
\begin{align} \label{eqn:FF}
\begin{split}
&\sum_{k = 1}^2 (-1)^{k+1} \frac{2}{t} \alpha_k \left( F_{+,k'}^{-1} F_{+,k'}' \right)_{22} ((-1)^ki) + \sum_{k = 1}^2 (-1)^k \frac{2}{t} \alpha_k \left( F_{-,k}^{-1} F_{-,k}' \right)_{22} ((-1)^{k+1}i) \\
=& \frac{2}{it} \sigma(s) - \frac{i}{t}(\beta_1 + \beta_2)^2,
\end{split}
\end{align} 
and 
\begin{align} \label{eqn:FsigmaF}
\begin{split}
&\sum_{k = 1}^2 (-1)^{k+1} 2 \alpha_k z_k h_+(z_k) \left( F^{-1}_{+,k'} \sigma_3 F_{+,k'} \right)_{22} ((-1)^ki)\\
&+ \sum_{k = 1}^2 (-1)^k 2 \alpha_k \overline{z_k} h_-(\overline{z_k}) \left( F^{-1}_{-,k} \sigma_3 F_{-,k} \right)_{22} ((-1)^{k+1}i) \\
=& \left( 2\sigma_s(s) + \beta_1 + \beta_2 \right) \left( z_1 h_+(z_1) + \overline{z_1}h_-(\overline{z_1}) \right) + \left( 2\sigma_s(s) - \beta_1 - \beta_2 \right) \left( z_2 h_+(z_2) + \overline{z_2}h_-(\overline{z_2}) \right)  \\
=& 2\sigma_s \left( z_1 h_+(z_1) + \overline{z_1}h_-(\overline{z_1}) + z_2 h_+(z_2) + \overline{z_2}h_-(\overline{z_2}) \right) \\
&+ (\beta_1 + \beta_2) \left( z_1 h_+(z_1) + \overline{z_1}h_-(\overline{z_1}) - z_2 h_+(z_2) - \overline{z_2}h_-(\overline{z_2}) \right).
\end{split}
\end{align}
Then (\ref{eqn:diff_id_F}), (\ref{eqn:FF}) and (\ref{eqn:FsigmaF}) result in 
\begin{align} \label{eqn:diff_id_sum_hh}
\begin{split}
\frac{1}{i} \frac{\text{d}}{\text{d}t} \ln D_n(f_{p,t}) =& 2n(\beta_1 - \beta_2) + 2\alpha_1 z_1 V'(z_1) - 2\alpha_2 z_2 V'(z_2) \\
&+ 2 \sum_{k = 1,2,4,5} q_k \alpha_k \sum_{j \neq k} \alpha_j \frac{z_k}{z_k - z_j} \\
&+ \frac{2}{it} \sigma(s) - \frac{i}{t}(\beta_1 + \beta_2)^2 \\
&+ 2\sigma_s \left( z_1 h_+(z_1) + \overline{z_1}h_-(\overline{z_1}) + z_2 h_+(z_2) + \overline{z_2}h_-(\overline{z_2}) \right) \\
&+ (\beta_1 + \beta_2) \left( z_1 h_+(z_1) + \overline{z_1}h_-(\overline{z_1}) - z_2 h_+(z_2) - \overline{z_2}h_-(\overline{z_2}) \right) \\
&+ \tilde{\epsilon}_{n,p,t}.
\end{split}
\end{align}

\noindent With $h_+(z)$ and $h_-(z)$ given in (\ref{eqn:h_pm}), and using (\ref{eqn:1/log}), we obtain: 
\begin{align} \label{eqn:hz}
\begin{split}
h_+(z_1) z_1 =& - \frac{1}{2} \sum_{j = 1}^\infty j V_j (z_1^{j} + \overline{z_1}^j) - \sum_{j\neq 1} \frac{\beta_jz_1}{z_1-z_j} - \frac{1}{2} \sum_{j=0}^5 \alpha_j + \frac{\beta_1}{2} - \frac{\beta_{2}}{2it}, \\
h_+(z_2) z_2 =& - \frac{1}{2} \sum_{j = 1}^\infty j V_j (z_2^{j} + \overline{z_2}^j) - \sum_{j\neq 2} \frac{\beta_jz_2}{z_2-z_j} - \frac{1}{2} \sum_{j=0}^5 \alpha_j + \frac{\beta_2}{2} + \frac{\beta_{1}}{2it}, \\
h_-(\overline{z_1}) \overline{z_1} =& - \frac{1}{2} \sum_{j = 1}^\infty j V_j (z_1^{j} + \overline{z_1}^j) - \sum_{j\neq 5} \frac{\beta_j z_5}{z_5-z_j} - \frac{1}{2} \sum_{j=0}^5 \alpha_j - \frac{\beta_1}{2} - \frac{\beta_{2}}{2it}, \\
h_-(\overline{z_2}) \overline{z_2} =& - \frac{1}{2} \sum_{j = 1}^\infty j V_j (z_2^{j} + \overline{z_2}^j) - \sum_{j\neq 4} \frac{\beta_j z_4}{z_4-z_j} - \frac{1}{2} \sum_{j=0}^5 \alpha_j - \frac{\beta_2}{2} + \frac{\beta_{1}}{2it}. 
\end{split}
\end{align}
We thus have 
\begin{align} \label{eqn:hh}
\begin{split}
h_+(z_1)z_1 + \overline{z_1} h_-(\overline{z_1}) =& - 2\sum_{j = 1}^\infty jV_j \cos j(p-t) - \sum_{j=0}^5 \alpha_j - \frac{\beta_2}{it} + \beta_1 \left( \frac{z_1 + \overline{z_1}}{z_1 - \overline{z_1}} \right) \\
&+ \beta_2 \left( - \frac{z_1}{z_1 - z_2} + \frac{z_1}{z_1 - \overline{z_2}} - \frac{\overline{z_1}}{\overline{z_1} - z_2} + \frac{\overline{z_1}}{\overline{z_1} - \overline{z_2}} \right) \\
=& - 2\sum_{j = 1}^\infty jV_j \cos j(p-t) - \sum_{j=0}^5 \alpha_j - \frac{\beta_2}{it} - i\beta_1 \frac{\cos p-t}{\sin p-t} - i\beta_2\frac{\cos t}{\sin t} \\
&- i\beta_2 \frac{\cos p}{\sin p}, \\
h_+(z_2)z_2 + \overline{z_2}h_-(\overline{z_2}) =& - 2\sum_{j = 1}^\infty jV_j \cos j(p+t) - \sum_{j=0}^5 \alpha_j  + \frac{\beta_1}{it} + \beta_2 \left( \frac{z_2 + \overline{z_2}}{z_2 - \overline{z_2}} \right) \\
&+ \beta_1 \left( - \frac{z_2}{z_2 - z_1} + \frac{z_2}{z_2 - \overline{z_1}} - \frac{\overline{z_2}}{\overline{z_2} - z_1} + \frac{\overline{z_2}}{\overline{z_2} - \overline{z_1}} \right) \\
=& - 2\sum_{j = 1}^\infty jV_j \cos (p+t) - \sum_{j=0}^5 \alpha_j  + \frac{\beta_1}{it} - i\beta_2 \frac{\cos p+t}{\sin p+t} + i\beta_1 \frac{\cos t}{\sin t} \\
& - i\beta_1 \frac{\cos p}{\sin p}. 
\end{split}
\end{align}
Further we calculate
\begin{align} \label{eqn:sum_sin}
\begin{split}
\sum_{k = 1,5} q_k \alpha_k \sum_{j \neq k} \alpha_j \frac{z_k}{z_k - z_j} =& \alpha_1 \alpha_0 \left( \frac{z_1}{z_1 - 1} - \frac{\overline{z_1}}{\overline{z_1} - 1} \right) + \alpha_1^2 \left( \frac{z_1}{z_1 - \overline{z_1}} - \frac{\overline{z_1}}{\overline{z_1} - z_1} \right) \\
&+ \alpha_1 \alpha_2 \left( \frac{z_1}{z_1 - z_2} + \frac{z_1}{z_1 - \overline{z_2}} - \frac{\overline{z_1}}{\overline{z_1} - z_2} - \frac{\overline{z_1}}{\overline{z_1} - \overline{z_2}} \right) \\
&+ \alpha_1 \alpha_3 \left( \frac{z_1}{z_1 + 1} - \frac{\overline{z_1}}{\overline{z_1} + 1} \right), \\
- \sum_{k = 2,4} q_k \alpha_k \sum_{j \neq k} \alpha_j \frac{z_k}{z_k - z_j} =& \alpha_2 \alpha_0 \left( \frac{z_2}{z_2 - 1} - \frac{\overline{z_2}}{\overline{z_2} - 1} \right) \\
&+ \alpha_2 \alpha_1 \left( \frac{z_2}{z_2 - z_1} + \frac{z_2}{z_2 - \overline{z_1}} - \frac{\overline{z_2}}{\overline{z_2} - z_1} - \frac{\overline{z_2}}{\overline{z_2} - \overline{z_1}} \right) \\
&+ \alpha_2^2 \left( \frac{z_2}{z_2 - \overline{z_2}} - \frac{\overline{z_2}}{\overline{z_2} - z_2} \right) + \alpha_2 \alpha_3 \left( \frac{z_2}{z_2 + 1} - \frac{\overline{z_2}}{\overline{z_2} + 1} \right), \\
\sum_{k = 1,2,4,5} q_k \alpha_k \sum_{j \neq k} \alpha_j \frac{z_k}{z_k - z_j} =& -i\alpha_1^2 \frac{\cos p-t}{\sin p-t} + i\alpha_2^2 \frac{ \cos p+t }{ \sin p+t } + 2i \alpha_1 \alpha_2 \frac{\cos t}{\sin t} \\
&+ \alpha_1 \alpha_0 \frac{z_1 + 1}{z_1 - 1} + \alpha_1 \alpha_3 \frac{z_1 - 1}{z_1 + 1} - \alpha_2 \alpha_0 \frac{z_2 + 1}{z_2 - 1} - \alpha_2 \alpha_3 \frac{z_2 - 1}{z_2 + 1} \\
=& -i\alpha_1^2 \frac{\cos p-t}{\sin p-t} + i\alpha_2^2 \frac{ \cos p+t }{ \sin p+t } + 2i \alpha_1 \alpha_2 \frac{\cos t}{\sin t} \\
&- i\alpha_1 \alpha_0 \frac{\cos \frac{p-t}{2}}{\sin \frac{p-t}{2}} + i\alpha_1 \alpha_3 \frac{\sin \frac{p-t}{2}}{\cos \frac{p-t}{2}} + i\alpha_2 \alpha_0 \frac{\cos \frac{p+t}{2}}{\sin \frac{p+t}{2}} - i\alpha_2 \alpha_3 \frac{\sin \frac{p+t}{2}}{\cos \frac{p+t}{2}}. 
\end{split}
\end{align}
Putting together (\ref{eqn:diff_id_sum_hh}), (\ref{eqn:hh}) and (\ref{eqn:sum_sin}) we get that uniformly in $p \in (\epsilon, \pi - \epsilon)$ and $0 < t \leq \omega(n)/n$:
\begin{align} \label{eqn:diff_id_final}
\begin{split}
\frac{1}{i} \frac{\text{d}}{\text{d}t} \ln D_n(f_{p,t}) =& 2n(\beta_1 - \beta_2) + 2\alpha_1 z_1 V'(z_1) - 2\alpha_2 z_2 V'(z_2) \\
& -2i\alpha_1^2 \frac{\cos p-t}{\sin p-t} + 2i\alpha_2^2 \frac{ \cos p+t }{ \sin p+t } + 4i \alpha_1 \alpha_2 \frac{\cos t}{\sin t} \\
&- 2i\alpha_1 \alpha_0 \frac{\cos \frac{p-t}{2}}{\sin \frac{p-t}{2}} + 2i\alpha_1 \alpha_3 \frac{\sin \frac{p-t}{2}}{\cos \frac{p-t}{2}} + 2i\alpha_2 \alpha_0 \frac{\cos \frac{p+t}{2}}{\sin \frac{p+t}{2}} - 2i\alpha_2 \alpha_3 \frac{\sin \frac{p+t}{2}}{\cos \frac{p+t}{2}} \\
&+ \frac{2}{it} \sigma(s) - \frac{i}{t}(\beta_1 + \beta_2)^2 \\
&+ 2\sigma_s(s) \Bigg( - 2\sum_{j = 1}^\infty jV_j \left( \cos j(p-t) + \cos j(p+t) \right) - 2\sum_{j=0}^5 \alpha_j + \frac{\beta_1 - \beta_2}{it} \\
&- i\beta_1 \frac{\cos p-t}{\sin p-t} - i\beta_2 \frac{\cos p+t}{\sin p+t} + i(\beta_1 - \beta_2) \frac{\cos t}{\sin t} - i(\beta_1 + \beta_2) \frac{\cos p}{\sin p} \Bigg) \\
&+ (\beta_1 + \beta_2) \Bigg( 2\sum_{j = 1}^\infty jV_j \left( \cos j(p+t) - \cos j(p-t) \right) - \frac{\beta_1 + \beta_2}{it} \\
&- i\beta_1 \frac{\cos p-t}{\sin p-t} + i\beta_2 \frac{\cos p+t}{\sin p+t} - i(\beta_1 + \beta_2) \frac{\cos t}{\sin t} + i(\beta_1 - \beta_2) \frac{\cos p}{\sin p} \Bigg) \\
&+ \tilde{\epsilon}_{n,p,t}. 
\end{split}
\end{align}
Simplifying further and setting $\epsilon_{n,p,t} = \tilde{\epsilon}_{n,p,t}$ for $0 < t \leq \omega(n)/n$ we obtain (\ref{eqn:diff_id_asymptotics}) for $0 < t \leq \omega(n)/n$.

\subsubsection{$\omega(n)/n < t < t_0$}
For $\omega(n)/n < t < t_0$, $z \in \tilde{\mathcal{U}}_k$, we obtain instead of (\ref{eqn:SS}):
\begin{align} \label{eqn:SS2}
\begin{split}
\left( S^{-1}(z) \frac{\text{d}S(z)}{\text{d}z} \right)_{22} =& \left( \tilde{P}_k(z)^{-1} \frac{\text{d}\tilde{P}_k(z)}{\text{d}z} \right)_{22} + A_{n,p,t}, \\
A_{n,p,t}(z) =& \left( \tilde{P}_k(z)^{-1} \tilde{R}^{-1}(z) \frac{\text{d}\tilde{R}(z)}{\text{d}z} \tilde{P}_k(z) \right)_{22},
\end{split}
\end{align}
with $\tilde{R}(z)$ given in (\ref{eqn:R2}). From (\ref{eqn:P tilde}) and (\ref{eqn:W}) it follows that for $|z| < 1$
\begin{align} \label{eqn:PP2}
\begin{split}
\left( \tilde{P}_k(z)^{-1} \frac{\text{d}\tilde{P}_k(z)}{\text{d}z} \right)_{22} =& -\frac{n}{2z} + \frac{1}{2} \frac{f_{p,t}'(z)}{f_{p,t}(z)} + \left( M_k(z)^{-1} \frac{\text{d}M_k(z)}{\text{d}z} \right)_{22} \\
&+ \tilde{h}_1(z) \left( M_k(z)^{-1} \sigma_3 M_k(z) \right)_{22}, \\
A_{n,p,t}(z_k) =& \lim_{z \rightarrow z_k} A_{n,p,t}(z) \\
=& \lim_{z \rightarrow z_k} \left( M_k^{-1} \tilde{E}_k^{-1} \tilde{R}^{-1} \frac{\text{d}\tilde{R}}{\text{d}z} \tilde{E}_k M_k \right)_{22}(z),
\end{split}
\end{align}
where 
\begin{align}
\begin{split}
M_1(z) =& M^{(\alpha_1,\beta_1)}(nt(\frac{1}{t} \ln \frac{z}{e^{ip}} + i)), \\
M_2(z) =& M^{(\alpha_2,\beta_2)}(nt(\frac{1}{t} \ln \frac{z}{e^{ip}} - i)), \\
M_4(z) =& M^{(\alpha_4,\beta_4)}(nt(\frac{1}{t} \ln \frac{z}{e^{-ip}} + i)), \\
M_5(z) =& M^{(\alpha_5,\beta_5)}(nt(\frac{1}{t} \ln \frac{z}{e^{-ip}} - i)), 
\end{split}
\end{align}
and where
\begin{align} \label{eqn:hhtilde}
\begin{split}
\tilde{h}_1(z)\sigma_3 =& \tilde{E}_1(z)^{-1} \frac{\text{d} \tilde{E}_1(z)}{\text{d}z}, \quad \tilde{h}_1(z) = h_+(z) - \frac{\beta_2}{z \ln \frac{z}{e^{ip}} - itz}, \\ 
\tilde{h}_2(z)\sigma_3 =& \tilde{E}_2(z)^{-1} \frac{\text{d} \tilde{E}_2(z)}{\text{d}z}, \quad \tilde{h}_2(z) = h_+(z) - \frac{\beta_1}{z \ln \frac{z}{e^{ip}} + itz}, \\ 
\tilde{h}_4(z)\sigma_3 =& \tilde{E}_4(z)^{-1} \frac{\text{d} \tilde{E}_4(z)}{\text{d}z}, \quad \tilde{h}_4(z) = h_-(z) + \frac{\beta_1}{z \ln \frac{z}{e^{-ip}} - itz}, \\ 
\tilde{h}_5(z)\sigma_3 =& \tilde{E}_5(z)^{-1} \frac{\text{d} \tilde{E}_5(z)}{\text{d}z}, \quad \tilde{h}_5(z) = h_-(z) + \frac{\beta_2}{z \ln \frac{z}{e^{ip}} + itz}, 
\end{split}
\end{align}
with $h_+(z),h_-(z)$ given in (\ref{eqn:h_pm}). By (\ref{eqn:R2 asymptotics}) and the fact that $\tilde{E}_k$ is uniformly bounded for $p \in (\epsilon, \pi - \epsilon)$, $\omega(n)/n < t < t_0$, and $z \in \tilde{U}_k$, we see that 
\begin{align}
A_{n,p,t}(z_k) = O((nt)^{-1}) = O(\omega(n)^{-1})
\end{align}
uniformly in $p \in (\epsilon, \pi - \epsilon)$, $\omega(n)/n < t < t_0$, and thus also 
\begin{equation} \label{eqn:epsilon tilde 2}
\tilde{\epsilon}_{n,p,t} := 2 \sum_{k = 1,2,4,5} q_k \alpha_k z_k A_{n,p,t}(z_k) = O(\omega(n)^{-1}),
\end{equation}
uniformly in $0 < t < \omega(n)/n$ and $p \in (\epsilon, \pi - \epsilon)$. This implies that as $n \rightarrow \infty$
\begin{align} \label{eqn:error2}
\int_{\omega(n)/n}^{t_0} |\tilde{\epsilon}_{n,p,t}|\text{d}t = o(\omega(n)^{-1}), 
\end{align}
uniformly in $p \in (\epsilon, \pi - \epsilon)$. 

From (8.41) and (8.42) in \cite{ClaeysKrasovsky} we can see that for $z \rightarrow z_k$ inside the unit circle and outside of the lenses of $\Sigma_S$ we have 
\begin{align}
\begin{split}
\left( M_1^{-1} \frac{\text{d}M_1}{\text{d}z} \right) (z) =& - \left( \frac{\beta_1}{2\alpha_1} + \frac{\alpha_1}{n \ln \frac{z}{e^{ip}} + int} \right) \frac{n}{z} + o(1),\\
\left( M_2^{-1} \frac{\text{d}M_2}{\text{d}z} \right) (z) =& - \left( \frac{\beta_2}{2\alpha_2} + \frac{\alpha_2}{n \ln \frac{z}{e^{ip}} - int} \right) \frac{n}{z} + o(1),\\
\left( M_4^{-1} \frac{\text{d}M_4}{\text{d}z} \right) (z) =& - \left( \frac{-\beta_2}{2\alpha_2} + \frac{\alpha_2}{n \ln \frac{z}{e^{-ip}} + int} \right) \frac{n}{z} + o(1),\\
\left( M_5^{-1} \frac{\text{d}M_5}{\text{d}z} \right) (z) =& - \left( \frac{-\beta_1}{2\alpha_1} + \frac{\alpha_1}{n \ln \frac{z}{e^{-ip}} - int} \right) \frac{n}{z} + o(1),
\end{split}
\end{align}
and in the same limit,
\begin{align}
\begin{split}
\left( M_k \sigma_3 M \right)_{22}(z_k) =& \frac{\beta_k}{\alpha_k}.
\end{split}
\end{align}
Together with (\ref{eqn:hz}) and (\ref{eqn:hhtilde}) we obtain (again in the same limit)
\begin{align}
\begin{split}
&z_k \left( M^{-1}_k \frac{\text{d}M_k}{\text{d}z} \right)_{22}(z) + z_k\tilde{h}_k(z_k) \left( M_k^{-1} \sigma_3 M_k \right)_{22} (z_k) \\
=& \left( \frac{\beta_k}{2\alpha_k} + \frac{\alpha_k}{n \ln \frac{z}{z_k}} \right)n + \frac{\beta_k}{\alpha_k} \left(  -\frac{1}{2} \sum_{j = 1}^\infty j V_j (z_k^{j} + \overline{z_k}^j) - \sum_{j,j\neq k} \frac{\beta_jz_k}{z_k-z_j} - \frac{1}{2} \sum_{j=0}^5 \alpha_j + \frac{\beta_k}{2} \right) +o(1). \\
\end{split}
\end{align}
Combining this with (\ref{eqn:1/log}) and (\ref{eqn:PP2}) we get
\begin{align}
\begin{split}
&2\alpha_kz_k \left( \tilde{P}_k^{-1} \frac{\text{d}\tilde{P}_k}{\text{d}z} \right)_{+,22}(z_k) \\
=& - n(\alpha_k + \beta_k) + \alpha_k z_k V'(z_k) + 2 \alpha_k \sum_{j,j \neq k} \frac{\alpha_jz_k}{z_k - z_j} - \alpha_k \sum_{j,j\neq k} \alpha_j \\
& +2\beta_k \left(  -\frac{1}{2} \sum_{j = 1}^\infty j V_j (z_k^{j} + \overline{z_k}^j) - \sum_{j,j\neq k} \frac{\beta_jz_k}{z_k-z_j} - \frac{1}{2} \sum_{j=0}^5 \alpha_j + \frac{\beta_k}{2} \right).
\end{split}
\end{align}
Together with (\ref{eqn:transform of diff id}), (\ref{eqn:SS2}) and (\ref{eqn:epsilon tilde 2}) we obtain
\begin{align} \label{eqn:diff id final2}
\begin{split}
\frac{1}{i} \frac{\text{d}}{\text{d}t} \ln D_n(f_{p,t}) =& S(p,t;\alpha_0,\alpha_1,\beta_1,\alpha_2, \beta_2,\alpha_3) + \tilde{\epsilon}_{n,p,t},
\end{split}
\end{align}
where 
\begin{align}
\begin{split}
&S(p,t;\alpha_0,\alpha_1,\beta_1,\alpha_2, \beta_2,\alpha_3) \\
=& 2\sum_{k=1}^2 (-1)^{k+1} \left( (\alpha_k- \beta_k) \left(\sum_{j = 1}^\infty jV_j z_k^j\right) - (\alpha_k + \beta_k) \left( \sum_{j = 1}^\infty jV_{j} \overline{z_k}^j \right) \right) \\  
&- 2 (\beta_1 - \beta_2) \sum_{k = 0}^5 \alpha_k \\
&-2i(\alpha_1^2 + \beta_1^2) \frac{\cos (p-t)}{\sin (p-t)} + 2i(\alpha_2^2 + \beta_2^2) \frac{\cos(p+t)}{\sin(p+t)} + 4i(\alpha_1\alpha_2 - \beta_1 \beta_2) \frac{\cos t}{\sin t} \\
&- 2i\alpha_1 \alpha_0 \frac{\cos \frac{p-t}{2}}{\sin \frac{p-t}{2}} + 2i\alpha_1 \alpha_3 \frac{\sin \frac{p-t}{2}}{\cos \frac{p-t}{2}} + 2i\alpha_2 \alpha_0 \frac{\cos \frac{p+t}{2}}{\sin \frac{p+t}{2}} - 2i\alpha_2 \alpha_3 \frac{\sin \frac{p+t}{2}}{\cos \frac{p+t}{2}}.
\end{split}
\end{align}

Now we compare this expression to (\ref{eqn:diff_id_final}), obtained for $0 < t \leq \omega(n)/n$. Consider (\ref{eqn:diff_id_sum_hh}) for large $ s = -2int$ and without the error term. Substituting there the asymptotics of $\sigma(s)$ from Theorem \ref{thm:painleve} and using (\ref{eqn:sum_sin}) we see that
\begin{align}
\begin{split}
\frac{1}{i} \frac{\text{d}}{\text{d}t} \ln D_n(f_{p,t}) =& 2\alpha_1 z_1 V'(z_1) - 2\alpha_2 z_2 V'(z_2) \\
& -2i\alpha_1^2 \frac{\cos p-t}{\sin p-t} + 2i\alpha_2^2 \frac{ \cos p+t }{ \sin p+t } + 4i \alpha_1 \alpha_2 \frac{\cos t}{\sin t} \\
&- 2i\alpha_1 \alpha_0 \frac{\cos \frac{p-t}{2}}{\sin \frac{p-t}{2}} + 2i\alpha_1 \alpha_3 \frac{\sin \frac{p-t}{2}}{\cos \frac{p-t}{2}} + 2i\alpha_2 \alpha_0 \frac{\cos \frac{p+t}{2}}{\sin \frac{p+t}{2}} - 2i\alpha_2 \alpha_3 \frac{\sin \frac{p+t}{2}}{\cos \frac{p+t}{2}}  \\
& + \frac{1}{it}4 \beta_1 \beta_2 + 2\beta_1\left( h_+(z_1)z_1 + h_-(\overline{z_1}) \overline{z_1} \right) - 2\beta_2 \left( h_+(z_2)z_2 + h_-(\overline{z_2}) \overline{z_2} \right) \\
&+ \Theta_{n,p,t},
\end{split}
\end{align}
where $\Theta_{n,p,t}$ arises from the error term in the asymptotics of $\sigma(s)$, and becomes of order $\omega(n)^{-\delta}$ after integration w.r.t. $t$, i.e.
\begin{equation}
\left| \int_{\omega(n)/n}^t \Theta_{n,p,\tau} \text{d}\tau \right| = O(\omega(n)^{-\delta}),
\end{equation}
uniformly in $p \in (\epsilon, \pi - \epsilon)$ and $\omega(n)/n < t < t_0$. Using (\ref{eqn:hh}) now we see that  
\begin{align}
\begin{split}
n(\beta_1 - \beta_2) + d_1 + d_2 + d_3 = S + \Theta_{n,p,t}, \quad \omega(n)/n < t < t_0.
\end{split}
\end{align}
Thus when setting 
\begin{align}
\begin{split}
\epsilon_{n,p,t} =& \tilde{\epsilon}_{n,p,t} + \Theta_{n,p,t}, \quad \omega(n)/n < t < t_0,
\end{split}
\end{align}
we see that (\ref{eqn:diff_id_asymptotics}) remains valid also in the region $\omega(n)/n < t < t_0$, where the smallness of the error terms follows from (\ref{eqn:epsilon tilde 2}). \qed

\begin{remark} Integrating (\ref{eqn:diff id final2}) from $t$ to $t_0$ with $\omega(n)/n < t < t_0$ and using Theorem \ref{thm:T non-uniform} for the expansion of $D_n(f_{p,t_0})$, we get the same expansion for $D_{n}(f_{p,t})$ that Theorem \ref{thm:T non-uniform} gives. The error term is then $O(\omega(n)^{-1})$ and uniform for $p \in (\epsilon, \pi - \epsilon)$ and $\omega(n)/n < t < t_0$. Thus we have proven the statement on Toeplitz determinants in Theorem \ref{thm:T, T+H extended}.   
\end{remark}

\subsection{Integration of the Differential Identity}

We now integrate (\ref{eqn:diff_id_asymptotics}), where we use exactly the same approach as in Section 8.2 of \cite{ClaeysKrasovsky}. We obtain
\begin{align}
\begin{split}
&\int_0^t d_1(p,\tau;\alpha_0,\alpha_1,\beta_1,\alpha_2,\beta_2,\alpha_3) \text{d}\tau \\
=& +2i\alpha_1 \left( V(e^{i(p-t)}) - V(e^{ip}) \right) + 2i\alpha_2 \left( V(e^{i(p+t)}) - V(e^{ip}) \right) \\
&+ 2(\beta_1+\beta_2) \sum_{j = 1}^\infty V_j \left( \sin j(p+t) + \sin j(p-t) - 2 \sin jp\right) \\
& + i\left( 2\alpha_1^2 + \beta_1^2 + \beta_1\beta_2 \right) \ln \frac{\sin p-t}{\sin p} +i \left( 2\alpha_2^2 + \beta_2^2 + \beta_1\beta_2 \right) \ln \frac{\sin p+t}{\sin p} + it(\beta_1^2 - \beta_2^2) \frac{\cos p}{\sin p}  \\
& + 4i\alpha_1 \alpha_0 \ln \frac{\sin \frac{p-t}{2}}{\sin \frac{p}{2}} - 4i\alpha_1 \alpha_3 \ln \frac{\cos \frac{p-t}{2}}{\cos \frac{p}{2}} + 4i\alpha_2 \alpha_0 \ln \frac{ \sin \frac{p+t}{2} }{ \sin \frac{p}{2}} - 4i\alpha_2 \alpha_3 \ln \frac{ \cos \frac{p+t}{2} }{ \cos \frac{p}{2} },
\end{split}
\end{align}
and 
\begin{align}
\begin{split}
&\int_0^t d_2(p,\tau;\alpha_0,\alpha_1,\beta_1,\alpha_2,\beta_2,\alpha_3) \text{d}\tau \\
=&- 2i\int_0^{-2int} \frac{1}{s} \left( \sigma(s) - 2\alpha_1\alpha_2 + \frac{1}{2}(\beta_1+\beta_2)^2 \right) \text{d}s + i\left( (\beta_1+\beta_2)^2 - 4\alpha_1\alpha_2 \right) \ln \frac{\sin t}{t},
\end{split}
\end{align}
and 
\begin{align}
\begin{split}
&\int_0^t d_3(p,\tau;\alpha_0,\alpha_1,\beta_1,\alpha_2,\beta_2,\alpha_3) \text{d}\tau \\
=& (\beta_1 - \beta_2) \Bigg[ 2\sum_{j = 1}^\infty V_j \left( \sin j(p-t) - \sin j(p+t) \right) - 2t \sum_{j=0}^5 \alpha_j \\
& + i\beta_1 \ln \frac{\sin p-t}{\sin p} - i\beta_2 \ln \frac{\sin p+t}{\sin p} + i(\beta_1 - \beta_2) \ln  \frac{\sin t}{t} - it(\beta_1 + \beta_2) \frac{\cos p}{\sin p} \Bigg]. 
\end{split}
\end{align}
Putting things together we see that 
\begin{align}
\begin{split} 
\ln D_n(f_{p,t}) =& \ln D_n(f_{p,0}) + 2int (\beta_1 - \beta_2) \\
&- 2\alpha_1 \left( V(e^{i(p-t)}) - V(e^{ip}) \right) - 2\alpha_2 \left( V(e^{i(p+t)}) - V(e^{ip}) \right) \\
&+ 4i \sum_{j = 1}^\infty V_j \left( \beta_ 1\sin j(p-t) + \beta_2 \sin j(p+t) - (\beta_1+\beta_2) \sin jp\right) \\
&- 2\left( \alpha_1^2 + \beta_1^2 \right) \ln \frac{\sin p-t}{\sin p} - 2\left( \alpha_2^2 + \beta_2^2 \right) \ln \frac{\sin p+t}{\sin p} \\
& - 4\alpha_1 \alpha_0 \ln \frac{\sin \frac{p-t}{2}}{\sin \frac{p}{2}} - 4\alpha_1 \alpha_3 \ln \frac{\cos \frac{p-t}{2}}{\cos \frac{p}{2}} - 4\alpha_2 \alpha_0 \ln \frac{ \sin \frac{p+t}{2} }{ \sin \frac{p}{2}} - 4i\alpha_2 \alpha_3 \ln \frac{ \cos \frac{p+t}{2} }{ \cos \frac{p}{2} }\\
&- 2i\int_0^{-2int} \frac{1}{s} \left( \sigma(s) - 2\alpha_1\alpha_2 + \frac{1}{2}(\beta_1+\beta_2)^2 \right) \text{d}s + 4\left( \beta_1 \beta_2 - \alpha_1\alpha_2 \right) \ln \frac{\sin t}{t}\\ 
&- 2it(\beta_1 - \beta_2) \sum_{j = 0}^5 \alpha_j \\
&+ o(1),
\end{split}
\end{align}
uniformly in $p \in (\epsilon, \pi - \epsilon)$ and $0 < t < t_0$. To calculate the asymptotics of $D_n(f_{p,0})$ we use Theorem \ref{thm:T non-uniform} and get
\begin{align}
\begin{split}
\ln D_n(f_{p,0}) =& n V_0 + \sum_{k = 1}^\infty kV_k^2 + \ln n \left( \alpha_0^2 + \alpha_3^2 + 2(\alpha_1 + \alpha_2)^2 - 2(\beta_1 + \beta_2)^2 \right) \\
&- \alpha_0 \ln \left( b_+(1)b_-(1) \right) - \alpha_3 \ln \left( b_+(-1)b_-(-1) \right) \\
&- 2(\alpha_1 + \alpha_2 - \beta_1 - \beta_2) \ln b_+(e^{ip}) - 2(\alpha_1 + \alpha_2 + \beta_1 + \beta_2) \ln b_+(e^{-ip}) \\
&- 2\alpha_0\alpha_3 \ln 2 - 4\alpha_0(\alpha_1+\alpha_2) \ln 2 \sin \frac{p}{2} - 4\alpha_3(\alpha_1+\alpha_2) \ln 2 \cos \frac{p}{2} \\
&- 2((\beta_1+\beta_2)^2 + (\alpha_1+\alpha_2)^2) \ln 2\sin p  \\
&+ 2i\alpha_0(\beta_1+\beta_2)(p - \pi) + 2i\alpha_3(\beta_1+\beta_2) p  \\
&+ 2i(\alpha_1+\alpha_2)(\beta_1+\beta_2)(2p - \pi) \\
&+ \ln \frac{G(1+\alpha_0)^2}{G(1+2\alpha_0)} + \ln \frac{G(1+\alpha_3)^2}{G(1+2\alpha_3)} \\
&+ \ln \frac{G(1+\alpha_1+\alpha_2+\beta_1+\beta_2)^2 G(1+\alpha_1+\alpha_2-\beta_1-\beta_2)^2}{G(1+2\alpha_1+2\alpha_2)^2}\\
&+ o(1),
\end{split}
\end{align}
uniformly in $p \in (\epsilon, \pi - \epsilon)$. Combining the 2 last equations we obtain
\begin{align} \label{eqn:ln_D_sin}
\begin{split}
\ln D_n(f_{p,t}) =& 2int (\beta_1 - \beta_2) + n V_0 + \sum_{k = 1}^\infty kV_k^2 + \ln n \sum_{j = 0}^5 (\alpha_j^2 - \beta_j^2)  \\
&- \sum_{j=0}^5 (\alpha_j - \beta_j) \left(\sum_{k = 1}^\infty V_k z_j^k\right) + (\alpha_j + \beta_j) \left( \sum_{k = 1}^\infty V_k \overline{z_j}^k \right) \\
&- 2\left( \alpha_1^2 + \beta_1^2 \right) \ln 2\sin (p-t) - 2\left( \alpha_2^2 + \beta_2^2 \right) \ln 2\sin (p+t) \\
&- 4(\alpha_1\alpha_2 + \beta_1\beta_2) \ln 2\sin p - 2\alpha_0\alpha_3 \ln 2\\
& - 4\alpha_1 \alpha_0 \ln 2\sin \frac{p-t}{2} - 4\alpha_1 \alpha_3 \ln 2\cos \frac{p-t}{2} - 4\alpha_2 \alpha_0 \ln 2\sin \frac{p+t}{2} - 4\alpha_2 \alpha_3 \ln 2\cos \frac{p+t}{2} \\
&+ 2\int_0^{-2int} \frac{1}{s} \left( \sigma(s) - 2\alpha_1\alpha_2 + \frac{1}{2}(\beta_1+\beta_2)^2 \right) \text{d}s + 4\left( \beta_1 \beta_2 - \alpha_1\alpha_2 \right) \ln \frac{\sin t}{nt}\\ 
&+ 2i\alpha_0\beta_1(p-t-\pi) + 2i\alpha_0\beta_2(p+t-\pi) + 2i\alpha_3\beta_1 (p-t) + 2i\alpha_3\beta_2(p+t) \\
&+ 2i\alpha_1\beta_1(2p-2t-\pi) + 2i\alpha_2\beta_1(2p-2t-\pi) + 2i\alpha_1\beta_2 (2p+2t-\pi) + 2i\alpha_2\beta_2(2p+2t-\pi) \\
&+ \ln \frac{G(1+\alpha_0)^2}{G(1+2\alpha_0)} + \ln \frac{G(1+\alpha_3)^2}{G(1+2\alpha_3)} \\
&+ \ln \frac{G(1+\alpha_1+\alpha_2+\beta_1+\beta_2)^2 G(1+\alpha_1+\alpha_2-\beta_1-\beta_2)^2}{G(1+2\alpha_1+2\alpha_2)^2}\\
&+ o(1),
\end{split}
\end{align}
uniformly in $p \in (\epsilon, \pi - \epsilon)$ and $0 < t < t_0$. We note that
\begin{align}
\begin{split}
&\sum_{0 \leq j < k \leq 5, (j,k) \neq (1,2),(4,5)} (\alpha_j\beta_k - \alpha_k \beta_j) \ln \frac{z_k}{z_j e^{i\pi}}  \\
=& 2i\alpha_0\beta_1 (p-t-\pi) + 2i\alpha_0\beta_2(p+t-\pi) + 2i\alpha_3\beta_1(p-t) + 2i\alpha_3\beta_2(p+t) \\
&+ 2i(\alpha_2\beta_1 + \alpha_1\beta_2)(2p - \pi) + 2i\alpha_1\beta_1(2p - 2t -\pi) + 2i\alpha_2\beta_2(2p + 2t -\pi).
\end{split}
\end{align}
Thus we finally get that
\begin{align} \label{eqn:ln_D_uniform}
\begin{split}
\ln D_n(f_{p,t}) =& 2int(\beta_1 - \beta_2) + n V_0 + \sum_{k = 1}^\infty kV_k^2 + \ln n \sum_{j = 0}^5 (\alpha_j^2 - \beta_j^2) \\
&- \sum_{j=0}^5 (\alpha_j - \beta_j) \left(\sum_{k = 1}^\infty V_k z_j^k\right) + (\alpha_j + \beta_j) \left( \sum_{k = 1}^\infty V_k \overline{z_j}^k \right) \\
&+ \sum_{0 \leq j < k \leq 5, (j,k) \neq (1,2),(4,5)} 2(\beta_j\beta_k - \alpha_j\alpha_k) \ln |z_j - z_k| + (\alpha_j\beta_k - \alpha_k \beta_j) \ln \frac{z_k}{z_j e^{i\pi}} \\
&+4it(\alpha_1\beta_2 - \alpha_2\beta_1) \\
&+ 2\int_0^{-2int} \frac{1}{s} \left( \sigma(s) - 2\alpha_1\alpha_2 + \frac{1}{2}(\beta_1+\beta_2)^2 \right) \text{d}s + 4\left( \beta_1 \beta_2 - \alpha_1\alpha_2 \right) \ln \frac{\sin t}{nt}\\ 
&+ \ln \frac{G(1+\alpha_0)^2}{G(1+2\alpha_0)} + \ln \frac{G(1+\alpha_3)^2}{G(1+2\alpha_3)} \\
&+ \ln \frac{G(1+\alpha_1+\alpha_2+\beta_1+\beta_2)^2 G(1+\alpha_1+\alpha_2-\beta_1-\beta_2)^2}{G(1+2\alpha_1+2\alpha_2)^2}\\
&+ o(1),
\end{split}
\end{align}
uniformly in $p \in (\epsilon, \pi - \epsilon)$ and $0 < t < t_0$, which proves Theorem \ref{thm:T uniform}.

\section{Asymptotics of $\Phi_n(0)$, $\Phi_n(\pm 1)$ and $D_n^{T+H,\kappa}(f_{p,t})$} \label{section:T+H}

Tracing back the transformations of Section \ref{section:asymptotics of polynomials} we get  
\begin{align}
\begin{split}
\Phi_n(0) &= Y^{(n)}_{11}(0) = T(0)_{11} = S(0)_{11} = \left( R(0)N(0) \right)_{11} = - R_{12}(0) D_{\text{in},p,t}(0)^{-1} \\
&= - R_{12}(0) e^{-V_0} = o(1),
\end{split}
\end{align}
uniformly in $p \in (\epsilon, \pi - \epsilon)$ and $0 < t < t_0$. \\

The asymptotics of $\Phi_n(1)$ and $\Phi_n(-1)$ can be calculated exactly as in Chapter 7 of \cite{DeiftItsKrasovsky}. As is apparent from (7.13) there, these asymptotics are uniform for all other singularities bounded away from $\pm 1$. Thus when using Lemma \ref{lem:connection between T and T+H} to calculate from the asymptotics of $D_n(f_{p,t})$ given in Theorem \ref{thm:T, T+H extended}, $\Phi_n(\pm 1)$ and $\Phi_n(0)$, the asymptotics of the corresponding Toeplitz+Hankel determinants, then uniformity of the error terms in $p,t$ is preserved. This proves the statement on Toeplitz+Hankel determinants in Theorem \ref{thm:T, T+H extended}. 

Further, since we know the relation (\ref{eqn:relation}) between the asymptotic expansion of $D_n(f_{p,t})$ which is uniform for $0 < t < t_0$, and the asymptotic expansion for $t > t_0$, Lemma \ref{lem:connection between T and T+H} immediately gives the relationship between the expansions of $D_n^{T+H,\kappa}(f_{p,t})$ in the two regimes $0 < t < t_0$ and $t > t_0$. In view of the way the uniform asymptotics of $D_n^{T+H,\kappa}(f_{p,t})$ are derived from the uniform asymptotics of $D_{2n}(f_{p,t})^{1/2}$, $\Phi_n(\pm 1)^{1/2}$ and $\Phi_n(0)$, the relationship between the $0 < t < t_0$ asymptotics of $D_n^{T+H,\kappa}(f_{p,t})$ and the $t > t_0$ asymptotics one gets from Theorem \ref{thm:T+H uniform} is given by (\ref{eqn:relation}), with both sides divided by $2$, and $n$ replaced by $2n$. This proves Theorem \ref{thm:T+H uniform}.

\section*{Acknowledgements}
Our work was supported by ERC Advanced Grant 740900 (LogCorRM).  We are most grateful to Theo Assiotis for his kind permission to state Theorem \ref{thm:Gaussian field} here and to set out its proof in Appendix \ref{appendix:Gaussian fields}, as well as for many extremely helpful discussions.  We are also most grateful to Tom Claeys, Gabriel Glesner, Alexander Minakov and Meng Yang for having kindly shared with us their work in progress and for having communicated to us one of their results from \cite{Claeys et al}, prior to posting it on the arXiv, which we quote in Theorem \ref{thm:T+H Claeys}. We make use of this result to prove our Theorem \ref{thm:main2}. Further we thank Mo Dick Wong for helpful comments and suggestions. Finally we thank two anonymous referees for their careful reading, helpful remarks and suggestions.

\begin{appendix}
\section{Convergence to the Gaussian Fields} \label{appendix:Gaussian fields}
In this appendix we set out the proof of Theorem \ref{thm:Gaussian field}. As stated in the introduction, this theorem was established by Assiotis \& Keating and we are most grateful to Dr Assiotis for permitting us to use it and to give its proof here. \\

We first need the following definition:
\begin{definition}[The Sobolev spaces $H^{-\epsilon}_0$] For $s \in \mathbb{R}$, consider the space of formal Fourier series 
\begin{equation}
H^{s} = \left\{ f \sim \sum_{k \in \mathbb{Z}} f_k e^{ik\theta} : \sum_{k \in \mathbb{Z}} (1+k^2)^s |f_k|^2 < \infty \right\}
\end{equation}
with inner product
\begin{equation}
\langle f, g \rangle_s = \sum_{k \in \mathbb{Z}} (1+k^2)^s f_kg_k^*.
\end{equation}
The closed subspace $\left\{ f \in H^s: f_0 = 0 \right\}$ will be denoted by $H^s_0$.
\end{definition}
\begin{remark}
$\left(H^s, \langle \cdot , \cdot \rangle_s \right)$ is a Hilbert space for all $s \in \mathbb{R}$. For $s \geq 0$ $H^s$ is a subspace of $H^0$, i.e. the space of square-integrable functions on the unit circle. For $s < 0$, $H^s$ can be interpreted as the dual space of $H^{-s}$, and as a space of generalized functions. 
\end{remark}
\noindent \textbf{Proof of Theorem \ref{thm:Gaussian field}:} The proof strategy is as follows: we treat $\left( \Re \ln p_n, \Im \ln p_n \right)_{n \in \mathbb{N}}$ as a sequence in $H^{-\epsilon}_0 \times H^{-\epsilon}_0$ and show that if any of its subsequences has a limit then that limit has to be $( X \pm x, \hat{X} \pm \hat{x} )$, with $+$ when the underlying matrices are symplectic and $-$ when they are orthogonal. We do this by showing that the finite-dimensional distributions of $\left( \Re \ln p_n, \Im \ln p_n \right)_{n \in \mathbb{N}}$, i.e. the distributions of finite sets of pairs of Fourier coefficients, converge to those of $(X \pm x, \hat{X} \pm \hat{x})$. We then show that the set $\left\{ \Re \ln p_n, \Im \ln p_n \right\}_{n \in \mathbb{N}}$ is tight in $H^{-\epsilon}_0 \times H^{-\epsilon}_0$. Since $H^{-\epsilon}_0 \times H^{-\epsilon}_0$ is complete and seperable, Prokhorov's theorem implies that the closure of $\left\{ \Re \ln p_n, \Im \ln p_n \right\}_{n \in \mathbb{N}}$ is sequentially compact w.r.t. the topology of weak convergence. In particular this means that every subsequence of $\left( \Re \ln p_n, \Im \ln p_n \right)_{n \in \mathbb{N}}$ has a weak limit in $H^{-\epsilon}_0 \times H^{-\epsilon}_0$. Since any such limit has to be $( X \pm x, \hat{X} \pm \hat{x} )$ it follows that the whole sequence $\left( \Re \ln p_n, \Im \ln p_n \right)_{n \in \mathbb{N}}$ must converge weakly to $( X \pm x, \hat{X} \pm \hat{x} )$.\\

We recall that 
\begin{equation}
\ln (1 - z) = - \sum_{k = 1}^\infty \frac{z^k}{k}
\end{equation}
for $|z| \leq 1$, where for $z = 1$ both sides equal $-\infty$. By using the identity $\ln \det = \text{Tr} \ln$ we see that the Fourier expansions of $\ln p_n$, $\Re \ln p_n$, $\Im \ln p_n$ in $H_0^{-\epsilon}$ are given as follows: 
\begin{align}
\begin{split}
\ln p_n(\theta) =& - \sum_{k = 1}^{\infty} \frac{\text{Tr} (U_n^k)}{k} e^{-ik\theta},\\
\Re \ln p_n(\theta) =& - \frac{1}{2} \sum_{k = 1}^\infty \frac{\text{Tr} (U_n^k)}{k} (e^{ik\theta} + e^{-ik\theta}),\\
\Im \ln p_n(\theta) =& \frac{1}{2i} \sum_{k = 1}^\infty \frac{\text{Tr} (U_n^k)}{k} (e^{ik\theta} - e^{-ik\theta}).
\end{split}
\end{align}
The convergence of the finite-dimensional distributions now follows immediately from Theorem \ref{thm:traces}: we have for any $l \in \mathbb{N}$, as $n \rightarrow \infty$:
\begin{align}
\begin{split}
&\Big( (\Re \ln p_n)_{-l}, (\Im \ln p_n)_{-l}, (\Re \ln p_n)_{-(l-1)}, (\Im \ln p_n)_{-(l-1)}, ..., (\Re \ln p_n)_{l}, (\Im \ln p_n)_{l} \Big) \\
\xrightarrow{d}  &\Big( (X \pm x)_{-l}, (\hat{X} \pm \hat{x})_{-l}, (X \pm x)_{-(l-1)}, (\hat{X} + \hat{x})_{-(l-1)}, ..., (X \pm x)_{l}, (\hat{X} + \hat{x})_{l} \Big).   
\end{split}
\end{align}  
We proceed to show tightness of $\left( \Re \ln p_n, \Im \ln p_n \right)_{n \in \mathbb{N}}$ in $H^{-\epsilon}_0 \times H^{-\epsilon}_0$, i.e. for every $\delta > 0$ we construct a compact $K_\delta \subset H^{-\epsilon}_0 \times H^{-\epsilon}_0$ for which
\begin{equation} \label{eqn:tightness}
\sup_{n \in \mathbb{N}} \mathbb{P} \left( \left( \Re \ln p_n, \Im \ln p_n \right) \notin K_\delta \right) < \delta.
\end{equation}
By the Rellich-Kondrachov theorem we have that for any $s \in (-\epsilon, 0 )$ the closed ball $\overline{B(0,R)}_{H^s}$ of radius $R>0$ in $H^{s}$ is compact in $H^{-\epsilon}$, which, since $\overline{B(0,R)}_{H^s_0} = \overline{B(0,R)}_{H^s} \cap H_0^{-\epsilon}$, also implies that $\overline{B(0,R)}_{H^s_0}$ is compact in $H_0^{-\epsilon}$. Thus when fixing $s$, choosing $K_\delta = \overline{B(0,R(\delta))}_{H^s_0} \times \overline{B(0,R(\delta))}_{H^s_0}$, and using Chebyshev's inequality, we get 
\begin{align}
\begin{split}
\sup_{n \in \mathbb{N}} \mathbb{P} \left( \left( \Re \ln p_n, \Im \ln p_n \right) \notin K_\delta \right) =& \sup_{n \in \mathbb{N}} \mathbb{P} \left( \max \left\{ ||\Re \ln p_n||_{H^s_0}, ||\Im \ln p_n ||_{H^s_0} \right\} > R(\delta) \right) \\
\leq & \frac{1}{R(\delta)^2} \sup_{n \in \mathbb{N}} \mathbb{E} \left( \max \left\{ ||\Re \ln p_n||^2_{H^s_0}, ||\Im \ln p_n ||^2_{H^s_0} \right\} \right).
\end{split}
\end{align}  
We have 
\begin{align}
\begin{split}
\sup_{n \in \mathbb{N}} \mathbb{E} \left( || \Re \ln p_n||_{H^s_0}^2 \right) =& \sup_{n \in \mathbb{N}} \mathbb{E} \left( || \Im \ln p_n||_{H^s_0}^2 \right)\\
=& \sup_{n \in \mathbb{N}} \mathbb{E} \left( \frac{1}{2} \sum_{k=1}^\infty (1+k^2)^s \frac{ \left| \text{Tr} (U_n^k) \right|^2}{k^2} \right) \\
=& \sup_{n \in \mathbb{N}} \frac{1}{2} \sum_{k = 1}^\infty (1+k^2)^s \frac{\mathbb{E} \left( \left| \text{Tr} (U_n^k) \right|^2 \right)}{k^2} \\
\leq & \text{const} \sup_{n \in \mathbb{N}} \frac{1}{2} \sum_{k = 1}^\infty (1+k^2)^s \frac{\min \{k,n\}}{k^2} \\
\leq & \text{const} \frac{1}{2} \sum_{k = 1}^\infty (1 + k^2)^s \frac{1}{k} < \infty,
\end{split}
\end{align}
where in the first inequality we used (\ref{eqn:trace bound}). Thus choosing $R(\delta)$ big enough we get (\ref{eqn:tightness}). This finishes the proof. \qed

\section{Construction of the Gaussian Multiplicative Chaos Measure} \label{appendix:GMC}
In this section we "exponentiate" the field $Y_{\alpha,\beta}$ in (\ref{eqn:Y}) to obtain the non-trivial Gaussian multiplicative chaos measure $\mu_{\alpha,\beta}$. We follow the approach of Kahane in \cite{Kahane}, with the only difference being that in our case the Gaussian field's covariance function (\ref{eqn:Cov Y})  has singularities not just on the diagonal, but also on the antidiagonal. \\

For $k \in \mathbb{N}$ define the truncated Gaussian fields 
\begin{align}
\begin{split}
Y_{\alpha,\beta}^{(k)} (\theta) &= \sum_{j=1}^k \frac{\mathcal{N}_j}{\sqrt{j}} \left( 2\alpha \cos(j\theta) - 2i\beta \sin(j\theta) \right), 
\end{split}
\end{align}
whose covariance functions are given by 
\begin{align}\label{eqn:Cov finite}
\begin{split}
&\text{Cov}(Y_{\alpha,\beta}^{(k)} (\theta), Y_{\alpha,\beta}^{(k)}(\theta') ) \\
=& \sum_{j=1}^k \frac{1}{j} \left( 2(\alpha^2 - \beta^2) \cos(j(\theta-\theta')) + 2(\alpha^2 + \beta^2) \cos(j(\theta+\theta')) - 4i \alpha \beta \sin(j(\theta+\theta')) \right).
\end{split}
\end{align}
Define the measures $\mu_{\alpha,\beta}^{(k)}$ on $S^1 \sim [0,2\pi)$ by 
\begin{align}
\begin{split}
\mu_{\alpha,\beta}^{(k)}(d\theta) &= \frac{e^{Y_{\alpha,\beta}^{(k)}(\theta)}}{\mathbb{E}\left(e^{Y_{\alpha,\beta}^{(k)}(\theta)}\right)} \text{d}\theta = e^{Y_{\alpha,\beta}^{(k)}(\theta) - \frac{1}{2}\text{Var}(Y^{(k)}_{\alpha,\beta}(\theta))} \text{d}\theta. 
\end{split}
\end{align}
For any measurable $A \subset [0,2\pi)$ and $k \in \mathbb{N}$ it holds by Fubini that 
\begin{align} \label{eqn:expectation of mu k}
\mathbb{E}\left( \mu_{\alpha,\beta}^{(k)}(A) \right) = \int_0^{2\pi} 1_A \text{d}\theta,
\end{align}
and 
\begin{align}
\mathbb{E}\left( \mu_{\alpha,\beta}^{(k)}(A) | \sigma(\mathcal{N}_1,...,\mathcal{N}_{k-1}) \right) = \mu_{\alpha,\beta}^{(k-1)}(A),
\end{align} 
where $\sigma \left( \mathcal{N}_1,...,\mathcal{N}_{k-1} \right)$ is the $\sigma$-algebra generated by $\mathcal{N}_1,...,\mathcal{N}_{k-1}$. Thus, since also for any $k \in \mathbb{N}$ the random variable $\mu_{\alpha,\beta}^{(k)}(A)$ is measurable w.r.t. $\sigma \left( \mathcal{N}_1,...,\mathcal{N}_{k} \right)$, the sequence $\left(\mu_{\alpha,\beta}^{(k)}(A) \right)_{k \in \mathbb{N}}$ is a martingale. Since the sequence is non-negative it converges a.s. to a random variable which will be denoted by $\mu_{\alpha,\beta}(A)$. One can show that a.s. the map $A \mapsto \mu_{\alpha,\beta}(A)$ is a measure and we have that a.s. $\mu^{(k)}_{\alpha,\beta} \xrightarrow{d} \mu_{\alpha,\beta}$ in the space of Radon measures on $S^1 \sim [0,2\pi)$, equipped with the topology of weak convergence.  \\

Recall that $I$ denotes either $I_\epsilon = (\epsilon, \pi - \epsilon) \cup (\pi + \epsilon, 2\pi - \epsilon)$ or $[0,2\pi)$. For any measurable $A \subset I$ the martingale $\left( \mu_{\alpha,\beta}^{(k)}(A)\right)_{k \in \mathbb{N}}$ is bounded in $L^2$ (and thus uniformly integrable), since by (\ref{eqn:Cov finite}), (\ref{eqn:bounded in L2 - 1}) and (\ref{eqn:bounded in L2 - 2}) we see that 
\begin{align}
\begin{split}
&\limsup_{k \rightarrow \infty} \mathbb{E}\left( \mu_{\alpha,\beta}^{(k)}(A)^2 \right) = \limsup_{k \rightarrow \infty} \int_{A} \int_{A} e^{\text{Cov}(Y_{\alpha,\beta}^{(k)}(\theta), Y_{\alpha,\beta}^{(k)}(\theta'))} \text{d}\theta \text{d}\theta' \\
\leq & \int_{A} \int_{A} |e^{i\theta} - e^{i\theta'}|^{-2(\alpha^2-\beta^2)} |e^{i\theta} - e^{-i\theta'}|^{-2(\alpha^2+\beta^2)} e^{4i\alpha \beta \Im \ln (1-e^{i(\theta+\theta')})} \text{d}\theta \text{d}\theta' < \infty, 
\end{split}
\end{align}
for $\alpha > - 1/2$ and $\alpha^2 - \beta^2 < 1/2$ in the case $I = I_\epsilon$, and $\alpha \geq 0$, $\alpha^2 - \beta^2 < 1/2$ and $4\alpha^2 < 1$ in the case $I = [0,2\pi)$. Thus, by (\ref{eqn:expectation of mu k}), we get that
\begin{align}
\mathbb{E}\left( \mu_{\alpha,\beta}(A) \right) = \lim_{k \rightarrow \infty}\mathbb{E}\left( \mu_{\alpha,\beta}^{(k)}(A) \right) = \int_0^{2\pi} 1_A \text{d}\theta,
\end{align}
which implies that the event $\{\mu_{\alpha,\beta} \text{ is the } 0 \text{ measure} \}$ does not have probability $1$. Since that event is independent of any finite number of the $\mathcal{N}_{k}$, $k \in \mathbb{N}$, Kolmogorov's zero-one law implies that this event then has probability $0$. Thus $\mu_{\alpha,\beta}$ is a.s. non-trivial for $\alpha \geq 0$, $\alpha^2 - \beta^2 < 1/2$ and $4\alpha^2 < 1$, while when restricted to $I_\epsilon$ the measure $\mu_{\alpha,\beta}$ is almost surely non-trivial for $\alpha > - 1/4$ and $\alpha^2 - \beta^2 < 1/2$. In both cases though, one can expect $\mu_{\alpha,\beta}$ to be a.s. non-trivial for a larger set of values $\alpha,\beta$.

\section{Riemann-Hilbert Problem for $\Psi$} \label{appendix:Psi}
This appendix is a mostly verbatim transfer from the beginning of Section 3 of \cite{ClaeysKrasovsky}. We include it here to make our account self-contained.  We use $\Psi$ to construct local parametrices for the RHP for the orthogonal polynomials in Section \ref{section:local 1}. We always assume that $\alpha_1,\alpha_2 > -\frac{1}{2}$ and $\beta_1,\beta_2 \in i \mathbb{R}$ (in \cite{ClaeysKrasovsky} also the more general case of $\alpha_1,\alpha_2,\beta_1,\beta_2 \in \mathbb{C}$ was considered). \\

\noindent \textbf{RH Problem for $\Psi$}
\begin{enumerate}[label=(\alph*)]
\item $\Psi:\mathbb{C} \setminus \Gamma \rightarrow \mathbb{C}^{2\times 2}$ is analytic, where 
\begin{align*}
\Gamma =& \cup_{k = 1}^7 \Gamma_k,  &&\Gamma_1 = i + e^{\frac{i\pi}{4}} \mathbb{R}_+, &&\Gamma_2 = i + e^{\frac{3i\pi}{4}\mathbb{R}_+} \\
\Gamma_3 =& -i + e^{\frac{5i\pi}{4}} \mathbb{R}_+, &&\Gamma_4 = -i + e^{\frac{7i\pi}4} \mathbb{R}_+, &&\Gamma_5 = -i + \mathbb{R}_+, \\
\Gamma_6 =& i + \mathbb{R}_+, &&\Gamma_7 = [-i,i], 
\end{align*}
with the orientation chosen as in Figure \ref{figure:Psi} ("-" is always on the RHS of the contour).
\item $\Psi$ satisfies the jump conditions
\begin{equation}
\Psi(\zeta)_+ = \Psi(\zeta)_- J_k, \quad \zeta \in \Gamma_k,
\end{equation}
where 
\begin{align}
J_1 =& \left( \begin{array}{cc} 1 & e^{2\pi i(\alpha_2 - \beta_2)} \\ 0 & 1 \end{array} \right), && J_2 = \left( \begin{array}{cc} 1 & 0 \\ - e^{-2\pi i(\alpha_2 - \beta_2)} & 1 \end{array} \right), \\
J_3 =& \left( \begin{array}{cc} 1 & 0 \\ -e^{2\pi i (\alpha_1 - \beta_1)} & 1 \end{array} \right), && J_4 = \left( \begin{array}{cc} 1 & e^{-2\pi i(\alpha_1 - \beta_1)} \\ 0 & 1 \end{array} \right) \\
J_5 =& e^{2\pi i \beta_1 \sigma_3}, && J_6 = e^{2\pi i \beta_2 \sigma_3}, \\
J_7 =& \left( \begin{array}{cc} 0 & 1 \\ -1 & 1 \end{array} \right).
\end{align}

\item We have in all regions:
\begin{equation} \label{eqn:Psi asymptotics}
\Psi(\zeta) = \left( I + \frac{\Psi_1}{\zeta} + \frac{\Psi_2}{\zeta^2} + O(\zeta^{-3}) \right) P^{(\infty)} (\zeta) e^{-\frac{is}{4} \zeta \sigma_3}, \quad \text{as } \zeta \rightarrow \infty,
\end{equation}
where 
\begin{equation}
P^{(\infty)}(\zeta) = \left( \frac{is}{2} \right)^{-(\beta_1 + \beta_2)\sigma_3} (\zeta - i)^{-\beta_2 \sigma_3} (\zeta + i)^{-\beta_1 \sigma_3},
\end{equation}
with the branches corresponding to the arguments between $0$ and $2\pi$, and where $s \in -i\mathbb{R}_+$.

\item The functions $F_1$ and $F_2$ defined in (\ref{eqn:F_1 neq 0}), (\ref{eqn:F_2 neq 0}), (\ref{eqn:F_1 0}) and (\ref{eqn:F_2 0}) below are analytic functions of $\zeta$ at $i$ and $-i$ respectively.
\end{enumerate}

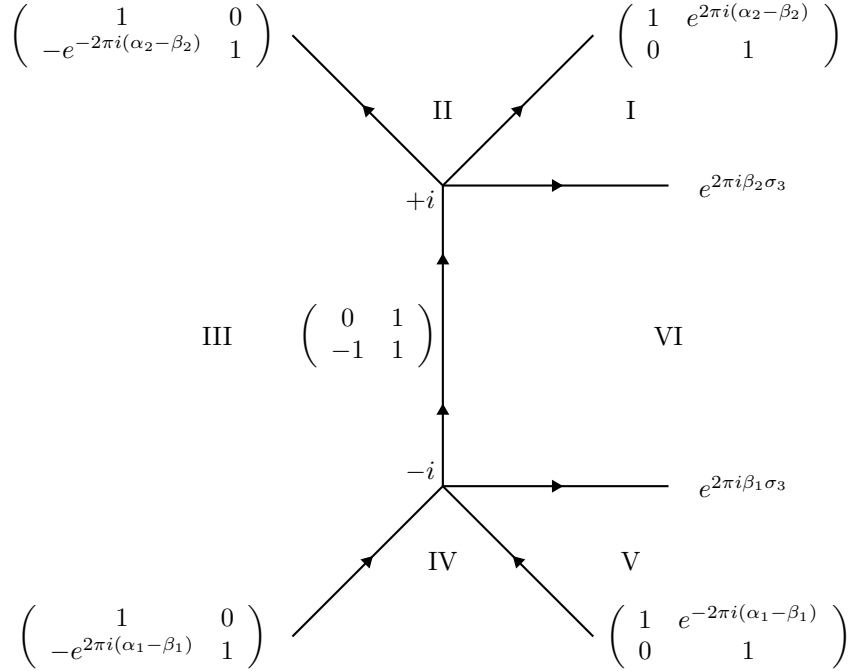
\begin{figure} [H]
\centering
\begin{tikzpicture}
\draw [thick] (0,-2) -- (0,2);
\draw [thick] (0,-2) -- (3,-2);
\draw [thick] (0,2) -- (3,2);
\draw [thick] (0,-2) -- (2,-4);
\draw [thick] (0,-2) -- (-2,-4);
\draw [thick] (0,2) -- (2,4);
\draw [thick] (0,2) -- (-2,4);

\fill (-0.3,1.8) node[] {$+i$};
\fill (-0.3,-1.8) node[] {$-i$};
\fill (-1,0) node[] {$\left( \begin{array}{cc} 0 & 1 \\ -1 & 1 \end{array} \right)$};
\fill (3.8,4) node[] {$\left( \begin{array}{cc} 1 & e^{2\pi i(\alpha_2 - \beta_2)} \\ 0 & 1 \end{array} \right)$};
\fill (-4,4) node[] {$\left( \begin{array}{cc} 1 & 0 \\ -e^{-2\pi i(\alpha_2 - \beta_2)} & 1 \end{array} \right)$};
\fill (4,2) node[] {$e^{2\pi i\beta_2 \sigma_3}$};
\fill (4,-2) node[] {$e^{2\pi i\beta_1 \sigma_3}$};
\fill (3.8,-4) node[] {$\left( \begin{array}{cc} 1 & e^{-2\pi i(\alpha_1 - \beta_1)} \\ 0 & 1 \end{array} \right)$};
\fill (-4,-4) node[] {$\left( \begin{array}{cc} 1 & 0 \\ -e^{2\pi i(\alpha_1 - \beta_1)} & 1 \end{array} \right)$};

\node[fill=black,regular polygon, regular polygon sides=3,inner sep=1.pt, shape border rotate = -45] at (1,3) {};
\node[fill=black,regular polygon, regular polygon sides=3,inner sep=1.pt, shape border rotate = 45] at (-1,3) {};
\node[fill=black,regular polygon, regular polygon sides=3,inner sep=1.pt, shape border rotate = 0] at (0,1) {};
\node[fill=black,regular polygon, regular polygon sides=3,inner sep=1.pt, shape border rotate = 0] at (0,-1) {};
\node[fill=black,regular polygon, regular polygon sides=3,inner sep=1.pt, shape border rotate = 45] at (1,-3) {};
\node[fill=black,regular polygon, regular polygon sides=3,inner sep=1.pt, shape border rotate = -45] at (-1,-3) {};
\node[fill=black,regular polygon, regular polygon sides=3,inner sep=1.pt, shape border rotate = -90] at (1.5,2) {};
\node[fill=black,regular polygon, regular polygon sides=3,inner sep=1.pt, shape border rotate = -90] at (1.5,-2) {};

\fill (0,3) node[] {II};
\fill (0,-3) node[] {IV};
\fill (3,0) node[] {VI};
\fill (-3,0) node[] {III};
\fill (2.5,3) node[] {I};
\fill (2.5,-3) node[] {V};
\end{tikzpicture}
\caption{The jump contour and jump matrices of $\Psi$.} \label{figure:Psi}
\end{figure}

The solution $\Psi = \Psi(\zeta;s)$ to this RHP not only depends on the complex variable $\zeta$, but also on the complex parameter $s \in -i\mathbb{R}_+$. Without the additional condition (d) on the behaviour of $\Psi$ near the points $\pm i$, the RHP wouldn't have a unique solution. If $2 \alpha_2 \notin \mathbb{N} \cup \{0\}$, define $F_1(\zeta,s)$ by the equations
\begin{equation} \label{eqn:F_1 neq 0}
\Psi(\zeta;s) = F_1(\zeta,s)(\zeta - i)^{\alpha_2\sigma_3} G_j, \quad \zeta \in \text{region } j,
\end{equation}
where $j \in \{I,II,III,VI\}$, and where $(\zeta - i)^{\alpha_2 \sigma_3}$ is taken with the branch cut on $i + e^{\frac{3\pi i}{4}} \mathbb{R}_+$, with the argument of $\zeta - i$ between $-5\pi/4$ and $3\pi/4$. The matrices $G_j$ are piecewise constant matrices consistent with the jump relations; they are given by 
\begin{align}
\begin{split}
G_{III} =& \left( \begin{array}{cc} 1 & g \\ 0 & 1 \end{array} \right), \quad g = - \frac{1}{2 i \sin (2\alpha_2)}(e^{2\pi i\alpha_2} - e^{-2\pi i \beta_2}),\\
G_{VI} =& G_{III} J_7^{-1}, \quad G_I = G_{VI}J_6, \quad G_{II} = G_I J_1.
\end{split}
\end{align}
It is straighforward to verify that $F_1$ has no jumps near $i$, and it is thus meromorphic in a neighborhood of $i$, with possibly an isolated singularity at $i$. 

Similarly, for $\zeta$ near $-i$, if $2\alpha_1 \notin \mathbb{N} \cup \{0\}$, we define $F_2$ by the equations
\begin{equation} \label{eqn:F_2 neq 0}
\Psi(\zeta;s) = F_2(\zeta,s)(\zeta + i)^{\alpha_1\sigma_3} H_j, \quad \zeta \in \text{region } j,
\end{equation}
where $j \in \{III,,IV,V,VI\}$, where $(\zeta + i)^{\alpha_1 \sigma_3}$ is taken with the branch cut on $-i + e^{\frac{5\pi i}{4}} \mathbb{R}_+$, with the argument of $\zeta + i$ between $-3\pi/4$ and $5\pi/4$, and where the matrices $H_j$ are piecewise constant matrices consistent with the jump relations; they are given by 
\begin{align} \label{eqn:H_j}
\begin{split}
H_{III} =& \left( \begin{array}{cc} 1 & h \\ 0 & 1 \end{array} \right), \quad h = - \frac{1}{2 i \sin (2\alpha_1)}(e^{2\pi i\beta_1} - e^{-2\pi i \alpha_1}),\\
H_{IV} =& G_{III} J_3^{-1}, \quad H_V = G_{IV}J_4^{-1}, \quad H_{VI} = H_V J_5.
\end{split}
\end{align}
Similarly as at $i$, one shows using the jump conditions for $\Psi$ that $F_2$ is  meromorphic near $-i$ with a possible singularity at $-i$. \\

If $2\alpha_2 \in \mathbb{N} \cup \{0\}$, the constant $g$ and the matrices $G_j$ are ill-defined, and we need a different definition of $F_1$:
\begin{align} \label{eqn:F_1 0}
\Psi(\zeta;s) = F_1(\zeta;s) (\zeta - i)^{\alpha_2 \sigma_3} \left( \begin{array}{cc} 1 & g_{int} \ln (\zeta - i) \\ 0 & 1 \end{array} \right) G_j, \quad \zeta \in \text{region } j,
\end{align}
where 
\begin{align}
g_{int} = \frac{e^{-2\pi i \beta_2} - e^{2\pi i \alpha_2}}{2\pi i e^{2\pi i \alpha_2}},
\end{align}
and $G_{III} = I$, and the other $G_j$'s are defined as above by applying the appropriate jump conditions. Thus defined, $F_1$ has no jumps in a neighborhood of $i$. Similarly, if $2 \alpha_1 \in \mathbb{N} \cup \{0\}$, we define $F_2$ by the expression:
\begin{align} \label{eqn:F_2 0}
\Psi(\zeta;s) = F_2(\zeta;s) (\zeta + i)^{\alpha_1\sigma_3} \left( \begin{array}{cc} 1 & \frac{e^{-2\pi i \alpha_1} - e^{2\pi i \beta_1}}{2\pi i e^{2\pi i \alpha_1}} \\ 0 & 1 \end{array} \right) H_j, \quad \zeta \in \text{region } j, 
\end{align}
with $H_{III} = I$, and the other $H_j$'s expressed via $H_{III}$ as in (\ref{eqn:H_j}). Then $F_2$ has no jumps near $-i$. \\

Given parameters $s,\alpha_1,\alpha_2,\beta_1,\beta_2$, the uniqueness of the function $\Psi$ which satisfies RH conditions (a) - (d) can be proven by standard arguments. \\

In Section 3 of \cite{ClaeysKrasovsky} it was shown that for $\alpha_1,\alpha_2, \alpha_1 + \alpha_2 > -\frac{1}{2}$ and $\beta_1,\beta_2 \in i\mathbb{R}$, the RHP is solvable for any $s \in -i\mathbb{R}_+$. Furthermore they analyzed the RHP asymptotically as $s \rightarrow -i\infty$ and $s \rightarrow -i0_+$. 

\begin{remark}
We have to be careful what their $(\alpha_1,\alpha_2,\beta_1,\beta_2)$ correspond to in our case, when using $\Psi$ from \cite{ClaeysKrasovsky}. In their paper $\alpha_1,\beta_1$ correspond to the singularity left of the merging point and $\alpha_2,\beta_2$ correspond to the singularity to the right of the merging point, while for us in the case + for example, $\alpha_1, \beta_1$ are right and $\alpha_2,\beta_2$ are left.
\end{remark}

\section{Riemann-Hilbert Problem for $M$} \label{appendix:M}
This appendix is a mostly verbatim transfer of Section 4 of \cite{ClaeysKrasovsky}. We include it here to make our account self-contained.  Let $\alpha > -\frac{1}{2}$ and $\beta \in i\mathbb{R}$. In Section 4.2.1 of \cite{ClaeysItsKrasovsky}, see also \cite{DeiftItsKrasovsky, ItsKrasovsky, MorenoMartinez-Finkelshtein}, a function $M = M^{(\alpha,\beta)}$ was constructed explicitly in terms of the confluent hypergeometric function, which solves the following RH problem:\\

\noindent \textbf{RH Problem for $M$}
\begin{enumerate}[label=(\alph*)]
\item $M: \mathbb{C}\setminus \left( e^{\pm \frac{\pi i }{4}} \mathbb{R} \cup \mathbb{R}_+ \right) \rightarrow \mathbb{C}^{2 \times 2}$ is analytic, 
\item $M$ has continuous boundary values on $e^{\pm \frac{\pi i }{4}} \mathbb{R} \cup \mathbb{R}_+ \setminus \{0\}$ related by the conditions:
\begin{align}
M(\lambda)_+ =& M(\lambda)_- \left( \begin{array}{cc} 1 & e^{\pi i (\alpha - \beta)} \\ 0 & 1 \end{array} \right), && \lambda \in e^{\frac{i\pi}{4}} \mathbb{R}_+,\\
M(\lambda)_+ =& M(\lambda)_- \left( \begin{array}{cc} 1 & 0 \\ -e^{-\pi i (\alpha - \beta)} & 0 \end{array} \right), && \lambda \in e^{\frac{3i\pi}{4}} \mathbb{R}_+,\\
M(\lambda)_+ =& M(\lambda)_- \left( \begin{array}{cc} 1 & 0 \\ e^{\pi i (\alpha - \beta)} & 0 \end{array} \right), && \lambda \in e^{\frac{5i\pi}{4}} \mathbb{R}_+,\\
M(\lambda)_+ =& M(\lambda)_- \left( \begin{array}{cc} 1 & -e^{-\pi i (\alpha - \beta)} \\ 0 & 1 \end{array} \right), && \lambda \in e^{\frac{7i\pi}{4}} \mathbb{R}_+,\\
M(\lambda)_+ =& M(\lambda)_-(\lambda) e^{2\pi i \beta \sigma_3}, && \lambda \in \mathbb{R}_+,
\end{align}
where all the rays of the jump contour are oriented away from the origin. 

\item Furthermore, in all sectors,
\begin{equation} \label{eqn:M asymptotics}
M(\lambda) = (I + M_1 \lambda^{-1} + O(\lambda^{-2})) \lambda^{-\beta \sigma_3} e^{-\frac{1}{2} \lambda \sigma_3}, \quad \text{as } \lambda \rightarrow \infty,
\end{equation}
where $0 < \arg \lambda < 2\pi$, and 
\begin{equation}
M_1 = M_1^{(\alpha,\beta)} = \left( \begin{array}{cc} \alpha^2 - \beta^2 & - e^{-2\pi i \beta} \frac{\Gamma(1 + \alpha - \beta)}{\Gamma(\alpha + \beta)} \\ e^{2\pi i \beta} \frac{\Gamma(1+\alpha+\beta)}{\Gamma(\alpha - \beta)} & - \alpha^2 + \beta^2 \end{array} \right).
\end{equation}
\end{enumerate}

\begin{figure} [H]
\centering
\begin{tikzpicture}
\draw [thick] (0,0) -- (3,0);
\draw [thick] (0,0) -- (2,-2);
\draw [thick] (0,0) -- (-2,-2);
\draw [thick] (0,0) -- (2,2);
\draw [thick] (0,0) -- (-2,2);

\fill (-0.3,0) node[] {$0$};
\fill (3.5,2) node[] {$\left( \begin{array}{cc} 1 & e^{\pi i(\alpha - \beta)} \\ 0 & 1 \end{array} \right)$};
\fill (-3.8,2) node[] {$\left( \begin{array}{cc} 1 & 0 \\ -e^{-\pi i(\alpha - \beta)} & 1 \end{array} \right)$};
\fill (3.7,0) node[] {$e^{2\pi i\beta \sigma_3}$};
\fill (3.8,-2) node[] {$\left( \begin{array}{cc} 1 & -e^{-\pi i(\alpha - \beta)} \\ 0 & 1 \end{array} \right)$};
\fill (-3.8,-2) node[] {$\left( \begin{array}{cc} 1 & 0 \\ e^{\pi i(\alpha - \beta)} & 1 \end{array} \right)$};

\node[fill=black,regular polygon, regular polygon sides=3,inner sep=1.pt, shape border rotate = -45] at (1,1) {};
\node[fill=black,regular polygon, regular polygon sides=3,inner sep=1.pt, shape border rotate = 45] at (-1,1) {};
\node[fill=black,regular polygon, regular polygon sides=3,inner sep=1.pt, shape border rotate = 45] at (1,-1) {};
\node[fill=black,regular polygon, regular polygon sides=3,inner sep=1.pt, shape border rotate = -45] at (-1,-1) {};
\node[fill=black,regular polygon, regular polygon sides=3,inner sep=1.pt, shape border rotate = -90] at (1.5,0) {};

\fill (0,2) node[] {2};
\fill (0,-2) node[] {4};
\fill (-2,0) node[] {3};
\fill (2.5,1) node[] {1};
\fill (2.5,-1) node[] {5};
\end{tikzpicture}
\caption{The jump contour and jump matrices of $M$.} \label{figure:M}
\end{figure}
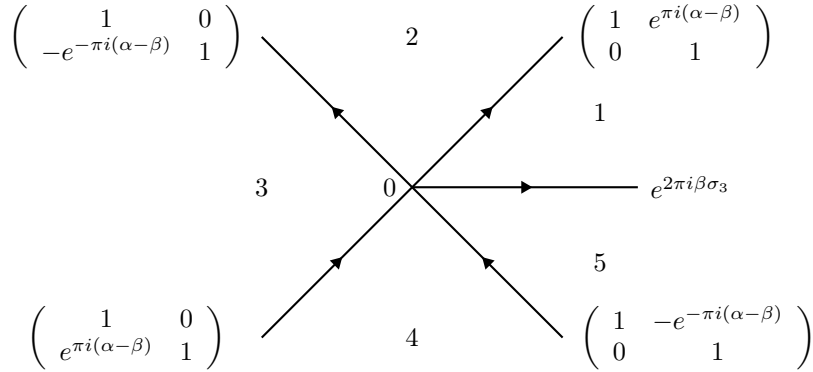

We use $M$ to construct local parametrices for the RHP of the orthogonal polynomials in Section \ref{section:local 2}. For that we also need the local behaviour of $M$ at zero in region 3, i.e. the region between the lines $e^{\frac{3\pi i}{4}} \mathbb{R}_+$ and $e^{\frac{5\pi i}{4}} \mathbb{R}_+$. Write $M \equiv M^{(3)}$ in this region. It is known (see Section 4.2.1 of \cite{ClaeysItsKrasovsky}) that $M^{(3)}$ can be written in the form 
\begin{align} \label{eqn:M at 0, neq 0}
M^{(3)}(\lambda) = L(\lambda)\lambda^{\alpha\sigma_3} \tilde{G}_3, \quad 2\alpha \notin \mathbb{N} \cup \{0\},
\end{align}
with the branch of $\lambda^{\pm \alpha}$ chosen with $0 < \arg \lambda < 2\pi$. Here 
\begin{align}
L(\lambda) = e^{-\lambda/2} \left( \begin{array}{cc} e^{-i\pi(\alpha+\beta)} \frac{\Gamma(1+\alpha - \beta)}{\Gamma(1+2\alpha)} \varphi(\alpha+\beta,1+2\alpha,\lambda) & e^{i\pi(\alpha-\beta)} \frac{\Gamma(2\alpha)}{\Gamma(\alpha+\beta)} \varphi(-\alpha+\beta,1-2\alpha,\lambda) \\
-e^{-i\pi(\alpha-\beta)} \frac{\Gamma(1+\alpha + \beta)}{\Gamma(1+2\alpha)} \varphi(1+\alpha+\beta,1+2\alpha,\lambda) & e^{i\pi(\alpha+\beta)} \frac{\Gamma(2\alpha)}{\Gamma(\alpha-\beta)} \varphi(1-\alpha+\beta,1-2\alpha,\lambda) \\
& \end{array} \right)
\end{align}
is an entire function, with
\begin{align}
\varphi(a,b;z) = 1 + \sum_{n = 1}^\infty \frac{a(a+1) \cdots (a + n - 1)}{c(c+1) \cdots (c + n - 1)} \frac{z^n}{n!}, \quad c \neq 0,-1,-2,...,
\end{align}
and $\tilde{G}_3$ is the constant matrix 
\begin{align}
\tilde{G}_3 = \left( \begin{array}{cc} 1 & \tilde{g} \\ 0 & 1 \end{array} \right), \quad \tilde{g} = - \frac{\sin \pi(\alpha+\beta)}{\sin 2\pi \alpha}.
\end{align}
If $2 \alpha$ is an integer, we have 
\begin{align} \label{eqn:M at 0, 0}
M^{(3)}(\lambda) =& \tilde{L}(\lambda) \lambda^{\alpha \sigma_3} \left( \begin{array}{cc} 1 & m(\lambda) \\ 0 & 1 \end{array} \right), \\
m(\lambda) =& \frac{(-1)^{2\alpha + 1}}{\pi} \sin \pi(\alpha + \beta) \ln (\lambda e^{-i\pi}), 
\end{align}
where $\tilde{L}(\lambda)$ is analytic at $0$, and the branch of the logarithm corresponds to the argument of $\lambda$ between $0$ and $2\pi$. 
\end{appendix}

\newpage

\end{document}